\newcommand{\nitem}[1]{%
\item[#1]\protected@edef\@currentlabel{#1}%
}
\newcommand{\B}{B^2(\mu)}
\newcommand{\Bl}{B^2(\lambda)}
\newcommand{\Bd}{B^2(\delta^\bullet)}
\newcommand{\Ln}{L^2_0(\mu)}
\newcommand{\Lnl}{L^2_0(\lambda)}
\newcommand{\Ll}{L^2(\lambda)}
\newcommand{\Lnd}{L^2_0(\delta^\bullet)}
\newcommand{\Ld}{L^2(\delta^\bullet)}
\DeclareMathOperator{\inv}{inv}
\DeclareMathOperator{\clr}{clr}
\DeclareMathOperator{\pen}{pen}
\DeclareMathOperator{\Pen}{Pen}
\newcommand{\Vertb}{\Vert_{\B}}
\newcommand{\fh}{\hat{f}}
\newcommand{\gh}{\hat{g}}
\newcommand{\fc}{f_{\mathrm{c}}}
\newcommand{\fd}{f_{\mathrm{d}}}
\newcommand{\Jtc}{\tilde{\iota}_{\mathrm{c}}}
\newcommand{\Jtd}{\tilde{\iota}_{\mathrm{d}}}
\newcommand{\ftc}{\tilde{f}_{\mathrm{c}}}
\newcommand{\ftd}{\tilde{f}_{\mathrm{d}}}
\newcommand{\gm}{s^{\max}}
\newcommand{\gc}{s_{\mathrm{c}}}
\newcommand{\Yc}{\mathcal{Y}_{\mathrm{c}}}
\newcommand{\Yd}{\mathcal{Y}_{\mathrm{d}}}
\newcommand{\Ycm}{\Ycal_{\mathrm{c}}^{\max}}
\newcommand{\Ycmflat}{\Ycal_{\mathrm{c}, \, -}^{\max}}
\newcommand{\Ycmpoint}{\Ycal_{\mathrm{c}, \, \bullet}^{\max}}
\newcommand{\yc}{y_{\mathrm{c}}}
\newcommand{\yd}{y_{\mathrm{d}}}
\newcommand{\OR}{O\! R}
\newcommand{\ootimes}{\unitlength1ex
\begin{picture}(3,2)
\put(1.5,0.6){\circle{2}}\put(1.5,0.6){\makebox(0,0){$\otimes$}}
\end{picture}}
\newtheorem{thm}{Theorem}[section]
\newtheorem{lem}[thm]{Lemma}
\newtheorem{prop}[thm]{Proposition}
\newtheorem{thme}[thm]{Theorem}
\newtheoremstyle{cited}{}{}{\itshape}{}{
}{\textbf{.}}{.5em}{\thmname{\textbf{#1}}\thmnumber{ \textbf{#2}} #3%
}
\theoremstyle{cited}
\theoremstyle{definition}
\newtheorem{rem}[thm]{Remark}
\newcommand{\ra}{\rightarrow}
\newcommand{\xra}{\xrightarrow}
\newcommand{\Ebb}{\mathbb{E}}
\newcommand{\Nbb}{\mathbb{N}}
\newcommand{\Pbb}{\mathbb{P}}
\newcommand{\Rbb}{\mathbb{R}}
\newcommand{\Sbb}{\mathbb{S}}
\newcommand{\Af}{\mathbf{A}}
\newcommand{\Afh}{\hat{\mathbf{A}}}
\newcommand{\bfe}{\mathbf{b}}
\newcommand{\Bf}{\mathbf{B}}
\newcommand{\Bfh}{\hat{\mathbf{B}}}
\newcommand{\eh}{\hat{e}}
\newcommand{\Ff}{\mathbf{F}}
\newcommand{\Ffh}{\hat{\mathbf{F}}}
\newcommand{\Hf}{\mathbf{H}}
\newcommand{\If}{\mathbf{I}}
\newcommand{\Id}{\mathbf{I}}
\newcommand{\Jf}{\mathbf{J}}
\newcommand{\nf}{\mathbf{n}}
\newcommand{\Pf}{\mathbf{P}}
\newcommand{\Sf}{\mathbf{S}}
\newcommand{\vf}{\mathbf{v}}
\newcommand{\Vf}{\mathbf{V}}
\newcommand{\Vfh}{\hat{\mathbf{V}}}
\newcommand{\wf}{\mathbf{w}}
\newcommand{\Wf}{\mathbf{W}}
\newcommand{\xf}{\mathbf{x}}
\newcommand{\Xf}{\mathbf{X}}
\newcommand{\Yf}{\mathbf{Y}}
\newcommand{\Zf}{\mathbf{Z}}
\newcommand{\At}{\tilde{A}}
\newcommand{\bt}{\tilde{b}}
\newcommand{\bft}{\tilde{\mathbf{b}}}
\newcommand{\Bft}{\tilde{\mathbf{B}}}
\newcommand{\bfb}{\bar{\mathbf{b}}}
\newcommand{\bb}{\bar{b}}
\newcommand{\ft}{\tilde{f}}
\newcommand{\hti}{\tilde{h}}
\newcommand{\Mt}{\tilde{M}}
\newcommand{\Nt}{\tilde{N}}
\newcommand{\vft}{\tilde{\mathbf{v}}}
\newcommand{\xft}{\tilde{\mathbf{x}}}
\newcommand{\yt}{\tilde{y}}
\newcommand{\betah}{\hat{\beta}}
\newcommand{\betat}{\tilde{\beta}}
\newcommand{\gammaf}{\boldsymbol{\gamma}}
\newcommand{\deltat}{\tilde{\delta}}
\newcommand{\lambdaf}{\boldsymbol{\lambda}}
\newcommand{\lambdah}{\hat{\lambda}}
\newcommand{\nuf}{\boldsymbol{\nu}}
\newcommand{\varsigmaf}{\boldsymbol{\varsigma}}
\newcommand{\thetaf}{\boldsymbol{\theta}}
\newcommand{\thetafh}{\hat{\boldsymbol{\theta}}}
\newcommand{\thetaft}{\tilde{\boldsymbol{\theta}}}
\newcommand{\varthetaf}{\boldsymbol{\vartheta}}
\newcommand{\tauf}{\boldsymbol{\tau}}
\newcommand{\taufh}{\hat{\boldsymbol{\tau}}}
\newcommand{\xif}{\boldsymbol{\xi}}
\newcommand{\dmu}{\mathrm{d}\mu}
\newcommand{\dnuf}{\mathrm{d}\nuf}
\newcommand{\dlamb}{\mathrm{d}\lambda}
\newcommand{\dlambf}{\mathrm{d}\lambdaf}
\newcommand{\ddel}{\mathrm{d}\delta}
\newcommand{\dt}{\mathrm{d}t}
\newcommand{\Var}{\mathrm{Var}}
\newcommand{\rMSE}{\mathrm{relMSE}}
\newcommand{\diag}{\mathrm{diag}}
\DeclareMathOperator{\GL}{GL}
\DeclareMathOperator{\diam}{diam}
\DeclareMathOperator{\dom}{dom}
\DeclareMathOperator{\spano}{span}
\DeclareMathOperator{\rank}{rank}
\newcommand{\Acal}{\mathcal{A}}
\newcommand{\Bcal}{\mathcal{B}}
\newcommand{\Ccal}{\mathcal{C}}
\newcommand{\Dcal}{\mathcal{D}}
\newcommand{\Ical}{\mathcal{I}}
\newcommand{\Jcal}{\mathcal{J}}
\newcommand{\Lcal}{\mathcal{L}}
\newcommand{\Mcal}{\mathcal{M}}
\newcommand{\Pcal}{\mathcal{P}}
\newcommand{\Tcal}{\mathcal{T}}
\newcommand{\Ucal}{\mathcal{U}}
\newcommand{\Wcalf}{\boldsymbol{\mathcal{W}}}
\newcommand{\Xcal}{\mathcal{X}}
\newcommand{\Ycal}{\mathcal{Y}}
\newcommand{\Zcal}{\mathcal{Z}}
\newcommand{\Ecalf}{\boldsymbol{\mathcal{E}}}
\DeclareMathOperator{\ocal}{\scriptstyle\mathcal{O}}
\newcommand{\Cfrak}{\mathfrak{C}}
\newcommand{\Dfrak}{\mathfrak{D}}
\numberwithin{equation}{section} 
\numberwithin{table}{section}
\numberwithin{figure}{section}
\author[1]{Eva-Maria Maier}
\author[2]{Alexander Fottner}
\author[3,*]{Sonja Greven}
\author[4,*]{Almond St\"ocker}
\affil[1,3]{
Humboldt-Universit\"at zu Berlin, 
Berlin, Germany
}
\affil[2]{
Augsburg University, Augsburg, Germany
}
\affil[4]{
EPFL, Lausanne, Switzerland
}
\affil[*]{\normalsize The last two authors, listed in alphabetical order, share senior authorship.}
\newcommand{\imagePathPaperTwo}{Images/}
\begin{document}

\def\spacingset#1{\renewcommand{\baselinestretch}%
{#1}\small\normalsize} \spacingset{1}

\title{Additive Density Regression

\maketitle

\bigskip
\begin{refsection}

\begin{abstract}
\noindent
We present a structured additive regression approach to model conditional densities given scalar covariates, where only samples of the conditional distributions are observed. This links our approach to
distributional regression models for scalar data. 
The model is formulated in a Bayes Hilbert space -- preserving nonnegativity and integration to one under summation and scalar multiplication -- with respect to an arbitrary finite measure. 
This allows to consider, amongst others, continuous, discrete and mixed densities.
Our theoretical results include asymptotic existence, uniqueness, consistency, and asymptotic normality of the penalized maximum likelihood estimator, as well as confidence regions and inference for the (effect) densities.
For estimation, 
we propose to maximize the penalized log-likelihood corresponding to an appropriate  multinomial, or equivalently, Poisson regression model, which we show to approximate the original penalized maximum likelihood problem.
We apply our framework to a motivating gender economic data set from the German Socio-Economic Panel Study (SOEP), analyzing the distribution of the woman's share in a couple's total labor income given covariate effects for year, place of residence and age of the youngest child.
As the income share is a continuous variable having discrete point masses at zero and one for single-earner couples, the corresponding densities are of mixed type. 
\vspace{0.5cm}

\noindent%
{\textit{Keywords}:} 
Bayes Hilbert Space;
Distributional Regression; Structured Additive Model; Penalized Maximum Likelihood Estimation; Mixed Densities; Density-on-Scalar Regression. 
\end{abstract}

\section{Introduction}

We consider a regression setting with pairs of scalar response and covariate vector observations $(y_i, \xf_i)$, $i = 1, \ldots, N$.
Classical regression approaches model 
the mean of the conditional response distribution given the covariates $\xf_i$.
However, it can be very restrictive to account only for dependencies in the mean (and no further characteristics) of, typically, parametric distribution families.
\citet{kneib2023} in their `Rage Against the Mean'  review  different ``distributional regression'' approaches that overcome this limitation  by modeling the whole conditional response distribution. 
They focus on generalized additive models for location, scale and shape \citep[also sometimes  called distributional regression, e.g.,][ ]{rigby2005,klein2015aoas} working with parametric distribution families, conditional transformation models and distribution regression modeling the cumulative distribution function (cdf) of the response \citep[e.g.,][]{hothorn2014}, quantile and expectile regression \citep[e.g.,][]{koenker2005, newey1987}, and an approach 
\citep[e.g.,][]{dunson2007} 
modeling conditional response densities as a mixture of densities of some known distribution family, with parameters as well as mixture weights depending on the covariates (extended in \citet{rodriguez2025} 
to structured additive regression models for the mixture means).
While these methods allow for considerable flexibility, 
\citet{kneib2023} point out that:
\emph{``One particular difficulty encountered frequently when starting to apply distributional regression to a given data set is the challenge of interpreting the resulting regression effects. In particular, all models discussed in this review require the analyst to move beyond the simple and convenient ceteris paribus type of interpretation, where the effect of differences in one covariate of interest can be interpreted while keeping all other covariates fixed''} \citep[Section~7]{kneib2023}.

We aim to fill this gap, introducing flexible structured additive regression models for conditional response densities that allow for ceteris paribus interpretation of estimated covariate effects, while also developing inference and providing an implementation via our \texttt{R} package \texttt{DensityRegression} (developer version on \url{https://github.com/Eva2703/DensityRegression}).
Modeling the density (instead of say quantiles or the cdf) allows for straightforward interpretations and visualizations of bimodalities or shifts of probability mass depending on the covariates, and paves the way for future extensions to bi- or multivariate distributions. 
In addition, our approach does not require an assumption of (mixtures of) known parametric distributions. 

To respect nonnegativity and integration to one, we consider densities as 
elements of a Bayes Hilbert space \citep{egoz2006, vdb2014} -- an extension of the Aitchison geometry for compositional data \citep{aitchison1986}. 
The type of densities 
is specified via a reference measure, in particular allowing for mixed continuous-discrete densities (as well as the common continuous and discrete special cases).
Those occur in our motivating gender economic application, where we analyze the woman's share in a couple's total labor income in Germany depending on the couples' location in (West or East) Germany, its child status and the year:
While the conditional densities of the woman's income share can be assumed to be continuous on $(0, 1)$ for double-earner couples, they exhibit point masses at $0$ and $1$ reflecting single-earner couples.
In addition to their mixed structure, the conditional densities of the woman's income share are subject to further modeling challenges \citep{maier2021}:
The continuous component of such densities 
is known to be multi-modal for some covariate combinations, which requires flexible nonparametric modeling. 
Moreover, the categorical and continuous covariates require different types of partial effects in the model (group-specific, smooth, and interactions). 
Finally, it is of primary interest to interpret such effects on the women's income share distribution, an established object of gender economics \citep[e.g.,][]{bertrand2015,sprengholz2020}, rather than to only estimate/predict conditional distributions.
Our approach is designed to account for these challenges:
Formulating our structured additive regression models in Bayes Hilbert spaces allows for mixed densities and (odds ratio) ceteris paribus interpretation of effects building on \citet{maier2021};
representing the partial effects using a tensor product basis, building on spline bases for mixed densities developed in \citet{maier2021} as well as appropriate bases for the covariate effects, allows for flexible modeling of the shapes of the densities (in particular covering multimodality) and various types of partial effects.
The spline coefficients are estimated via penalized maximum likelihood estimation.
We show that the resulting estimator is asymptotically normally distributed and construct confidence regions and p-values for the estimated effect densities.
As the corresponding likelihood contains an integral term making maximization computationally expensive, 
we show that 
the likelihood, the penalized maximum likelihood estimator (PMLE), and its estimated covariance 
can be approximated via an appropriate multinomial, or equivalently, Poisson regression model obtained by binning the continuous component of the mixed density.

Interestingly, it turns out that modeling densities in Bayes Hilbert spaces -- effectively corresponding to a Hilbert space structure on the log-density-level -- together with  maximum likelihood-based estimation is closely related to conditional logspline density estimation \citep[e.g., ][]{stone1991,gu1995}, despite coming from a very different perspective.
%
\citet{stone1991} consider (unpenalized) maximum likelihood estimation for logspline density estimation, where continuous conditional densities given one continuous covariate are modeled as the normalized exponential function 
of a predictor given by a B-spline tensor product basis.
For identifiability, they choose a sum-to-zero constraint on the coefficients, similar to the integrate-to-zero constraint arising in the Bayes Hilbert space framework. 
In subsequent work, \citet{stone1994} also consider multivariate conditional densities with multiple covariates. 
Both manuscripts focus on theoretical results, deriving optimal rates of convergence,  
not on the (complicated) computation of the maximum likelihood estimator in practice nor on interpretation or inference. 
\citet{masse1999} address more practical problems for the setting of \citet{stone1991}, 
proposing to use a modified Newton algorithm to compute the (unpenalized) maximum likelihood estimator. 
They 
give pointwise confidence bounds, pointing out, however, that resulting bands are problematic as they do not only contain densities.
\citet{gu1995} propose penalized maximum likelihood estimation for continuous or discrete (and also multivariate) densities given scalar covariates, also based on the logistic density transform. 
They model the predictor via a (trimmed) analysis of variance (ANOVA) decomposition in a reproducing kernel Hilbert space, which is determined by the roughness functional used for penalization.
The ANOVA components are then modeled via tensor product smoothing splines, where the observations serve as knots, estimated via a Newton algorithm.
Regarding identifiabilty, \citet{gu1995} propose to set some value ``averaging'' the modeled functions  to zero, which our integrate-to-zero constraint is an example of.
They derive rates of convergence for the estimator of the conditional densities, but no asymptotic distribution and (thus) no confidence regions or p-values.
Such conditional logspline density estimation approaches are effectively covered as special cases by our density regression framework (except for differences in estimation details). The formulation in a Bayes Hilbert space adds, in particular, a novel more principled perspective on density modeling,  facilitating new modeling flexibility well beyond previous models,  
practical ways of estimation, valid  statistical inference, and means of (odds ratio-type ceteris paribus) interpretation. 
At the same time, we are able to draw on some of the previous theory.

Binning procedures for  
(conditional) 
density estimation  have been previously used:
\citet[Section 8]{eilers1996} discuss histogram smoothing for estimating continuous densities (without covariates), while \citet{gao2022} base conditional density estimation via boosted trees effectively on a 
Poisson histogram model, which they refer to as ``Lindsey's method'' \citep{lindsey1974comparison,lindsey1992}.
Our binning of the continuous density component's 
support is related, while we add a formal proof of convergence.

Bayes Hilbert spaces have been used to model conditional densities before \citep{maier2021,jeon2020additive,talska2017}. 
Different from this paper and all of the above citations, 
these approaches belong to the class of ``distributional response regression''. 
Such methods model the mean distribution/density given covariates 
with respect to a metric on probability distributions -- here the Bayes Hilbert space metric, in other cases, e.g., the $L^2$ metric after log-quantile/-hazard transforming the densities \citep{han2020} --,
typically assuming whole probability distributions/densities to be observed rather than samples thereof (with a few exceptions working with empirical distributions using an optimal transport metric, \citet{zhou2024}, or in Bayes Hilbert spaces, \citet{jeon2022}).
In the common case of only samples being available, 
these methods thus require a two-step procedure, first estimating the ``density observations'' from the sample, e.g., via kernel density estimation, before the actual analysis.
Such a two-step approach is particularly problematic if only few observations of each conditional density are available, which is usually the case in the presence of continuous covariates. 
Also, it carries the well-known problem of zero observations in parts of the domain \citep[compare, e.g.,][]{pawlowsky2015} for the case of distributional response regression in Bayes Hilbert spaces. 
We 
provide a natural solution for these problems in the following by proposing a one-step approach that models the latent conditional density building on \citet{maier2021}, while fitting the model on observed individual observations.

Our structured additive model framework formulated for densities in a Bayes Hilbert space avoids a two-step approach by working with samples of the conditional distributions, but marries distributional regression and distributional response perspectives, and yields a sound, flexible, and practically applicable framework, which to the best of our knowledge is unmatched by existing methods:
1) Our models allow for a variety of types of covariate effects (e.g., categorical, linear, smooth, interactions), 
which can be, 2), interpreted ceteris paribus.
3) Penalties may be added 
to obtain regularization.
4) We provide a unified framework for different types of densities (including discrete, continuous, and mixed ones).
5) Showing asymptotic normality, we derive valid confidence regions and p-values on the level of densities.
6) We reduce the non-standard estimation problem for mixed (and continuous) densities to an additive Poisson model over bins, showing asymptotic equivalence for decreasing bin sizes. 
7) We provide an implementation in our own \texttt{R} package \href{https://github.com/Eva2703/DensityRegression}{\texttt{DensityRegression}}, 
building on the numerous specification possibilities and estimation efficiency of the well-known \texttt{R} package \texttt{mgcv} \citep{wood2017} for estimating our Poisson models.

Our structured additive density regression model and estimation approach are introduced in Section~\ref{P2:chapter_regression_theory}, together with our theoretical results.
In Section~\ref{P2:chapter_application}, we use our method to analyze female income share distributions, before presenting a simulation study based thereon in Section~\ref{P2:chapter_simulation}
and concluding with a brief summary and discussion in Section~\ref{P2:chapter_conclusion}.

\section{Additive Density Regression}\label{P2:chapter_regression_theory}

In this section, we present our penalized maximum-likelihood approach for estimating conditional densities from random samples.
Since densities are viewed as elements of a Bayes Hilbert space, we first introduce these spaces in Section~\ref{P2:chapter_bayes_hilbert_space}, 
before 
formulating our structured additive density-on-scalar regression model and deriving the corresponding log-likelihood function in Section~\ref{P2:chapter_bayes_space_regression}.
We show that the PMLE 
asymptotically exists and is unique, consistent and asymptotically normally distributed.
Furthermore, we construct confidence regions. 
In Section~\ref{P2:chapter_multinomial_regression}, we show that the log-likelihood, the PMLE and the penalized Fisher information can be approximated using a multinomial or equivalently Poisson distributional assumption.
Finally, we discuss interpretation of estimated effects in Section~\ref{P2:chapter_interpretation}.
Proofs are provided in appendix~\ref{P2a:chapter_proofs_main_section}.

\subsection{Bayes Hilbert spaces}\label{P2:chapter_bayes_hilbert_space}
We briefly summarize Bayes Hilbert spaces, following our more detailed introduction in appendix A of~\citet{maier2021} based on \citet{vdb2010, vdb2014}. 
These manuscripts also provide proofs for all statements.

Consider a measure space $(\Ycal, \Acal, \mu)$, where $\mu$ is a 
finite measure called \emph{reference measure}.
Denote the set of $\sigma$-finite measures with the same null sets as $\mu$ by $\Mcal (\mu) = \Mcal (\Ycal, \Acal, \mu)$.
Then,
there exists a $\mu$-almost everywhere ($\mu$-a.e.) positive and unique density with respect to $\mu$ (Radon-Nikodym derivative)
for every 
measure in $\Mcal (\mu)$. 
The addendum ``$\mu$-a.e.'' is omitted in the remaining work for the sake of readability.
We thus can identify each measure in $\Mcal (\mu)$ with its density. 
Note that densities in $\Mcal (\mu)$ 
not necessarily integrate to one.
However, proportionality with respect to positive constants 
defines an equivalence relation $\propto$ on the densities in $\Mcal (\mu)$. 
We call the corresponding set of $\propto$-equivalence classes \emph{Bayes space (with reference measure $\mu$)} and denote it by $\Bcal(\mu) = \Bcal(\Ycal, \Acal, \mu)$.
For an equivalence class $[f] \in \Bcal (\mu)$ containing densities with finite integral, the respective probability density $f / \int_\Ycal f \, \dmu$ is chosen as representative.
For the sake of readability, we omit 
the square brackets 
denoting equivalence classes in the following.
The \emph{Bayes Hilbert space (with reference measure $\mu$)} is the set 
$
B^2(\mu) = B^2(\Ycal, \mathcal{A}, \mu) 
:= \{ f \in \mathcal{B}(\mu) ~|~ \int_{\Ycal} \left(\log f \right)^2 \, \mathrm{d}\mu < \infty \}$.
It is a vector space with addition (\emph{perturbation}) and scalar multiplication (\emph{powering}) 
$
f_{1} \oplus f_{2} := f_{1} \, f_{2}$ 
and 
$
\alpha \odot f_{1} := (f_{1})^\alpha$
for 
$
f_{1}, f_{2} 
\in B^2(\mu)$ 
and $\alpha \in \mathbb{R}$. 
The additive neutral element is $0 := \frac1{\mu (\Ycal)} \in \B$, i.e., the equivalence class of constant functions, the additive inverse element of 
$
f \in B^2(\mu)$ is 
$\ominus 
f:= \frac{1}{
f}$, and the multiplicative neutral element is $1 \in \mathbb{R}$.
Consider the \emph{centered log-ratio (clr) transformation} 
$\clr [f] := \log f - \frac{1}{\mu (\Ycal)} \, \int_{\Ycal} \log f \, \mathrm{d}\mu$,
mapping an equivalence class of densities $f \in B^2(\mu)$ to a function in $L^2_0(\mu) = L^2_0(\Ycal, \mathcal{A}, \mu) := \{ \tilde{f} \in L^2(\mu) = L^2(\Ycal, \mathcal{A}, \mu) ~|~ \int_{\Ycal} \tilde{f} \, \mathrm{d}\mu = 0\}$, which is a closed subspace of $L^2(\mu)$.
The clr transformation is bijective with inverse transformation $\clr^{-1} [\ft] = \exp \ft$.
Furthermore, it is a linear transformation, i.e., it is an isomorphism.
Then, $B^2(\mu)$ is a Hilbert space with inner product 
$\langle 
f_1, f_2\rangle_{B^2(\mu)} := 
\int_{\Ycal} \operatorname{clr}[f_{1}] \cdot \operatorname{clr}[f_{2}] \,  \mathrm{d}\mu$ 
for 
$
f_1, f_2 \in \B$.
Note that per definition 
$
\langle f_{1}, f_{2} \rangle_{B^2(\mu)} = \langle \clr [f_{1}], \clr [f_{2}] \rangle_{L^2(\mu)}$, i.e., the clr transformation is isometric.

While Bayes Hilbert spaces are defined for arbitrary measurable spaces $(\Ycal, \Acal)$ as long as the reference measure $\mu$ is finite, 
we focus on $\Ycal \subset \Rbb$ with three common cases~\citep{maier2021}:
A \emph{continuous} Bayes Hilbert space corresponds to $\Ycal = \Yc = [a, b]$ for $a, b \in \Rbb$ with $a < b$,
where $\Acal = \Bcal$ is the Borel $\sigma$-algebra restricted to $\Yc$, and $\mu = \lambda$ is the Lebesgue measure.
For a \emph{discrete} Bayes Hilbert space, we have $\Ycal = \Yd = \{ t_1, \ldots , t_D\}$ with $\Acal = \Pcal (\Yd)$ the power set of $\Yd$ and $\mu = \delta := \sum_{d=1}^D w_d \delta_{t_d}$ 
a weighted sum of Dirac measures with $w_d > 0$ for all $d = 1, \ldots , D$.
A \emph{mixed} Bayes Hilbert space combines both of the aforementioned cases, with $\Ycal = \Yc \cup \Yd$, $\Acal$ 
chosen to be the smallest $\sigma$-algebra containing all closed subintervals of $\Yc$ and all points of $\Yd$, and 
$\mu = \delta 
+ \lambda$. 
If $\Yd \subset \Yc$, we obtain $\Acal = \Bcal$.
Allowing either $\Yd$ or $\Yc$ to be empty 
includes continuous and discrete Bayes Hilbert spaces as special cases, thus it suffices to consider mixed Bayes Hilbert spaces in the following. 

\subsection{Penalized Maximum Likelihood Estimation for Additive Density Regression}\label{P2:chapter_bayes_space_regression}


In this section, we introduce our penalized maximum-likelihood approach to estimate conditional densities 
given scalar covariates.
For this purpose, let $\B$ be a mixed Bayes Hilbert space and  $\Xcal = \bigtimes_{p=1}^P \Xcal_p \subset \Rbb^P, \, P \in \Nbb$, a covariate space, such that each $\Xcal_p$ is either a finite discrete set (categorical covariate) 
or a compact non-trivial interval (continuous covariate).
We consider independent observations $(y_i, \xf_i) \in \Ycal \times \Xcal$ for $i = 1, \ldots, N, ~ N \in \Nbb$, where the $y_i$ are realizations of the conditional distribution of $Y_i ~|~ 
\xf_i$, whose density $
f_{\xf_i} 
\in \B$ is not observed directly.
Assuming $(\xf, y) \mapsto f_{\xf} (y)$ is continuous on $\Xcal \times (\Yc \setminus \Yd)$ and  $\xf \mapsto f_{\xf} (y)$ is continuous on $\Xcal$ for every $y \in \Ycal$ (which is both trivially fulfilled for categorical covariates)
we consider a structured additive density-on-scalar regression model 
\begin{align}
f_{\xf_i} 
= \bigoplus_{j=1}^J h_j (\xf_i), 
\label{P2:equation_model}
\end{align}
where $J \in \Nbb$ and $h_j(\xf) \in \B$ are partial effects depending on no, one, or more covariates in $\xf = (x_1, \ldots , x_P)$.
E.g., $h_j(\xf)$ can be an intercept $\beta_0$, a linear effect $x_k \odot \beta$, a smooth effect $g_1(x_k)$, a group-specific intercept $\beta_{x_m}$ or smooth effect $g_{x_m}(x_k)$, or a smooth interaction $g_2(x_k, x_l)$, all in $\B$, where $x_k, x_l$ denote continuous covariates 
and $x_m$ a categorical covariate (grouping variable), 
for $k, l,m \in \{1, \ldots , P\}$ pairwise distinct.
For 
more possible partial effects, see also Table~B.1 in appendix B of \citet{maier2021}.
To obtain identifiable models, we usually need to include constraints on the $h_j$.
For a model including an intercept, we thus center the partial effects, i.e., 
$
\frac1{N} \odot \bigoplus_{i=1}^N h_j(\xf_i) = 0
$ for all $j = 1, \ldots, J$.
Analogously, centering of interaction effects around the main effects can be obtained.
For more details see \citet{wood2017}, in particular Section~1.8.1 for inclusion of such constraints. 
Each partial effect is represented via a tensor product basis as
\begin{align*}
h_j(\xf_i) = \left( \bfe_{\Xcal, \, j}(\xf_i) \ootimes \bfe_{\Ycal} \right)^\top \thetaf_j 
= \bigoplus_{n=1}^{K_{\Xcal, \, j}} \bigoplus_{m=1}^{K_{\Ycal}}  \theta_{j, n, m} \odot  b_{\Xcal, \, j, n}(\xf_i) \odot b_{\Ycal, \,m}
, 
\end{align*}
where $\bfe_{\Xcal, \, j} = (b_{\Xcal, \, j, 1}, \ldots , b_{\Xcal, \, j, K_{\Xcal, \, j}})^\top : \Xcal \ra \Rbb^{K_{\Xcal, \, j}}$ 
and $\bfe_{\Ycal} = ( b_{\Ycal, \,1}, \ldots, b_{\Ycal, \,K_{\Ycal}})^\top \in \B^{K_{\Ycal}}$ 
are vectors of basis functions over the covariates and over $\Ycal$, respectively, $\thetaf_j = (\theta_{j, 1, 1}, \ldots ,\theta_{j, K_{\Xcal, \, j}, K_{\Ycal}})^\top \in \Rbb^{K_{\Xcal, \, j} \, K_{\Ycal}}$ is the unknown coefficient vector, and $\ootimes$ denotes the Kronecker product of a real-valued matrix with a $\B$-valued matrix, defined by replacing multiplication with $\odot$ in the usual Kronecker product of two real-valued matrices.
Accordingly, matrix multiplication of a real-valued with a $\B$-valued matrix is defined by using $\oplus$ instead of sums and $\odot$ instead of products in the usual matrix multiplication.
Note that different (numbers of) basis functions over $\Ycal$ per $j$ are possible, but not carried out for simplicity of notation.
The basis functions $\bfe_{\Xcal, \, j}$ are chosen according to the presumed partial effect $h_j$.
E.g., a linear effect of a continuous covariate $x_k, \, k \in \{1, \ldots , P \}$, is obtained by $\bfe_{\Xcal, \, j} 
: \Xcal \ra \Rbb, \xf \mapsto x_k$, a smooth effect by a B-spline basis. 
Regularization can be obtained by including penalty matrices $\Pf_{\Xcal, \, j} \in \Rbb^{K_{\Xcal, \, j} \times K_{\Xcal, \, j}}$ 
(e.g., a difference penalty for a smooth effect)
as discussed below. 
%
Regarding the choice of $\bfe_{\Ycal} \in \B^{K_{\Ycal}}$, we first consider the discrete and continuous special cases, i.e, $\Ycal = \Yc$ and $\mu = \lambda$, or $\Ycal = \Yd$ and $\mu = \delta$.
In both cases, we choose an appropriate basis in $L^2(\mu)^{K_{\Ycal} + 1}$, transform it to $L^2_0(\mu)^{K_{\Ycal}}$ (as in appendix D of \citealp{maier2021}), 
and apply the inverse clr transformation component-wise.
As initial basis in $L^2(\mu)^{K_{\Ycal} + 1}$, we suggest a B-spline basis with pairwise distinct knots and degree at least $2$ in the continuous case and a basis consisting of the indicator functions $\mathbbm{1}_{\{t_d\}}$ for $d = 1, \ldots , D = K_{\Ycal} + 1$ in the discrete case.
Transformed difference penalties $\Pf_{\Ycal} 
\in \Rbb^{K_{\Ycal} \times K_{\Ycal}}$
\citep[as in appendix D of][]{maier2021}
can be added as discussed below to obtain smoother estimates.
In the mixed case, we construct $\bfe_{\Ycal} \in \B^{K_{\Ycal}}$ using both cases:
\citet{maier2021} show that a mixed Bayes Hilbert space $\B$, 
can be orthogonally decomposed into a continuous one, $\Bl = B^2\left( \Yc, \Bcal \cap \Yc, \lambda \right)$, and a discrete one, $\Bd = B^2\left( \Yd^\bullet, \Pcal \left( \Yd^\bullet \right), \delta^\bullet \right)$, where 
$\Yd^\bullet := \Yd \cup \{t_{D + 1}\}$ for some $t_{D+1} \in \Rbb \setminus \Yd$ and $\delta^\bullet := \sum_{d = 1}^{D + 1} w_d \, \delta_{t_d}$ with $w_{D + 1} := \lambda(\Yc)$.
Both are subspaces of $\B$ via the embeddings
$\iota_{\mathrm{c}} : \Bl \hookrightarrow \B$ with $\iota_{\mathrm{c}} (\fc) = \fc$ on $\Yc \setminus \Yd$ and $\iota_{\mathrm{c}} (\fc) = \exp ( \frac{1}{\lambda(\Yc)}\int_{\Yc} \log \fc \, \dlamb )$ 
on $\Yd$ and $\iota_{\mathrm{d}}: \Bd \hookrightarrow \B$ with $\iota_{\mathrm{d}} (\fd) = \fd \left( t_{D+1} \right)$ on $\Yc \setminus \Yd$ and $\iota_{\mathrm{d}} (\fd) = \fd$ on $\Yd$.
Thus, applying $\iota_{\mathrm{c}}$ and $\iota_{\mathrm{d}}$ to bases $\bfe_{\Yc} \in \Bl^{K_{\Yc}}$ and $\bfe_{\Yd^\bullet} \in \Bd^{K_{\Yd^\bullet}}$ component-wise yields a basis in $\B^{K_{\Ycal}}$ with $K_{\Ycal} = K_{\Yc} + K_{\Yd^\bullet}$.
The penalty matrix $\Pf_{\Ycal} 
\in \Rbb^{K_{\Ycal} \times K_{\Ycal}}$
is obtained from the ones in the special cases as the block diagonal matrix $\diag(\Pf_{\Yc}, \Pf_{\Yd})$
(yielding separate regularization in the continuous and discrete components).

To emphasize the dependence on the coefficient vector $\thetaf$, we write $f_{\xf_i, \thetaf}$ for $f_{\xf_i}$ in the following.
Applying the clr transformation to~\eqref{P2:equation_model} and using its linearity yields
\begin{align}
\ft_{\xf_i, \thetaf} 
:= \clr[f_{\xf_i, \thetaf}] 
&= \sum_{j=1}^J \sum_{n=1}^{K_{\Xcal, \, j}} \sum_{m=1}^{K_{\Ycal}}  \theta_{j, n, m} \, b_{\Xcal, \, j, n}(\xf_i) \, \tilde{b}_{\Ycal, m}  
= \sum_{j=1}^J \left( \bfe_{\Xcal, \, j}(\xf_i) \otimes \tilde{\bfe}_{\Ycal} \right)^\top \thetaf_j 
\notag
\\
&= \left( \bfe_{\Xcal}(\xf_i) \otimes \tilde{\bfe}_{\Ycal} \right)^\top \thetaf
= \bft(\xf_i)^\top \thetaf
,
\label{P2:equation_model_clr}
\end{align}
where $\tilde{\bfe}_{\Ycal} := (\tilde{b}_{\Ycal, \, 1}, \ldots , \tilde{b}_{\Ycal, \, K_{\Ycal}})^\top$ with 
$\tilde{b}_{\Ycal, m} := \clr[b_{\Ycal, \,m}]$, 
$\bfe_{\Xcal} = ( \bfe_{\Xcal, \, 1}^\top, \ldots , \bfe_{\Xcal, \, J}^\top )^\top : \Xcal \ra \Rbb^{K_{\Xcal}}$ for $K_{\Xcal} := \sum_{j=1}^J K_{\Xcal, \, j}$, 
$\thetaf = ( \thetaf_1^\top, \ldots , \thetaf_J^\top )^\top \in \Rbb^K$ for $K := K_{\Xcal} \, K_{\Ycal} 
$,
and $\bft(\xf) := \bfe_{\Xcal}(\xf) \otimes \tilde{\bfe}_{\Ycal} \in L_0^2(\mu)^K$ for $\xf \in \Xcal$.
To estimate the coefficient vector~$\thetaf$, 
consider the likelihood 
$ 
\Lcal (\thetaf) 
:= 
\prod_{i=1}^N f_{\xf_i, \thetaf} (y_i) 
= \prod_{i=1}^N\frac{\exp(\tilde{f}_{\xf_i, \thetaf}(y_i))}{\int \exp(\tilde{f}_{\xf_i, \thetaf}) \, \dmu},
$ 
which yields the log-likelihood
\begin{align}
\ell(\thetaf) 
&= \sum_{i=1}^N \Bigl( \bft(\xf_i)(y_i)^\top \thetaf 
- \log \int_{\Ycal} \exp \bigl[ \bft(\xf_i)^\top \thetaf \bigr]\, \dmu \Bigr) 
.
\label{P2:equation_bayes_log_likelihood}
\end{align}
Regularization 
is obtained by adding a penalty term 
$\pen(\thetaf) 
:= - \sum_{j=1}^J \thetaf_j^\top \Pf_{j}\thetaf_j
= - \thetaf^\top \Pf \thetaf$, 
resulting in the penalized log-likelihood
$\ell_{\mathrm{pen}} (\thetaf) := \ell (\thetaf) + \pen (\thetaf)$.
Here, 
$
\Pf_{j} = \xi_{\Xcal, \, j} (\Pf_{\Xcal, \, j} \otimes \If_{K_{\Ycal}}) + \xi_{\Ycal, \,j} (\If_{K_{\Xcal, \, j}} \otimes \Pf_{\Ycal}) 
$
penalizes the $j$-th partial effect with 
smoothing parameters $\xi_{\Xcal, \, j},\xi_{\Ycal, \, j} \geq 0$ in the respective directions, $j = 1, \ldots , J$, and
$\Pf$ is the 
block diagonal matrix $\diag (\Pf_1, \ldots , \Pf_J) \in \Rbb^{K \times K}$.
Simplifications like 
$\xi_{\Ycal, \, j} = \xi_{\Ycal}$ or $\xi_{\Xcal, \, j} = \xi_{\Ycal, \, j} = \xi_j$ (also known as isotropic penalty) for all $j \in \{1, \ldots , J\}$ are possible.
The smoothing parameters can be determined, e.g., by restricted maximum likelihood (REML) optimization \citep[Section~6.5.2]{wood2017}.
%
In the following, the special case 
$\Pf = \mathbf{0}$ (corresponding to $\xi_{\Xcal, \, j} = 0 = \xi_{\Ycal, \, j}$ for all $j = 1, \ldots , J$),
yielding the unpenalized log-likelihood, is always included,
the considered model is assumed to be identifiable,
and 
basis functions and penalty matrices are chosen as proposed above 
(basic properties necessary for the validity of the theorems/lemmas but fulfilled for standard basis choices are given in appendix~\ref{P2a:chapter_arbitrary_bases_and_penalties}).
Note that then, 
$\ell_{\pen}$ is defined on 
$
\Rbb^{K}$.

In the remainder of this section, we derive theoretical properties of the PMLE $\thetafh \in \Rbb^{K}$ of $\ell_{\pen}$.
For this purpose, 
let $\nuf$ be the product measure on $\Xcal$ consisting of Lebesgue measures for continuous covariates and sums of Dirac measures at the finitely many possible values for categorical covariates, 
let $\Sbb^{K-1}$ be the unit sphere in $\Rbb^{K}$, and
$s_{\vf, \xf} : \Ycal \ra \Rbb, y \mapsto \bft(\xf)(y)^\top \vf 
$
for $\xf \in \Xcal$ and $\vf \in \Sbb^{K-1}$.
Consider the following assumptions:
\begin{enumerate}
\setlength{\itemindent}{0.5cm}
\nitem{(BX)}\label{P2:assumption_covariate_basis_nonsingular}
For $\vf_{\Xcal} \in \Rbb^{K_{\Xcal}}$: 
If $\bfe_{\Xcal} (\xf_i)^\top \vf_{\Xcal} = 0$ for all $i = 1, \ldots , N$, then $\vf_{\Xcal} = \mathbf{0}$.
\nitem{(BX')}\label{P2:assumption_stone_1}
There exist $M' > 0$ 
and $N' \in \Nbb$ such that for all $N \geq N'$ and all $\vf_{\Xcal} 
\in \Rbb^{K_{\Xcal}}$:
$ M' \, N \int_{\Xcal} \left( \bfe_{\Xcal} (\xf)^\top \vf_{\Xcal} \right)^2 \dnuf (\xf) \leq \sum_{i=1}^N \left( \bfe_{\Xcal} (\xf_i)^\top \vf_{\Xcal} \right)^2 $.
\nitem{(S$<$)}\label{P2:assumption_scalar_product_non_constant}
For all $\vf \in \Sbb^{K-1}$ there exists $i \in \{1, \ldots , N\}$ with 
$s_{\vf, \xf_i}(y_i) < \sup_{y \in \Ycal} s_{\vf, \xf_i} (y)$.
\end{enumerate}

All of them are rather mild 
(see also appendix~\ref{P2a:chapter_assumptions}, which particularly includes the remarks/lemmas referred to in the following):
Assumption~\ref{P2:assumption_covariate_basis_nonsingular} 
is fulfilled, if the matrix
$\Bf_{\Xcal} 
\in \Rbb^{N \times K_{\Xcal}}$
containing $\bfe_{\Xcal} (\xf_i)$ in the $i$-th row, $i = 1, \ldots , N$, 
has rank $K_{\Xcal}$, see Lemma \ref{P2a:lem_assumptions}~\ref{P2a:lem_assumption_covariate_basis_nonsingular}.
This corresponds to the (marginal) design matrix having full column rank, which is a very common assumption for regression models.
In our case of $\Bf_{\Xcal}$, 
it 
can be fulfilled by suitably choosing the covariate basis for given observations:
For $j \in \{ 1, \ldots , J \}$, the basis functions $\bfe_{\Xcal, \, j}$ have to be chosen such that 
(i) their number $K_{\Xcal, \, j}$ is at most the number of unique observed values of the subvector of covariates, on which the $j$-th partial effect depends,
and
(ii) all functions contained in $\bfe_{\Xcal, \, j}$ have at least one observation in their support. 
See Remark~\ref{P2a:rem_assumptions_lem}~\ref{P2a:rem_assumption_covariate_basis_nonsingular} 
for a more detailed discussion and derivation of (i) and (ii). 
Note that, for $N \geq N'$, 
assumption~\ref{P2:assumption_covariate_basis_nonsingular} is furthermore implied by assumption~\ref{P2:assumption_stone_1}, which itself is fulfilled, if the observations of the covariates are spread reasonably over their domain with growing sample size (compare assumption~\ref{P2a:assumption_appendix_stone_2} in appendix~\ref{P2a:chapter_assumptions} 
for the formal statement), see Lemma \ref{P2a:lem_assumptions}~\ref{P2a:lem_assumption_stone_implication}.
Moreover, note that~\ref{P2:assumption_covariate_basis_nonsingular} is equivalent to the nonsingular-condition for logspline regression models \citep[e.g.,][]{stone1991}, while assumptions~\ref{P2:assumption_stone_1} and~\ref{P2a:assumption_appendix_stone_2} are inspired by~(1) and~(2) of \citet{stone1991}. 
Finally, regarding assumption~\ref{P2:assumption_scalar_product_non_constant}, note that under assumption~\ref{P2:assumption_covariate_basis_nonsingular}, for every $\vf \in \Sbb^{K-1}$ there exists an $i \in \{ 1, \ldots , N\}$ such that 
$
s_{\vf, \xf_i}$ is not constant (Lemma~\ref{P2a:lem_scalar_product_constant_implies_0}) and thus, the assumption can be fulfilled. 
Furthermore, 
under assumption \ref{P2a:assumption_appendix_stone_2}, 
not only does \ref{P2:assumption_covariate_basis_nonsingular} hold, but also the probability that~\ref{P2:assumption_scalar_product_non_constant} holds tends to one for $N \ra \infty$, see Lemma \ref{P2a:lem_assumptions}~\ref{P2a:lem_assumption_appendix_scalar_product_non_constant}. 
The following theorem then yields asymptotic existence of the PMLE $\thetafh$.

\begin{thme}\label{P2:thm_existence_uniqueness_bayes_PMLE}
Under~\ref{P2:assumption_covariate_basis_nonsingular}, $\thetafh$ is unique, if it exists.
If~\ref{P2:assumption_scalar_product_non_constant} holds as well,
$\hat{\thetaf}$ exists.
\end{thme}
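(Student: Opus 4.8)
The plan is to obtain uniqueness from strict concavity of $\ell_{\mathrm{pen}}$ (invoking only~\ref{P2:assumption_covariate_basis_nonsingular}) and existence from the fact that a concave function on $\Rbb^K$ attains its maximum as soon as it tends to $-\infty$ along every ray (invoking~\ref{P2:assumption_scalar_product_non_constant}). Everything takes place on $\Rbb^K$, where $\ell_{\mathrm{pen}}$ is well-defined and, since the $\Ycal$-basis is bounded, twice continuously differentiable (differentiation under the integral sign). Each map $\thetaf\mapsto\log\int_{\Ycal}\exp(\bft(\xf_i)^\top\thetaf)\,\dmu$ is a log-partition function, hence convex; together with $\Pf\succeq\mathbf{0}$, which makes $-\thetaf^\top\Pf\thetaf$ a concave quadratic, this shows $\ell_{\mathrm{pen}}$ is concave, with Hessian $\nabla^2\ell_{\mathrm{pen}}(\thetaf)=-\sum_{i=1}^N\Cov_{f_{\xf_i,\thetaf}}(\bft(\xf_i))-2\Pf$, so that
\[
-\,\vf^\top\nabla^2\ell_{\mathrm{pen}}(\thetaf)\,\vf \;=\; \sum_{i=1}^{N}\Var_{f_{\xf_i,\thetaf}}\!\bigl(s_{\vf,\xf_i}\bigr)\;+\;2\,\vf^\top\Pf\vf\;\ge\;0 \qquad(\vf\in\Sbb^{K-1}),
\]
the moments being those of $Y\sim f_{\xf_i,\thetaf}$.

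For uniqueness I would appeal to Lemma~\ref{P2a:lem_scalar_product_constant_implies_0}: under~\ref{P2:assumption_covariate_basis_nonsingular}, every $\vf\in\Sbb^{K-1}$ admits an $i$ for which $s_{\vf,\xf_i}$ is non-constant; since the $\Ycal$-basis is continuous on $\Yc$ where $\mu$ has full support, and each point of $\Yd$ is a $\mu$-atom of positive weight, this non-constancy occurs on a set of positive $\mu$-measure, hence $\Var_{f_{\xf_i,\thetaf}}(s_{\vf,\xf_i})>0$ because $f_{\xf_i,\thetaf}$ is equivalent to $\mu$. The displayed expression is therefore strictly positive for every $\vf\in\Sbb^{K-1}$, i.e. $\nabla^2\ell_{\mathrm{pen}}(\thetaf)\prec0$ for all $\thetaf$, so $\ell_{\mathrm{pen}}$ is strictly concave and admits at most one maximizer -- which is the claim ``$\thetafh$ is unique if it exists''.

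For existence (where, in fact, only~\ref{P2:assumption_scalar_product_non_constant} is used), note that the superlevel set $\{\thetaf:\ell_{\mathrm{pen}}(\thetaf)\ge\ell_{\mathrm{pen}}(\mathbf{0})\}$ is nonempty, closed and convex, so if it is bounded then $\ell_{\mathrm{pen}}$ attains its maximum on it (hence globally), and if it is unbounded it contains a ray $t\mapsto t\vf$, $\vf\ne\mathbf{0}$, along which $\ell_{\mathrm{pen}}\ge\ell_{\mathrm{pen}}(\mathbf{0})$. It therefore suffices to prove $\ell_{\mathrm{pen}}(t\vf)\to-\infty$ as $t\to\infty$ for every $\vf\in\Sbb^{K-1}$. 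As $\Pf\succeq\mathbf{0}$ gives $\ell_{\mathrm{pen}}(t\vf)\le\ell(t\vf)=\sum_i g_i(t)$ with $g_i(t):=t\,s_{\vf,\xf_i}(y_i)-\log\int_{\Ycal}\exp(t\,s_{\vf,\xf_i})\,\dmu$, I would fix $\vf$, choose $i_0$ by~\ref{P2:assumption_scalar_product_non_constant} so that $c:=s_{\vf,\xf_{i_0}}(y_{i_0})-\sup_{y\in\Ycal}s_{\vf,\xf_{i_0}}(y)<0$, and bound each $g_i$ from above. The key estimate is that for any $\epsilon>0$ a nonempty set $\{y:s_{\vf,\xf_i}(y)>r-\epsilon\}$ has strictly positive $\mu$-measure -- it contains either a $\Yd$-atom or, by continuity of $s_{\vf,\xf_i}$ on $\Yc$, a relatively open piece of $\Yc$ -- so lower-bounding the normalizer by its mass over such a set (with $r=s_{\vf,\xf_i}(y_i)$ for $i\ne i_0$ and $r=\sup_{y}s_{\vf,\xf_{i_0}}(y)$ for $i_0$) gives $g_i(t)\le\epsilon t+C_{i,\epsilon}$ for $i\ne i_0$ and $g_{i_0}(t)\le(c+\epsilon)t+C_{i_0,\epsilon}$. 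Summing and taking $\epsilon:=-c/(2N)>0$ yields $\ell(t\vf)\le(c/2)t+C_\epsilon\to-\infty$.

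I expect the main obstacle to be precisely this ray estimate: one must show the log-normalizers $\log\int\exp(t\,s_{\vf,\xf_i})\,\dmu$ grow at least linearly in $t$ with slope $\sup_{y}s_{\vf,\xf_i}(y)$, and this is exactly where the structure of the mixed reference measure -- Lebesgue bulk on $\Yc$, Dirac atoms on $\Yd$ -- together with continuity of the chosen $\Ycal$-basis is needed, to guarantee that small $\Ycal$-neighborhoods carry positive $\mu$-mass. The same reference-measure input resurfaces, more mildly, in the Hessian step (upgrading ``$s_{\vf,\xf_i}$ non-constant as a function'' to ``positive $f_{\xf_i,\thetaf}$-variance''); both are among the basic basis/penalty properties assembled in appendix~\ref{P2a:chapter_arbitrary_bases_and_penalties}.
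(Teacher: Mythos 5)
Your proposal is correct, and the two halves relate differently to the paper. For uniqueness you argue exactly as the paper does: negative definiteness of the Hessian of $\ell_{\mathrm{pen}}$ under~\ref{P2:assumption_covariate_basis_nonsingular} (the paper's Proposition~\ref{P2a:prop_appendix_fisher_informations_positive_definite}, which likewise reduces strict negativity of the quadratic form to Lemma~\ref{P2a:lem_scalar_product_constant_implies_0} and the linear independence of the basis), followed by the standard strict-concavity contradiction; your upgrade from ``non-constant as a function'' to ``positive variance'' via continuity on $\Yc$ and the atoms of $\Yd$ is the same content, just phrased probabilistically. For existence, however, you take a genuinely different and more elementary route. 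The paper establishes coercivity of $\ell_{\mathrm{pen}}$ through a generalized Laplace approximation for $\int_\Ycal \exp[z\,s]\,\dmu$ (Theorem~\ref{P2a:thm_generalized_laplace_approximation}), which exploits the piecewise-polynomial structure of the spline basis (assumption~\ref{P2a:assumption_piecewise_polynomial}) to control the normalizer up to $z^{-1/\alpha}$ factors and hence to dominate an extra $N\log z$ term; level-boundedness (Theorem~\ref{P2a:thm_appendix_loglikelihood_level_bounded}) and the Rockafellar--Wets attainment theorem then finish the proof. You instead lower-bound each normalizer crudely by $\exp[t(r-\epsilon)]\,\mu(\{s_{\vf,\xf_i}>r-\epsilon\})$, using only that superlevel sets near the supremum carry positive $\mu$-mass (an atom of $\Yd$ or a relatively open piece of $\Yc$), and close the argument with the convex-analytic fact that an unbounded closed convex superlevel set containing $\mathbf{0}$ contains a ray from $\mathbf{0}$. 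This buys a shorter proof that needs only continuity and boundedness of the basis -- no piecewise-polynomial assumption and no Laplace asymptotics -- while the paper's heavier machinery yields sharp asymptotics of the normalizer and a level-boundedness statement that it reuses later (e.g., for the epi-convergence argument in Theorem~\ref{P2:thm_approximation_bayes_multinomial_MLE}); note also that your explicit passage through convexity of the superlevel set cleanly bridges the per-direction divergence to boundedness, a step the paper handles via level-boundedness instead. The only points to make explicit in a write-up are the $t\ge 0$ restriction in the exponential lower bound and the recession-direction fact justifying the ray through the origin; both are standard.
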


Asymptotic existence is also contained as byproduct in the following Theorem~\ref{P2:thm_MLE_asymptotic_normal} on consistency and asymptotic normality.
While these two properties of the PMLE are well-known for ``regular log-likelihoods'', this term is often not specified further in the literature, particularly in our case of non-identically distributed (but independent) observations. 
In contrast, we clearly state all necessary assumptions and provide proof in appendix~\ref{P2a:chapter_proofs}, which also includes a brief discussion of literature. 
Assuming the true functions $h_j$ 
are in the span of the chosen basis functions, $j = 1, \ldots , J$, denote the unknown true parameter vector with $\varthetaf = (\varthetaf_1^\top, \ldots, \varthetaf_J^\top)^\top$.
Furthermore, define the penalized Fisher information $\Ff_{\mathrm{pen}}(\thetaf) := - D^2 \ell_{\mathrm{pen}} (\thetaf)$,
set $\xi_{\max} := \max \{\xi_{\Xcal, \, 1}, \ldots , \xi_{\Xcal, \, J}, \xi_{\Ycal, \, 1}, \ldots , \xi_{\Ycal, \, J} \}$,
and let $\Zf_N = \ocal_{\Pbb}(a_N)$ denote $\frac{\Zf_N}{a_N}$ 
converging in probability to $\mathbf{0}$ for random vectors $\Zf_N$ and real numbers $a_N, \, N \in \Nbb$.

\begin{thm}\label{P2:thm_MLE_asymptotic_normal}
Assume that~\ref{P2:assumption_stone_1} holds.
\begin{enumerate}[label=\arabic*)]
\item\label{P2:thm_item_PMLE_cosistent}
If
$\xi_{\max} 
= \ocal_{\Pbb}(N)$,
the probability that 
$\thetafh$ 
exists in an arbitrary small neighborhood around $\varthetaf$ converges to one for $N \ra \infty$ and $\thetafh$ is a consistent estimator for $\varthetaf$.
\item\label{P2:thm_item_PMLE_asymptotically_normal}
If 
$\xi_{\max} 
= \ocal_{\Pbb}(\sqrt{N})$, 
then
$\thetafh 
\overset{\text{a}}{\sim} \mathcal{N} (\varthetaf, \Ff_{\mathrm{pen}}(\varthetaf)^{-1}) 
$.
\end{enumerate}
\end{thm}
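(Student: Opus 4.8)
The plan is to exploit that, for each $i$, the summand $\bft(\xf_i)(y_i)^\top\thetaf - \log\int_\Ycal\exp[\bft(\xf_i)^\top\thetaf]\,\dmu$ of $\ell$ is the log-likelihood of a full-rank exponential family with natural parameter $\thetaf$ and ($i$-dependent) sufficient statistic $\bft(\xf_i) = \bfe_\Xcal(\xf_i)\otimes\tilde{\bfe}_\Ycal$. Differentiating, the score is $D\ell(\thetaf) = \sum_{i=1}^N\bigl(\bft(\xf_i)(y_i) - \int_\Ycal\bft(\xf_i)\,f_{\xf_i,\thetaf}\,\dmu\bigr)$, whose summands $W_i$ at $\thetaf=\varthetaf$ are independent with conditional mean $\mathbf 0$ given $\xf_i$ because the model holds at $\varthetaf$ by assumption; the observed information $\Ff_N(\thetaf) := -D^2\ell(\thetaf) = \sum_{i=1}^N\Cov_{f_{\xf_i,\thetaf}}\!\bigl(\bft(\xf_i)\bigr)$ does not depend on the $y_i$ and, by $\bft(\xf_i) = \bfe_\Xcal(\xf_i)\otimes\tilde{\bfe}_\Ycal$, factorises as $\Ff_N(\thetaf) = \sum_{i=1}^N \bfe_\Xcal(\xf_i)\bfe_\Xcal(\xf_i)^\top\otimes\Cov_{f_{\xf_i,\thetaf}}(\tilde{\bfe}_\Ycal)$, while $\Ff_{\mathrm{pen}}(\thetaf) = \Ff_N(\thetaf) + 2\Pf$. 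Since $\Ff_N(\thetaf)\succeq\mathbf 0$ and $\Pf\succeq\mathbf 0$, both $\ell$ and $\ell_{\mathrm{pen}}$ are concave, so any interior stationary point of $\ell_{\mathrm{pen}}$ is automatically a global maximiser, and it is the unique one by Theorem~\ref{P2:thm_existence_uniqueness_bayes_PMLE} (recall \ref{P2:assumption_stone_1} implies \ref{P2:assumption_covariate_basis_nonsingular} for $N\ge N'$). I would argue throughout conditionally on the covariates.

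The workhorse I would establish first is a pair of deterministic, rate-$N$ two-sided bounds $c\,N\,\If_K\preceq\Ff_N(\thetaf)\preceq C\,N\,\If_K$ with $0<c\le C<\infty$, valid uniformly for $\thetaf$ in a fixed closed ball around $\varthetaf$ and all $N\ge N'$. The upper bound follows from boundedness of $\bft$ on the compact set $\Xcal\times\Ycal$ (bounded covariate and clr-transformed $\Ycal$-bases, bounded log-densities on the ball). For the lower bound, the Kronecker formula reduces matters to combining (i) $\Cov_{f_{\xf,\thetaf}}(\tilde{\bfe}_\Ycal)\succeq c_\Ycal\,\If_{K_\Ycal}$ uniformly in $(\xf,\thetaf)$ --- the matrix is continuous on the relevant compact set and positive definite for each $(\xf,\thetaf)$, since $\vf^\top\Cov_{f_{\xf,\thetaf}}(\tilde{\bfe}_\Ycal)\vf = \Var_{f_{\xf,\thetaf}}(\tilde{\bfe}_\Ycal^\top\vf)$ vanishes only when $\tilde{\bfe}_\Ycal^\top\vf$ is $\mu$-a.e.\ constant, hence $0$ (being an element of $L^2_0(\mu)$), hence $\vf=\mathbf 0$ by linear independence of the $\tilde{\bfe}_\Ycal$ --- with (ii) $\sum_{i=1}^N\bfe_\Xcal(\xf_i)\bfe_\Xcal(\xf_i)^\top\succeq M'\,N\int_\Xcal\bfe_\Xcal\bfe_\Xcal^\top\,\dnuf$ for $N\ge N'$ by assumption~\ref{P2:assumption_stone_1}, where $\int_\Xcal\bfe_\Xcal\bfe_\Xcal^\top\,\dnuf$ is positive definite by the identifiability and basis assumptions; together these give $\Ff_N(\thetaf)\succeq c_\Ycal M'\,N\,\bigl(\int_\Xcal\bfe_\Xcal\bfe_\Xcal^\top\,\dnuf\otimes\If_{K_\Ycal}\bigr)\succeq c\,N\,\If_K$.

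For part~\ref{P2:thm_item_PMLE_cosistent} I would run the classical concavity argument: $\Ebb\|D\ell(\varthetaf)\|^2 = \tr\Ff_N(\varthetaf)\le C K N$ gives $\|D\ell(\varthetaf)\| = \Ocal_\Pbb(\sqrt N)$, and $\|D\pen(\varthetaf)\| = \|2\Pf\varthetaf\| = O(\xi_{\max}) = \ocal_\Pbb(N)$, so $\|D\ell_{\mathrm{pen}}(\varthetaf)\| = \ocal_\Pbb(N)$; a second-order Taylor expansion along $[\varthetaf,\varthetaf+\delta\uf]$, $\|\uf\|=1$, with $D^2\ell_{\mathrm{pen}}\preceq -\Ff_N\preceq -c\,N\,\If_K$ on the ball, yields $\ell_{\mathrm{pen}}(\varthetaf+\delta\uf) - \ell_{\mathrm{pen}}(\varthetaf)\le\delta\,\|D\ell_{\mathrm{pen}}(\varthetaf)\| - \tfrac12\delta^2 c\,N = N\delta\bigl(\ocal_\Pbb(1) - \tfrac12\delta c\bigr)$, which for fixed $\delta>0$ is negative simultaneously for all unit $\uf$ with probability $\to 1$; on that event the concave $\ell_{\mathrm{pen}}$ attains its maximum over $\{\|\thetaf-\varthetaf\|\le\delta\}$ in the interior, hence globally, so $\thetafh$ exists, is unique, and lies in $\{\|\thetaf-\varthetaf\|<\delta\}$, and letting $\delta\downarrow 0$ gives consistency. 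For part~\ref{P2:thm_item_PMLE_asymptotically_normal}, $\xi_{\max}=\ocal_\Pbb(\sqrt N)$ implies $\ocal_\Pbb(N)$, so part~\ref{P2:thm_item_PMLE_cosistent} applies and, with probability $\to 1$, $D\ell_{\mathrm{pen}}(\thetafh)=\mathbf 0$; a coordinate-wise mean-value expansion gives
\[
\thetafh - \varthetaf \;=\; \Ff_{\mathrm{pen}}(\bar{\thetaf})^{-1}\bigl(D\ell(\varthetaf) - 2\Pf\varthetaf\bigr)
\]
for some $\bar{\thetaf}$ between $\thetafh$ and $\varthetaf$. Then: (a) the Lyapunov CLT for the triangular array $\{\Ff_N(\varthetaf)^{-1/2}W_i\}_i$ of independent mean-zero vectors gives $\Ff_N(\varthetaf)^{-1/2}D\ell(\varthetaf)\xrightarrow{d}\Ncal(\mathbf 0,\If_K)$, the Lyapunov ratio being $\le N\bigl((cN)^{-1/2}\,2\sup_{\Xcal\times\Ycal}\|\bft\|\bigr)^{2+\varepsilon} = O(N^{-\varepsilon/2})\to 0$; (b) $\|\Ff_N(\varthetaf)^{-1/2}\,2\Pf\varthetaf\|\le(cN)^{-1/2}O(\xi_{\max}) = \ocal_\Pbb(1)$; (c) $\bar{\thetaf}\xrightarrow{\Pbb}\varthetaf$ with the local Lipschitz continuity of $\thetaf\mapsto\Ff_N(\thetaf)/N$ (uniform in the covariates) and $2\Pf/N = \ocal_\Pbb(1)$ gives $\Ff_{\mathrm{pen}}(\varthetaf)^{1/2}\Ff_{\mathrm{pen}}(\bar{\thetaf})^{-1}\Ff_{\mathrm{pen}}(\varthetaf)^{1/2}\xrightarrow{\Pbb}\If_K$, and the same rate condition with standard perturbation bounds for symmetric square roots gives $\Ff_{\mathrm{pen}}(\varthetaf)^{-1/2}\Ff_N(\varthetaf)^{1/2}\xrightarrow{\Pbb}\If_K$; multiplying the expansion by $\Ff_{\mathrm{pen}}(\varthetaf)^{1/2}$, inserting $\Ff_N(\varthetaf)^{-1/2}\Ff_N(\varthetaf)^{1/2}$, and applying Slutsky with (a)--(c) yields $\Ff_{\mathrm{pen}}(\varthetaf)^{1/2}(\thetafh-\varthetaf)\xrightarrow{d}\Ncal(\mathbf 0,\If_K)$, i.e.\ $\thetafh\overset{\text{a}}{\sim}\Ncal(\varthetaf,\Ff_{\mathrm{pen}}(\varthetaf)^{-1})$.

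The hard part will be the uniform, rate-$N$ two-sided control of the observed information $\Ff_N(\thetaf)$ on a neighbourhood of $\varthetaf$ (second paragraph): this is where the exponential-family Kronecker factorisation, assumption~\ref{P2:assumption_stone_1}, non-degeneracy of $\tilde{\bfe}_\Ycal$ in $L^2_0(\mu)$ and of $\bfe_\Xcal$ in $L^2(\nuf)$ (identifiability of the additive model), and compactness of $\Xcal\times\Ycal$ with boundedness of all basis functions must be brought together. Once these bounds are available, the consistency step is the textbook ``a concave maximiser cannot leave a small ball'' argument and the asymptotic normality is the textbook Taylor--CLT--Slutsky assembly; the only genuinely non-standard bookkeeping is the control of the penalty terms through the two rate conditions on $\xi_{\max}$, which is precisely why these conditions enter.
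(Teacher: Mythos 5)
Your overall strategy is sound and it genuinely differs from the paper's proof in two places. For part~\ref{P2:thm_item_PMLE_cosistent}, the paper does not use the ``concave maximiser cannot leave a small ball'' argument: it Taylor-expands the penalized score with an explicit third-derivative remainder tensor, bounds that remainder uniformly, and then invokes the Poincar\'e--Miranda theorem (a multivariate intermediate value theorem) to produce a root of $\nabla\ell_{\mathrm{pen}}$ inside a small cube around $\varthetaf$ with probability tending to one; strict concavity then upgrades the root to the unique global maximiser. Your route instead rests on a uniform two-sided bound $c\,N\,\If_K\preceq\Ff_N(\thetaf)\preceq C\,N\,\If_K$ on a ball, obtained through the Kronecker factorisation $\Ff_N(\thetaf)=\sum_i\bfe_{\Xcal}(\xf_i)\bfe_{\Xcal}(\xf_i)^\top\otimes\Cov_{f_{\xf_i,\thetaf}}(\tilde{\bfe}_{\Ycal})$ together with~\ref{P2:assumption_stone_1}; this is a clean and somewhat more elementary replacement, and the uniform-in-$\thetaf$ version of the lower bound (the paper's analogue, Lemma~\ref{P2a:lem_lower_bound_quadratic_form_Fisher_info}, is stated pointwise in $\thetaf$) does follow from compactness and boundedness of the bases as you indicate. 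For part~\ref{P2:thm_item_PMLE_asymptotically_normal}, the paper uses a multivariate Berry--Esseen bound where you use the Lyapunov CLT for a triangular array of bounded, independent, non-identically distributed score contributions; both are valid here, and your handling of the penalty terms under the two rate conditions on $\xi_{\max}$ matches the role of Lemma~\ref{P2a:lem_assumption_smoothing_parameters_implies_convergence_penalty} in the paper.

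One step, as written, is not correct: the ``coordinate-wise mean-value expansion'' $\thetafh-\varthetaf=\Ff_{\mathrm{pen}}(\bar{\thetaf})^{-1}\bigl(D\ell(\varthetaf)-2\Pf\varthetaf\bigr)$ with a \emph{single} intermediate point $\bar{\thetaf}$ does not exist in general, because the mean value theorem fails for vector-valued maps such as $\thetaf\mapsto\nabla\ell_{\mathrm{pen}}(\thetaf)$. What the coordinate-wise expansion actually gives is a matrix whose $k$-th row is $-\nabla\partial_k\ell_{\mathrm{pen}}$ evaluated at a coordinate-specific point $\bar{\thetaf}_k$ on the segment, and this matrix is in general neither symmetric nor equal to any $\Ff_{\mathrm{pen}}(\bar{\thetaf})$. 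The fix is routine --- either work with the row-wise intermediate points (all of which converge to $\varthetaf$, so your step (c) still applies after minor bookkeeping), or use a second-order Taylor expansion of the score with an explicit remainder, which is exactly what the paper does via its bounded third-derivative tensor $\Ecalf$ --- but your step (c), which presupposes a genuine $\Ff_{\mathrm{pen}}(\bar{\thetaf})$ and its symmetric square roots, must be rephrased accordingly. Everything else (the Kronecker lower bound, the concavity localisation, the Lyapunov condition with bounded summands, and the conversion between the $\Ff$- and $\Ff_{\mathrm{pen}}$-normalisations under $\xi_{\max}=\ocal_{\Pbb}(\sqrt{N})$) is correct.
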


Regarding~\ref{P2:thm_item_PMLE_asymptotically_normal}, note that $\Ff_{\pen}(\thetaf)$ is invertible and equals its expectation 
for all $\thetaf \in \Rbb^K$ (see Proposition~\ref{P2a:prop_appendix_fisher_informations_positive_definite} in appendix~\ref{P2a:chapter_auxiliary_statements}).
Concerning the assumptions on the limiting behavior of
$\xi_{\max}$, 
note that the one
in \ref{P2:thm_item_PMLE_asymptotically_normal} implies the one 
in \ref{P2:thm_item_PMLE_cosistent}.
Assumptions like these are common in the literature, with similar (but more restrictive) ones, e.g., in \citet[Assumption 3]{kauermann2009} or \citet[Theorem 1]{claeskens2009}. 

Based on this asymptotic distribution, 
we construct confidence 
regions for (clr transformed) sums 
of partial effects, 
denoting the $(1- \alpha)$-quantile of the $\chi^2 (K_\Ycal)$ distribution with $\chi^2_{1- \alpha} (K_\Ycal)$ for $\alpha \in (0, 1)$ and its cumulative distribution function with $F_{\chi^2(K_{\Ycal})}$:

\begin{lem}\label{P2:lemma_confidence_regions}
Set
$ 
h_{\Jcal} (\xf)
:= \bigoplus_{j \in \Jcal} 
h_{j} (\xf) 
= \bigoplus_{j \in \Jcal} 
( \bfe_{\Xcal , \, j}(\xf) \ootimes \bfe_{\Ycal} )^\top \varthetaf_{
j}
$ 
for $\xf \in \Xcal$ and $\Jcal \subseteq \{ 1, \ldots ,J\}$.
Denote the matrix mapping the vector $\thetafh
$ to its $j$-th subvector $\thetafh_{
j}$ by $\Sf_j \in \Rbb^{K_{\Xcal, \, j} K_\Ycal \times K}$ and set
$\Af := \sum_{j \in \Jcal} 
(\bfe_{\Xcal , \, j} (\xf)^\top \otimes \Id_{K_\Ycal}) \Sf_{j} \in \Rbb^{K_\Ycal \times K}$.
Set
$\varthetaf_\Ycal := \Af \varthetaf
$,
$\thetafh_\Ycal := \Af \thetafh
$,
and
$\Vfh_\Ycal := \Af \Ff_{\mathrm{pen}}(\thetafh) ^{-1} 
\Af^\top$.
Assume~\ref{P2:assumption_stone_1} and $\xi_{\max} = \ocal_{\Pbb}(\sqrt{N})$.
Then, 
$CR_{\varthetaf_\Ycal} := \{ \thetaf_\Ycal \in \Rbb^{K_\Ycal} ~|~ (\thetaf_\Ycal - \thetafh_\Ycal)^\top \Vfh_\Ycal^{-1} (\thetaf_\Ycal - \thetafh_\Ycal) \leq \chi^2_{\alpha} (K_\Ycal)\}$, 
is an asymptotic $\alpha \cdot 100\%$ confidence 
region for $\varthetaf_\Ycal$,
$ 
\{ \bfe_\Ycal^\top \thetaf_\Ycal 
~|~ \thetaf_\Ycal \in CR_{\varthetaf_\Ycal} \} \subseteq \B
$ 
is an asymptotic $\alpha \cdot 100\%$ confidence 
region for $h_{\Jcal}(\xf)$, and 
$
\{ \bft_\Ycal^\top \thetaf_\Ycal 
~|~ \thetaf_\Ycal \in CR_{\varthetaf_\Ycal} \} \subseteq L_0^2 (\mu)
$ is an asymptotic $\alpha \cdot 100\%$ confidence 
region for 
$
\clr[h_{\Jcal}(\xf)]$.
Furthermore, $1 - F_{\chi^2(K_{\Ycal})} (\thetafh_{\Ycal}^\top \Vfh_{\Ycal}^{-1} \thetafh_{\Ycal})$ is a valid asymptotic p-value for 
$H_0: h_{\Jcal}(\xf) = 0 \in \B$ and for $H_0: \clr[h_{\Jcal}(\xf)] = 0 \in L^2_0(\mu)$.
\end{lem}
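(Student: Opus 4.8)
Here the plan is to obtain every assertion as a consequence of the asymptotic normality of $\thetafh$ in Theorem~\ref{P2:thm_MLE_asymptotic_normal}\,\ref{P2:thm_item_PMLE_asymptotically_normal}, by (i) pushing the limiting normal law forward through the fixed linear map $\Af$, (ii) replacing $\Ff_{\pen}(\varthetaf)$ by the plug-in $\Ff_{\pen}(\thetafh)$ to turn this into a $\chi^2$ pivot, and (iii) transporting the resulting region for $\varthetaf_\Ycal$ into $\B$ and $\Ln$ along two injective linear maps. Only a few earlier facts enter: the hypothesis $\xi_{\max} = \ocal_{\Pbb}(\sqrt N)$ also gives $\xi_{\max} = \ocal_{\Pbb}(N)$, so Theorem~\ref{P2:thm_MLE_asymptotic_normal}\,\ref{P2:thm_item_PMLE_cosistent} yields consistency $\thetafh \overset{\Pbb}{\to}\varthetaf$ as well, and $\Ff_{\pen}$ is everywhere invertible by Proposition~\ref{P2a:prop_appendix_fisher_informations_positive_definite}.

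Fixing $\xf$, I would first check that $\Af \in \Rbb^{K_\Ycal \times K}$ has full row rank $K_\Ycal$: the $\Sf_j$ select pairwise disjoint coefficient blocks and $(\bfe_{\Xcal,j}(\xf)^\top \otimes \Id_{K_\Ycal})(\bfe_{\Xcal,j}(\xf) \otimes \vf) = \|\bfe_{\Xcal,j}(\xf)\|^2\,\vf$, so $\Af$ is surjective as soon as $\bfe_{\Xcal,j}(\xf) \neq \mathbf 0$ for some $j \in \Jcal$, which the admissible bases guarantee (appendix~\ref{P2a:chapter_arbitrary_bases_and_penalties}). Hence $\Vf_\Ycal := \Af \Ff_{\pen}(\varthetaf)^{-1} \Af^\top$ is positive definite, and applying $\Af$ to $\thetafh \overset{\text{a}}{\sim} \Ncal(\varthetaf, \Ff_{\pen}(\varthetaf)^{-1})$ gives $\Vf_\Ycal^{-1/2}(\thetafh_\Ycal - \varthetaf_\Ycal) \overset{d}{\to} \Ncal(\mathbf 0, \Id_{K_\Ycal})$. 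Consistency of $\thetafh$ together with the (local Lipschitz) continuity of $\thetaf \mapsto \Ff_{\pen}(\thetaf)$ yields $\Vf_\Ycal^{-1/2}\Vfh_\Ycal\Vf_\Ycal^{-1/2} \overset{\Pbb}{\to} \Id_{K_\Ycal}$ -- the technical core, discussed below -- so Slutsky's lemma and the continuous mapping theorem give $Q_N := (\thetafh_\Ycal - \varthetaf_\Ycal)^\top \Vfh_\Ycal^{-1}(\thetafh_\Ycal - \varthetaf_\Ycal) \overset{d}{\to} \chi^2(K_\Ycal)$. With $\chi^2_\alpha(K_\Ycal)$ the $\alpha$-quantile, $\Pbb(\varthetaf_\Ycal \in CR_{\varthetaf_\Ycal}) = \Pbb(Q_N \leq \chi^2_\alpha(K_\Ycal)) \to \alpha$, which is the first assertion.

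Next I would identify $h_\Jcal(\xf)$ with a linear image of $\varthetaf_\Ycal$: expanding the tensor-product representation and collecting the powering exponents coordinate by coordinate shows $h_\Jcal(\xf) = \bfe_\Ycal^\top \varthetaf_\Ycal$ (matrix product in the $\oplus/\odot$ sense), since $\Af\varthetaf$ collects exactly those coordinates, and applying $\clr$ and its linearity gives $\clr[h_\Jcal(\xf)] = \bft_\Ycal^\top \varthetaf_\Ycal$ with $\bft_\Ycal = \clr[\bfe_\Ycal]$. The maps $\thetaf_\Ycal \mapsto \bfe_\Ycal^\top \thetaf_\Ycal \in \B$ and $\thetaf_\Ycal \mapsto \bft_\Ycal^\top \thetaf_\Ycal \in \Ln$ are linear and injective -- the components of $\bfe_\Ycal$ are linearly independent in $\B$ (a basis property, appendix~\ref{P2a:chapter_arbitrary_bases_and_penalties}), hence so are those of $\bft_\Ycal$ since $\clr$ is an isomorphism -- so membership of $h_\Jcal(\xf)$ (resp.\ of $\clr[h_\Jcal(\xf)]$) in the image of $CR_{\varthetaf_\Ycal}$ is equivalent to $\varthetaf_\Ycal \in CR_{\varthetaf_\Ycal}$, which yields the two further confidence regions. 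The same injectivity gives $H_0 : h_\Jcal(\xf) = 0 \in \B \iff H_0 : \clr[h_\Jcal(\xf)] = 0 \in \Ln \iff \varthetaf_\Ycal = \mathbf 0$; under $H_0$ the pivot specializes to $\thetafh_\Ycal^\top \Vfh_\Ycal^{-1} \thetafh_\Ycal \overset{d}{\to} \chi^2(K_\Ycal)$, so $1 - F_{\chi^2(K_\Ycal)}(\thetafh_\Ycal^\top \Vfh_\Ycal^{-1} \thetafh_\Ycal)$ converges to $\mathrm{Unif}[0,1]$ under $H_0$ and is a valid asymptotic p-value for both hypotheses.

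The delicate point is the plug-in step $\Vfh_\Ycal \approx \Vf_\Ycal$: because $\Ff_{\pen} = -D^2\ell_{\pen}$ scales with the sample size, its entries are unbounded and one cannot simply apply the continuous mapping theorem to $\thetaf \mapsto \Ff_{\pen}(\thetaf)^{-1}$; instead one must show that the rescaled difference $\Vf_\Ycal^{-1/2}\Vfh_\Ycal\Vf_\Ycal^{-1/2} - \Id_{K_\Ycal}$ vanishes in probability, which follows from the same uniform control of $\Ff_{\pen}(\thetaf)$ over shrinking neighborhoods of $\varthetaf$ that underlies the Taylor expansion of the score in the proof of Theorem~\ref{P2:thm_MLE_asymptotic_normal}; I would simply reuse that estimate. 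Everything else -- the full-rank property of $\Af$ and the linear independence of $\bfe_\Ycal$ -- is routine bookkeeping covered by the standing assumptions on the bases.
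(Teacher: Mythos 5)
Your proposal is correct and follows essentially the same route as the paper's proof: push the asymptotic normality of $\thetafh$ from Theorem~\ref{P2:thm_MLE_asymptotic_normal} through the linear map $\Af$ to get a $\chi^2(K_\Ycal)$ pivot, transport the resulting region for $\varthetaf_\Ycal$ into $\B$ and $L_0^2(\mu)$ via the injective linear maps $\thetaf_\Ycal \mapsto \bfe_\Ycal^\top\thetaf_\Ycal$ and $\thetaf_\Ycal \mapsto \bft_\Ycal^\top\thetaf_\Ycal$ (using linear independence of the basis and the clr isometry), and read off the p-value from the pivot under $\varthetaf_\Ycal = \mathbf 0$. The only notable difference is that you make explicit two points the paper leaves implicit — the full row rank of $\Af$ and the plug-in step replacing $\Ff_{\mathrm{pen}}(\varthetaf)$ by $\Ff_{\mathrm{pen}}(\thetafh)$, for which your appeal to the uniform curvature bounds and the lower bound $N\,C\,\Id_K \preceq \Ff_{\mathrm{pen}}$ from the proof of Theorem~\ref{P2:thm_MLE_asymptotic_normal} is the right justification — so your write-up is, if anything, slightly more complete than the paper's.
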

The special cases $\Jcal = \{j\}, \, j \in \{1, \ldots , J\}$, and $\Jcal = \{ 1, \ldots , J\}$ with $\xf = \xf_i$ for an $i \in \{ 1, \ldots , N\}$
yield confidence 
regions for the $j$-th partial effect and the density $f_{\xf_i, \varthetaf}$. 
Note that 
the confidence 
regions constructed in Lemma~\ref{P2:lemma_confidence_regions} are simultaneous over the domain $\Ycal$, but point-wise regarding the covariates.
Confidence regions and p-values, which are also simultaneous in the covariates for the $j$-th partial effect, can be obtained directly from the asymptotic distribution of $\thetafh_j$, compare Lemma~\ref{P2a:lemma_confidence_regions_simultaneos} in appendix~\ref{P2a:chapter_proofs}.

\subsection{Relation to Multinomial and Poisson Regression}\label{P2:chapter_multinomial_regression}
The integral term in \eqref{P2:equation_bayes_log_likelihood} makes maximizing $\ell_{\mathrm{pen}}$ 
computationally expensive and requires specialized implementations. 
%
We thus show 
that~$\ell_{\mathrm{pen}} (\thetaf)$ 
and $\Ff_{\mathrm{pen}}(\thetaf)$
can be approximated by a (shifted) penalized log-likelihood/Fisher information obtained from multinomially distributed data, which is constructed from the original response observations $y_1, \ldots , y_N$ by combining all observations of the same conditional distribution (i.e., all observations sharing identical values in all covariates) into a vector of counts via a histogram on $\Yc \setminus \Yd$ and counts on~$\Yd$.
We show convergence for decreasing bin width and that this also transfers to the PMLEs 
under mild assumptions.
Furthermore, we derive that Poisson regression can be used equivalently to multinomial regression.
This is very appealing, 
since Poisson regression is well-known and flexible additive implementations are available.
Thus, in our \texttt{R} package \href{https://github.com/Eva2703/DensityRegression}{\texttt{DensityRegression}} 
we build on \texttt{mgcv} \citep{wood2017} for fitting Poisson models. 

To formally construct the multinomial data, let $\xf^{(1)}, \ldots , \xf^{(L)}\in\Xcal$ be the unique combinations of observed covariate values and let $\Ical^{(l)} := \{ i \in \{ 1, \ldots , N \} ~|~ \xf_i = \xf^{(l)} \}$ for $l = 1, \ldots , L \leq N$ denote the corresponding partition of indices.
Thus, $\{ y_i ~|~ i \in \Ical^{(l)} \}$ forms an independent and identically distributed random sample of $Y^{(l)} ~|~ \xf^{(l)}$.
Note that in the presence of continuous covariates, we usually have distinct observed covariate values, i.e., $\Ical^{(l)} = \{l\}$ 
for $l = 1, \ldots , L = N$, with each observation sampled from a different conditional distribution.
In the following, let $l \in \{ 1, \ldots , L\}$.
The set $\Ical^{(l)}$ is partitioned depending on the corresponding response observation belonging to the discrete component or not, i.e.,
$\Ical_{\mathrm{d}}^{(l)} := \{ i \in \Ical^{(l)} ~|~ y_i \in \Yd 
\}$ 
and 
$\Ical_{\mathrm{c}}^{(l)} := \{ i \in \Ical^{(l)} ~|~  y_i \in \Yc \setminus \Yd \} 
$. 
From the latter, we construct a histogram for the continuous component of the mixed Bayes Hilbert space, approximating the conditional density of $Y^{(l)} ~|~ \xf^{(l)}$ corresponding to the $l$-th covariate combination.
Thus, if $\Yc \neq \emptyset$, let $G^{(l)} \in \Nbb$ 
be the numbers of histogram bins. 
Consider a partition 
$\Zcal^{(l)} = (a_0^{(l)}, a_1^{(l)}, \ldots, a_{G^{(l)}}^{(l)}) \in \Xi := \{ (a_0, a_1, \ldots , a_G) ~|~ a = a_0 < a_1 < \ldots  < a_G = b, ~ G \in \Nbb \}$ of the interval $\Yc = [a, b]$, 
yielding disjoint bins $U_g^{(l)} := [a_{g-1}^{(l)}, a_{g}^{(l)})$ for $g= 1, \dots, G^{(l)} - 1$ and $U_{G^{(l)}}^{(l)} := [a_{G^{(l)}-1}^{(l)}, a_{G^{(l)}}^{(l)}]$.
Denote the width of bin $U_g^{(l)}$ by 
$\Delta_g^{(l)} := 
\mu (U_g^{(l)}) =
a_{g}^{(l)} - a_{g-1}^{(l)} > 0$. 
If there exists a unique $y_{g}^{(l)} \in \{y_i ~|~ i \in \Ical_{\mathrm{c}}^{(l)} \} \cap U_g^{(l)}$, 
choose $u_g^{(l)} = y_{g}^{(l)} 
$, otherwise let $u_g^{(l)} \in U_g^{(l)}$ be 
the bin midpoint.
The values of the histogram 
then are $n_g^{(l)} := \sum_{i \in \Ical_{\mathrm{c}}^{(l)}} \mathbbm{1}_{U_g^{(l)}}(y_i)$ for $g = 1, \ldots , G^{(l)}$.
Otherwise, if $\Yc = \emptyset$, 
set $G^{(l)} = 0$ and $\Zcal^{(l)} = \emptyset$. 
For the discrete component of the mixed Bayes Hilbert space, we 
similarly count the number of response observations being equal to each discrete value:
For $g = G^{(l)} + 1, \ldots , G^{(l)} + D =: \Gamma^{(l)}$, set $U_g^{(l)} := \{t_{g - G^{(l)}}\}$, 
$\Delta_g^{(l)} := 
\mu (U_{g}^{(l)}) = 
w_{g - G^{(l)}}$, 
$u_g^{(l)} := t_{g - G^{(l)}}$, and $n_g^{(l)} := \sum_{i \in \Ical_{\mathrm{d}}^{(l)}} \mathbbm{1}_{U_g^{(l)}}(y_i)$.

We view the vectors $(n_1^{(l)}, \ldots , n_{\Gamma^{(l)}}^{(l)})$ as realizations of $L$ independent multinomially distributed random variables 
with probabilities $p_1^{(l)}, \ldots , p_{\Gamma^{(l)}}^{(l)}$, conditional on $n^{(l)} := \sum_{g = 1}^{\Gamma^{(l)}} n_g^{(l)} = | \Ical^{(l)}|$.
We model them via
$p_g^{(l)} (\thetaf)
= \frac{\Delta_g^{(l)}\exp [ \bft(\xf^{(l)})(u_g^{(l)})^\top \thetaf]}{\sum_{g'=1}^{\Gamma^{(l)}} \Delta_{g'}^{(l)}\exp [ \bft(\xf^{(l)})(u_{g'}^{(l)}) ^\top \thetaf ]}
$
for an unknown coefficient vector
$\thetaf
\in \Rbb^{K}$, where $\bft$ is the tensor product basis vector 
as in~\eqref{P2:equation_model_clr}.
Note that 
the integrate-to-zero constraint of $\tilde{\bfe}(\xf^{(l)}) \in L^2_0(\mu)^K$ preserves identifiability despite the fraction. 
%
%
Omitting additive constants with respect to $\thetaf$ (indicated by $\propto$), the multinomial log-likelihood is
\begin{align}
&\sum_{l=1}^L \sum_{g=1}^{\Gamma^{(l)}} n_g^{(l)} \log \frac{\Delta_g^{(l)}\exp \bigl[ \bft(\xf^{(l)})(u_g^{(l)}) ^\top \thetaf \bigr]}{\sum_{g'=1}^{\Gamma^{(l)}} \Delta_{g'}^{(l)}\exp \bigl[ \bft(\xf^{(l)})(u_{g'}^{(l)})^\top \thetaf \bigr]} \notag 
\\
\propto & \sum_{l=1}^L \sum_{g=1}^{\Gamma^{(l)}} n_g^{(l)} \Bigl( \bft(\xf^{(l)})(u_g^{(l)})^\top \thetaf - \log  
\sum_{g'=1}^{\Gamma^{(l)}} \Delta_{g'}^{(l)}\exp \bigl[ \bft(\xf^{(l)})(u_{g'}^{(l)})^\top \thetaf \bigr]
\Bigr) 
=: \ell^{\mathrm{mn}}_\Zcal (\thetaf) ,
\label{P2:equation_multinomial_log_likelihood}
\end{align}
where the dependence on $\Zcal := ( \Zcal^{(1)} , \ldots , \Zcal^{(L)} ) 
$ is implicit via $\Gamma^{(l)} = G^{(l)} + D, 
n_g^{(l)}, u_g^{(l)},$ and $\Delta_g^{(l)}$ for $l = 1, \ldots , L$ and $g = 1, \ldots , G^{(l)} 
$.
Its penalized version, using the same penalty term as in Section~\ref{P2:chapter_bayes_space_regression}, is $\ell^{\mathrm{mn}}_{\Zcal, \mathrm{pen}} (\thetaf) := \ell^{\mathrm{mn}}_\Zcal (\thetaf) + \pen (\thetaf)$.
Consider the maximal histogram bin width $\Delta = \Delta(\Zcal) := \max \{ \Delta_g^{(l)} ~|~ l = 1, \ldots , L, ~ g = 1, \ldots , G^{(l)} \}$ and the penalized Fisher information 
$\Ff_{\Zcal, \mathrm{pen}}^{\mathrm{mn}}(\thetaf) := - D^2 \ell_{\Zcal, \mathrm{pen}}^{\mathrm{mn}}(\thetaf)$. 

\begin{thme}\label{P2:thm_approximation_bayes_multinomial}
For all $\thetaf \in \Rbb^K$, 
$\lim_{\Delta \ra 0} \ell^{\mathrm{mn}}_{\Zcal, \mathrm{pen}} (\thetaf) = \ell_{\mathrm{pen}} (\thetaf)$,
$\lim_{\Delta \ra 0} \Ff^{\mathrm{mn}}_{\Zcal, \mathrm{pen}} (\thetaf) = \Ff_{\mathrm{pen}} (\thetaf)$,
and (if the inverses exist) 
$\lim_{\Delta \ra 0} \Ff^{\mathrm{mn}}_{\Zcal, \mathrm{pen}} (\thetaf)^{-1} = \Ff_{\mathrm{pen}} (\thetaf)^{-1}$.
If $\Yc = \emptyset$, all equalities hold without limit.
\end{thme}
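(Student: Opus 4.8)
The plan is to note first that the penalty term $\pen(\thetaf)$ occurs identically in $\ell_{\mathrm{pen}}$ and in $\ell^{\mathrm{mn}}_{\Zcal,\mathrm{pen}}$, so (with $\thetaf\in\Rbb^{K}$ fixed and $\Delta\ra0$ understood along any sequence of binnings) $\ell^{\mathrm{mn}}_{\Zcal,\mathrm{pen}}(\thetaf)-\ell_{\mathrm{pen}}(\thetaf)=\ell^{\mathrm{mn}}_{\Zcal}(\thetaf)-\ell(\thetaf)$ and, by linearity of $D^2$, $\Ff^{\mathrm{mn}}_{\Zcal,\mathrm{pen}}(\thetaf)-\Ff_{\mathrm{pen}}(\thetaf)=-D^2\ell^{\mathrm{mn}}_{\Zcal}(\thetaf)+D^2\ell(\thetaf)$; it thus suffices to prove $\ell^{\mathrm{mn}}_{\Zcal}(\thetaf)\ra\ell(\thetaf)$ and $D^2\ell^{\mathrm{mn}}_{\Zcal}(\thetaf)\ra D^2\ell(\thetaf)$ as $\Delta\ra0$, and then to pass to the inverses. (These must be handled separately: pointwise convergence of the functions $\ell^{\mathrm{mn}}_{\Zcal}$ does \emph{not} by itself give convergence of their Hessians.) The only analytic input I will use is that, for fixed $\xf$ and $\thetaf$, the map $y\mapsto\bft(\xf)(y)$ is continuous on $\Yc\setminus\Yd$ and bounded on $\Yc$ (which holds for the clr-transformed basis functions under the standing assumptions of Appendix~\ref{P2a:chapter_arbitrary_bases_and_penalties}); since $\Yd$ is finite, the functions $y\mapsto\exp(\bft(\xf)(y)^\top\thetaf)$ and $y\mapsto b(y)b'(y)\exp(\bft(\xf)(y)^\top\thetaf)$, with $b,b'$ any two coordinates of $\bft(\xf)$, are then bounded and $\lambda$-a.e.\ continuous on $\Yc=[a,b]$, hence Riemann integrable, so that tagged Riemann sums over partitions of $[a,b]$ with mesh tending to zero converge to the corresponding integrals.

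For the log-likelihood I would split $\ell^{\mathrm{mn}}_{\Zcal}(\thetaf)$ into the data part $\sum_{l,g}n_g^{(l)}\bft(\xf^{(l)})(u_g^{(l)})^\top\thetaf$ and the normalising part $-\sum_l n^{(l)}\log Z_l^{\Zcal}(\thetaf)$, where $n^{(l)}=|\Ical^{(l)}|$ and $Z_l^{\Zcal}(\thetaf):=\sum_{g'}\Delta_{g'}^{(l)}\exp(\bft(\xf^{(l)})(u_{g'}^{(l)})^\top\thetaf)>0$. Since each $i\in\Ical^{(l)}$ lies in exactly one bin and $\xf^{(l)}=\xf_i$ there, the data part equals $\sum_{i=1}^N\bft(\xf_i)(u_{b(i)})^\top\thetaf$, with $u_{b(i)}$ the tag of the bin containing $y_i$; for $y_i\in\Yd$ one has $u_{b(i)}=y_i$ exactly, and for $y_i\in\Yc\setminus\Yd$ one has $|u_{b(i)}-y_i|\le\Delta$ with $\bft(\xf_i)$ continuous at $y_i$, so the data part tends to $\sum_i\bft(\xf_i)(y_i)^\top\thetaf$. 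For the normalising part I would split $Z_l^{\Zcal}(\thetaf)$ into its continuous bins --- a tagged Riemann sum converging to $\int_{\Yc}\exp(\bft(\xf^{(l)})^\top\thetaf)\,\dlamb$, the at most $D$ bins whose tag happens to land on $\Yd$ having width $\le\Delta$ and so contributing $O(\Delta)\ra0$ --- and its discrete bins, which equal $\sum_{d}w_d\exp(\bft(\xf^{(l)})(t_d)^\top\thetaf)=\int_{\Yd}\exp(\bft(\xf^{(l)})^\top\thetaf)\,\ddel$ exactly; since $\mu=\lambda+\delta$, this gives $Z_l^{\Zcal}(\thetaf)\ra\int_{\Ycal}\exp(\bft(\xf^{(l)})^\top\thetaf)\,\dmu>0$, so by continuity of $\log$ and $n^{(l)}=|\Ical^{(l)}|$ the normalising part tends to $-\sum_i\log\int_{\Ycal}\exp(\bft(\xf_i)^\top\thetaf)\,\dmu$. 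Adding the two limits reproduces $\ell(\thetaf)$ of~\eqref{P2:equation_bayes_log_likelihood}, whence $\ell^{\mathrm{mn}}_{\Zcal,\mathrm{pen}}(\thetaf)\ra\ell_{\mathrm{pen}}(\thetaf)$.

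For the Fisher information I would differentiate twice in $\thetaf$: a standard exponential-family computation gives $-D^2\ell(\thetaf)=\sum_{i=1}^N V_i(\thetaf)$, where $V_i(\thetaf)$ is the covariance matrix of $\bft(\xf_i)(Y)\in\Rbb^{K}$ for $Y$ with the fitted density $f_{\xf_i,\thetaf}(y)=\exp(\bft(\xf_i)(y)^\top\thetaf)/\int_{\Ycal}\exp(\bft(\xf_i)^\top\thetaf)\,\dmu$, and $-D^2\ell^{\mathrm{mn}}_{\Zcal}(\thetaf)=\sum_{l=1}^L n^{(l)}V_l^{\Zcal}(\thetaf)$, where $V_l^{\Zcal}(\thetaf)$ is the covariance matrix of $\bft(\xf^{(l)})(u_g^{(l)})$ when the bin index $g$ is drawn with probabilities $p_1^{(l)}(\thetaf),\dots,p_{\Gamma^{(l)}}^{(l)}(\thetaf)$. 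Applying the Riemann-sum argument of the first paragraph to the bounded integrands $\exp(\bft(\xf^{(l)})^\top\thetaf)$ and $b\,b'\exp(\bft(\xf^{(l)})^\top\thetaf)$ shows each ``moment'' $\sum_{g}p_g^{(l)}(\thetaf)\,\phi(u_g^{(l)})$ converges to $\int_{\Ycal}\phi\,f_{\xf^{(l)},\thetaf}\,\dmu$, hence $V_l^{\Zcal}(\thetaf)\ra V_l(\thetaf):=\Cov_{f_{\xf^{(l)},\thetaf}}[\bft(\xf^{(l)})]$; summing over $l$ and using $f_{\xf^{(l)},\thetaf}=f_{\xf_i,\thetaf}$ for $i\in\Ical^{(l)}$, so that $\sum_l n^{(l)}V_l(\thetaf)=\sum_i V_i(\thetaf)=-D^2\ell(\thetaf)$, yields $D^2\ell^{\mathrm{mn}}_{\Zcal}(\thetaf)\ra D^2\ell(\thetaf)$ and therefore $\Ff^{\mathrm{mn}}_{\Zcal,\mathrm{pen}}(\thetaf)\ra\Ff_{\mathrm{pen}}(\thetaf)$. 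Since $\Ff_{\mathrm{pen}}(\thetaf)$ is invertible (Proposition~\ref{P2a:prop_appendix_fisher_informations_positive_definite}) and matrix inversion is continuous on the open set of invertible matrices, $\Ff^{\mathrm{mn}}_{\Zcal,\mathrm{pen}}(\thetaf)$ is invertible for all sufficiently small $\Delta$ and $\Ff^{\mathrm{mn}}_{\Zcal,\mathrm{pen}}(\thetaf)^{-1}\ra\Ff_{\mathrm{pen}}(\thetaf)^{-1}$. If $\Yc=\emptyset$ there are no continuous bins and each $\Zcal^{(l)}=\emptyset$, so every ``Riemann sum'' above is already an exact finite sum over $\Yd$ with each tag equal to the relevant $t_d$; then all of the above convergences are identities, which gives the last sentence.

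The step I expect to be the main obstacle is the Riemann-sum convergence underpinning both $Z_l^{\Zcal}\ra\int_{\Ycal}\exp(\bft(\xf^{(l)})^\top\thetaf)\,\dmu$ and $V_l^{\Zcal}\ra V_l$: one must confirm that the relevant integrands really are Riemann integrable on $[a,b]$ (boundedness of the clr-transformed basis, continuity off the finite set $\Yd$), control the embedding-induced jumps of $\bft(\xf)(\cdot)$ at points of $\Yd$ (finitely many, $\lambda$-null, in bins of width $\le\Delta$, so their contribution vanishes), and keep the continuous observations --- binned via the histogram on $\Yc\setminus\Yd$ --- cleanly separated from the discrete ones. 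The rest --- the exact re-indexing $\sum_{l,g}n_g^{(l)}(\cdot)=\sum_i(\cdot)$, the identity $f_{\xf^{(l)},\thetaf}=f_{\xf_i,\thetaf}$ on $\Ical^{(l)}$, and the closed-form second-derivative expressions --- is routine.
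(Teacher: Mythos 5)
Your proposal is correct and follows essentially the same route as the paper's proof: reduce to the unpenalized objects, treat the discrete component exactly and the continuous component via tagged Riemann sums of the bounded, a.e.\ continuous integrands, establish Hessian convergence term by term from the same Riemann-sum limits, and pass to inverses by continuity of matrix inversion on $\GL(K,\Rbb)$. The only (immaterial) variation is in the data term: the paper notes that for $\Delta$ below the minimal gap between distinct continuous observations each bin contains at most one observed value, so the tags equal the observations and that term is \emph{exactly} $\sum_i \bft(\xf_i)(y_i)^\top\thetaf$, whereas you obtain the same limit by continuity of $\bft(\xf_i)$ at $y_i\in\Yc\setminus\Yd$.
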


The proof in appendix~\ref{P2a:chapter_proofs} uses convergence of Riemann sums and that for small enough bin width, each bin contains at most one observed value.
We furthermore show in Proposition~\ref{P2a:prop_appendix_fisher_informations_positive_definite} in appendix~\ref{P2a:chapter_auxiliary_statements} that $\Ff^{\mathrm{mn}}_{\Zcal, \mathrm{pen}} (\thetaf)$ is invertible and equals its expectation, 
if~\ref{P2:assumption_covariate_basis_nonsingular} and the 
following assumption hold:
\begin{enumerate}
\setlength{\itemindent}{0.5cm}
\nitem{(BU)}\label{P2:assumption_domain_basis_nonsingular}
For $\Zcal \in \Xi^L 
$, 
$l \in \{ 1, \ldots , L\}$,
$\vf_{\Ycal} \in \Rbb^{K_\Ycal}$, and $c \in \Rbb$:
If $\bft_\Ycal (u_{g}^{(l)})^\top \vf_{\Ycal} = c 
$ for all $g = 1, \ldots , \Gamma^{(l)}$, then
$\vf_{\Ycal} = \mathbf{0}$ (and $c = 0$). 
\end{enumerate}

This assumption is 
fulfilled, if the matrices
$(\Bft_{\Ycal}^{(l)} : \mathbf{1}_{\Gamma^{(l)}}) 
\in \Rbb^{\Gamma^{(l)} \times (K_\Ycal + 1)}$ have rank $K_\Ycal + 1$ for all $l = 1, \ldots , L$, 
where
$\Bft_{\Ycal}^{(l)}$ 
is the matrix containing $\bft_\Ycal (u_g^{(l)})$ in the $g$-th row (given $\Zcal^{(l)}$), 
$\mathbf{1}_{\Gamma^{(l)}}$ is the vector of length $\Gamma^{(l)}$ containing $1$ 
in every entry,
and $(\cdot : \cdot)$ denotes matrix concatenation.
If $\Yc = \emptyset$ (discrete special case), this rank condition is immediately fulfilled.
If $\Yc \neq \emptyset$, 
it is sufficient that (i) 
the number of basis functions over the continuous component plus $1$ is at most the length of the vector of bins, $K_{\Yc} + 1 \leq G^{(l)}$ for all $l = 1, \ldots , L$, and (ii) 
for all functions contained in the (unconstrained) B-spline basis yielding $\bfe_{\Yc}$ and for all $l \in \{ 1, \ldots , L \}$, there is at least one $g \in \{ 1, \ldots , G^{(l)}\}$ such that $u_{g}^{(l)}$ is in its support.
Both, (i) and (ii), and thus also~\ref{P2:assumption_domain_basis_nonsingular}, hold for small enough bin width.
See Lemma~\ref{P2a:lem_assumptions}~\ref{P2a:lem_assumption_appendix_domain_basis_nonsingular}
in appendix~\ref{P2a:chapter_assumptions} for proof.
Hence, a sequence $\Zcal_m \in \Xi^L$, $m \in \Nbb$, such that~\ref{P2:assumption_domain_basis_nonsingular} holds for every $\Zcal_m, m \in \Nbb,$ and $\lim_{m \ra \infty} \Delta (\Zcal_m) = 0$ (as considered in 
the following theorem transferring convergence to the PMLEs) exists.

\begin{thme}\label{P2:thm_approximation_bayes_multinomial_MLE}
Assume that 
\ref{P2:assumption_covariate_basis_nonsingular} holds.
\begin{enumerate}[label=\arabic*)]
\item\label{P2:thm_approximation_multinomial_MLE_unique}
Under~\ref{P2:assumption_domain_basis_nonsingular}, 
$\hat{\thetaf}_\Zcal^{\mathrm{mn}}$ is unique, 
if it exists.
\item\label{P2:thm_approximation_bayes_multinomial_main_statement}
Assume 
that~\ref{P2:assumption_scalar_product_non_constant}
holds and $\Yc \neq \emptyset$.
Then, 
for any sequence $\Zcal_m \in \Xi^L$, $m \in \Nbb$, such that~\ref{P2:assumption_domain_basis_nonsingular} holds for every $\Zcal_m, m \in \Nbb$, and $\lim_{m \ra \infty} \Delta (\Zcal_m) = 0$, there exists an $M \in \Nbb$ such that $\thetafh_{\Zcal_m}^{\mathrm{mn}}$ exists for all $m \geq M$ 
and
$\lim_{m \ra \infty} \hat{\thetaf}_{\Zcal_m}^{\mathrm{mn}} = \hat{\thetaf}$. 
\end{enumerate}
\end{thme}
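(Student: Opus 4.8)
\emph{Proof plan.} For part~\ref{P2:thm_approximation_multinomial_MLE_unique} I would argue exactly as in the uniqueness half of Theorem~\ref{P2:thm_existence_uniqueness_bayes_PMLE}. The multinomial log-likelihood~\eqref{P2:equation_multinomial_log_likelihood} is a nonnegatively weighted sum of an affine term in $\thetaf$ and terms $-\log\sum_{g'}\Delta_{g'}^{(l)}\exp[\,\cdot\,]$, which are concave by the log-sum-exp structure, and the penalty $\pen(\thetaf)=-\thetaf^\top\Pf\thetaf$ is concave ($\Pf$ being positive semidefinite under the standing basis/penalty assumptions); hence $\ell^{\mathrm{mn}}_{\Zcal,\mathrm{pen}}$ is concave on $\Rbb^K$. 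By Proposition~\ref{P2a:prop_appendix_fisher_informations_positive_definite}, under~\ref{P2:assumption_covariate_basis_nonsingular} and~\ref{P2:assumption_domain_basis_nonsingular} the penalized multinomial Fisher information $\Ff^{\mathrm{mn}}_{\Zcal,\mathrm{pen}}(\thetaf)=-D^2\ell^{\mathrm{mn}}_{\Zcal,\mathrm{pen}}(\thetaf)$ is positive definite for every $\thetaf$, so $\ell^{\mathrm{mn}}_{\Zcal,\mathrm{pen}}$ is in fact strictly concave and admits at most one maximizer.

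For part~\ref{P2:thm_approximation_bayes_multinomial_main_statement} the plan is an argmax-convergence argument powered by concavity. First I would record that, under~\ref{P2:assumption_covariate_basis_nonsingular} and~\ref{P2:assumption_scalar_product_non_constant}, Theorem~\ref{P2:thm_existence_uniqueness_bayes_PMLE} provides a \emph{unique} Bayes PMLE $\hat{\thetaf}$, that $\ell_{\mathrm{pen}}$ is concave (the same reasoning as above applied to~\eqref{P2:equation_bayes_log_likelihood}), and that, by Theorem~\ref{P2:thm_approximation_bayes_multinomial} (which uses $\Yc\neq\emptyset$), $\ell^{\mathrm{mn}}_{\Zcal_m,\mathrm{pen}}(\thetaf)\to\ell_{\mathrm{pen}}(\thetaf)$ for every $\thetaf\in\Rbb^K$. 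Since a pointwise convergent sequence of finite concave functions on $\Rbb^K$ converges uniformly on every compact set (a standard fact from convex analysis), this upgrades to uniform convergence of $\ell^{\mathrm{mn}}_{\Zcal_m,\mathrm{pen}}$ to $\ell_{\mathrm{pen}}$ on each closed ball $\bar B(\hat{\thetaf},\varepsilon)$.

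Next I would turn the optimality of $\hat{\thetaf}$ into a uniform separation estimate. Fixing $\varepsilon>0$, the continuous function $\ell_{\mathrm{pen}}$ attains its maximum over the sphere $\{\|\thetaf-\hat{\thetaf}\|=\varepsilon\}$ at a value strictly below $\ell_{\mathrm{pen}}(\hat{\thetaf})$ (uniqueness of $\hat{\thetaf}$), say by a gap $\delta>0$; and by concavity of $\ell_{\mathrm{pen}}$ along rays from $\hat{\thetaf}$ one in fact gets $\ell_{\mathrm{pen}}(\thetaf)\le\ell_{\mathrm{pen}}(\hat{\thetaf})-\delta$ for \emph{all} $\thetaf$ with $\|\thetaf-\hat{\thetaf}\|\ge\varepsilon$. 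Choosing $m$ large enough that $\sup_{\bar B(\hat{\thetaf},\varepsilon)}|\ell^{\mathrm{mn}}_{\Zcal_m,\mathrm{pen}}-\ell_{\mathrm{pen}}|<\delta/3$, I get $\ell^{\mathrm{mn}}_{\Zcal_m,\mathrm{pen}}(\hat{\thetaf})>\ell^{\mathrm{mn}}_{\Zcal_m,\mathrm{pen}}(\thetaf)$ for every $\thetaf$ on that sphere, hence — again by ray-concavity of $\ell^{\mathrm{mn}}_{\Zcal_m,\mathrm{pen}}$ — for every $\thetaf$ outside the open ball. Consequently $\sup_{\Rbb^K}\ell^{\mathrm{mn}}_{\Zcal_m,\mathrm{pen}}$ equals its supremum over the compact set $\bar B(\hat{\thetaf},\varepsilon)$, is attained, and (being strictly larger than every value on and outside the sphere) is attained in the open ball; so $\hat{\thetaf}_{\Zcal_m}^{\mathrm{mn}}$ exists, is unique by part~\ref{P2:thm_approximation_multinomial_MLE_unique}, and satisfies $\|\hat{\thetaf}_{\Zcal_m}^{\mathrm{mn}}-\hat{\thetaf}\|<\varepsilon$ for all large $m$. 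Applying this with a fixed $\varepsilon$ yields the index $M$ from which $\hat{\thetaf}_{\Zcal_m}^{\mathrm{mn}}$ exists, and letting $\varepsilon\downarrow 0$ yields $\hat{\thetaf}_{\Zcal_m}^{\mathrm{mn}}\to\hat{\thetaf}$.

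The step I expect to be the crux is preventing the multinomial maximizers from drifting to infinity: Theorem~\ref{P2:thm_approximation_bayes_multinomial} only delivers \emph{pointwise} convergence of the objectives, which a priori controls neither the location nor the existence of their maximizers. Concavity is what rescues the argument — it promotes pointwise to locally uniform convergence, and it lets the strict optimality of $\hat{\thetaf}$ on an arbitrarily small sphere propagate to a genuine, interior global maximum of $\ell^{\mathrm{mn}}_{\Zcal_m,\mathrm{pen}}$ trapped near $\hat{\thetaf}$. A quantitative implicit-function / Newton--Kantorovich alternative, based on $D\ell_{\mathrm{pen}}(\hat{\thetaf})=\mathbf{0}$ and invertibility of $\Ff_{\mathrm{pen}}(\hat{\thetaf})$ together with $C^1$- and $C^2$-convergence of the approximating log-likelihoods, would also work but is less self-contained, as it would first require upgrading Theorem~\ref{P2:thm_approximation_bayes_multinomial} to convergence of gradients and Hessians uniformly near $\hat{\thetaf}$.
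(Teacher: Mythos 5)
Your proof is correct; part~\ref{P2:thm_approximation_multinomial_MLE_unique} is exactly the paper's argument (strict concavity of $\ell^{\mathrm{mn}}_{\Zcal,\mathrm{pen}}$ via Proposition~\ref{P2a:prop_appendix_fisher_informations_positive_definite} under~\ref{P2:assumption_covariate_basis_nonsingular} and~\ref{P2:assumption_domain_basis_nonsingular}, then the midpoint contradiction as in Theorem~\ref{P2:thm_existence_uniqueness_bayes_PMLE}).

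For part~\ref{P2:thm_approximation_bayes_multinomial_main_statement} you reach the same conclusion by a genuinely different, more hands-on route. The paper works with $-\ell^{\mathrm{mn}}_{\Zcal_m,\mathrm{pen}}$ and $-\ell_{\mathrm{pen}}$ in the variational-analysis framework of \citet{rockafellar2009}: pointwise convergence (Theorem~\ref{P2:thm_approximation_bayes_multinomial}) plus convexity yields epi-convergence (their Theorem~7.17), level-boundedness of $-\ell_{\mathrm{pen}}$ (Theorem~\ref{P2a:thm_appendix_loglikelihood_level_bounded}, under~\ref{P2:assumption_scalar_product_non_constant}) yields eventual level-boundedness of the sequence (their Exercise~7.32(c)), and their Theorem~7.33 then delivers at once the asymptotic existence of the minimizers and their convergence to the unique minimizer $\hat{\thetaf}$. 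You instead upgrade pointwise to locally uniform convergence via the classical fact about finite concave functions on $\Rbb^K$ (for finite convex functions this is essentially the same phenomenon the paper accesses through epi-convergence), and then trap the maximizers by hand: uniqueness of $\hat{\thetaf}$ gives a positive gap $\delta$ on a small sphere, ray-concavity of $\ell_{\mathrm{pen}}$ and of $\ell^{\mathrm{mn}}_{\Zcal_m,\mathrm{pen}}$ propagates the strict inequality to the entire exterior of the ball -- this is precisely the step that substitutes for the paper's ``eventual level-boundedness'' and prevents the multinomial maximizers from drifting to infinity -- and uniform closeness on the closed ball then forces $\ell^{\mathrm{mn}}_{\Zcal_m,\mathrm{pen}}$ to attain its global maximum strictly inside the ball, giving existence, localisation within $\varepsilon$ of $\hat{\thetaf}$, and convergence as $\varepsilon \downarrow 0$. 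Both arguments consume the same inputs (Theorem~\ref{P2:thm_approximation_bayes_multinomial}, where $\Yc \neq \emptyset$ enters; concavity from Proposition~\ref{P2a:prop_appendix_fisher_informations_positive_definite}; existence and uniqueness of $\hat{\thetaf}$ from Theorem~\ref{P2:thm_existence_uniqueness_bayes_PMLE} under~\ref{P2:assumption_covariate_basis_nonsingular} and~\ref{P2:assumption_scalar_product_non_constant}); what yours buys is self-containedness apart from one standard convex-analysis lemma and an explicit quantitative localisation of $\hat{\thetaf}^{\mathrm{mn}}_{\Zcal_m}$, while the paper's buys brevity and a template that would also cover approximating objectives that are merely lower semicontinuous rather than concave. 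Your closing remark is also on target: a Newton--Kantorovich variant would indeed require first strengthening Theorem~\ref{P2:thm_approximation_bayes_multinomial} to locally uniform convergence of gradients and Hessians, which neither the paper nor your argument needs.
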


Equivalently to the multinomial model constructed above, we can consider a Poisson model.
Compared to earlier work on this well-known equivalence \citep[e.g.,][]{baker1994}, 
we include a penalty term for the coefficient vector $\thetaf$.
Denote the upper left $K \times K$ submatrix of a matrix $\Af$ with $\Af_{[K]}$. 


\begin{thme}\label{P2:proposition_equivalence_multniomial_poisson}
%
For observations $\mathbf{n}^{(l)} = (n_1^{(l)}, \ldots , n_{\Gamma^{(l)}}^{(l)}) \in \Nbb_0^{\Gamma^{(l)}}$ 
with $l = 1, \ldots , L$, consider two different distributional assumptions:
\begin{enumerate}[label=\arabic*)]
\item\label{P2:proposition_equivalence_multniomial_poisson_assumption_poisson}
$n_g^{(l)}$ are realizations of Poisson variables $N_g^{(l)} \sim Po ( \lambda_g^{(l)} ( \thetaf, \tauf 
) )$, 
independent for $g = 1, \ldots , \Gamma^{(l)}, \, l = 1, \ldots , L$, where the Poisson rates are modeled via 
$ 
\lambda_g^{(l)} ( \thetaf , \tauf ) 
= 
\exp ( \eta_g^{(l)} (\thetaf) + \tau^{(l)} )
,
$ 
with a predictor $\eta_g^{(l)} (\thetaf) \in \Rbb$ for an unknown coefficient vector 
$\thetaf \in 
\Rbb^{K}$ and 
$\tauf = (\tau^{(1)}, \ldots , \tau^{(L)} )^\top \in \Rbb^L $.
Set $\lambda^{(l)} := \sum_{g=1}^{\Gamma^{(l)}} \lambda_g^{(l)} (\thetaf , \tauf)$, $n^{(l)} := \sum_{g=1}^{\Gamma^{(l)}} n_g^{(l)}$,  $\lambdaf = (\lambda^{(1)} , \ldots , \lambda^{(L)})$, and $\nf = (n^{(1)}, \ldots , n^{(L)})$ for $l = 1, \ldots , L$.
Given an additive penalty $\pen(\thetaf)$ on the log-likelihood, denote the PMLE for $(\thetaf, \tauf)$ with $(\hat{\thetaf}^{\mathrm{po}}, \taufh)$, 
the 
penalized Fisher information at $(\thetafh^{\mathrm{po}}, \taufh)$
with $\Ffh_{\mathrm{pen}}^{\mathrm{po}}$
and the 
expected penalized Fisher information (with respect to the Poisson distributional assumption given $(\thetaf, \tauf)$) 
on the restricted parameter space 
$\{ (\thetaf , \tauf) \in \Rbb^K 
\times \Rbb^L ~|~ \lambdaf = \nf \}$ with $\Ebb^{\mathrm{po}}_{\thetaf, \tauf} ( \Ff_{\mathrm{pen}}^{\mathrm{po}} ~|~ \lambdaf = \nf)$. 
\item\label{P2:proposition_equivalence_multniomial_poisson_assumption_multinom}
$\mathbf{n}^{(l)}$ are realizations of multinomial vectors 
$\mathbf{N}^{(l)} \sim M ( n^{(l)}, p_1^{(l)}(\thetaf), \ldots , p_{\Gamma^{(l)}}^{(l)}(\thetaf) \, | $ 
$n^{(l)} = \sum_{g=1}^{\Gamma^{(l)}} n_g^{(l)} )$, independent for $l = 1, \ldots , L$, where the probabilities are modeled via 
$ 
p_g^{(l)}(\thetaf) 
= $
$\frac{
\exp [ \eta_g^{(l)} (\thetaf) ]}{\sum_{g'=1}^{\Gamma^{(l)}} 
\exp [ \eta_{g'}^{(l)} (\thetaf) ]}
\,,
$ 
with $\eta_g^{(l)} (\thetaf) \in \Rbb$ the same predictor as in \ref{P2:proposition_equivalence_multniomial_poisson_assumption_poisson}.
Given the penalty $\pen(\thetaf)$ from \ref{P2:proposition_equivalence_multniomial_poisson_assumption_poisson}, denote the PMLE for $\thetaf$ with $\hat{\thetaf}^{\mathrm{mn}}$,
the penalized Fisher information at $\hat{\thetaf}^{\mathrm{mn}}$ with $\Ffh_{\mathrm{pen}}^{\mathrm{mn}}$,
and 
the expected penalized Fisher information (with respect to the multinomial distributional assumption given $\thetaf$) with $\Ebb^{\mathrm{mn}}_{\thetaf} ( \Ff_{\mathrm{pen}}^{\mathrm{mn}})$.
\end{enumerate}
Then, 
the PMLEs for $\thetaf$ and the inverse (expected) penalized Fisher informations corresponding to $\thetaf$ are identical: 
$\hat{\thetaf}^{\mathrm{po}} = \hat{\thetaf}^{\mathrm{mn}}$, $((\Ffh_{\mathrm{pen}}^{\mathrm{po}})^{-1})_{[K]} = (\Ffh_{\mathrm{pen}}^{\mathrm{mn}})^{-1}$, and 
$((\Ebb^{\mathrm{po}}_{\thetaf, \tauf} ( \Ff_{\mathrm{pen}}^{\mathrm{po}} \, |$ 
$\lambdaf = \nf))^{-1})_{[K]}
= (\Ebb^{\mathrm{mn}}_{\thetaf} ( \Ff_{\mathrm{pen}}^{\mathrm{mn}}))^{-1}$.
\end{thme}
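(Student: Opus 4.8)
\emph{Setup of the argument.} The plan is to use the classical factorisation of a log‑linear Poisson likelihood into a ``group totals'' part and a ``within‑group shares'' part, and then to track this factorisation through the reparametrisation $(\thetaf,\tauf)\leftrightarrow(\thetaf,\lambdaf)$. First I would write $S^{(l)}(\thetaf):=\sum_{g'=1}^{\Gamma^{(l)}}\exp[\eta_{g'}^{(l)}(\thetaf)]$, so that $\lambda_g^{(l)}(\thetaf,\tauf)=\lambda^{(l)}\,p_g^{(l)}(\thetaf)$ with $\lambda^{(l)}=\e^{\tau^{(l)}}S^{(l)}(\thetaf)$. Substituting into the Poisson log‑likelihood and using $\sum_g n_g^{(l)}=n^{(l)}$ gives, up to an additive constant in $(\thetaf,\tauf)$,
\[
\ell^{\mathrm{po}}(\thetaf,\tauf)=\underbrace{\sum_{l=1}^L\bigl(n^{(l)}\log\lambda^{(l)}-\lambda^{(l)}\bigr)}_{=:A(\lambdaf)}\;+\;\underbrace{\sum_{l=1}^L\sum_{g=1}^{\Gamma^{(l)}}n_g^{(l)}\log p_g^{(l)}(\thetaf)}_{=\;\ell^{\mathrm{mn}}(\thetaf)\text{ up to a constant}},
\]
hence $\ell^{\mathrm{po}}_{\mathrm{pen}}(\thetaf,\tauf)=A(\lambdaf)+\ell^{\mathrm{mn}}_{\mathrm{pen}}(\thetaf)+\mathrm{const}$, since $\pen$ depends only on $\thetaf$. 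The map $(\thetaf,\tauf)\mapsto(\thetaf,\lambdaf)$ is a diffeomorphism of $\Rbb^K\times\Rbb^L$ onto $\Rbb^K\times(0,\infty)^L$ (inverse $\tau^{(l)}=\log\lambda^{(l)}-\log S^{(l)}(\thetaf)$), so maximising $\ell^{\mathrm{po}}_{\mathrm{pen}}$ over $(\thetaf,\tauf)$ is equivalent to maximising $A(\lambdaf)+\ell^{\mathrm{mn}}_{\mathrm{pen}}(\thetaf)$ over $(\thetaf,\lambdaf)$, which decouples. Since $n^{(l)}=|\Ical^{(l)}|\ge 1$, the unique maximiser of $A$ is $\lambda^{(l)}=n^{(l)}$, while $\argmax\ell^{\mathrm{mn}}_{\mathrm{pen}}=\thetafh^{\mathrm{mn}}$; this shows both that $\thetafh^{\mathrm{po}}$ exists iff $\thetafh^{\mathrm{mn}}$ exists, and $\thetafh^{\mathrm{po}}=\thetafh^{\mathrm{mn}}$.

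\emph{Fisher informations via the change of variables.} In the $(\thetaf,\lambdaf)$‑coordinates the negative Hessian of $\ell^{\mathrm{po}}_{\mathrm{pen}}$ is block diagonal with blocks $\Ff^{\mathrm{mn}}_{\mathrm{pen}}(\thetaf)$ and $\diag\bigl(n^{(l)}/(\lambda^{(l)})^2\bigr)$, because $A$ depends only on $\lambdaf$ and $\ell^{\mathrm{mn}}_{\mathrm{pen}}$ only on $\thetaf$. Pulling this back along $g$ via the second‑order chain rule gives
\[
-D^2\ell^{\mathrm{po}}_{\mathrm{pen}}=J^\top\begin{pmatrix}\Ff^{\mathrm{mn}}_{\mathrm{pen}}(\thetaf)&0\\[2pt]0&\diag\bigl(n^{(l)}/(\lambda^{(l)})^2\bigr)\end{pmatrix}J\;+\;\sum_{l=1}^L\Bigl(\tfrac{n^{(l)}}{\lambda^{(l)}}-1\Bigr)D^2\Lambda^{(l)},
\]
where $\Lambda^{(l)}(\thetaf,\tauf)=\e^{\tau^{(l)}}S^{(l)}(\thetaf)$ and $J=\begin{pmatrix}\Id_K&0\\ \partial\lambdaf/\partial\thetaf&\diag(\lambda^{(l)})\end{pmatrix}$ is block lower triangular, hence invertible as $\lambda^{(l)}>0$. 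The key point is that the correction term vanishes exactly where it is evaluated: at the PMLE $(\thetafh^{\mathrm{po}},\taufh)$ because $\lambda^{(l)}=n^{(l)}$ there, and after taking the Poisson expectation because $\Ebb^{\mathrm{po}}[n^{(l)}]=\lambda^{(l)}$; simultaneously the lower‑right block becomes $\diag(1/n^{(l)})$, resp.\ $\diag(1/\lambda^{(l)})$, and stays invertible. For a matrix $J^\top\diag(P,Q)J$ with $P,Q$ invertible and $J$ block lower triangular with identity upper‑left block, a direct computation of $J^{-1}\diag(P^{-1},Q^{-1})(J^{-1})^\top$ shows that its upper‑left $K\times K$ block equals $P^{-1}$. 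Taking $P=\Ff^{\mathrm{mn}}_{\mathrm{pen}}(\thetafh^{\mathrm{mn}})$ yields $((\Ffh_{\mathrm{pen}}^{\mathrm{po}})^{-1})_{[K]}=(\Ffh_{\mathrm{pen}}^{\mathrm{mn}})^{-1}$; taking $P=\Ebb^{\mathrm{mn}}_{\thetaf}(\Ff_{\mathrm{pen}}^{\mathrm{mn}})$ yields the expected‑information identity, where one uses that on $\{\lambdaf=\nf\}$ the Poisson expectation of $\Ff^{\mathrm{mn}}_{\mathrm{pen}}$ (involving $\Ebb^{\mathrm{po}}[n_g^{(l)}]=\lambda^{(l)}p_g^{(l)}(\thetaf)$ and $\Ebb^{\mathrm{po}}[n^{(l)}]=\lambda^{(l)}$) coincides with $\Ebb^{\mathrm{mn}}_{\thetaf}(\Ff^{\mathrm{mn}}_{\mathrm{pen}})$ (involving $n^{(l)}p_g^{(l)}(\thetaf)$ and the fixed $n^{(l)}$) --- this is precisely what the restriction to $\{\lambdaf=\nf\}$ secures. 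Invertibility of $\Ff^{\mathrm{mn}}_{\mathrm{pen}}$ and of its expectation is guaranteed under \ref{P2:assumption_covariate_basis_nonsingular} and \ref{P2:assumption_domain_basis_nonsingular} by Proposition~\ref{P2a:prop_appendix_fisher_informations_positive_definite}.

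\emph{Expected obstacle.} I expect the delicate part to be the careful bookkeeping of the reparametrisation: verifying that the second‑order chain‑rule correction $\sum_l(n^{(l)}/\lambda^{(l)}-1)D^2\Lambda^{(l)}$ genuinely drops out of each of the three asserted identities (observed information at the PMLE; expected information on the restricted parameter space $\{\lambdaf=\nf\}$), and that every matrix whose inverse appears is in fact invertible --- for which one combines Proposition~\ref{P2a:prop_appendix_fisher_informations_positive_definite}, the bound $n^{(l)}=|\Ical^{(l)}|\ge 1$, and the triangular structure of $J$. Everything else --- the explicit forms of $A$, $p_g^{(l)}$, $J$, and the $2\times 2$ block‑matrix inversion --- is routine.
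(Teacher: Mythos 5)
Your proposal is correct and follows essentially the same route as the paper's proof: the Palmgren-style reparametrisation $(\thetaf,\tauf)\leftrightarrow(\thetaf,\lambdaf)$, the factorisation of the penalized Poisson log-likelihood into a multinomial part in $\thetaf$ plus a Poisson part in $\lambdaf$ (giving $\hat\lambda^{(l)}=n^{(l)}$ and $\thetafh^{\mathrm{po}}=\thetafh^{\mathrm{mn}}$), and the second-order chain rule whose correction term vanishes at the PMLE and under the Poisson expectation, followed by the block-triangular inversion of $D\psi$ and the identification of the expectations on $\{\lambdaf=\nf\}$. Your bookkeeping of the correction term and of the invertibility of the $\lambdaf$-block via $n^{(l)}\ge 1$ matches the paper's argument in substance, so no gap remains.
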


Note that in general $(\Af^{-1})_{[K]}\neq (\Af_{[K]})^{-1}$, and hence the identities regarding the (sub-matrices of) Fisher informations do not hold without inversion.
Furthermore, note that the intercepts $\tau^{(l)}$ indirectly model the absolute counts $n^{(l)}$ for the $l$-th covariate combination (see the proof in appendix~\ref{P2a:chapter_proofs} for details on the relation), which are of interest in a Poisson distribution, while the probabilities modeled in the multinomial distribution only carry relative information, which is contained completely in $\thetaf$.
The predictor should be chosen in such a way that the coefficients $\thetaf$ (and intercepts $\tau^{(l)}$ in the Poisson distribution assumption) are identifiable.
When using Theorem~\ref{P2:proposition_equivalence_multniomial_poisson} for our regression approach, we choose the predictor as
$ 
\eta_g^{(l)} (\thetaf)
= \log \Delta_g^{(l)} + \bft(\xf^{(l)})(u_g^{(l)}) ^\top \thetaf,
$ 
for $l = 1, \ldots , L$ and $g = 1, \ldots , \Gamma^{(l)}$.
%
The integrate-to-zero constraint of each component of $\bft (\xf^{(l)}) \in L_0^2(\mu)^{K}$  
ensures that the coefficients are identifiable regarding additive constants. 
As before, we use the penalty
$\pen(\thetaf) 
= - \thetaf^\top \Pf \thetaf $.
However, Theorem~\ref{P2:proposition_equivalence_multniomial_poisson} is also valid for a general penalty (or prior taking a Bayesian view). 

Theorems~\ref{P2:thm_approximation_bayes_multinomial_MLE} and~\ref{P2:proposition_equivalence_multniomial_poisson} imply that we can use Poisson regression to 
estimate the parameter vector in 
$\ell_{\mathrm{pen}}$ 
and thus build our approach's implementation on existing ones.
Furthermore, justified by our theorems, 
we may replace $\thetafh$ with $\thetafh^{\mathrm{po}}$ and 
$\Ff_\mathrm{pen}(\thetafh)^{-1}$ 
with $((\Ffh_\mathrm{pen}^{\mathrm{po}})^{-1})_{[K]}$ in Lemma~\ref{P2:lemma_confidence_regions} to construct confidence regions in practice.

Note that there might be numerous unique covariate combinations, 
in particular if metric covariates are involved. 
Thus, the number of intercepts $\tau^{(l)}$ to estimate for $l = 1, \ldots , L$ in the Poisson regression model can get very large.
In practice, this problem might be circumvented by modeling the intercepts smoothly in the covariate basis as well, i.e., $\tau^{(l)} = \bfe_{\Xcal}(\xf^{(l)})^\top \tauf$, where $\tauf \in \Rbb^{K_{\Xcal}}$.
However, as in any additive Poisson regression model like here for the absolute counts $n^{(l)}$, one needs to carefully inspect the counts to decide whether this smoothness assumption is reasonable.

%
In the beginning of this section, we constructed one 
vector of counts per unique observed covariate combination $\xf^{(l)}$, $l \in \{1, \ldots , L\}$, combining all observations of the same conditional distribution. 
However, the construction described above can also be carried out for partitions of the corresponding set of indices $\Ical^{(l)}$, without changing estimation results, 
compare Lemma~\ref{P2a:lemma_splitting_histograms} in appendix~\ref{P2a:chapter_proofs}.
In the most extreme case, 
this yields one vector of counts per observation (with one count of $1$ and all others $0$).
While this is usually not desirable with respect to computation time in practice, since it inflates the size of the data set artificially, it shows the estimation does not depend on aggregation of observations and the presented approach is thus overall coherent.

\subsection{Interpretation}\label{P2:chapter_interpretation}

An advantage of formulating our structured additive regression models in Bayes Hilbert spaces is 
that it allows for ceteris paribus interpretation of partial effects as proposed by \citet{maier2021}.
In the following, we briefly summarize two of their interpretation methods. 
For this purpose, let $h_j = h_j(\xf) \in \B,j\in\{ 1, \ldots , J\}$ be a partial effect as in model~\eqref{P2:equation_model} and $h_0 = h_0(\xf) \in \B$ the sum of the remaining partial effects, omitting the covariates $\xf \in \Xcal$ in the notation for the sake of readability.

For the first interpretation method, 
consider the infinitesimal odds ratio 
$\OR_j(s,t) := \frac{(h_0(s)\oplus h_j(s))/(h_0(t) \oplus h_j(t))}{h_0(s)/h_0(t)} = h_j(s)/h_j(t)$ for $s, t \in \Ycal$.
As then $\log \OR_j(s,t) 
= \clr[h_j](s) - \clr[h_j](t)$, the log odds ratio  can be easily read off as vertical differences from plots of the clr transformed effects. 
As $\OR_j(s, t)$ is independent of $h_0$, these odds ratios can be interpreted ceteris paribus.
In addition to the infinitesimal interpretation,
\citet[Proposition~3.1~(b)]{maier2021} show that $\OR_j(s,t)$ corresponds to the limit of the usual odds ratio in the vicinity of $s$ and~$t$. 
Furthermore, Proposition~3.1~(a) in \citet{maier2021} implies that if $\OR_j (s, t) > 1$ for all $s \in A, \, t \in B$, where $A, B \in \{C \in \Acal ~|~ \mu (C) > 0 \}$, we have $\Pbb(A) / \Pbb(B) > \Pbb_0(A) / \Pbb_0(B)$ 
where $\Pbb$ and $\Pbb_0$ denote the probability measures corresponding to the densities $h_0 \oplus h_j$ and $h_0$.

The second approach 
examines shifts of probability masses under summation: 
For $\alpha \in \Rbb$, let 
$A_+ := \{ h_j 
\geq \alpha \}$ and 
$A_- := 
\{ h_j 
< \alpha \}$.
Then,
$\int_{A_+} h_0 \oplus h_j 
\, \dmu \geq \int_{A_+} h_0 
\, \dmu 
$
and 
$\int_{A_-} h_0 \oplus h_j 
\, \dmu \leq \int_{A_-} h_0 
\, \dmu 
$ \citep[Theorem~F.1]{maier2021}, i.e., any threshold $\alpha$ decomposes $\Ycal$ 
into two sets -- one where the probability mass of any density 
$h_0 
\in \B$ 
increases under perturbation with $h_j$ 
and one where it decreases.
Note the result also holds, when constructing $A_+$ and $A_-$ based on $\clr [h_j]$ instead of $h_j$, since the clr transformation retains a density's monotonic behavior.


\section{Application}\label{P2:chapter_application}
The density of the woman's share in a couple's total labor income is considered in several gender economic contributions \citep[e.g.,][]{bertrand2015, sprengholz2020}.
A brief discussion thereof is available in \citet{maier2021}, who (in contrast to the aforementioned gender economic literature) analyze the dependence of such income share densities on covariates.
Their density-on-scalar regression approach requires preliminary estimation of densities from samples of the conditional distributions, which then serve as response densities for the regression.
Our newly developed method avoids this two-step approach by directly working with the samples 
and additionally provides inference for the effect functions.
In the following, we apply our modeling approach to the same data set considered by \citet{maier2021}, which is derived from the German Socio-Economic Panel (SOEP) 
(version 33, doi:10.5684/soep.v33, see \citealp{soep2019}).
It contains $N = 154,924$ 
observations of opposite-sex couples living together in one household (not necessarily married), with at least one partner reporting positive labor income. 

We aim to model conditional densities of the woman's 
share $s$ in a couple's total gross labor income, given three covariates.
The binary covariate \emph{West\_East} indicates whether the couple's household is located in \emph{West} or \emph{East} Germany (including Berlin).
The covariate \emph{c\_age} categorizes the couples via the age of the youngest child living in their household into three groups: \emph{0-6} (zero to six years), \emph{7-18} (seven to 18 years), and \emph{other} (no minor children). 
The third covariate \emph{year} ranges from 1984 for \emph{West} Germany and from 1991 for \emph{East} Germany to 2016.
Given these covariates, we consider the conditional densities $f_{\text{\emph{West\_East, 
c\_age, year}}}: [0, 1] \ra \Rbb^+
$ 
of the share $s$ as elements of a mixed Bayes Hilbert space $\B = B^2([0, 1], \Bcal, \mu)$ with reference measure $\mu := \delta_0 + \lambda + \delta_1$, since the boundary values $0$ and~$1$ of the share, corresponding to single-earner households, have positive probability mass.
We estimate the model
\begin{align}
f_{\text{\emph{West\_East, c\_age, year}}} &= \beta_0 \oplus \beta_{\text{\emph{West\_East}}} \oplus \beta_{c\_age} \oplus \beta_{\text{\emph{c\_age, West\_East}}} \notag \\
&\hspace{0.45 cm} \oplus g(year) \oplus g_{\text{\emph{West\_East}}} (year) \oplus g_{c\_age} (year) \notag \\
&\hspace{0.45 cm} \oplus g_{\text{\emph{c\_age, West\_East}}}(year)
,
\label{P2:soep_model}
\end{align}

where all summands are elements of the Bayes Hilbert space $B^2 (\mu)$, i.e., densities of~
$s$.
We use reference coding with the same reference categories as \citet{maier2021}, namely $\text{\emph{West\_East}} = \text{\emph{West}}, \, \text{\emph{c\_age}} = \text{\emph{other}}$, and $\text{\emph{year}} = 1991$ 
(with the latter, for a continuous covariate, not directly specifiable in \texttt{mgcv}'s function \texttt{gam()}, which our \texttt{R} package \href{https://github.com/Eva2703/DensityRegression}{\texttt{DensityRegression}} 
uses for fitting Poisson models, requiring a transformation as discussed in appendix~\ref{P2a:chapter_reference_coding_application}). 
The intercept $\beta_0$ corresponds to the density given these covariate values.
The other effects then describe the deviation from the reference for different values of the respective covariate, e.g., the smooth effect $g(\text{\emph{year}})$ gives the deviation for each \emph{year} from the reference 1991 (for \emph{West} Germany and \emph{c\_age} \emph{other}).
The model also contains several interaction terms, namely, a group-specific intercept $\beta_{\text{\emph{c\_age, West\_East}}}$ as well as group-specific smooth effects $g_{\text{\emph{West\_East}}} (\text{\emph{year}})$, $g_{\text{\emph{c\_age}}} (\text{\emph{year}})$, and $g_{\text{\emph{c\_age, West\_East}}}(\text{\emph{year}})$.
Note that more covariates might be included (at the expense of longer computation time), however for comparability purposes we consider (almost) the same model as in \citet{maier2021}.
The only difference of model~\eqref{P2:soep_model} to theirs is that they include an additional group specific intercept corresponding to a fourth covariate \emph{region}, which contains a finer disaggregation of the covariate \emph{West\_East} into a total of six regions.
Since they discovered almost no variation within the regions corresponding to \emph{West} and \emph{East} Germany, respectively, we decided to not include this additional covariate for the sake of model parsimony. 
The basis functions $\bfe_{\Ycal} \in \B^{K_{\Ycal}}$ are obtained from cubic B-splines for the continuous and indicator functions for the discrete component as described in Section~\ref{P2:chapter_bayes_space_regression}.
Like \citet{maier2021} we use no penalty, setting $\xi_{\Ycal, \, j} = 0$ for all $j = 1, \ldots , J$. 
Since their two-step approach includes pre-smoothing by estimating the response densities, the number of basis functions 
used in the actual regression model is not comparable and we use a lower number ($K_{\Ycal} = 15$) to obtain a similar level of smoothness.
For the (group-specific) smooth effects, we use cubic B-splines with second order difference penalty for $\bfe_j$ and $\Pf_j$
and REML optimization to determine the smoothing parameters $\xi_{\Xcal, \, j}$ (given in appendix~\ref{P2a:chapter_application_smoothing_parameters}).
For histogram construction on $(0, 1)$, we use $G = 100$ equidistant bins for each covariate combination.

Note that the data set contains a sample weight 
per household, 
which we basically incorporate 
by using weighted counts as response data. 
However, these sample weights and thus the weighted counts are not natural numbers, technically not valid as response observations in a Poisson model. 
Thus, we construct an equivalent log-likelihood 
by weighting the respective log-likelihood contribution 
with the quotient of the weighted and the unweighted count 
and subtracting the quotient's logarithm from the predictor. 
See appendix~\ref{P2a:chapter_weights} for details. 

For the level $\alpha = 5\%$ (even for $\alpha = 1\%$), all effects except for the group-specific smooth effects $g_{\text{\emph{West\_East}}} (year)$ and $g_{\text{\emph{c\_age, West\_East}}} (year)$ 
are significant, compare the (simultaneous) p-values shown in Table~\ref{P2a:table_pvalues_simultan} in appendix~\ref{P2a:chapter_application_pvalues}.
There, we also give p-values which are pointwise in the covariates in Table~\ref{P2a:table_pvalues_pointwise}.
In the following, we discuss the main effects of model~\eqref{P2:soep_model}. 
All effects are shown in appendix~\ref{P2a:chapter_application_effects}, in particular the interaction effects, for which we cannot go into detail here, due to space constraints.
In Figures~\ref{P2:fig_estimated_West_East} to~\ref{P2:fig_estimated_year}, 
the left side illustrates the sum 
of the intercept and the respective main effect (i.e., the conditional densities given the different values of the considered covariate and the reference values for the remaining covariates), 
while the right side shows the clr transformed effects. 
The latter is useful for interpreting effects via log odds ratios or shifts of probability mass, see Section~\ref{P2:chapter_interpretation}. 
Note that for the discrete values $0$ and~$1$, the density values are visualized as dashes and shifted slightly outwards for better distinction.
%

The estimated effect of \emph{West\_East} is shown in Figure~\ref{P2:fig_estimated_West_East} via the colored lines.
The grey ones illustrate samples (of size $100$) from the respective $95\%$ confidence regions constructed as in Lemma~\ref{P2:lemma_confidence_regions} (simultaneous over $\Ycal = [0, 1]$ and point-wise regarding the covariates).
Note that $\betah_0 \oplus \betah_{West} = \betah_0$ and $\clr [\betah_{West}] = 0$ due to reference coding.
Considering double-earner couples ($s \in (0, 1)$) in the left part of the figure, we observe both estimated conditional densities have a mode around $0.45$.
For couples in \emph{West} Germany (without minor children in 1991) the density has a tendency towards a second mode around $s = 0.1$. 
For the former communist \emph{East} Germany, there is a shift of probability mass to the right yielding a more symmetric density.
This is also indicated by $\clr [\betah_{East}]$ on the right: 
Using the value $\alpha \approx -0.65$, 
we obtain that the probability mass of the estimated densities in \emph{East} Germany decreases on $
\{ \clr [\betah_{East}] < \alpha \} = (0, 0.2)$ compared to \emph{West} Germany (ceteris paribus).
Regarding the boundary values, 
nonworking women ($s = 0$) are more frequent in \emph{West} than in \emph{East} Germany, while single-earner women ($s = 1$) are more frequent in \emph{East} Germany.
We can quantify this via the log odds ratio of $\betah_{\text{\emph{East}}}$ and $\betah_{\text{\emph{West}}}$ for $s = 1$ compared to $t = 0$, which is 
$\clr [\betah_{\text{\emph{East}}}](s) - \clr [\betah_{\text{\emph{East}}}] (t) - (\clr [\betah_{\text{\emph{West}}}](s) - \clr [\betah_{\text{\emph{West}}}] (t)) 
= 0.23 - (-0.09) = 0.32$, which means that the odds for single-earner 
versus nonworking women in \emph{East} Germany are $\exp(0.32) \approx 1.38$ times the odds in \emph{West} Germany.
The (clr transformed) densities sampled from the simultaneous (over $[0, 1]$) confidence regions show similar shapes to the estimated densities over large parts of the domain, in particular regarding our findings.
However, towards the boundaries, i.e., for $s \ra 0$ and $s \ra 1$ with $s \in (0, 1)$, the functions spread out, reflecting estimation uncertainty due to few observations in these areas.

\begin{figure}[H]
\begin{center}
\includegraphics[width=0.49\textwidth]{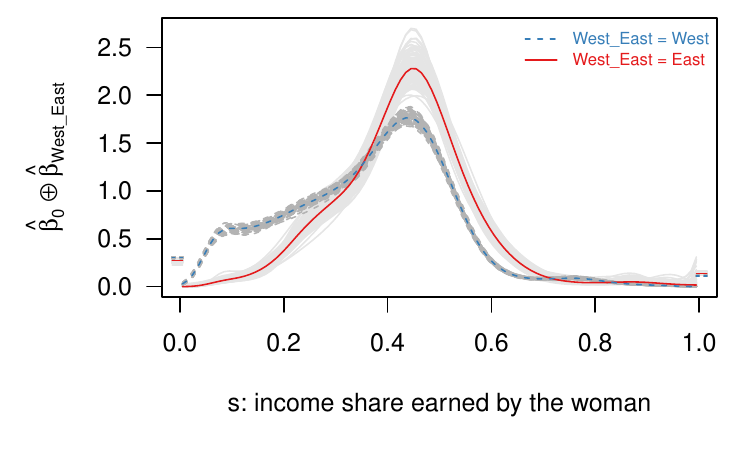}
\includegraphics[width=0.49\textwidth]{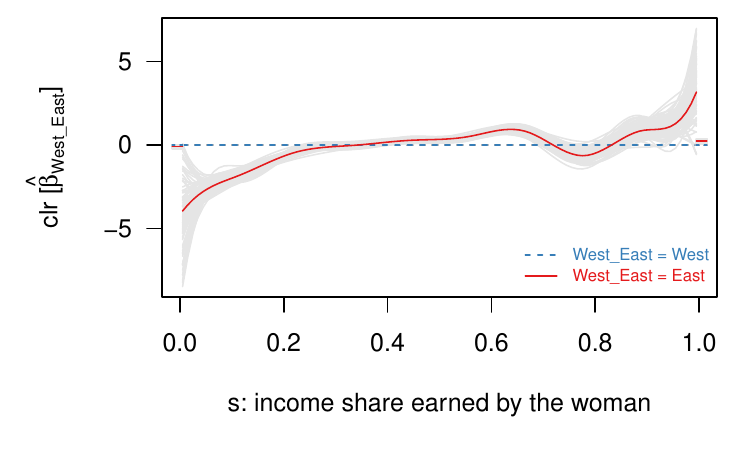}
\end{center}
\vspace{-0.5cm}
\caption{Estimated conditional densities for couples without minor children in 1991 for \emph{West} vs.\ \emph{East} Germany [left] and clr transformed estimated effects of \emph{West\_East} [right] with $100$ draws sampled uniformly from the respective $95\%$ simultaneous (over $[0, 1]$) confidence region. \label{P2:fig_estimated_West_East}}
\end{figure}

Figure~\ref{P2:fig_estimated_c_age} shows the analogous plot for the effect of \emph{c\_age}.
To avoid overplotting, we do not display the whole sample of the confidence regions, but only the pointwise minimum and maximum thereof (grey). See Figure~\ref{P2a:fig_appendix_estimated_c_age} in appendix~\ref{P2a:chapter_application_effects} for the respective plots with the whole sample.
The tendency towards a second mode observed in Figure~\ref{P2:fig_estimated_West_East} 
for couples without minor children living in \emph{West} Germany
amplifies for couples with minor children, as the densities for \emph{c\_age 0-6} and \emph{7-18} are both bimodal.
Both have one mode around $s = 0.1$ with their second and higher modes further right around $s = 0.35$ for \emph{c\_age 0-6} and $s = 0.25$ for \emph{c\_age 7-18}, but clearly smaller than that for \emph{c\_age other}.
While the shapes of the densities for the two different groups of couples with minor children are similar, they differ in scale, reflecting the smaller share of double-earners ($s \in (0, 1)$) among couples with small children (\emph{c\_age 0-6}) compared to the other groups.
The clr transformed effects are decreasing for most parts of $(0, 1)$, i.e., odds (ratios) $\betah_{\text{\emph{c\_age}}} (s) / \betah_{\text{\emph{c\_age}}} (t) = \frac{\betah_{\text{\emph{c\_age}}} (s) / \betah_{\text{\emph{c\_age}}} (t)}{\betah_{\text{\emph{other}}} (s) / \betah_{\text{\emph{other}}} (t)}$ for larger $s$ vs.\ smaller $t \in (0, 1)$ tend to be small for $\text{\emph{c\_age}} \in \{ \text{\emph{0-6}} , \text{\emph{7-18}}\}$, indicating probability mass shifts towards the left compared to couples without minor children (ceteris paribus).
Note that regarding the estimated conditional densities for couples living in \emph{East} Germany (in 1991, including interaction terms), there is (almost) no such shift, with the densities unimodal (and close to symmetric) for all three values of \emph{c\_age}, compare Figure~\ref{P2a:fig_appendix_estimated_West_East_c_age} in appendix~\ref{P2a:chapter_application_effects}, reflecting the different social norms in the two parts of Germany with formerly different political systems.
For both groups of couples with minor children (in \emph{West} Germany), nonworking women ($s = 0$) are more frequent than among couples without minor children and vice-versa for single-earner women ($s = 1$).
The log odds ratio of $\betah_{\text{\emph{0-6}}}$ (respectively $\betah_{\text{\emph{7-18}}}$) and $\betah_{\text{\emph{other}}}$ for $s = 1$ compared to $t = 0$ is 
$
-1.11 - 1.37 = -2.48$
(respectively $-0.64 - 0.55 = -1.19$), which means that the odds for single-earner 
versus nonworking women of couples with children aged zero to six (respectively seven to 18) are $\exp(-2.48) \approx 0.08$ (respectively $\exp(-2.48) \approx 0.3$) times the odds of couples without minor children.
Again, the (clr transformed) densities sampled from the simultaneous (over $[0, 1]$) confidence regions support our findings, having mostly similar shapes as the estimated densities.
As for \emph{West\_East}, estimation uncertainty shows in a spread towards the boundaries, in particular for $s \ra 1$, where observations are very rare.

\begin{figure}[H]
\begin{center}
\includegraphics[width=0.49\textwidth]{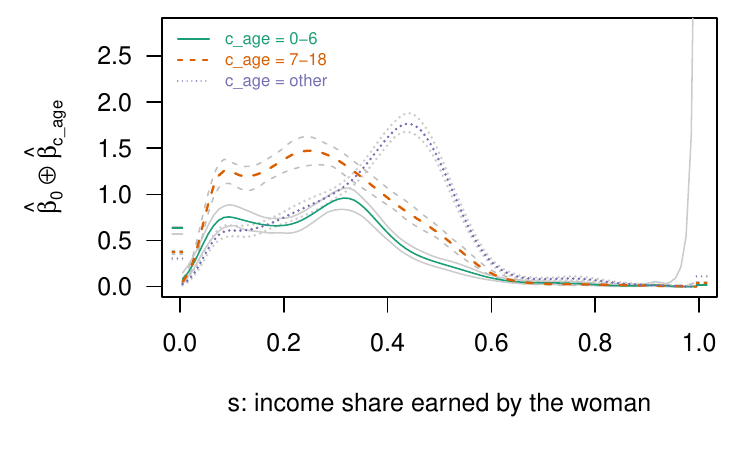}
\includegraphics[width=0.49\textwidth]{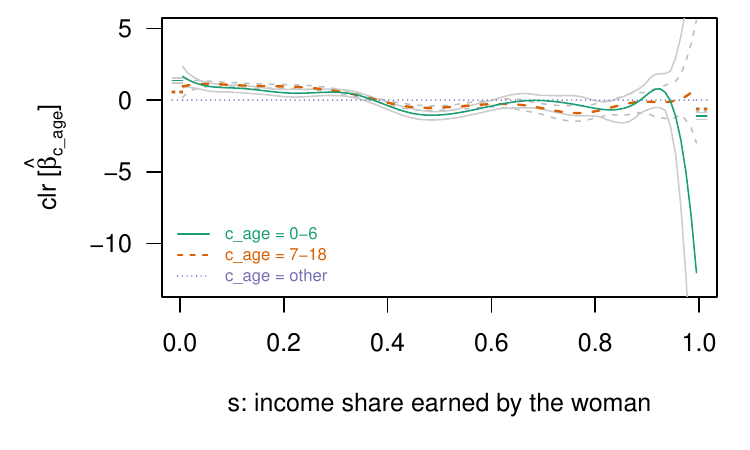}
\end{center}
\vspace{-0.5cm}
\caption{Estimated conditional densities for couples living in \emph{West} Germany in 1991 for all three values of \emph{c\_age} [left] and clr transformed estimated effects of \emph{c\_age} [right] with pointwise minimum/maximum of $100$ draws sampled uniformly from  the respective $95\%$ simultaneous (over $[0, 1]$) confidence region.
Note that the y-axis is restricted to reasonable limits regarding the estimated functions, not spanning the whole range of functions sampled from the confidence regions. \label{P2:fig_estimated_c_age}}
\end{figure}

The smooth effect of \emph{year} is illustrated in Figure~\ref{P2:fig_estimated_year}, without samples of confidence regions to avoid overplotting. 
See Figures~\ref{P2a:fig_appendix_estimated_year} and~\ref{P2a:fig_appendix_estimated_year_clr} in appendix~\ref{P2a:chapter_application_effects} for plots including samples of confidence regions.
Note that as for the other main effects, the functions sampled from the confidence region have similar shapes as the estimated ones with dispersion towards the boundaries.
A color gradient and different line types indicate the different \emph{years}.
Considering the left part of the figure, in each \emph{year}, the density (for couples without minor children living in \emph{West} Germany) reaches its maximum around $s = 0.45$ with a second mode around $s = 0.1$ developing over time. 
Furthermore, the estimated densities tend to disperse within $s \in (0, 1)$ over time, with probability mass particularly increasing for low positive income shares, while the share of nonworking women ($s = 0$) decreases.
This is a sign of women switching from no paid work to part-time employment, emphasizing the need for considering mixed densities, since an analysis of double-earner couples only ($s \in (0, 1)$) might misinterpret this development as a shift from larger to smaller income shares.
Moreover, the frequency of single-earner women ($s = 1$) increases over time, 
in the most recent years reaching a similar level as the (strongly decreasing) frequency of nonworking women ($s = 0$).
The right part of the figure also indicates the increasing dispersion on ($s \in (0, 1)$), as the clr transformed effects tend to be smaller for low and high income shares (e.g., $s \in A = (0, 0.2) \cup (0.8, 1)$) than for income shares in between (e.g.,  $t \in B = (0.2, 0.6)$) before the reference \emph{year} 1991.
This reverses afterwards. 
Hence,
the odds of the probabilities of $A$ (outer area) versus $B$ (central area) are smaller in earlier than in later \emph{years}. 
This implies that over time the probability of $A$ increases and the probability of $B$ decreases.

\begin{figure}[H]
\begin{center}
\includegraphics[width=0.49\textwidth]{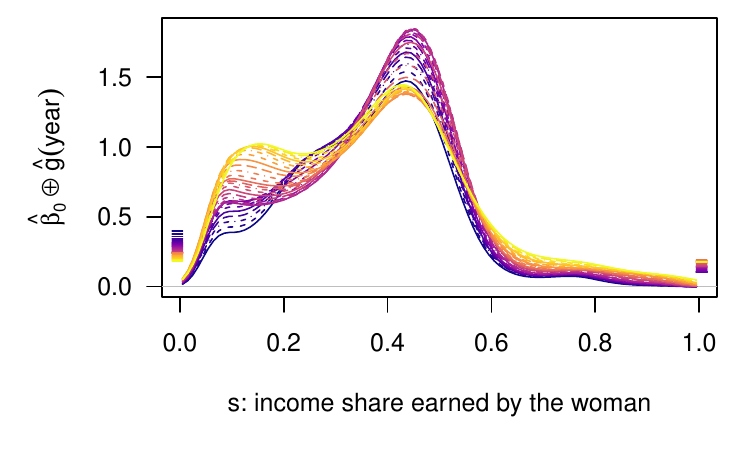}
\includegraphics[width=0.49\textwidth]{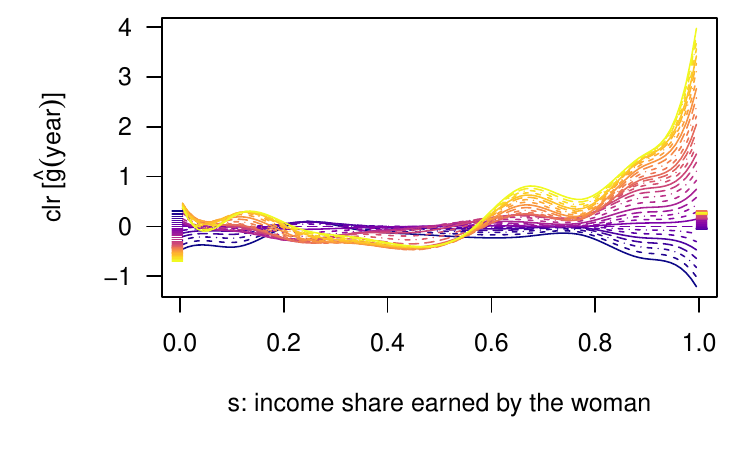}
\includegraphics[width=0.7\textwidth]{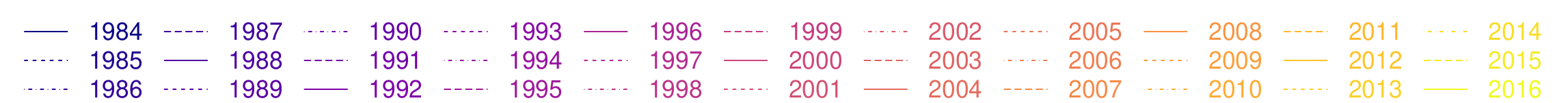}
\end{center}
\vspace{-0.5cm}
\caption{Estimated conditional densities for couples without minor children living in \emph{West} Germany over time [left] and clr transformed estimated effects of \emph{year} [right].\label{P2:fig_estimated_year}}
\end{figure}

All in all, our new approach finds similar effects as the one of \citet{maier2021}, 
but 
by working directly with the samples of the conditional distributions more appropriately uses the available information and additionally provides uncertainty quantification.
Note that we also illustrate and briefly discuss all estimated conditional densities in appendix~\ref{P2a:chapter_application_predictions}, again confirming findings of \citet{maier2021} for their pre-smoothed density-observations and predictions.
\section{Simulation study}\label{P2:chapter_simulation}
In this section, we present a simulation study to validate our approach.
We focus on the setting of our application with the estimated conditional densities obtained in Section~\ref{P2:chapter_application} serving as true conditional densities $f_{\emph{\text{West\_East}}, \emph{\text{c\_age}}, \emph{\text{year}}}$, where
$\emph{\text{West\_East}} \in \{\emph{\text{West}}, \emph{\text{East}}\}$,
$\emph{\text{c\_age}} \in \{\emph{\text{0-6}}, \emph{\text{7-18}}, \emph{\text{other}}\}$,
and $\emph{\text{year}} \in \{ 1984, \ldots, 2016\}$ for $\emph{\text{West\_East}} = \emph{\text{West}}$
and $\emph{\text{year}} \in \{ 1991, \ldots, 2016\}$ for $\emph{\text{West\_East}} = \emph{\text{East}}$.
We consider four different sample sizes, $N \in \{ 50 000, 150 000, 500 000, 1 000 000\}$, where $150 000$ corresponds to the rounded sample size of the data set underlying the application, and three different bin numbers for the continuous component, $G \in \{ 50, 100, 200\}$, using equidistant bins that are equal for all covariate combinations.
We thus omit the superscript indicating the covariate combination in the notation compared to Section~\ref{P2:chapter_multinomial_regression}. 
For $i = 1, \ldots , N$, the $i$-th covariate combination $\xf_i = (\emph{\text{West\_East}}_i, \emph{\text{c\_age}}_i, \emph{\text{year}}_i)$ is drawn with equal probabilities from all possible covariate combinations. 
Regarding the sampling of the women's income share $s_i \in [0, 1]$, note that due to the binning conducted to create the count data, the exact value within one bin of the continuous component is not relevant and it suffices to draw $s_i$ from the set of bin midpoints united with the discrete values $\{0, 1\}$, i.e., from $\{ u_1, \ldots, u_{\Gamma} \}$, where $\Gamma = G + 2$. 
The probability of sampling $u_g$ is given as
$
\int_{U_g} f_{\emph{\text{West\_East}}_i, \emph{\text{c\_age}}_i, \emph{\text{year}}_i} \, \dmu$ 
for $g = 1, \ldots, \Gamma$.
Since the different bin sizes are chosen such that the bins are nested, for fixed $N$ it suffices to sample $s_1, \ldots , s_N$ for $G = 200$, which corresponds to the finest partition. 
The same simulated data is then also used for the two smaller values of $G$. 
We estimate model~\eqref{P2:soep_model} based on the simulated data, using the same basis functions as in Section~\ref{P2:chapter_application} (including identifiability constraints).
For each combination of $N$ and $G$, we perform $200$ replications.
Motivated by \citet{maier2021}, 
we evaluate the quality of the estimated conditional densities (prediction) and the estimated partial effects using the relative mean squared error (relMSE), defined as
$ 
\rMSE (\eh)
:= 
\frac{\frac{1}{N} \, \sum_{i=1}^N \Vert e_i \ominus \eh_i \Vertb^2}{\frac{1}{N} \, \sum_{i=1}^N \Vert e_i \Vertb^2}
$, 
where $e_i \in \B$ is either the true conditional density 
$f_i = f_{\emph{\text{West\_East}}_i, \emph{\text{c\_age}}_i, \emph{\text{year}}_i}$
or a true partial effect
$h_j (\xf_i)$, $j = 1, \ldots J$,
and $\eh_i$ is its estimated version. 

\begin{figure}[H]
\vspace{-0.3cm} 
\begin{center}
\includegraphics[width=0.75\textwidth]{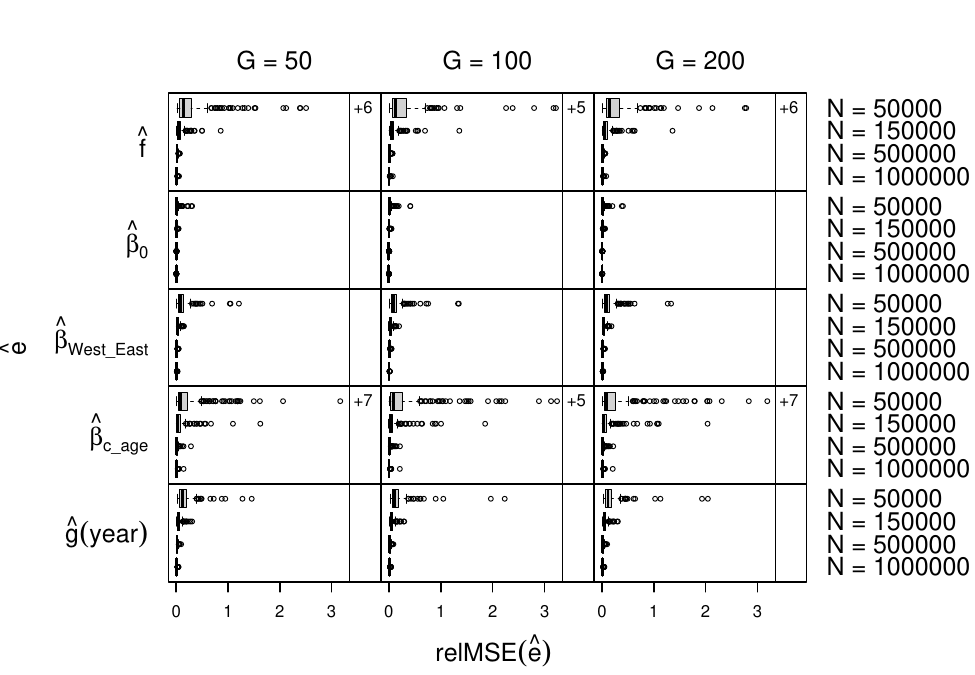}
\end{center}
\vspace{-0.5cm} 
\caption{RelMSE for prediction 
and main effects for different values of $G$ [columns] and $N$ [rows]. Numbers following a plus-sign at the right of a panel give the number of values larger than $3.35$ (vertical line).
Empty space means all values are visible.\label{P2:fig_relMSE_main}}
\end{figure}

Figure~\ref{P2:fig_relMSE_main} shows boxplots of the relMSEs for prediction and main effects, obtained based on $200$ replications for each combination of $N$ and $G$. 
All effects are illustrated in Figure~\ref{P2a:fig_relMSE_all} in appendix~\ref{P2a:chapter_simulation}.
For the smallest sample size, $N = 50 000$, while the other effects are well estimated, some iterations yield very large relMSEs for predictions and for the effect of \emph{c\_age} (also for interaction effects including \emph{c\_age}, compare Figure~\ref{P2a:fig_relMSE_all}), 
which are cut off with their count indicated in the plot.
In these iterations, the simulated data contains very few observations $s_i \in (0.9, 1)$ with $\text{\emph{c\_age}}_i = \text{\emph{0-6}}$, 
corresponding to areas where the respective true conditional densities we simulate from are close to zero due to the sharp drop for $s \rightarrow 1$ on clr-transformed level  (compare Figure \ref{P2:fig_estimated_c_age}). 
This 
leads to the continuous component of the density being underestimated for $s \ra 1$ (corresponding to an even sharper descent on clr-transformed level), which yields large squared norms $\Vert f_i \ominus \fh_i \Vertb^2$ and 
$\Vert \beta_{\text{\emph{c\_age}}_i} \ominus \betah_{\text{\emph{c\_age}}_i} \Vertb^2
$
for the corresponding observations, drastically deteriorating the relMSE. 
The problem disappears and estimation quality improves overall with increasing sample size $N$, while the three different values of $G$ yield relMSEs at a similar level for given $N$.
Considering $\rMSE (\fh)$, we obtain a median of $14\%$ ($G = 100$) to $15\%$ ($G \in \{50, 200\}$) for $N = 50000$, which decreases to $4.8\%$ ($G \in \{ 100, 200\}$) to $4.9\%$ ($G = 50$) for $N = 150000$, further to $1.4 \%$ ($G \in \{50, 100, 200\}$) for $N = 500000$, and down to $0.6 \%$ ($G = 100$), $0.7\%$ ($G = 200$), and $0.9 \%$ ($G = 50$) for $N = 1 000 000$.
%
%
%
Among the main effects, we obtain the smallest relMSEs for $\betah_0$ 
(medians: $N = 50000$: $0.9\%$ to $1\%$; 
$N = 150000$: $0.3\%$; 
$N = 500000$: $0.1\%$; 
$N = 1 000 000$: $0.05\%$ to $0.06 \%$), 
while the other effects $\betah_{\text{\emph{West\_East}}}$, $\betah_{\text{\emph{c\_age}}}$, and $\gh (\text{\emph{year}})$ yield larger values, whose medians are on a similar level
($N = 50000$: $9\%$, $10\%$, and $12\%$; 
$N = 150000$: $3.2\%$, $3.2\%$ to $3.4\%$, and $4.8\%$; 
$N = 500000$: $0.9\%$, $0.9\%$, and $1.6\%$; 
$N = 1000000$: $0.4\%$ to $0.5 \%$, $0.4\%$ to $0.6 \%$, and $0.9\%$).
For the interaction effects illustrated in Figure~\ref{P2a:fig_relMSE_all}, the relMSEs tend to be larger, in particular for effects involving \emph{c\_age} due to underestimation of the effects for large $s < 1$ as discussed above.

%
%
%
%

\begin{figure}[H]
\vspace{-0.3cm}
\begin{center}
\includegraphics[width=0.75\textwidth]{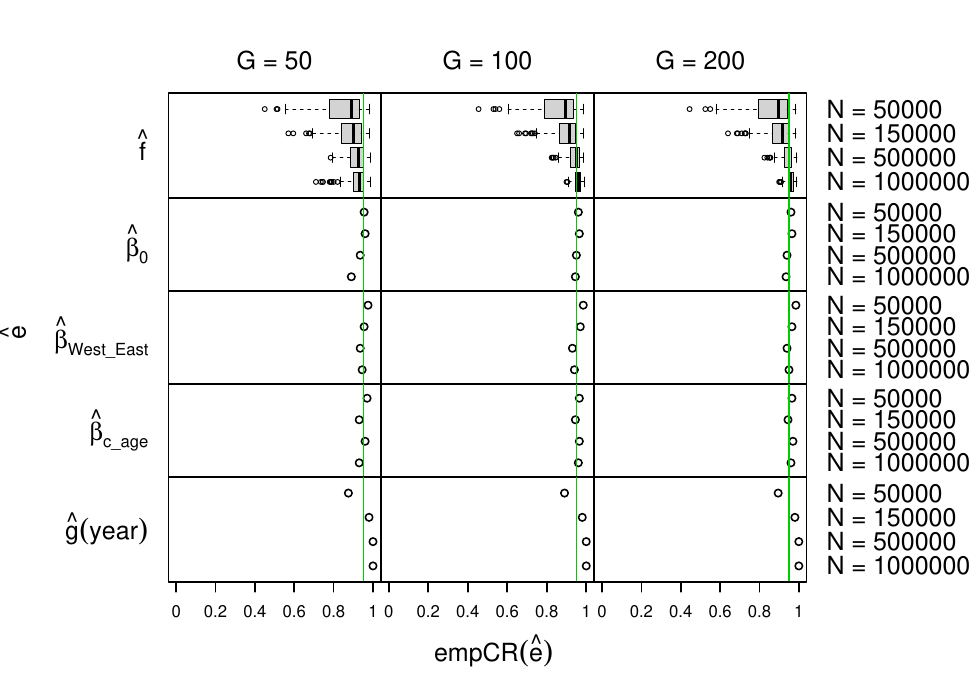}
\end{center}
\vspace{-0.5cm}
\caption{Empirical coverage rates (empCR) for prediction 
and main effects (simultaneous in covariates) for different values of $G$ [columns] and $N$ [rows].
The boxplots for predictions summarize coverage for the conditional densities corresponding to the $177$ unique covariate combinations.\label{P2:fig_empCR_smt_main}}
\end{figure}

The empirical coverage rates for predictions (as constructed in Lemma~\ref{P2:lemma_confidence_regions}) and main effects (as constructed in Lemma~\ref{P2a:lemma_confidence_regions_simultaneos}, i.e., simultaneous in covariates) are illustrated in Figure~\ref{P2:fig_empCR_smt_main}, where the green vertical line indicates the true theoretical value $\alpha = 95\%$.
Appendix~\ref{P2a:chapter_simulation} contains the analogous figure showing all effects (Figure~\ref{P2a:fig_empCR_smt_all}), as well as empirical coverage rates for partial effects based on point-wise confidence regions as constructed in Lemma~\ref{P2:lemma_confidence_regions} (Figure~\ref{P2a:fig_empCR_pw_all}).
For predictions, the range of the empirical coverage over the $177$ conditional distributions mostly reduces and converges to the theoretical value $95\%$ with increasing $N$. 
The only exception is the combination of smallest bin number, $G = 50$, with largest sample size 
$N = 1 000 000$. While  the median improves slightly, from $92.5\%$ to $93\%$ compared to $N = 500 000$, the range also increases, possibly indicating that too-small bin numbers may induce undercoverage especially for larger sample sizes, where variance decreases and even small approximation biases thus may be non-negligible.
Similarly, for the intercept, the empirical coverage rate for the largest sample size $N = 1 000 000$ is worst for $G = 50$, only $89\%$, while 
coverage rates are close to the theoretical value with $94.5\%$ ($G = 100$) and $93.5\%$ ($G = 200$) for more bins, and only slightly below the coverage rates of $95\%$ ($G = 100$) and $94\%$ ($G = 200$) obtained for $N = 500 000$.
For the group-specific intercepts $\betah_{\text{\emph{West\_East}}}$ and $\betah_{\text{\emph{c\_age}}}$ the empirical coverage rates are close to the true value. 
For $\gh (\text{\emph{year}})$, we observe some undercoverage for the smallest sample size $N = 50 000$ (again most severe for the smallest bin size: $87.5\%$ for $G = 50$, $89\%$ for $G = 100$ and $89.5\%$ for $G = 200$), while for larger values of $N$, there is slight overcoverage. 
This trend is similar for the interaction effects $\gh_{\text{\emph{West\_East}}}(\text{\emph{year}})$ and $\gh_{\text{\emph{c\_age}}}(\text{\emph{year}})$ (with even more severe undercoverage for $N = 50 000$), while for the three-way interaction $\gh_{\text{\emph{c\_age, West\_East}}}(\text{\emph{year}})$, we observe slight overcoverage already for the smallest sample size, $N = 50 000$ (compare Figure~\ref{P2a:fig_empCR_smt_all}).
The results based on point-wise confidence regions illustrated in Figure~\ref{P2a:fig_empCR_pw_all} are similar, with empirical coverage rates (mostly) close to the theoretical value for (group-specific) intercepts (with sometimes slightly worse results for $G = 50$ compared to $G \in \{ 100, 200\}$ given the same $N$), while the (group-specific) smooth effects of year yield more varied coverage rates (over the different years) for the smallest sample size, $N = 50 000$, contracting around $95\%$ (or a larger value for the three-way interaction) with increasing $N$.

All in all, 
the smallest considered sample size, $N = 50 000$, yields large relMSEs in some iterations -- mostly for $\beta_{c\_age}$, which exhibits a sharp drop on clr-level -- and tends to yield some undercoverage for predictions and \emph{year}-effects.
The available data size in the application, $N = 150 000$, already yields good results -- especially given the complexity of the considered model with several interaction effects. Estimation quality unsurprisingly improves further with more data available.
Regarding the bin number $G$, we observe almost no influence on the relMSE values within the reasonable ranges considered here, while the empirical coverage rates in some cases suffer for the lowest bin number compared to good coverage for the larger two considered bin numbers.

\section{Conclusion}\label{P2:chapter_conclusion}

We presented a penalized maximum-likelihood approach to estimate structured additive regression models for densities given scalar covariates from random samples of the underlying conditional distribution. 
To respect their nature, densities are considered as elements of a Bayes Hilbert space $\B$, providing a unified framework for arbitrary finite measure spaces, in particular including mixed densities (and discrete and continuous special cases), which we focused on for estimation.
We derived the asymptotic distribution of the penalized maximum likelihood estimator (PMLE), which we used to construct confidence regions and p-values for estimated partial effects. Our framework is the first distributional regression approach to allow ceteris paribus interpretations for covariate effects.
Furthermore, we approximated the log-likelihood by a multinomial or equivalently Poisson log-likelihood, with the respective observations computed via histograms on the continuous and counts on the discrete part of the domain, and showed that the PMLE  and its estimated covariance (the inverse Fisher information) converge to the ones of the limiting (Bayes Hilbert space) log-likelihood.
This simplifies estimation, allowing to 
take advantage of the well-established \texttt{R} package \texttt{mgcv} \citep{wood2017} for our implementation in the  \texttt{R} package \href{https://github.com/Eva2703/DensityRegression}{\texttt{DensityRegression}}.
We used the new additive density regression approach to analyze the motivating (mixed) densities of the woman's income share for couples in Germany, observing similar covariate effects as \citet{maier2021}, but avoiding a two-step estimation approach and additionally providing inference. In particular, we found significant main effects for West vs. East Germany, presence vs. absence of minor children in the household, and time.
A simulation study based on our application further validated our approach.

Formulating our regression models directly for densities in a Bayes Hilbert space not only allows to consider continuous, discrete, and mixed densities in a unified framework, and to include a variety of covariate effects, but also to interpret these effects ceteris paribus.
Thereby, our approach stands out from other distributional regression approaches modeling the conditional distribution based on individual observations, which lack such interpretability \citep{kneib2023}.
In contrast, distributional response regression approaches benefiting from the Bayes Hilbert space framework \citep[e.g.,][]{talska2017, maier2021} 
assume whole densities to be observed as response objects, which is usually not possible 
and thus necessitates a two-step procedure. 
This leads to additional estimation uncertainty and requires enough observations to estimate conditional densities for each covariate combination, which typically cannot be expected in the presence of continuous covariates.
Our maximum-likelihood approach modeling the conditional densities directly based on the individual observations overcomes these issues.
In addition to these strong advantages, the Bayes Hilbert space framework also has limitations in certain settings. 
As Bayes Hilbert spaces contain ($\mu$-a.e.) positive densities, densities that are zero on parts of the domain 
cannot be modeled. 
However, our new estimation approach  based on individual data avoids zero density values that often occur in two-step approaches due to presmoothing (or over-presmoothing to prevent zero values), providing a natural solution to this problem except in settings with structurally different density supports.
Furthermore, the reference measure needs to be finite, excluding the Lebesgue measure as reference for $\Ycal = \Rbb$.
However, the probability measure of a reference distribution, such as the standard normal distribution, provides an alternative reference in this case \citep{vdb2014}, and the Lebesgue measure can be used for the finite domain case commonly observed in practice.


In contrast to existing literature \citep[e.g.,][]{masse1999}, we provided simultaneous confidence regions,  containing complete densities. 
The asymptotic distribution of the PMLE these are based on was derived assuming a true parameter value to exist for the given finite-dimensional basis expansion. 
As the MLE is known to converge under standard assumptions to the minimizer of the Kullback-Leibler divergence within the modeled class of distributions, 
here spanned by the considered basis \citep{knight1999, stone1991},
our semi-parametric conditional distribution estimates can be interpreted as best approximations to the true conditional distributions  within a finite but potentially high dimensional model space. Given the large model flexibility compared to existing  approaches, and similarly to other penalized spline approaches \citep{ruppert2002}, 
this should not be much of a limitation 
in practice.   
Future work, motivated by \citet{stone1991},  might also study convergence and inference as the number of basis coefficients goes to infinity.
Furthermore, we aim to extend the presented approach from univariate to multivariate densities in future.


\subsection*{Acknowledgments}

We would like to thank Simon Wood for fruitful discussions on   
penalized spline estimation and confidence band coverage.

\subsection*{Funding}

E.-M. Maier and S. Greven were funded by the Deutsche Forschungsgemeinschaft (DFG, German Research Foundation) - project number 513634041.
A. St\"ocker acknowledges support from SNSF Grant 200020\_207367.




\addcontentsline{toc}{section}{References}
\printbibliography[heading=bibliography]
\end{refsection}

 \pagebreak
 \setcounter{page}{1}
 \begin{center}
     {\LARGE\bf APPENDIX}
 \end{center}
 \numberwithin{equation}{section}
 \numberwithin{table}{section}
 \numberwithin{figure}{section}
 \begin{refsection}
 \appendix
\section{Proofs} \label{P2a:chapter_proofs_main_section}
This section contains proofs for all statements formulated in our main manuscript.
We start with preliminaries in Section~\ref{P2a:chapter_proofs_preliminaries}.
More in detail, in Section~\ref{P2a:chapter_notation}, we introduce some notation and show a first auxiliary statement (Lemma~\ref{P2a:lem_characterization_constrined_basis}), which is used in the following subsections.
In Section~\ref{P2a:chapter_arbitrary_bases_and_penalties}, we specify necessary properties of basis functions and penalty matrices.
All of them are actually very common 
and in particular fulfilled for our suggestions in Section~\ref{P2:chapter_bayes_space_regression}.
Thus, these properties are always assumed to hold, without mentioning them explicitly in statements afterwards.
In Section~\ref{P2a:chapter_assumptions}, we recall the assumptions introduced in Sections~\ref{P2:chapter_bayes_space_regression} and~\ref{P2:chapter_multinomial_regression} and specify the additional assumption~\ref{P2a:assumption_appendix_stone_2}, which was only briefly described, but not given explicitly in Section~\ref{P2:chapter_bayes_space_regression} as it requires further notation (introduced in Section~\ref{P2a:chapter_notation}) and is rather lengthy.
Furthermore, Section~\ref{P2a:chapter_assumptions} contains a more detailed discussion of the assumptions and proof of the statements made in Sections~\ref{P2:chapter_bayes_space_regression} and~\ref{P2:chapter_multinomial_regression} regarding the assumptions (Lemma~\ref{P2a:lem_assumptions}).
Auxiliary statements, which are used several times in later proofs, are collected in Section~\ref{P2a:chapter_auxiliary_statements}.
Finally, Section~\ref{P2a:chapter_proofs} contains the proofs for all theorems/lemmas of our main manuscript, and also Lemma~\ref{P2a:lemma_confidence_regions_simultaneos}, where we construct confidence regions and p-values, which are simultaneous in the covariates (and over $\Ycal$), as mentioned in Section~\ref{P2:chapter_bayes_space_regression} after Lemma~\ref{P2:lemma_confidence_regions}.
Note that the proof of Theorem~\ref{P2:thm_MLE_asymptotic_normal} is rather extensive and requires several auxiliary statements (namely, Lemmas~\ref{P2a:lem_assumption_smoothing_parameters_implies_convergence_penalty}, \ref{P2a:lem_score_convergesP_zero}, and~\ref{P2a:lem_lower_bound_quadratic_form_Fisher_info}), which are not used in any other proof.
Thus, they are arranged after the proof of Theorem~\ref{P2:thm_existence_uniqueness_bayes_PMLE}, in the same section as the proof of Theorem~\ref{P2:thm_MLE_asymptotic_normal}.
This section also contains a brief discussion of the proof strategy and literature before formulating and proving the auxiliary statements.


\subsection{Preliminaries}\label{P2a:chapter_proofs_preliminaries}

\subsubsection{Notation and Lemma~\ref{P2a:lem_characterization_constrined_basis}}\label{P2a:chapter_notation}
In the remainder of Section~\ref{P2a:chapter_proofs_main_section}, we use the following notation (partially already introduced in Section~\ref{P2:chapter_regression_theory}):
\begin{enumerate}[label=\arabic*)]
\item\label{P2a:notation_covariates}
\emph{Covariates:}
Recall that the support of the covariates is a compact set $\Xcal = \bigtimes_{p=1}^P \Xcal_p \subseteq \Rbb^P$ such that each $\Xcal_p$ is either a finite discrete set or a compact interval.
Set $\Dfrak = \{p \in \{ 1, \ldots , P \} ~|~ \Xcal_p ~ \text{is discrete}\}$ and $\Cfrak := \Dfrak^\complement = \{p \in \{ 1, \ldots , P \} ~|~ \Xcal_p ~ \text{is an interval}\}$,
set $\nuf = \bigtimes_{p=1}^P \nu_p$, where $\nu_p := \delta_p = \sum_{x \in \Xcal_p} \delta_x 
$, if $p \in \Dfrak$ and $\nu_p := \lambda$, if $p \in \Cfrak$,
and set $\Xcal^{\Dfrak} := \bigtimes_{p \in \Dfrak} \Xcal_p$ and $\Xcal^{\Cfrak} := \bigtimes_{p \in \Cfrak} \Xcal_p$.
For $\xf \in \Xcal$, we write $\xf = \xf^{\Dfrak}, \xf^{\Cfrak}$ with $\xf^{\Dfrak} \in \Xcal^{\Dfrak}$ and $\xf^{\Cfrak} \in \Xcal^{\Cfrak}$ denoting the discrete and continuous covariates contained in $\xf$.
\item\label{P2a:notation_spline_functions}
\emph{Spline functions for normalized coefficient vector:}
Recall that $\Ycal = \Yc \cup \Yd$, where $\Yc = [a, b]$ for $a, b \in \Rbb$ with $a < b$ and $\Yd = \{t_1, \ldots , t_D \} \subset \Rbb$. 
Denote the unit sphere in $\Rbb^K$ with $\Sbb^{K-1}$.
For $\vf \in \Sbb^{K-1}$ and $\xf \in \Xcal$, 
let 
\begin{align*}
s_{\vf, \xf} : \Ycal \ra \Rbb, &&y \mapsto \bft(\xf)(y)^\top \vf = ( \bfe_{\Xcal}(\xf) \otimes \tilde{\bfe}_{\Ycal} (y))^\top \vf
,
\end{align*}
and
let $s_{\vf, \xf, \mathrm{c}}$ be the unique continuous function on $\Yc$ with $s_{\vf, \xf, \mathrm{c}} (y) = s_{\vf, \xf} (y)$ for all $y \in \Yc \setminus \Yd$.
Set $s_{\vf, \xf}^{\max} := \sup_{y \in \Ycal} s_{\vf, \xf} (y)$, $s_{\vf, \xf, \mathrm{d}}^{\max} := \sup_{y \in \Yd} s_{\vf, \xf} (y)$, and $s_{\vf, \xf, \mathrm{c}}^{\max} := \sup_{y \in \Yc \setminus \Yd} s_{\vf, \xf} (y) = \sup_{y \in \Yc} s_{\vf, \xf, \mathrm{c}} (y)$.
\item\label{P2a:notation_mixed_basis_clr}
\emph{Mixed basis functions in $L^2_0 (\mu)$:}
Recall the construction of a basis $\bfe_{\Ycal} \in \B^{K_{\Ycal}}$ in the mixed case described in Section~\ref{P2:chapter_bayes_space_regression}, 
via the orthogonal decomposition of $\B$ into $\Bl$ and $\Bd$ of \citet{maier2021}.
Equivalently, the basis $\bft_{\Ycal}$ (containing the clr transformation of the functions in $\bfe_{\Ycal}$) can be obtained via the orthogonal decomposition of $\Ln$ into $\Lnl$ and $\Lnd$ by \citet[Proposition C.2 in appendix C]{maier2021}:
We obtain the bases $\bt_{\Yc, \, 1}, \ldots , \bt_{\Yc, \, K_{\Yc}} \in \Lnl$ and $\bt_{\Yd^\bullet, \, 1}, \ldots , \bt_{\Yd^\bullet, \, K_{\Yd^\bullet}} \in \Lnd$ from appropriate bases $\bb_{\Yc, \, 1}, \ldots , \bb_{\Yc, \, K_{\Yc} + 1} \in \Ll$ and $\bb_{\Yd^\bullet, \, 1}, \ldots , \bb_{\Yd^\bullet, \, K_{\Yd^\bullet} + 1} \in \Ld$.\footnote{Note that $K_{\Yd^\bullet} = D$ is the number of discrete points in the mixed space $\Ln$ (while $K_{\Yd^\bullet} + 1 = D + 1$ is the number of discrete points in the discrete space $\Lnd$) and 
$\bar{b}_{\Yd^\bullet, \, d} = \mathbbm{1}_{\{t_{d}\}}$ for all $d = 1, \ldots , K_{\Yd^\bullet} + 1$. 
For the sake of consistent notation, we write $K_{\Yd^\bullet}$ and $\bar{b}_{\Yd^\bullet, \, d}$, when the (number of) basis functions are concerned. For the number of discrete points, we write $D$.}
In Section~\ref{P2:chapter_bayes_space_regression} we propose to use a B-spline basis with pairwise distinct knots and degree at least $2$ for the former and indicator functions at the discrete values $t_1 , \ldots , t_{D+1}$ for the latter.
More precisely, the basis functions for the continuous component are obtained as
$\bt_{\Yc, \, m} := \sum_{j = 1}^{K_{\Yc} + 1} \bb_{\Yc, \, j} z^{\Yc}_{j m}$, where
$(z^{\Yc}_{1 m}, \ldots, z^{\Yc}_{(K_{\Yc} + 1) m})$ is the $m$-th column of a transformation matrix $\Zf_{\Yc} \in \Rbb^{(K_{\Yc} + 1) \times K_{\Yc}}$ for $m = 1, \ldots , K_{\Yc}$.
Analogously for $\bt_{\Yd, \, m}$ with $m = 1, \ldots , K_{\Yd}$ and a transformation matrix $\Zf_{\Yd} \in \Rbb^{(K_{\Yd} + 1) \times K_{\Yd}}$ for the discrete component.
Both transformation matrices are obtained via QR decompositions of the vectors of integrals of the untransformed bases (viewed as a matrix with one row) as described in appendix B of \citet{maier2021} \citep[based on][Section~1.8.1]{wood2017} and have full rank per construction.
The transformed B-spline basis 
$\bft_{\Yc} = (\bt_{\Yc, \, 1}, \ldots , \bt_{\Yc, \, K_{\Yc}})^\top \in \Lnl^{K_{\Yc}}$ 
and the transformed indicator functions 
$\bft_{\Yd^\bullet} = (\bt_{\Yd^\bullet, \, 1}, \ldots , \bt_{\Yd^\bullet, \, K_{\Yd^\bullet}})^\top \in \Lnd^{K_{\Yd^\bullet}}$
are combined to a vector of basis functions $\bft_{\Ycal} \in \Ln^{K_{\Ycal}}$ by applying the embeddings
\begin{align}
&\Jtc : \Lnl \hookrightarrow \Ln && \ftc \mapsto
\begin{cases}
\ftc & \mathrm{on}~ \Yc \setminus \Yd\\
0 & \mathrm{on}~ \Yd
\end{cases} 
\label{P2a:eq_embedding_L20_c}
\\
&\Jtd: \Lnd \hookrightarrow \Ln && \ftd \mapsto
\begin{cases}
\ftd \left( t_{D+1} \right) & \mathrm{on}~ \Yc \setminus \Yd \\
\ftd & \mathrm{on}~ \Yd
\end{cases} \, . \label{P2a:eq_embedding_L20_d}
\end{align}
of \citet[Proposition C.2 in appendix C]{maier2021} component-wise to the functions contained in $\bft_{\Yc}$ and $\bft_{\Yd^\bullet}$ (equivalently to the approach described to obtain the basis functions in $\B$ in Section~\ref{P2:chapter_bayes_space_regression}).
\end{enumerate}

Related to the notation introduced in~\ref{P2a:notation_mixed_basis_clr}, we show a brief auxiliary statement, which is used for discussing the properties of penalty matrices/basis functions in Section~\ref{P2a:chapter_arbitrary_bases_and_penalties} and in the proof of Lemma~\ref{P2a:lem_assumptions} in Section~\ref{P2a:chapter_assumptions}.

\begin{lem}\label{P2a:lem_characterization_constrined_basis}
Let $\bb_{\Tcal, \, 1}, \ldots , \bb_{\Tcal, \, K_{\Tcal} + 1}$ be linearly independent real-valued functions defined on $\Tcal \subset \Rbb$, let $1_{\Tcal} : \Tcal \ra \{1 \}, t \mapsto 1$, let 
$\Zf_{\Tcal} 
= (z_{j, m}^{\Tcal})_{j = 1, \ldots , K_{\Tcal} + 1, m = 1, \ldots , K_{\Tcal}}$ have rank $K_{\Tcal}$, and for $m = 1, \ldots , K_{\Tcal}$, let
$\bt_{\Tcal, \, m} := \sum_{j = 1}^{K_{\Tcal} + 1} \bb_{\Tcal, \, j} z^{\Tcal}_{j m}$.
If $1_{\Tcal} \in \spano (\bfb_{\Tcal}) := \spano (\bb_{\Tcal, 1}, \ldots , \bb_{\Tcal, K_{\Tcal} + 1})$ 
and
$1_{\Tcal} \notin \spano (\bft_{\Tcal}) := \spano (\bt_{\Tcal, 1}, \ldots , \bt_{\Tcal, K_{\Tcal}})$, 
then the functions $\bt_{\Tcal, \, 1}, \ldots , \bt_{\Tcal, \, K_{\Tcal}}, 1_{\Tcal}$ are linearly independent and
\begin{align*}
\spano (\bfb_{\Tcal}) := \spano (\bb_{\Tcal, 1}, \ldots , \bb_{\Tcal, K_{\Tcal} + 1}) 
= \spano (\bt_{\Tcal, 1}, \ldots , \bt_{\Tcal, K_{\Tcal}}, 1_{\Tcal}).
\end{align*}
In particular, this holds for 
$\Tcal = \Yc$ with $\bb_{\Tcal, 1} = \bb_{\Yc, 1}, \ldots , \bb_{\Tcal, K_{\Tcal} + 1} = \bb_{\Yc, K_{\Ccal} + 1}$ a B-spline basis and $\Zf_{\Tcal} = \Zf_{\Yc}$, 
and for 
$\Tcal = \Yd$ with $\bb_{\Tcal, 1} = \bb_{\Yd, 1}, \ldots , \bb_{\Tcal, K_{\Tcal} + 1} = \bb_{\Yd, K_{\Yd} + 1}$ the basis of indicator functions at the discrete values $t_1, \ldots , t_D$  and $\Zf_{\Tcal} = \Zf_{\Yd}$ as in notation~\ref{P2a:notation_mixed_basis_clr} above.
\end{lem}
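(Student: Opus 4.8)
The plan is a dimension count carried out inside the finite-dimensional space $\spano(\bfb_\Tcal)$, which has dimension $K_\Tcal+1$ because $\bb_{\Tcal,1},\ldots,\bb_{\Tcal,K_\Tcal+1}$ are linearly independent. First I would show that $\bt_{\Tcal,1},\ldots,\bt_{\Tcal,K_\Tcal}$ are linearly independent: the coordinate map $\Phi\colon\Rbb^{K_\Tcal+1}\to\spano(\bfb_\Tcal)$, $\vf\mapsto\sum_{j=1}^{K_\Tcal+1}v_j\,\bb_{\Tcal,j}$, is a linear isomorphism by linear independence of the $\bb_{\Tcal,j}$; by construction $\bt_{\Tcal,m}=\Phi(\text{$m$-th column of }\Zf_\Tcal)$, and since $\Zf_\Tcal$ has rank $K_\Tcal$ its columns are linearly independent, hence so are their images. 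Therefore $\dim\spano(\bft_\Tcal)=K_\Tcal$, and trivially $\spano(\bft_\Tcal)\subseteq\spano(\bfb_\Tcal)$.

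Next I would adjoin $1_\Tcal$. Since $1_\Tcal\notin\spano(\bft_\Tcal)$, adjoining it raises the dimension by exactly one, so $\dim\spano(\bt_{\Tcal,1},\ldots,\bt_{\Tcal,K_\Tcal},1_\Tcal)=K_\Tcal+1$; being a list of $K_\Tcal+1$ functions spanning a $(K_\Tcal+1)$-dimensional space, these functions are linearly independent. Finally, the hypothesis $1_\Tcal\in\spano(\bfb_\Tcal)$ together with $\spano(\bft_\Tcal)\subseteq\spano(\bfb_\Tcal)$ gives $\spano(\bt_{\Tcal,1},\ldots,\bt_{\Tcal,K_\Tcal},1_\Tcal)\subseteq\spano(\bfb_\Tcal)$; both sides have dimension $K_\Tcal+1$, hence they coincide, which is the asserted equality.

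For the ``in particular'' claim I would verify the two hypotheses in the two named settings. For $\Tcal=\Yc$ with a B-spline basis: B-splines form a partition of unity, so $\sum_j\bb_{\Yc,j}\equiv 1_\Yc$ and thus $1_\Yc\in\spano(\bfb_\Yc)$; moreover $\Zf_\Yc$ is constructed in appendix~B of \citet{maier2021} precisely so that each $\bt_{\Yc,m}$ lies in $\Lnl$, i.e.\ $\int_\Yc\bt_{\Yc,m}\,\dlamb=0$, whereas $\int_\Yc 1_\Yc\,\dlamb=\lambda(\Yc)=b-a>0$, so no linear combination of the $\bt_{\Yc,m}$ can equal $1_\Yc$ and hence $1_\Yc\notin\spano(\bft_\Yc)$. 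For $\Tcal=\Yd$ with the indicator-function basis the argument is identical, with the indicators summing to the constant on the discrete domain and the transformed functions lying in $\Lnd$ (vanishing integral with respect to the relevant discrete reference measure), while the constant has strictly positive integral.

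I do not anticipate a genuine obstacle: once the partition-of-unity fact and the zero-integral property of the transformed bases are in hand, the statement is pure linear algebra. The only points requiring care are (i) being precise that full column rank of $\Zf_\Tcal$ transfers to linear independence of the $\bt_{\Tcal,m}$ through the isomorphism $\Phi$, rather than asserting it directly, and (ii) correctly invoking the construction of $\Zf_\Yc$ and $\Zf_\Yd$ from \citet{maier2021} so that membership of the $\bt_{\Tcal,m}$ in $\Lnl$ (resp.\ $\Lnd$), and therefore their vanishing integrals, is genuinely available.
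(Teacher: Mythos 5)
Your proposal is correct and follows essentially the same route as the paper: establish linear independence of the $\bt_{\Tcal,m}$ from the full column rank of $\Zf_\Tcal$, adjoin $1_\Tcal$ using $1_\Tcal\notin\spano(\bft_\Tcal)$, and conclude equality of the two $(K_\Tcal+1)$-dimensional spans from $1_\Tcal\in\spano(\bfb_\Tcal)$, with the special cases handled via the B-spline partition-of-unity property and the integrate-to-zero constraint of the transformed bases. Your explicit dimension count just makes precise the span-equality step the paper states more briefly; no gap.
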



\begin{proof}
Since the functions $\bb_{\Tcal, \, 1}, \ldots , \bb_{\Tcal, \, K_{\Tcal} + 1} \in \Ll$ are linearly independent and the transformation matrix $\Zf_{\Tcal}$ has full rank,
the transformed basis $\bt_{\Tcal, \, 1}, \ldots , \bt_{\Tcal, \, K_{\Tcal}} \in \Lnl$ consists of linearly independent functions \citep[Lemma 9.23]{liesen2015}.
Furthermore, per construction via linear combinations, the basis functions contained in $\bft_{\Tcal}$ are elements of
$\spano (\bfb_{\Tcal})$. 
Since $1_{\Tcal} \notin \spano (\bft_{\Tcal})$, the functions $\bt_{\Tcal, \, 1}, \ldots , \bt_{\Tcal, \, K_{\Tcal}}, 1_{\Tcal}$ are linearly independent.
Furthermore, as $1_{\Tcal} \in \spano (\bfb_{\Tcal})$, the $K_{\Tcal} + 1$ functions $\bt_{\Tcal, \, 1}, \ldots , \bt_{\Tcal, \, K_{\Tcal}}, 1_{\Tcal}$ span the same space as the $K_{\Tcal} + 1$ functions $\bb_{\Tcal, \, 1}, \ldots , \bb_{\Tcal, \, K_{\Tcal} + 1}$, i.e., 
$\spano (\bfb_{\Tcal}) = \spano (\bt_{\Tcal, 1}, \ldots , \bt_{\Tcal, K_{\Tcal}}, 1_{\Tcal})$.

Consider a B-spline basis $\bb_{\Tcal, \, 1} = \bb_{\Yc, 1}, \ldots , \bb_{\Tcal, \, K_{\Tcal} + 1} = \bb_{\Yc, K_{\Yc} + 1}$, which consists of linearly independent functions and fulfills $1_{\Yc} \in \spano (\bfb_{\Tcal})$ due to \citet[Section~IX, (36) B-spline Property (iv)]{deboor2001}.
Since the functions contained in $\bft_{\Yc}$ all integrate to zero, we have $1_{\Yc} \notin \spano (\bft_{\Yc})$.
The basis of indicator functions $\bb_{\Tcal, \, 1} = \bb_{\Yd, 1}, \ldots , \bb_{\Tcal, \, K_{\Tcal} + 1} = \bb_{\Yd, K_{\Yd} + 1}$ fulfills the linear independence assumption and $1_{\Yd} \in \spano (\bfb_{\Yd})$ straightforwardly, while $1_{\Yd} \notin \spano (\bft_{\Yd})$ follows analogously with the in\-te\-grate-to-zero constraint.
Thus, in both cases the assumptions are fulfilled.
\end{proof}

\subsubsection{Necessary Properties of Bases and Penalties}\label{P2a:chapter_arbitrary_bases_and_penalties}
For penalty matrices and basis functions not chosen as in Section~\ref{P2:chapter_bayes_space_regression}, we assume:
\begin{enumerate}
\setlength{\itemindent}{0.5cm}
\nitem{(P)}\label{P2a:assumption_penalties_positive}
All penalty matrices $\Pf_j$ are positive semi-definite, $j = 1, \ldots , J$.
\nitem{(SPP)}\label{P2a:assumption_piecewise_polynomial}
There exist $R \in \Nbb$ and $\kappa \geq 2$ such that 
every function contained in $\spano (\bft_{\Yc})$ is a $(\kappa - 1)$-times continuously differentiable piecewise polynomial consisting of $R$ polynomials of degree at most $\kappa$.
\nitem{(B)}\label{P2a:assumption_domain_basis}
The basis functions contained in $\bft_{\Ycal}$ as well as the basis functions contained in $\bfe_{\Xcal}$ 
are linearly independent, continuous in any $\yc \in \Yc \setminus \Yd$ respectively in any $\xf \in \Xcal$,
and bounded.
\end{enumerate}

Since these are very common properties of penalty matrices/(spline) basis functions, 
these are not mentioned explicitly in the remaining work to not overload the statements.
Now, consider the choices of penalty matrices and basis functions recommended in Section~\ref{P2:chapter_bayes_space_regression}, included in an identifiable model.
We briefly discuss, that the above assumptions are fulfilled:

\begin{itemize}
\item
\ref{P2a:assumption_penalties_positive}:
As marginal penalty matrices $\Pf_{\Xcal, \, j}, j = 1, \ldots , J,$ and $\Pf_{\Ycal}$ in Section~\ref{P2:chapter_bayes_space_regression}, we propose to use difference penalties and identity matrices, which are positive semi-definite.
Since Kronecker products and linear combinations (for non-negative scalars) of positive semi-definite matrices are positive semi-definite (see \citealp[Corollary 4.2.13]{horn1994}, and \citealp[Observation 7.1.3]{horn2012}, respectively),
$
\Pf_{j} = \xi_{\Xcal, \, j} (\Pf_{\Xcal, \, j} \otimes \If_{K_{\Ycal}}) + \xi_{\Ycal, \,j} (\If_{K_{\Xcal, \, j}} \otimes \Pf_{\Ycal}) 
$ is positive semi-definite for all $j = 1, \ldots , J$ i.e., assumption~\ref{P2a:assumption_penalties_positive} is fulfilled.
\item
\ref{P2a:assumption_piecewise_polynomial}: 
Let $\kappa \geq 2$ denote the degree of the (unconstrained) B-splines contained in $\bfb_{\Yc}$.
Every function contained in $\spano (\bfb_{\Yc})$ is a $(\kappa -1)$ times continuously differentiable piecewise polynomial with the breakpoints corresponding to the knots defining the B-spline basis
\citep{deboor2001}.
By Lemma~\ref{P2a:lem_characterization_constrined_basis}, we have $\spano (\bft_{\Yc}) \subset \spano (\bfb_{\Yc})$ and thus every function contained in $\spano (\bft_{\Yc})$ is also such a piecewise polynomial 
and
we obtain~\ref{P2a:assumption_piecewise_polynomial}. 
\item
\ref{P2a:assumption_domain_basis}:
We first consider $\bft_{\Ycal}$. In the continuous or discrete special case, linear independence of the functions contained in $\bft_{\Ycal}$ was already shown in Lemma~\ref{P2a:lem_characterization_constrined_basis}.
In the mixed case, the basis in $\Ln$ is obtained by applying the embedding 
$\tilde{\iota}_{\mathrm{c}}$ given in \eqref{P2a:eq_embedding_L20_c}
to the linear independent functions $\bt_{\Yc, \, 1}, \ldots , \bt_{\Yc, \, K_{\Yc}} \in \Lnl$
and the embedding $\tilde{\iota}_{\mathrm{d}}$ given in \eqref{P2a:eq_embedding_L20_d} 
to the linear independent functions $\bt_{\Yd^\bullet, \, 1}, \ldots , \bt_{\Yd^\bullet, \, K_{\Yd^\bullet}} \in \Lnd$.
Since these embeddings are injective, they retain linear independence and thus the functions contained in the vector $\bft_{\Ycal}$ are linearly independent.
Continuity and boundedness of the functions contained in $\bft_{\Ycal}$ follows per construction from continuity and boundedness of the B-spline basis $\bb_{\Yc, \, 1}, \ldots , \bb_{\Yc, \, K_{\Yc} + 1}$.
\\
Regarding $\bfe_{\Xcal}$, 
the linear independence, continuity, and boundedness assumptions are fulfilled for usual choices of basis functions $\bfe_{\Xcal, \, j}, j \in \{ 1, \ldots , J\}$, in particular the ones we recommend in Section~\ref{P2:chapter_bayes_space_regression} (recalling the covariate domain $\Xcal$ is compact).
Note that across different partial effects, i.e., for $j \neq j' \in \{ 1, \ldots , J\}$, identifiability constraints ensure linear independence, where necessary (e.g., if both partial effects are an effect of the same smooth covariate -- one a main effect, the other an interaction effect).
Hence, assumption~\ref{P2a:assumption_domain_basis} holds.

\end{itemize}

\subsubsection{Assumptions~\ref{P2a:assumption_appendix_covariate_basis_nonsingular}, \ref{P2a:assumption_appendix_stone_1}, \ref{P2a:assumption_appendix_stone_2}, \ref{P2a:assumption_appendix_scalar_product_non_constant}, and \ref{P2a:assumption_appendix_domain_basis_nonsingular}
}\label{P2a:chapter_assumptions}
We now recall the assumptions, which were used to formulate the statements in Section~\ref{P2:chapter_regression_theory} and introduce the additional assumption~\ref{P2a:assumption_appendix_stone_2}.
For this purpose, 
denote the diameter of a set $X \subset \Xcal^{\Cfrak}$ with $\diam(X) := \sup \{\Vert \xf - \xft \Vert_2 ~|~ \xf, \xft \in X \}$.

\begin{enumerate}
\setlength{\itemindent}{0.5cm}
\nitem{(BX)}\label{P2a:assumption_appendix_covariate_basis_nonsingular}
For $\vf_{\Xcal} \in \Rbb^{K_{\Xcal}}$: 
If $\bfe_{\Xcal} (\xf_i)^\top \vf_{\Xcal} = 0$ for all $i = 1, \ldots , N$, then $\vf_{\Xcal} = \mathbf{0}$.
\nitem{(BX')}\label{P2a:assumption_appendix_stone_1}
There exist $M' > 0$ (independent of $N$) and $N' \in \Nbb$ such that for all $N \geq N'$ and all $\vf_{\Xcal} 
\in \Rbb^{K_{\Xcal}}$,
\vspace{-0.3cm}
\begin{align}
M' \, N \int_{\Xcal} \left( \bfe_{\Xcal} (\xf)^\top \vf_{\Xcal} \right)^2 \dnuf (\xf) \leq \sum_{i=1}^N \left( \bfe_{\Xcal} (\xf_i)^\top \vf_{\Xcal} \right)^2 ,
\label{P2a:eq_stone_1}
\end{align}
\vspace{-0.6cm}
\nitem{(BX'')}\label{P2a:assumption_appendix_stone_2}
There exist $\alpha \in (0, 1)$,
$M'' > 0$, and $N'' \in \Nbb$ such that for all $N \geq N''$, all $\xf^{\Dfrak} \in \Xcal^{\Dfrak}$, and all subsets $X \subseteq \Xcal^{\Cfrak}$ with $\diam (X) \geq N^{\alpha - 1}$,
\vspace{-0.1cm}
\begin{align}
M'' \, N \, \diam(X)^{|\Cfrak|}
\leq \sum_{i=1}^N \mathbbm{1}_{\{ \xf^{\Dfrak}\} \times X} (\xf^\Dfrak_i , \xf^\Cfrak_i)
.
\label{P2a:eq_stone_2}
\end{align}
\vspace{-0.6cm}
\nitem{(S$<$)}\label{P2a:assumption_appendix_scalar_product_non_constant}
For all $\vf \in \Sbb^{K-1}$ there exists $i \in \{1, \ldots , N\}$ with 
$s_{\vf, \xf_i}(y_i) < 
s_{\vf, \xf_i}^{\max} 
$.
\nitem{(BU)}\label{P2a:assumption_appendix_domain_basis_nonsingular}
For $\Zcal \in \Xi^L 
$, 
$l \in \{ 1, \ldots , L\}$,
$\vf_{\Ycal} \in \Rbb^{K_\Ycal}$, and $c \in \Rbb$:
If $\bft_\Ycal (u_{g}^{(l)})^\top \vf_{\Ycal} = c 
$ for all $g = 1, \ldots , \Gamma^{(l)}$, then
$\vf_{\Ycal} = \mathbf{0}$ (and $c = 0$). 
\end{enumerate}

In the following remark, we discuss 
these assumptions, amplifying the respective discussion paragraphs after introducing the assumptions in Sections~\ref{P2:chapter_bayes_space_regression} and~\ref{P2:chapter_multinomial_regression}.
For this purpose, recall that 
$\Bf_{\Xcal} 
\in \Rbb^{N \times K_{\Xcal}}$ denotes the matrix containing $\bfe_{\Xcal} (\xf_i)$ in the $i$-th row, $i = 1, \ldots , N$,
$\Bft_{\Ycal}^{(l)} 
\in \Rbb^{\Gamma^{(l)} \times K_\Ycal}$ the matrix containing $\bft_\Ycal (u_g^{(l)})$
in the $g$-th row, $g = 1, \ldots , \Gamma^{(l)}$, 
$\mathbf{1}_{\Gamma^{(l)}}$ the vector of length $\Gamma^{(l)}$ containing $1$ 
in every entry,
and $( \Af : \Bf )$ concatenation of matrices $\Af, \Bf$ with equal number of rows.

\begin{rem}\label{P2a:rem_assumptions_lem}
\begin{enumerate}[label=\arabic*)]
\item\label{P2a:rem_assumption_covariate_basis_nonsingular}
We will show in Lemma~\ref{P2a:lem_assumptions}~\ref{P2a:lem_assumption_covariate_basis_nonsingular} that assumption~\ref{P2a:assumption_appendix_covariate_basis_nonsingular} holds, if $\Bf_{\Xcal}$ has rank $K_{\Xcal}$.
Here, we thus focus on fulfilling this rank condition, which can actually be achieved by choosing the basis functions $\bfe_{\Xcal, \, j}, j \in \{ 1, \ldots , J\}$ appropriately for given observations $\xf_1, \ldots , \xf_N$. 
More precisely, for $j \in \{ 1, \ldots , J\}$, the basis functions $\bfe_{\Xcal, \, j}$ have to be chosen such that 
\begin{enumerate}[label=(\roman*)]
\item\label{P2a:item_basis_function_number}
their number $K_{\Xcal, \, j}$ is at most the number of unique observed values of the subvector of covariates, on which the $j$-th partial effect depends,\footnote{The subvector might contain only one covariate. 
If it depends on more than one covariate, the number of unique observed values of the subvector corresponds to the unique observed combinations of the respective covariates.
}
\item\label{P2a:item_observations_in_support}
all functions contained in $\bfe_{\Xcal, \, j}$ have at least one observation in their support (using B-splines to model smooth effects of continuous covariates).
\end{enumerate}
Let $\Bf_{\Xcal , \, j} \in \Rbb^{N \times K_{\Xcal, \, j}}$ denote the matrix containing $\bfe_{\Xcal , \, j} (\xf_i)$ in the $i$-th row.
First we derive that under conditions~\ref{P2a:item_basis_function_number} and~\ref{P2a:item_observations_in_support}, $\Bf_{\Xcal , \, j}$ has rank $K_{\Xcal, \, j}$.
To see this, note that the rank of $\Bf_{\Xcal, j}$ equals the number of linearly independent rows, which is at most the number of unique rows.
Thus, \ref{P2a:item_basis_function_number} ensures that $K_{\Xcal, \, j}$ is at most the number of unique rows.
The second condition \ref{P2a:item_observations_in_support} then ensures that $\Bf_{\Xcal , \, j}$ has linearly independent columns, or equivalently, rank $K_{\Xcal, \, j}$.
For a linear effect of a continuous covariate or a group specific intercept of a grouping variable this is straightforward. 
For smooth effects modeled via a B-spline basis see \citet[Section XIV, Lemmas (23) and (25)]{deboor2001}.\footnote{\citet[Section XIV, (23) Lemma]{deboor2001} actually states that our condition of interest, i.e., any B-spline function having at least one observation in its support, is equivalent to a function $\Vert \cdot \Vert_2$ on the span of the B-spline basis, which \citet{deboor2001} denote with $\$_{k, t}$ (where $k$ is the order of the B-spline basis and $t$ corresponds to the knot sequence), being a norm on $\$_{k, t}$. 
However, $\Vert \cdot \Vert_2$ being a norm on $\$_{k, t}$ implies that $\Bf_{\Xcal, j}$ has full column rank, which is stated as part of \citet[Section XIV, (25) Lemma]{deboor2001}: ``Assume that $\Vert \cdot \Vert_2$ is a norm on $\$_{k, t}$ (that is, $\rank A = n$) [...]'', where $A$ corresponds to our matrix $\Bf_{\Xcal, j}$ and $n$ corresponds to the number of B-spline functions, ultimately yielding the referenced statement.\label{P2a:footnote_bspline_design_matrix_full_rank}}
Note that in practice, these bases might be transformed to enforce necessary identifiability constraints. 
Per construction this retains linear independence of columns in $\Bf_{\Xcal, j}$ for each $j \in \{ 1, \ldots , J\}$ and yields linear independence of columns in the concatenated matrix across different partial effects $(\Bf_{\Xcal, j} : \Bf_{\Xcal, j'})$ for $j \neq j' \in \{ 1, \ldots , J\}$. 
\\
All in all, the full column rank condition and thus assumption~\ref{P2a:assumption_appendix_covariate_basis_nonsingular} is not restrictive and actually requiring full column rank of the design matrix is very common for regression models.
Furthermore, note that~\ref{P2a:assumption_appendix_covariate_basis_nonsingular} is similar to the nonsingular-condition for logspline regression models \citep[e.g.,][]{stone1991}. 
\item
We will show in Lemma~\ref{P2a:lem_assumptions}~\ref{P2a:lem_assumption_stone_implication} that assumption~\ref{P2a:assumption_appendix_stone_2} implies~\ref{P2a:assumption_appendix_stone_1}, which implies~\ref{P2a:assumption_appendix_covariate_basis_nonsingular} for $N \geq N'$. 
Assumptions~\ref{P2a:assumption_appendix_stone_1} and \ref{P2a:assumption_appendix_stone_2} generalize the lower bounds of Equations~(1) and (2) in \citet{stone1991}, where only one continuous covariate is considered and no proof is provided for the implication of the first assumption by the second one.
We simplify the second assumption, demanding it to hold for one $\alpha \in (0,1)$, while the corresponding assumption (2) in \citet{stone1991} is formulated to hold for all $\alpha > 0$ (with constants $M''$ and $N''$ depending on $\alpha$), which is not necessary for the implication of~\ref{P2a:assumption_appendix_stone_1}.
\footnote{Noteworthy, if \ref{P2a:assumption_appendix_stone_2} holds for an $\alpha \in (0, 1)$, it immediately holds for all $\alpha' > \alpha$, using the same constants $M'' >0$ and $N'' \in \Nbb$, since $N^{\alpha'-1} > N^{\alpha - 1}$ for all $N \geq N''$.}
While some statements only assume \ref{P2a:assumption_appendix_stone_1} (e.g., Theorem~\ref{P2:thm_MLE_asymptotic_normal}), 
this assumption is not very tangible.
However, assumption \ref{P2a:assumption_appendix_stone_2} is: It basically requires that the covariate observations 
are spread reasonably over the whole domain, i.e., the fraction of observations in large enough subregions is lower-bounded by a function of the subregion's volume. 
In particular, assumption \ref{P2a:assumption_appendix_stone_2} also implies that for any $\xf^{\Dfrak} \in \Xcal^{\Dfrak}$, the number of observations taking this particular $\xf^{\Dfrak}$ grows with $N$.
\item
Assumption~\ref{P2a:assumption_appendix_scalar_product_non_constant} is always satisfiable under assumption~\ref{P2a:assumption_appendix_covariate_basis_nonsingular}, 
as we will show in Lemma~\ref{P2a:lem_scalar_product_constant_implies_0} that for all $\vf \in \Sbb^{K-1}$ there exists an $i \in \{ 1, \ldots , N\}$ such that 
the function $s_{\vf, \xf_i}$ is not constant. 
Thus, there exists a subset of $\Ycal$, on which $s_{\vf, \xf_i}$ does not reach its maximum.
We will furthermore show in Lemma~\ref{P2a:lem_assumptions}~\ref{P2a:lem_assumption_appendix_scalar_product_non_constant} that under assumption~\ref{P2a:assumption_appendix_stone_2}, the probability that~\ref{P2a:assumption_appendix_scalar_product_non_constant} holds tends to one for $N \ra \infty$.
\item
We will show in Lemma~\ref{P2a:lem_assumptions}~\ref{P2a:lem_assumption_appendix_domain_basis_nonsingular} that assumption~\ref{P2a:assumption_appendix_domain_basis_nonsingular} is fulfilled, if 
$(\Bft_{\Ycal}^{(l)} : \mathbf{1}_{\Gamma^{(l)}} ) \in \Rbb^{\Gamma^{(l)} \times (K_\Ycal + 1)}$ has rank $K_{\Ycal}+1$ for any $l \in \{1, \ldots , L\}$, 
which for $\bft_{\Ycal}$ chosen as in Section~\ref{P2:chapter_bayes_space_regression} is fulfilled for small enough bin width.
\end{enumerate}
\end{rem}

Before formulating and proving Lemma~\ref{P2a:lem_assumptions}, we show two auxiliary statements (Lemmas~\ref{P2a:lem_scalar_product_constant_implies_0} and~\ref{P2a:lem_sup_continuous}). 

\begin{lem}\label{P2a:lem_scalar_product_constant_implies_0}
Assume that \ref{P2a:assumption_appendix_covariate_basis_nonsingular} holds 
and let $\vf \in \Sbb^{K-1}$.
If the function $s_{\vf, \xf_i} = \bft (\xf_i)^\top \vf$ is constant over $\Ycal$ for every $i \in \{ 1, \ldots , N\}$, then 
$\vf = \mathbf{0}$.
\end{lem}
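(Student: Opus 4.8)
The plan is to exploit the tensor-product structure of $\bft$ and reduce the claim to assumption~\ref{P2a:assumption_appendix_covariate_basis_nonsingular}, applied separately in each ``$\Ycal$-coordinate''. Organize $\vf\in\Sbb^{K-1}$ as a $K_{\Xcal}\times K_{\Ycal}$ array $V=(v_{n,m})$ according to the Kronecker ordering in $\bft(\xf)=\bfe_{\Xcal}(\xf)\otimes\tilde{\bfe}_{\Ycal}$, and write $V_{\cdot,m}\in\Rbb^{K_{\Xcal}}$ for its $m$-th column. Then for $\xf\in\Xcal$ and $y\in\Ycal$,
\[
s_{\vf,\xf}(y)=\bfe_{\Xcal}(\xf)^\top V\,\tilde{\bfe}_{\Ycal}(y)=\sum_{m=1}^{K_{\Ycal}}c_m(\xf)\,\tilde{b}_{\Ycal,m}(y),\qquad c_m(\xf):=\bfe_{\Xcal}(\xf)^\top V_{\cdot,m}.
\]
The goal is to show that constancy of $s_{\vf,\xf_i}$ over $\Ycal$ forces $c_m(\xf_i)=0$ for every $m\in\{1,\dots,K_{\Ycal}\}$ and every $i\in\{1,\dots,N\}$; once this is established, assumption~\ref{P2a:assumption_appendix_covariate_basis_nonsingular} applied to each vector $V_{\cdot,m}$ yields $V_{\cdot,m}=\mathbf 0$, hence $V=\vf=\mathbf 0$.

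To carry this out, fix $i\in\{1,\dots,N\}$ and suppose $s_{\vf,\xf_i}=c_i\,1_{\Ycal}$ for some $c_i\in\Rbb$. Integrating the displayed identity against $\mu$ and using that each $\tilde{b}_{\Ycal,m}\in\Ln$ satisfies $\int_{\Ycal}\tilde{b}_{\Ycal,m}\,\dmu=0$, while $\int_{\Ycal}1\,\dmu=\mu(\Ycal)>0$, gives $c_i=0$; hence $\sum_{m=1}^{K_{\Ycal}}c_m(\xf_i)\,\tilde{b}_{\Ycal,m}=0$ in $\Ln$. Because the functions $\tilde{b}_{\Ycal,1},\dots,\tilde{b}_{\Ycal,K_{\Ycal}}$ are linearly independent (property~\ref{P2a:assumption_domain_basis}; in the mixed case this follows from Lemma~\ref{P2a:lem_characterization_constrined_basis} together with injectivity of the embeddings $\Jtc,\Jtd$, cf.\ the discussion in Section~\ref{P2a:chapter_arbitrary_bases_and_penalties}), all coefficients vanish, i.e.\ $c_m(\xf_i)=\bfe_{\Xcal}(\xf_i)^\top V_{\cdot,m}=0$ for $m=1,\dots,K_{\Ycal}$.

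Finally, letting $i$ range over $\{1,\dots,N\}$, for each fixed $m$ we obtain $\bfe_{\Xcal}(\xf_i)^\top V_{\cdot,m}=0$ for all $i$, so assumption~\ref{P2a:assumption_appendix_covariate_basis_nonsingular} gives $V_{\cdot,m}=\mathbf 0$; since this holds for every $m$, we conclude $\vf=\mathbf 0$, as claimed. The argument is essentially bookkeeping, so there is no real obstacle; the only steps with any content are keeping the Kronecker indexing straight so that $c_m(\xf)$ has exactly the form $\bfe_{\Xcal}(\xf)^\top V_{\cdot,m}$ required by~\ref{P2a:assumption_appendix_covariate_basis_nonsingular}, and, in the mixed case, that $\bft_{\Ycal}$ is genuinely a linearly independent family — both of which are already dealt with in Section~\ref{P2a:chapter_arbitrary_bases_and_penalties}.
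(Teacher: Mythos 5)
Your proof is correct and follows essentially the same route as the paper: use the integrate-to-zero property of $\tilde{\bfe}_{\Ycal}$ (and finiteness of $\mu$) to conclude the constant must be zero, then exploit the Kronecker structure together with linear independence of the $\Ycal$-basis and assumption~\ref{P2a:assumption_appendix_covariate_basis_nonsingular}. The only immaterial difference is the order of the last two steps: the paper applies~\ref{P2a:assumption_appendix_covariate_basis_nonsingular} first (grouping the coefficients as $\bft_{\Ycal}^\top \vf_{j,n}$ and arguing pointwise in $y$) and then uses linear independence of $\bft_{\Ycal}$, whereas you invoke linear independence first for each fixed $i$ and then apply~\ref{P2a:assumption_appendix_covariate_basis_nonsingular} column-by-column.
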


\begin{proof}
Let $i \in \{ 1, \ldots , N\}$.
Since $\bft (\xf_i) \in L^2_0(\mu)^K$, we obtain that $\bft (\xf_i)^\top \vf$ being constant over $\Ycal$ implies $\bft (\xf_i)^\top \vf = 0$.
Per assumption, this holds for every $i \in \{ 1, \ldots , N\}$.
Let $\vf_{\Xcal} = (\bft_{\Ycal}^\top \vf_{1,1}, \ldots , \bft_{\Ycal}^\top \vf_{J, K_{\Xcal, \,J}})$ with $\vf_{j, n} = (v_{j, n, 1} , \ldots , v_{j, n, K_{\Ycal}})^\top$, where $v_{j, n, m}$ for $j = 1, \ldots , J,~n = 1, \ldots , K_{\Xcal, \, j}$, and $m = 1, \ldots , K_{\Ycal}$, corresponds to the respective entry of $\vf$, where the indices are arranged in the same order as the vector $\thetaf \in \Rbb^K$ containing the values $\theta_{j, n, m}$ in Section~\ref{P2:chapter_bayes_space_regression}.
It is straightforward to show $\bft (\xf_i)^\top \vf = \bfe_{\Xcal} (\xf_i)^\top \vf_{\Xcal}$ 
and from assumption~\ref{P2a:assumption_appendix_covariate_basis_nonsingular} we obtain $\vf_{\Xcal} = \mathbf{0}$, i.e., $\bft_{\Ycal}^\top \vf_{j, n} = 0$ for all $j = 1, \ldots , J,~n = 1, \ldots , K_{\Xcal, \, j}$.
Since 
the functions contained in $\bft_{\Ycal}$ are linearly independent, this implies $\vf_{j, n} = \mathbf{0}$ for all $j = 1, \ldots , J,~n = 1, \ldots , K_{\Xcal, \, j}$, i.e., $\vf = \mathbf{0}$.
\end{proof}

\begin{lem}\label{P2a:lem_sup_continuous}
The function
$\Sbb^{K-1} \times \Xcal \ra \Rbb, (\vf, \xf) \mapsto s_{\vf, \xf}^{\max}$
is (uniformly) continuous.
\end{lem}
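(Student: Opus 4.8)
The plan is to exhibit the function $(\vf,\xf)\mapsto s_{\vf,\xf}^{\max}$ as the composition of continuous maps and a supremum over a fixed compact set, so that standard stability results for suprema of jointly continuous functions apply. First I would recall that $s_{\vf,\xf}^{\max}=\sup_{y\in\Ycal}s_{\vf,\xf}(y)$, and that $\Ycal=\Yc\cup\Yd$ with $\Yc=[a,b]$ compact and $\Yd$ finite; hence $\Ycal$ is compact. The delicate point is that $s_{\vf,\xf}$ need not be continuous in $y$ at the discrete atoms (the clr transforms $\tilde b_{\Ycal,m}$ are defined only $\mu$-a.e.), so I would \emph{not} try to argue with joint continuity in $(\vf,\xf,y)$ directly on $\Ycal$. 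Instead I would split the supremum: using the notation of \ref{P2a:notation_spline_functions},
\begin{align*}
s_{\vf,\xf}^{\max}=\max\bigl\{\,s_{\vf,\xf,\mathrm c}^{\max}\,,\ s_{\vf,\xf,\mathrm d}^{\max}\,\bigr\},
\end{align*}
where $s_{\vf,\xf,\mathrm c}^{\max}=\sup_{y\in\Yc}s_{\vf,\xf,\mathrm c}(y)$ is a supremum over the \emph{continuous} representative on the full interval $[a,b]$, and $s_{\vf,\xf,\mathrm d}^{\max}=\max_{d=1,\dots,D}s_{\vf,\xf}(t_d)$ is a maximum over finitely many atoms. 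Since the maximum of two continuous functions is continuous, it suffices to prove each of the two pieces is continuous in $(\vf,\xf)$.

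For the discrete piece, write $s_{\vf,\xf}(t_d)=(\bfe_\Xcal(\xf)\otimes\tilde\bfe_\Ycal(t_d))^\top\vf$; by assumption~\ref{P2a:assumption_domain_basis} each $b_{\Xcal,j,n}$ is continuous on $\Xcal$, the numbers $\tilde b_{\Ycal,m}(t_d)$ are fixed constants, and the map is bilinear in $(\bfe_\Xcal(\xf),\vf)$, hence jointly continuous on $\Xcal\times\Sbb^{K-1}$; a finite maximum of continuous functions is continuous. For the continuous piece I would use assumption~\ref{P2a:assumption_piecewise_polynomial}: every function in $\spano(\bft_{\Yc})$ is a $(\kappa-1)$-times continuously differentiable piecewise polynomial, so $y\mapsto s_{\vf,\xf,\mathrm c}(y)=(\bfe_\Xcal(\xf)\otimes\tilde\bfe_{\Yc}(y))^\top\vf_{\cdot}$ (restricted to the continuous-component coordinates) is continuous on the compact interval $[a,b]$ for each fixed $(\vf,\xf)$, and the map $(\vf,\xf,y)\mapsto s_{\vf,\xf,\mathrm c}(y)$ is jointly continuous on $\Sbb^{K-1}\times\Xcal\times[a,b]$ (again by bilinearity and continuity/boundedness of the bases from~\ref{P2a:assumption_domain_basis}). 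Then the general fact that the supremum over a compact set of a jointly continuous function depends continuously on the parameters — proved by the usual $\varepsilon$/compactness argument, or via Berge's maximum theorem — gives continuity of $(\vf,\xf)\mapsto s_{\vf,\xf,\mathrm c}^{\max}$.

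Finally, since $\Sbb^{K-1}\times\Xcal$ is itself compact (the sphere is compact and $\Xcal$ is compact by assumption), a continuous function on it is automatically uniformly continuous, which yields the parenthetical ``(uniformly)'' in the statement. The main obstacle is the first, conceptual step: recognizing that one must pass from $\Ycal$ (where $s_{\vf,\xf}$ is only a.e.-defined and potentially discontinuous at the atoms) to the \emph{continuous representative} $s_{\vf,\xf,\mathrm c}$ on the interval together with a separate finite maximum over $\Yd$, so that all suprema are taken over sets on which the relevant functions are genuinely (jointly) continuous; once that reduction is in place, the remaining steps are routine applications of \ref{P2a:assumption_piecewise_polynomial}, \ref{P2a:assumption_domain_basis}, and compactness.
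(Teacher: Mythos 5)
Your proposal is correct and follows essentially the same route as the paper: the paper likewise splits $s_{\vf,\xf}^{\max}=\max\{s_{\vf,\xf,\mathrm c}^{\max},s_{\vf,\xf,\mathrm d}^{\max}\}$, using the continuous representative $s_{\vf,\xf,\mathrm c}$ on the compact interval $\Yc$ together with a finite maximum over the atoms, and derives (uniform) continuity from joint continuity and compactness of $\Sbb^{K-1}\times\Xcal\times\Yc$. The only difference is presentational — the paper spells out the $\varepsilon$--$\delta$ argument with the attained maximizers, while you invoke the standard parametric-supremum (Berge-type) result and deduce uniformity from compactness of $\Sbb^{K-1}\times\Xcal$ — which is the same underlying argument.
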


\begin{proof}
%
Let $\varepsilon > 0$ and denote the maximum norm, defined as the maximum of the absolute values of the entries of a vector, by $\Vert \cdot \Vert_{\max}$.
We show that there exists $\delta > 0$ such that for all $(\vf, \xf), (\vf', \xf') \in \Sbb^{K-1} \times \Xcal$, we have $|s_{\vf, \xf}^{\max} - s_{\vf', \xf'}^{\max} | < \varepsilon$, if $\Vert (\xf, \vf) - (\xf', \vf') \Vert_{\max} < \delta$.
For this purpose, we use uniform continuity of $s_{\vf, \xf, \mathrm{c}} (y)$ on $\Sbb^{K-1} \times \Xcal \times \Yc$ and continuity of $s_{\vf, \xf} (\yd)$ on $\Sbb^{K-1} \times \Xcal$ for each $\yd$ in the finite set $\Yd$.

Since $s_{\vf, \xf, \mathrm{c}} (y)$ is continuous and $\Sbb^{K-1} \times \Xcal \times \Yc$ is compact, $s_{\vf, \xf, \mathrm{c}} (y)$ is uniformly continuous on $\Sbb^{K-1} \times \Xcal \times \Yc$.
Thus, there exists a $\delta_{\mathrm{c}} > 0$ such that $\Vert (\vf, \xf, y) - (\vf', \xf', y')\Vert_{\max} < \delta_{\mathrm{c}}$ (or equivalently, $\Vert (\vf, \xf) - (\vf', \xf')\Vert_{\max} < \delta_{\mathrm{c}}$ and $| y - y'| < \delta_{\mathrm{c}}$) implies $|s_{\vf, \xf, \mathrm{c}} (y) - s_{\vf', \xf', \mathrm{c}} (y)| < \varepsilon$. 
Furthermore, as $s_{\vf, \xf} (\yd)$ is continuous on the compact set $\Sbb^{K-1} \times \Xcal$ for every $\yd \in \Yd$, there exists a $\delta_{\yd} > 0$ such that $\Vert (\vf, \xf) - (\vf', \xf')\Vert_{\max} < \delta_{\yd}$ implies $|s_{\vf, \xf} (\yd) - s_{\vf', \xf'} (\yd)| < \varepsilon$. 
Set $\delta := \min (\{ \delta_{\mathrm{c}} \} \cup \{ \delta_{\yd} ~|~ \yd \in \Yd \})$.
Note that this minimum exists as $\Yd$ is a finite set.
In the following, let $(\vf, \xf), (\vf', \xf') \in \Sbb^{K-1} \times \Xcal$ with $\Vert (\vf, \xf) - (\vf', \xf')\Vert_{\max} < \delta$.
Since $s_{\vf, \xf, \mathrm{c}}$ is continuous on the compact set $\Yc$, there exists $y_{\vf, \xf, \mathrm{c}} \in \Yc$ such that $s_{\vf, \xf, \mathrm{c}} (y_{\vf, \xf, \mathrm{c}}) = s_{\vf, \xf, \mathrm{c}}^{\max}$.
Furthermore, since $\Yd$ is a finite set, there exists $y_{\vf, \xf, \mathrm{d}} \in \Yd$ such that $s_{\vf, \xf} (y_{\vf, \xf, \mathrm{d}}) = s_{\vf, \xf, \mathrm{d}}^{\max}$.
Since $s_{\vf, \xf}^{\max} = \max \{ s_{\vf, \xf, \mathrm{c}}^{\max}, s_{\vf, \xf, \mathrm{d}}^{\max}\}$, we have 
(i) $s_{\vf, \xf, \mathrm{c}} (y_{\vf, \xf, \mathrm{c}}) = s_{\vf, \xf}^{\max}$
or 
(ii) $s_{\vf, \xf} (y_{\vf, \xf, \mathrm{d}}) = s_{\vf, \xf}^{\max}$.
In case (i), 
we have
$s_{\vf, \xf, \mathrm{c}} (y_{\vf, \xf, \mathrm{c}}) - \varepsilon < s_{\vf', \xf', \mathrm{c}} (y_{\vf, \xf, \mathrm{c}})$
and thus
\begin{align*}
s_{\vf', \xf'}^{\max}
\geq
s_{\vf', \xf' , \mathrm{c}}^{\max}
\geq s_{\vf', \xf', \mathrm{c}} (y_{\vf, \xf, \mathrm{c}})
> s_{\vf, \xf, \mathrm{c}} (y_{\vf, \xf, \mathrm{c}}) - \varepsilon
= s_{\vf, \xf}^{\max} - \varepsilon.
\end{align*}
Similarly, in case (ii), 
we have
$s_{\vf, \xf} (y_{\vf, \xf, \mathrm{d}}) - \varepsilon < s_{\vf', \xf'} (y_{\vf, \xf, \mathrm{d}})$
and thus
\begin{align*}
s_{\vf', \xf'}^{\max}
\geq s_{\vf', \xf'} (y_{\vf, \xf, \mathrm{d}})
> s_{\vf, \xf} (y_{\vf, \xf, \mathrm{d}}) - \varepsilon
= s_{\vf, \xf}^{\max} - \varepsilon.
\end{align*}
Analogously, we obtain
$ s_{\vf, \xf}^{\max} >  s_{\vf', \xf'}^{\max} - \varepsilon$
and
thus, $|s_{\vf, \xf}^{\max} - s_{\vf', \xf'}^{\max} | < \varepsilon$.
\end{proof}


\begin{lem}\label{P2a:lem_assumptions}
\begin{enumerate}[label=\arabic*)]
\item\label{P2a:lem_assumption_covariate_basis_nonsingular}
Assumption~\ref{P2a:assumption_appendix_covariate_basis_nonsingular} is fulfilled, if 
$\Bf_{\Xcal} 
\in \Rbb^{N \times K_{\Xcal}}$ 
has rank $K_{\Xcal}$,
\item\label{P2a:lem_assumption_stone_implication}
Assumption \ref{P2a:assumption_appendix_stone_2} implies \ref{P2a:assumption_appendix_stone_1} which implies \ref{P2a:assumption_appendix_covariate_basis_nonsingular} for all $N \geq N'$.
\item\label{P2a:lem_assumption_appendix_scalar_product_non_constant}
If assumption~\ref{P2a:assumption_appendix_stone_2} holds, then the probability that assumption~\ref{P2a:assumption_appendix_scalar_product_non_constant} holds tends to $1$ for $N \ra \infty$.
\item\label{P2a:lem_assumption_appendix_domain_basis_nonsingular}
Assumption~\ref{P2a:assumption_appendix_domain_basis_nonsingular} is fulfilled, if 
$(\Bft_{\Ycal}^{(l)} : \mathbf{1}_{\Gamma^{(l)}} ) \in \Rbb^{\Gamma^{(l)} \times (K_\Ycal + 1)}$ has rank $K_{\Ycal}+1$ for any $l \in \{1, \ldots , L\}$.
Assume the vector of basis functions $\bft_{\Ycal}$ 
is chosen as proposed in Section~\ref{P2:chapter_bayes_space_regression}.
If $\Yc = \emptyset$, the rank condition holds. 
If $\Yc \neq \emptyset$,
the rank condition holds, if
$K_{\Yc} + 1 \leq G^{(l)}$ and 
for each $m = 1, \ldots, K_{\Yc} + 1$, there exists a $g \in \{ 1, \ldots , G^{(l)}\}$ such that
$\bar{b}_{\Yc, \, m} (u_g^{(l)}) > 0$.
This is fulfilled for small enough bin width.
\end{enumerate}
\end{lem}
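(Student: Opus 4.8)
The plan is to prove the four items essentially independently, using Lemma~\ref{P2a:lem_scalar_product_constant_implies_0}, Lemma~\ref{P2a:lem_sup_continuous} and the change-of-basis description in Lemma~\ref{P2a:lem_characterization_constrined_basis}.

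\textbf{Items~\ref{P2a:lem_assumption_covariate_basis_nonsingular} and~\ref{P2a:lem_assumption_stone_implication}.} Item~\ref{P2a:lem_assumption_covariate_basis_nonsingular} is pure linear algebra: the hypothesis of~\ref{P2a:assumption_appendix_covariate_basis_nonsingular} is $\Bf_{\Xcal}\vf_{\Xcal}=\mathbf 0$, so a rank-$K_{\Xcal}$ matrix has trivial kernel and forces $\vf_{\Xcal}=\mathbf 0$. For item~\ref{P2a:lem_assumption_stone_implication}, the implication $\ref{P2a:assumption_appendix_stone_1}\Rightarrow\ref{P2a:assumption_appendix_covariate_basis_nonsingular}$ for $N\geq N'$ is also short: if $\bfe_{\Xcal}(\xf_i)^\top\vf_{\Xcal}=0$ for all $i$, the right-hand side of~\eqref{P2a:eq_stone_1} vanishes, hence $\int_{\Xcal}(\bfe_{\Xcal}(\xf)^\top\vf_{\Xcal})^2\,\dnuf(\xf)=0$; continuity of the basis functions (assumption~\ref{P2a:assumption_domain_basis}) together with the product structure of $\nuf$ upgrades ``$\nuf$-a.e.\ zero'' to ``zero on all of $\Xcal$'', and linear independence of the entries of $\bfe_{\Xcal}$ then gives $\vf_{\Xcal}=\mathbf 0$. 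The substantive step is $\ref{P2a:assumption_appendix_stone_2}\Rightarrow\ref{P2a:assumption_appendix_stone_1}$, which is not proven in \citet{stone1991}. I would partition $\Xcal$ along its discrete coordinates and into axis-parallel boxes of side length slightly above $N^{\alpha-1}$ along the continuous ones, so that for $N$ large each box $C$ lies inside a single polynomial piece of every function in $\spano(\bfe_{\Xcal})$ (using assumption~\ref{P2a:assumption_piecewise_polynomial} and the explicit covariate bases), and $\xf\mapsto\bfe_{\Xcal}(\xf)^\top\vf_{\Xcal}$ is a polynomial of bounded degree on $C$. On each box bound $\int_C(\bfe_{\Xcal}^\top\vf_{\Xcal})^2\,\dnuf\leq\nuf(C)\max_C(\bfe_{\Xcal}^\top\vf_{\Xcal})^2$, and bound $\max_C(\bfe_{\Xcal}^\top\vf_{\Xcal})^2$ above by a constant (depending only on the degree) times an average of $(\bfe_{\Xcal}^\top\vf_{\Xcal})^2$ over the design points in $C$ via an equivalence-of-norms (Remez-type) estimate, the number of such points being $\gtrsim N\,\nuf(C)$ by~\eqref{P2a:eq_stone_2}; summing over the boxes yields~\eqref{P2a:eq_stone_1} with suitable $M',N'$.

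\textbf{Item~\ref{P2a:lem_assumption_appendix_scalar_product_non_constant}.} Here I would argue probabilistically, the $y_i$ being independent given the $\xf_i$ (with~\ref{P2a:assumption_appendix_stone_2} assumed). By compactness it suffices to cover $\Sbb^{K-1}$ by finitely many balls $B(\vf_k,\varepsilon_k)$ and, for each $k$, bound the ``robustly bad'' event $\tilde B_k:=\{\forall i:\ s_{\vf_k,\xf_i}(y_i)>s_{\vf_k,\xf_i}^{\max}-\rho_k\}$: by uniform continuity of $(\vf,\xf,y)\mapsto s_{\vf,\xf}(y)$ on $\Sbb^{K-1}\times\Xcal\times\Yc$ and of $(\vf,\xf)\mapsto s_{\vf,\xf}^{\max}$ (Lemma~\ref{P2a:lem_sup_continuous}), for $\varepsilon_k$ small the event that~\ref{P2a:assumption_appendix_scalar_product_non_constant} fails at some $\vf\in B(\vf_k,\varepsilon_k)$ is contained in $\tilde B_k$. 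Since $\vf_k\neq\mathbf 0$, the map $y\mapsto s_{\vf_k,\xf}(y)$ is non-constant for $\xf$ in a nonempty open set $R_k\subseteq\Xcal$ (by linear independence of $\bfe_{\Xcal}$ and $\bft_{\Ycal}$, as in the proof of Lemma~\ref{P2a:lem_scalar_product_constant_implies_0}), hence $\nuf(R_k)>0$; covering a compact subset of $R_k$ by finitely many cells on which a ``gap'' set $\{y:\ s_{\vf_k,\xf}(y)\leq s_{\vf_k,\xf}^{\max}-\rho_k\}$ of positive $\mu$-measure can be located uniformly, and using positivity and continuity of $\xf\mapsto f_{\xf}(y)$ (so $f_{\xf}$ is bounded below on compacta in $\Xcal\times(\Yc\setminus\Yd)$ and $f_{\xf}(t_d)>0$), one gets $q_i:=\Pbb[s_{\vf_k,\xf_i}(y_i)\leq s_{\vf_k,\xf_i}^{\max}-\rho_k]\geq\delta_k>0$ for $\gtrsim N$ indices $i$ by~\eqref{P2a:eq_stone_2}. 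Independence then gives $\Pbb[\tilde B_k]=\prod_i(1-q_i)\leq\exp(-\sum_i q_i)\to 0$, and a union bound over $k$ finishes the proof.

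\textbf{Item~\ref{P2a:lem_assumption_appendix_domain_basis_nonsingular}.} This is again linear algebra. Fixing $l$, the hypothesis of~\ref{P2a:assumption_appendix_domain_basis_nonsingular} reads $(\Bft_{\Ycal}^{(l)}:\mathbf 1_{\Gamma^{(l)}})(\vf_{\Ycal},-c)^\top=\mathbf 0$, so full column rank $K_{\Ycal}+1$ forces $\vf_{\Ycal}=\mathbf 0$, $c=0$. For the rank condition with the bases of Section~\ref{P2:chapter_bayes_space_regression}: when $\Yc=\emptyset$ the $K_{\Ycal}+1=D$ columns are $\bt_{\Yd,1},\dots,\bt_{\Yd,K_{\Yd}},1_{\Yd}$, a basis of the $D$-dimensional function space on $\Yd$ (Lemma~\ref{P2a:lem_characterization_constrined_basis}), evaluated at its $D$ points, hence a change of basis of $\Rbb^D$ and of rank $D$. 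When $\Yc\neq\emptyset$ the matrix has a block form coming from the embeddings ($\Jtc$ is the identity off $\Yd$, $\Jtd$ is constant off $\Yd$): rows at continuous bin points $u_g^{(l)}\in\Yc\setminus\Yd$ carry $\bft_{\Yc}(u_g^{(l)})$ in the first $K_{\Yc}$ columns, a common constant vector (the values of $\bft_{\Yd^\bullet}$ at $t_{D+1}$) in the next $D$, and $1$ in the last; rows at $t_1,\dots,t_D$ carry $\mathbf 0$, $\bt_{\Yd^\bullet}(t_d)$, and $1$. A kernel vector $(\vf_{\Yc},\vf_{\Yd},-c)$ first yields $\bft_{\Yc}(u_g^{(l)})^\top\vf_{\Yc}$ constant over the $G^{(l)}\geq K_{\Yc}+1$ bin points; since each unconstrained B-spline has some $u_g^{(l)}$ in its support, the collocation matrix of the B-spline basis (equivalently, by Lemma~\ref{P2a:lem_characterization_constrined_basis}, of $\bt_{\Yc,1},\dots,\bt_{\Yc,K_{\Yc}},1_{\Yc}$) has full column rank by \citet[Section XIV, Lemmas~(23) and~(25)]{deboor2001}, forcing $\vf_{\Yc}=\mathbf 0$ and that constant to be $0$; feeding this back gives $\bt_{\Yd^\bullet}(t_d)^\top\vf_{\Yd}=c$ for all $d=1,\dots,D+1$ (the point $t_{D+1}$ entering through the constant vector), and since $\bt_{\Yd^\bullet,1},\dots,\bt_{\Yd^\bullet,D},1_{\Yd^\bullet}$ is a basis of the $(D+1)$-dimensional function space on $\Yd^\bullet$ evaluated at all its points, $\vf_{\Yd}=\mathbf 0$ and $c=0$. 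Finally, as the bins shrink $G^{(l)}\to\infty$ and, each $\bar b_{\Yc,m}$ having nonempty open support, every B-spline eventually contains a bin point, so the two conditions (hence~\ref{P2a:assumption_appendix_domain_basis_nonsingular}) hold for small enough bin width.

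The main obstacle is the analytic implication $\ref{P2a:assumption_appendix_stone_2}\Rightarrow\ref{P2a:assumption_appendix_stone_1}$ in item~\ref{P2a:lem_assumption_stone_implication}, which \citet{stone1991} leave unproven and which needs a careful multivariate Remez-type comparison of integrals against design sums on cells matched to the scale $N^{\alpha-1}$. Close behind is the uniform-over-$\Sbb^{K-1}$ bound in item~\ref{P2a:lem_assumption_appendix_scalar_product_non_constant}: the delicate point there is extracting, uniformly over a neighbourhood of each $\vf_k$ and over a positive-$\nuf$-measure region of covariate values, a gap set of positive $\mu$-measure on which the conditional density is bounded below, which forces a case distinction according to whether the continuous part of $s_{\vf_k,\xf}$ is non-constant or the gap sits at a discrete atom.
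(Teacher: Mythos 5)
Items~\ref{P2a:lem_assumption_covariate_basis_nonsingular}, \ref{P2a:lem_assumption_appendix_scalar_product_non_constant} and~\ref{P2a:lem_assumption_appendix_domain_basis_nonsingular} of your proposal, as well as the implication $\ref{P2a:assumption_appendix_stone_1}\Rightarrow\ref{P2a:assumption_appendix_covariate_basis_nonsingular}$, match the paper's proof in substance: the trivial-kernel arguments, the finite covering of $\Sbb^{K-1}$ with ``robustly bad'' events and the product bound $\prod_i(1-q_i)\to 0$ fed by~\eqref{P2a:eq_stone_2}, and the block/kernel reduction of $(\Bft_{\Ycal}^{(l)}:\mathbf 1_{\Gamma^{(l)}})$ via the collocation results of de Boor and Lemma~\ref{P2a:lem_characterization_constrined_basis} are all essentially what the paper does (the paper phrases the mixed case as explicit row and column operations rather than a kernel computation, but the content is identical).

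The genuine gap is in your key step for $\ref{P2a:assumption_appendix_stone_2}\Rightarrow\ref{P2a:assumption_appendix_stone_1}$. You bound $\max_C(\bfe_{\Xcal}^\top\vf_{\Xcal})^2$ by a constant times the \emph{average} of $(\bfe_{\Xcal}^\top\vf_{\Xcal})^2$ over the design points in a cell $C$ of side length of order $N^{\alpha-1}$, invoking only the \emph{number} of design points in $C$ supplied by~\eqref{P2a:eq_stone_2}. No Remez- or norming-set-type inequality delivers this from cardinality alone: assumption~\ref{P2a:assumption_appendix_stone_2} constrains counts only in sets of diameter at least $N^{\alpha-1}$, so it says nothing about the spatial distribution of the design points \emph{inside} a cell of exactly that scale. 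All $\gtrsim N\,\nuf(C)$ points could sit at a single location $x_0\in C$ where a nonzero polynomial of bounded degree vanishes, making the design-point average zero while $\max_C$ is positive; your chain $\int_C\le\nuf(C)\max_C\le\mathrm{const}\cdot\nuf(C)\cdot(\text{design average})$ then fails. The paper circumvents this entirely: it fixes a partition fineness $\delta$ (independent of $N$) from uniform Riemann-sum approximation of the Gram integrands $b_{k,k'}$ with \emph{arbitrary} tag points, controls the error as a fixed fraction of $\int_\Xcal(\bfe_{\Xcal}^\top\vf_{\Xcal})^2\,\dnuf$ via Cauchy--Schwarz and the norm equivalence $\Vert\cdot\Vert_2\le M_{\bfe_\Xcal}\Vert\cdot\Vert_{\bfe_\Xcal}$, and then uses the intermediate value theorem to pick in each cell a tag at which $(\bfe_{\Xcal}^\top\vf_{\Xcal})^2$ \emph{equals} the average over the design points in that cell --- so no within-cell spread is ever needed, and only continuity and boundedness of $\bfe_{\Xcal}$ (assumption~\ref{P2a:assumption_domain_basis}) are used, not any piecewise-polynomial structure of the covariate basis. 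Your approach could be repaired by taking cells of fixed size and extracting a norming set from the sub-boxes of diameter $N^{\alpha-1}$ that~\eqref{P2a:eq_stone_2} does control, but as written, with cells matched to the scale $N^{\alpha-1}$, the step would fail.
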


\begin{proof}
\begin{enumerate}[label=\arabic*)]
\item\label{P2a:lem_assumption_covariate_basis_nonsingular_proof}
Since $\Bf_{\Xcal}$ has full column rank $K_{\Xcal}$, the equation system $\Bf_{\Xcal} \vf_{\Xcal} = \mathbf{0}$ (which is simply the matrix notation of the equation system in \ref{P2a:assumption_appendix_covariate_basis_nonsingular}) has a unique solution \citep[Corollary~6.6~(2)]{liesen2015}.
Hence, the obvious solution $\vf_{\Xcal} = \mathbf{0}$ is the unique solution. 
This shows \ref{P2a:assumption_appendix_covariate_basis_nonsingular}.
\item\label{P2a:lem_stone_assumptions_implication}
We first show the second implication.
Thus, let $M' > 0 $ and $N' \in \Nbb$ be the constants from \ref{P2a:assumption_appendix_stone_1}.
It is easily seen that then also \ref{P2a:assumption_appendix_covariate_basis_nonsingular} holds for $N \geq N'$: 
Let $\vf_{\Xcal} \in \Rbb^{K_{\Xcal}}$ such that $\bfe_{\Xcal}(\xf_i)^\top \vf_{\Xcal} = 0$ for all $i = 1, \ldots , N$.
Then, $\sum_{i=1}^N \left( \bfe_{\Xcal} (\xf_i)^\top \vf_{\Xcal} \right)^2 = 0$ and we obtain
$M' \, N \int_{\Xcal} \left( \bfe_{\Xcal} (\xf)^\top \vf_{\Xcal} \right)^2 \dnuf (\xf) = 0$ with \ref{P2a:assumption_appendix_stone_1} and using that the integrand is nonnegative.
This implies $\bfe_{\Xcal} (\xf)^\top \vf_{\Xcal} = 0$ for all $\xf \in \Xcal$ and linear independence of the functions contained in $\bfe_{\Xcal}$ yields $\vf_{\Xcal} = \mathbf{0}$, i.e., \ref{P2a:assumption_appendix_covariate_basis_nonsingular}.

Now, we show that \ref{P2a:assumption_appendix_stone_2} implies \ref{P2a:assumption_appendix_stone_1}.
For this purpose, first set 
$\lambdaf = \lambda^{|\Cfrak|}$ and let $\alpha \in (0, 1)$, $M'' >0$, and $N'' \in \Nbb$ be the constants from \ref{P2a:assumption_appendix_stone_2}.
The main ideas are to use multidimensional Riemann sums for the continuous covariates, and to use that the mean of the function values at the observations whose continuous covariates are contained in some connected subset of $\Xcal^{\Cfrak}$ and whose discrete covariates are $\xf^{\Dfrak}$ lies between the function's infimum and supremum on this set.
In the following, when considering partitions, each member is assumed to be a connected set. 
\\
We first show that for every $\tilde{M} > 1$, there exists a $\delta > 0$ such that if a partition $X_1, \ldots , X_T$ of $\Xcal^{\Cfrak}$ fulfills $\max_{t \in \{1, \ldots , T\}} \diam (X_t) < \delta$, we have for any $(\xif_1^{\Cfrak}, \ldots , \xif_T^{\Cfrak}) \in \bigtimes_{t=1}^T X_t$, any $\xf^{\Dfrak} \in \Xcal^{\Dfrak}$, and any $\vf_{\Xcal} \in \Rbb^{K_{\Xcal}}$,
\begin{align}
 \int_{\Xcal} (\bfe_{\Xcal} ^\top \vf_{\Xcal})^2 \, \dnuf
\leq 
\tilde{M} \, 
\sum_{\xf^{\Dfrak} \in \Xcal^{\Dfrak}} \sum_{t=1}^T \lambdaf (X_t) \, \bigl( \bfe_{\Xcal} (\xf^{\Dfrak}, \xif^{\Cfrak}_{t}) ^\top \vf_{\Xcal} \bigr)^2 
\label{P2a:eq_stone_implication_ingetral_bound}
\end{align}
%
Thus, let $\tilde{M} > 1$. 
First note that 
$\langle \vf_{\Xcal}, \wf_{\Xcal} \rangle_{\bfe_{\Xcal}}
:= \int_{\Xcal} ( \bfe_{\Xcal}^\top \vf_{\Xcal} ) \, ( \bfe_{\Xcal}^\top \wf_{\Xcal} ) \, \dnuf$ defines an inner product on $\Rbb^{K_{\Xcal}} \times \Rbb^{K_{\Xcal}}$, which is straightforwardly verified, using the linear independence of the functions contained in $\bfe_{\Xcal}$. 
Then, $\Vert \vf_{\Xcal} \Vert_{\bfe_{\Xcal}} := [\int_{\Xcal} (\bfe_{\Xcal}^\top \vf_{\Xcal})^2 \, \dnuf]^{\frac12}$ defines a norm on $\Rbb^{K_{\Xcal}}$ and thus there exists $M_{\bfe_{\Xcal}} > 0$ such that
$\Vert \vf_{\Xcal} \Vert_2
\leq
M_{\bfe_{\Xcal}} \, \Vert \vf_{\Xcal} \Vert_{\bfe_{\Xcal}} 
$
\citep[e.g.,][Corollary 5.4.5]{horn2012}.
For $\xf^{\Dfrak} \in \Xcal^{\Dfrak}$, consider the real-valued continuous function $\xf^{\Cfrak} \mapsto b_{k, k'} (\xf^{\Dfrak}, \xf^{\Cfrak}) := \bfe_{\Xcal} (\xf^{\Dfrak}, \xf^{\Cfrak})_{[k]} \bfe_{\Xcal} (\xf^{\Dfrak}, \xf^{\Cfrak})_{[k']}$, where the index ${[k]}$ denotes the $k$-th entry of the vector. 
Then, 
there exist $\delta_{\xf^{\Dfrak}, k, k'} > 0$ such that if a partition $X_1, \ldots , X_T$ of $\Xcal^{\Cfrak}$ fulfills $\max_{t \in \{1, \ldots , T\}} \diam (X_t) < \delta_{\xf^{\Dfrak}, k, k'}$, we have
\begin{align*}
\Bigl\vert \int_{\Xcal^{\Cfrak}} b_{k, k'} (\xf^{\Dfrak}, \xf^{\Cfrak}) \, \dlambf (\xf^{\Cfrak}) - \sum_{t=1}^T \lambdaf (X_t) \, b_{k, k'} (\xf^{\Dfrak}, \xif_t^{\Cfrak}) \Bigr\vert 
< \frac{1- \tilde{M}^{-1}}{M_{\bfe_{\Xcal}} \, | \Xcal^{\Dfrak} | \, K_{\Xcal}} 
,
\end{align*}
for any $(\xif_1^{\Cfrak}, \ldots , \xif_T^{\Cfrak}) \in \bigtimes_{t=1}^T X_t$ \citep[e.g.,][p. 376: \emph{Integral of a Continuous Function as Limit}]{fusco2022}.
Let $\delta := \min \{ \delta_{\xf^{\Dfrak}, k, k'} ~|~ \xf^{\Dfrak} \in \Xcal^\Dfrak , \, k, k' \in \{1 , \ldots , K_{\Xcal} \} \} > 0$ and $X_1, \ldots , X_T$ a partition of $\Xcal^{\Cfrak}$ with $\max_{t \in \{1, \ldots , T\}} \diam (X_t)$ $< \delta$.
Then, for any $\vf_{\Xcal} = (v_{\Xcal , \, 1}, \ldots , v_{\Xcal, \, K_{\Xcal}}) \in \Rbb^{K_{\Xcal}}$ and for any $(\xif_1^{\Cfrak}, \ldots , \xif_T^{\Cfrak}) \in \bigtimes_{t=1}^T X_t$, we have
\begin{align*}
&~\Bigl| \int_{\Xcal} (\bfe_{\Xcal}^\top \vf_{\Xcal})^2 \, \dnuf
- \sum_{\xf^{\Dfrak} \in \Xcal^{\Dfrak}} \sum_{t=1}^T \lambdaf (X_t) \, \bigl( \bfe_{\Xcal} (\xf^{\Dfrak}, \xif^{\Cfrak}_{t}) ^\top \vf_{\Xcal} \bigr)^2 \Bigr|
\\
= &~
\Bigl| \sum_{k, k' = 1}^{K_{\Xcal}} v_{\Xcal , \, k} v_{\Xcal , \, k'} \Bigl( \sum_{\xf^{\Dfrak} \in \Xcal^{\Dfrak}} \Bigl( \int_{\Xcal^{\Cfrak}} b_{k, k'}(\xf^{\Dfrak}, \xf^{\Cfrak}) \, \dlambf (\xf^{\Cfrak}) 
- \sum_{t=1}^T \lambdaf (X_t) \, b_{k, k'}(\xf^{\Dfrak}, \xif_t^{\Cfrak}) \Bigr) \Bigr)
\Bigr|
\\
\overset{(\ast)}{\leq} \hspace{-0.1cm} & ~
\Bigl( \sum_{k, k' = 1}^{K_{\Xcal}} v_{\Xcal , \, k}^2 v_{\Xcal , \, k'}^2 \Bigr)^{\frac12} 
\\
& ~
\Bigl( \sum_{k, k' = 1}^{K_{\Xcal}} \Bigl( \sum_{\xf^{\Dfrak} \in \Xcal^{\Dfrak}} \int_{\Xcal^{\Cfrak}} b_{k, k'}(\xf^{\Dfrak}, \xf^{\Cfrak}) \, \dlambf (\xf^{\Cfrak}) - \sum_{t=1}^T \lambdaf (X_t) \, b_{k, k'}(\xf^{\Dfrak}, \xif_t^{\Cfrak}) \Bigr)^2 \Bigr)^{\frac12}
\\
< & ~
\Bigl( \sum_{k = 1}^{K_{\Xcal}} v_{\Xcal , \, k}^2 \Bigr) \, \Bigl( K_{\Xcal}^2 \, | \Xcal^{\Dfrak} |^2 \, \frac{(1- \tilde{M}^{-1})^2}{M_{\bfe_{\Xcal}}^2 \, | \Xcal^{\Dfrak} |^2 \, K_{\Xcal}^2} \Bigr)^{\frac12}
\\
= & ~ \Vert \vf_{\Xcal} \Vert_2^2 \, \frac{1- \tilde{M}^{-1}}{M_{\bfe_{\Xcal}}}
\leq 
(1- \tilde{M}^{-1}) \, \int_{\Xcal} (\bfe_{\Xcal}^\top \vf_{\Xcal})^2 \, \dnuf 
,
\end{align*}
where we used the Cauchy-Schwarz inequality \citep[e.g.,][Theorem~5.1.4]{horn2012} in $(\ast)$.
This 
implies~\eqref{P2a:eq_stone_implication_ingetral_bound}.
\\
In the following, let $\tilde{M} > 1$ and $\delta > 0$ such that~\eqref{P2a:eq_stone_implication_ingetral_bound} holds for partitions of fineness $\delta$.
Since 
$\alpha \in (0, 1)$, we have $\lim_{N\ra \infty} N^{\alpha -1} = 0$, and thus 
there exists $N_\alpha \in \Nbb$ such that 
$N^{\alpha -1} 
< \delta$ for $N \geq N_\alpha$. 
In the following, let $N \geq \max \{N_\alpha, N''\} =: N'$
and let
$X_1, \ldots , X_T$ 
be a partition
of $\Xcal^{\Cfrak}$ such that 
$N^{\alpha -1} 
< \diam (X_t) < \delta$ for all $t \in \{1, \ldots , T\}$. 
Then,
\eqref{P2a:eq_stone_2} holds for 
every $X_t,~ t \in \{ 1, \ldots , T\}$.
%
Furthermore, 
for any such $X_t$, 
any 
$\vf_{\Xcal} 
\in \Rbb^{K_{\Xcal}}$,
and any $\xf^{\Dfrak} \in \Xcal^{\Dfrak}$, the mean of the function values of $\xf^\Cfrak \mapsto (\bfe_{\Xcal} (\xf^{\Dfrak}, \xf^\Cfrak) ^\top \vf_{\Xcal})^2$ at the observations whose continuous covariates are contained in $X_t$ and whose discrete covariates are $\xf^{\Dfrak}$ lies between the function's infimum and supremum: 
\begin{align*}
\inf_{\xf^{\Cfrak} \in X_t} (\bfe_{\Xcal} (\xf^{\Dfrak}, \xf^{\Cfrak}) ^\top \vf_{\Xcal})^2 
&\leq \frac{\sum_{i=1}^N \mathbbm{1}_{\{ \xf^{\Dfrak}\} \times X_t} (\xf^\Dfrak_i , \xf^\Cfrak_i) \, (\bfe_{\Xcal}(\xf^\Dfrak_i, \xf^\Cfrak_i) ^\top \vf_{\Xcal})^2}{\sum_{i=1}^N \mathbbm{1}_{\{ \xf^{\Dfrak}\} \times X_t} (\xf^\Dfrak_i , \xf^\Cfrak_i)} 
\\
&\leq \sup_{\xf^{\Cfrak} \in X_t} (\bfe_{\Xcal} (\xf^{\Dfrak}, \xf^{\Cfrak}) ^\top \vf_{\Xcal})^2.
\end{align*}
With the Intermediate Value Theorem \citep[e.g.,][p.93]{fusco2022}, there exists a $\xif^{\Cfrak}_{\xf^{\Dfrak}, \, t}$ in the interior $X_t^\circ$ of $X_t$, such that 
\begin{align*}
(\bfe_{\Xcal}(\xf^{\Dfrak}, \xif^{\Cfrak}_{\xf^{\Dfrak}, \, t}) ^\top \vf_{\Xcal})^2 = \frac{\sum_{i=1}^N \mathbbm{1}_{\{ \xf^{\Dfrak}\} \times X_t} (\xf^\Dfrak_i , \xf^\Cfrak_i) \, (\bfe_{\Xcal}(\xf^\Dfrak_i, \xf^\Cfrak_i) ^\top \vf_{\Xcal})^2}{\sum_{i=1}^N \mathbbm{1}_{\{ \xf^{\Dfrak}\} \times X_t} (\xf^\Dfrak_i , \xf^\Cfrak_i)} .
\end{align*}
Then, 
\begin{align*}
& ~ \int_{\Xcal} (\bfe_{\Xcal} ^\top \vf_{\Xcal})^2 \, \dnuf
\\
\overset{\eqref{P2a:eq_stone_implication_ingetral_bound}}{\leq} & ~
\tilde{M} \, 
\sum_{\xf^{\Dfrak} \in \Xcal^{\Dfrak}} \sum_{t=1}^T \lambdaf (X_t) \, \bigl( \bfe_{\Xcal} (\xf^{\Dfrak}, \xif^{\Cfrak}_{\xf^{\Dfrak}, \, t}) ^\top \vf_{\Xcal} \bigr)^2 
\\
\overset{(\ast)}{\leq} \hspace{0.25cm} & ~ \tilde{M} \, \sum_{\xf^{\Dfrak} \in \Xcal^{\Dfrak}} \sum_{t=1}^T \diam (X_t)^{|\Cfrak|} \, \frac{\sum_{i=1}^N \mathbbm{1}_{\{ \xf^{\Dfrak}\} \times X_t} (\xf^\Dfrak_i , \xf^\Cfrak_i) \, (\bfe_{\Xcal}(\xf^\Dfrak_i, \xf^\Cfrak_i) ^\top \vf_{\Xcal})^2}{\sum_{i=1}^N \mathbbm{1}_{\{ \xf^{\Dfrak}\} \times X_t} (\xf^\Dfrak_i , \xf^\Cfrak_i)} 
\\
\overset{\eqref{P2a:eq_stone_2}}{\leq} & ~ \tilde{M} \, \sum_{\xf^{\Dfrak} \in \Xcal^{\Dfrak}} \sum_{t=1}^T \diam (X_t)^{|\Cfrak|} \, \frac{\sum_{i=1}^N \mathbbm{1}_{\{ \xf^{\Dfrak}\} \times X_t} (\xf^\Dfrak_i , \xf^\Cfrak_i) \, (\bfe_{\Xcal}(\xf^\Dfrak_i, \xf^\Cfrak_i) ^\top \vf_{\Xcal})^2}{M'' \, N \, \diam (X_t)^{|\Cfrak|}}
\\
= \hspace{0.25cm} & ~ \frac{\tilde{M}}{M'' \, N} \, \sum_{i=1}^N  \, (\bfe_{\Xcal}(\xf_i) ^\top \vf_{\Xcal})^2 ,
\end{align*}
where $(\ast)$ holds since for each $t \in \{1, \ldots , T\}$, there exists a ${|\Cfrak|}$-dimensional cube $Q_t$ with edge length $\diam(X_t)$ such that $X_t \subseteq Q_t$ and due to monotonicity of measures $\lambdaf(X_t) \leq \lambdaf (Q_t) = \diam(X_t)^{|\Cfrak|}$.
Setting $M' := \tilde{M}^{-1} \, M''$, we obtain \ref{P2a:assumption_appendix_stone_1} as desired.
\item
Consider the function
\begin{align*}
\rho 
: \Sbb^{K-1} \times \Xcal \times \Ycal \ra \Rbb^+_0 ,
&& (\vf, \xf, y) 
\mapsto s_{\vf, \xf}^{\max} - s_{\vf, \xf} (y)
.
\end{align*}
We aim to show
$\lim_{N \ra \infty} \Pbb (A_N) = 1$,
where
\[
A_N := \bigl\{(y_1, \ldots , y_N) \in \Ycal^N ~|~ \forall \vf \in \Sbb^{K-1} \exists i \in \{1, \ldots , N\}: 
0 < \rho (\vf, \xf_i, y_i)
\bigr\}.
\]
This probability is hard to compute directly due to $\Sbb^{K-1}$ being uncountable.
Thus, we construct a subset of $A_N$ (which will be denoted as $\At_N$), whose probability tends to $1$ for $N \ra \infty$, based on a finite covering of $\Sbb^{K-1}$, which does not depend on $N$.
More precisely, we define subsets of $\Sbb^{K-1} \times \Xcal \times \Ycal$, on which $\rho$ is positive (i.e., fulfilling the condition of $A_N$ such that $\At_N \subset A_N$ holds), whose projection onto $\Sbb^{K-1}$ yields a finite covering, whose projection onto $\Ycal$ has positive measure $\mu$, and whose projection onto $\Xcal$ allows to apply~\ref{P2a:assumption_appendix_stone_2}.
The latter two properties particularly will help us to prove that the probability of $\At_N$ tends to one.
To construct those subsets, we first introduce some notation.
Denote the maximum norm, defined as the maximum of the absolute values of the entries of a vector, by $\Vert \cdot \Vert_{\max}$.
For $\vf \in \Sbb^{K-1}, \xf \in \Xcal$, 
$y \in \Ycal 
$, 
and
$\varepsilon > 0$, set
\begin{align*}
\upsilon_{\varepsilon} (\vf, \xf, y)
&:= \{ (\vft, \xft, \yt) \in \Sbb^{K-1} \times \Xcal \times \Ycal 
~|~ \Vert (\vft, \xft, \yt) - (\vf, \xf, y) \Vert_{\max} < \varepsilon \}
.
\end{align*} 
Analogously, define sets $\upsilon_{\varepsilon} (\cdot)$, where $(\cdot)$ corresponds to a sub-tupel of $(\vf, \xf, y)$
e.g., 
$\upsilon_{\varepsilon} (\vf)
:= \{ \vft \in \Sbb^{K-1} ~|~ \Vert \vft - \vf \Vert_{\max} < \varepsilon \}$.

Let $\vf \in \Sbb^{K-1}$. 
Then, $\vf \neq \mathbf{0}$ and thus by Lemma~\ref{P2a:lem_scalar_product_constant_implies_0}, for all observations $\xf_1, \ldots , \xf_N \in \Xcal$ fulfilling assumption~\ref{P2a:assumption_appendix_covariate_basis_nonsingular}, there exists an $i \in \{ 1, \ldots , N\}$ such that $\bft (\xf_i)^\top \vf = s_{\vf, \xf_i}$ is not constant.
As by~\ref{P2a:lem_assumption_stone_implication} there exist observations fulfilling~\ref{P2a:assumption_appendix_covariate_basis_nonsingular} (for large enough $N$), there then in particular exists an $\xf_{\vf} \in \Xcal$ such that $s_{\vf, \xf_{\vf}}$ is not constant.
Thus, there exists an 
$y_{\vf} \in \Ycal
$ with $\rho (\vf, \xf_{\vf}, y_{\vf}) > 0$. 
In the following, we consider $\xf_{\vf}$ and $y_{\vf}$ given $\vf$ to be fixed (i.e., for each $\vf$, we choose one particular pair $(\xf_{\vf}, y_{\vf})$ from the set $\{ (\xf_{\vf}, y_{\vf}) \in \Xcal \times \Ycal ~|~ \rho (\vf, \xf_{\vf}, y_{\vf}) > 0\}$).
To define the desired set $\At_N \subset A_N$, we first construct 
a subset of $\Sbb^{K-1} \times \Xcal \times \Ycal$ containing $\{ (\vf, \xf_{\vf}, y_{\vf})\}$ and fulfilling the properties mentioned above.
For this purpose, we distinguish the two cases $y_{\vf} \in \Yc \setminus \Yd$ and $y_{\vf} \in \Yd$.
In the first case, we have $\mu(\{y_{\vf}\}) = \lambda (\{y_{\vf}\}) = 0$ and use continuity of the function $s_{\vf, \xf_{\vf}}$ on $\Yc \setminus \Yd$ to construct a neighborhood of $y_{\vf}$ with positive measure.
In the second case, we have $\mu(\{y_{\vf}\}) = \delta (\{y_{\vf}\}) > 0$ and thus can use $\{y_{\vf}\}$ directly. 
Note that in this case, $s_{\vf, \xf_{\vf}}$ is not continuous in $y_{\vf}$ and thus the construction from the first case is not applicable, making separate treatment of the cases necessary.
Nevertheless, in both cases, we proceed similarly:
\begin{enumerate}
\item\label{P2a:item_scalar_product_non_constant_c}
If $y_{\vf} \in \Yc \setminus \Yd$,
as $(\vf, \xf, y) \mapsto s_{\vf, \xf} (y)$ is continuous on 
$\Sbb^{K-1} \times \Xcal \times (\Yc \setminus \Yd)$,\footnote{Note that for categorical covariates this is trivially fulfilled as any function defined on a finite set is continuous.\label{P2a:footnote_continuity_on_finite_set}}
we obtain that $\rho$ is continuous in $(\vf, \xf_{\vf}, y_{\vf})$ using Lemma~\ref{P2a:lem_sup_continuous}.
Thus, there exists an $\varepsilon_{\vf} > 0$, 
such that for all $(\vft , \xft, \yt) \in \upsilon_{\varepsilon_{\vf}} (\vf, \xf_{\vf}, y_{\vf})$, we have 
$\rho (\vft, \xft, \yt) > 
0$.
Set 
$\Upsilon_{\vf} := \upsilon_{\varepsilon_{\vf}} (\xf_{\vf}, y_{\vf})$ for  $y_{\vf} \in \Yc \setminus \Yd$.
\item\label{P2a:item_scalar_product_non_constant_d}
If $y_{\vf} \in \Yd$,
as $(\vf, \xf) \mapsto s_{\vf, \xf} (y_{\vf})$ is continuous on 
$\Sbb^{K-1} \times \Xcal$,\textsuperscript{\ref{P2a:footnote_continuity_on_finite_set}} 
we obtain that
$(\vf, \xf) \mapsto \rho (\vf, \xf, y_{\vf})$ 
is continuous in $(\vf, \xf_{\vf})$ using Lemma~\ref{P2a:lem_sup_continuous}.
Thus, there exists an $\varepsilon_{\vf} > 0$, 
such that for all $(\vft , \xft) \in \upsilon_{\varepsilon_{\vf}} (\vf, \xf_{\vf})$, 
we have 
$\rho
(\vft, \xft, y_{\vf}) > 
0$.
Set 
$\Upsilon_{\vf} := \upsilon_{\varepsilon_{\vf}} (\xf_{\vf}) \times \{ y_{\vf} \}$ for  $y_{\vf} \in \Yd$.
\end{enumerate}
Since $\Sbb^{K-1}$ is compact, there exist $Z \in \Nbb$ and $\vf_{1}, \ldots , \vf_{Z} \in \Sbb^{K-1}$ such that 
$\Sbb^{K-1} = \bigcup_{z=1}^Z \upsilon_{\varepsilon_{\vf_z}} (\vf_z)$.
Note that the construction of this covering does not depend on $N$ and is considered to be fix in the remaining proof.
Consider the set
\begin{align*}
\At_N &:= 
\bigl\{(y_1, \ldots , y_N) \in \Ycal^N ~|~ \forall z \in \{ 1, \ldots , Z\} 
\exists i \in \{ 1, \ldots , N \}: 
(\xf_i, y_i) \in \Upsilon_{\vf_z} 
\bigr\}
.
\end{align*}
We show that
$\At_N \subseteq A_N$.
For this purpose, let 
$(y_1, \ldots , y_N) \in \At_N$ and let $\vf \in \Sbb^{K-1}$.
We need to show that there exists an $i \in \{ 1, \ldots , N\}$ with $s_{\vf, \xf_i} (y_i) < s_{\vf, \xf_i}^{\max}$.
As $\vf \in \Sbb^{K-1} = \bigcup_{z=1}^Z \upsilon_{\varepsilon_{\vf_z}} (\vf_z)$, 
there exists a $z \in \{1, \ldots , Z\}$ with
$\vf \in \upsilon_{\varepsilon_{\vf_z}} (\vf_z)$.
Furthermore, as $(y_1, \ldots , y_N) \in \At_N$, there exists
$i \in \{ 1, \ldots , N \}$ for this $z$ with
$(\xf_{i}, y_{i}) \in \Upsilon_{\vf_z}$. 
Let $\xf_{\vf_z} \in \Xcal$ and $y_{\vf_z} \in \Ycal$ be as above, i.e., such that $\rho (\vf_z, \xf_{\vf_z}, y_{\vf_z}) > 0$.
We again distinguish based on $y_{\vf_z}$:
\begin{enumerate}
\item
If $y_{\vf_z} \in \Yc \setminus \Yd$, 
we have
$(\xf_{i}, y_{i}) \in \Upsilon_{\vf_z} = \upsilon_{\varepsilon_{\vf_z}} (\xf_{\vf_z}, y_{\vf_z})$,
which together with 
$\vf \in \upsilon_{\varepsilon_{\vf_z}} (\vf_z)$ 
yields
$(\vf, \xf_{i
}, y_{i
}) \in \upsilon_{\varepsilon_{\vf_z}} (\vf_z, \xf_{\vf_z}, y_{\vf_z}) 
$.
By (a) 
above, we obtain
$\rho
(\vf, \xf_{i}, y_{i}) > 
0$ and thus 
$s_{\vf, \xf_{i}} (y_{i}) < s_{\vf, \xf_{i}}^{\max}$.
\item
If $y_{\vf_z} \in \Yd$, 
we have
$(\xf_{i}, y_{i}) \in \Upsilon_{\vf_z} = \upsilon_{\varepsilon_{\vf_z}} (\xf_{\vf_z}) \times \{y_{\vf_z}\}$,
which together with 
$\vf \in \upsilon_{\varepsilon_{\vf_z}} (\vf_z)$ 
yields
$(\vf, \xf_{i}) \in \upsilon_{\varepsilon_{\vf_z}} (\vf_z, \xf_{\vf_z})$.
By (b) 
above, we obtain
$\rho
(\vf, \xf_{i}, y_{i}) 
= (\vf, \xf_{i}, y_{\vf_z})> 
0$ and thus
$s_{\vf, \xf_{i}} (y_{i}) < s_{\vf, \xf_{i}}^{\max}$.
\end{enumerate}
This shows $(y_1, \ldots , y_N) \in A_N$ and thus $\At_N \subseteq A_N$, which yields 
\begin{align*}
&~ \Pbb ( A_N)
 \geq \Pbb ( 
\At_N )
\\
\geq & ~ 1 - \sum_{z=1}^Z \Pbb \bigl(\left\{(y_1, \ldots , y_N) \in \Ycal^N ~|~ \forall i \in \{ 1, \ldots , N\}:
(\xf_i, y_i) \notin \Upsilon_{\vf_z} 
\right\} \bigr)
.
\intertext{As the defintion of $\Upsilon_{\vf_z}$ depends on whether $y_{\vf_z} \in \Yc \setminus \Yd$ or $y_{\vf_z} \in \Yd$, we split the sum over $\{ 1, \ldots , Z\}$ into two sums: 
One over 
$\Zcal_{\mathrm{d}} := \{ z \in \{ 1, \ldots , Z\} ~|~ y_{\vf_z} \in \Yd\}$ 
and one over
$\Zcal_{\mathrm{c}} := \{ z \in \{ 1, \ldots , Z\} ~|~ y_{\vf_z} \in \Yc \setminus \Yd\}$.
Then, the last line equals
}
& ~ 1 - 
\sum_{z \in \Zcal_{\mathrm{d}}} \Pbb \bigl(\bigl\{(y_1, \ldots , y_N) \in \Ycal^N ~|~ \forall i \in \{ 1, \ldots , N\}:
\\
&\hspace{6.1cm}
\xf_i \notin \upsilon_{\varepsilon_{\vf_z}} (\xf_{\vf_z}) \vee y_{i} \neq y_{\vf_z}
\bigr\} \bigr)
\\
& \hspace{0.35cm}
- \sum_{z \in \Zcal_{\mathrm{c}}} \Pbb \bigl(\bigl\{(y_1, \ldots , y_N) \in \Ycal^N ~|~ \forall i \in \{ 1, \ldots , N\}:
\\
&\hspace{6.1cm}
\xf_i \notin \upsilon_{\varepsilon_{\vf_z}} (\xf_{\vf_z}) \vee y_{i} \notin \upsilon_{\varepsilon_{\vf_z}} (y_{\vf_z})
\bigr\} \bigr)
\\
= & ~ 1 -
\sum_{z \in \Zcal_{\mathrm{d}}} \prod_{i = 1}^N \Bigl( 1 - \Pbb \bigl(\bigl\{y_i \in \Ycal ~|~ \xf_i \in \upsilon_{\varepsilon_{\vf_z}} (\xf_{\vf_z}) \wedge y_{i} = y_{\vf_z} \bigr\} \bigr) \Bigr)
\\
&\hspace{0.35cm} -
\sum_{z \in \Zcal_{\mathrm{c}}} \prod_{i = 1}^N \Bigl( 1 - \Pbb \bigl(\bigl\{y_i \in \Ycal ~|~ \xf_i \in \upsilon_{\varepsilon_{\vf_z}} (\xf_{\vf_z}) \wedge y_{i} \in \upsilon_{\varepsilon_{\vf_z}} (y_{\vf_z}) \bigr\} \bigr) \Bigr)
,
\intertext{where we used independence of $y_1, \ldots , y_N$ in the last equality.
For $z \in \{ 1, \ldots , Z\}$, set $\Ical_{z} := \{i \in \{ 1, \ldots , N \} ~|~ \xf_i \in \upsilon_{\varepsilon_{\vf_z}} (\xf_{\vf_z}) \}$. 
Then, $\xf_i \notin \upsilon_{\varepsilon_{\vf_z}} (\xf_{\vf_z})$ for any $i \in \Ical_{z}^\complement$ and thus, the corresponding factors in each of the products are $1$. 
Thus, the last line equals}
& ~ 1 - \sum_{z \in \Zcal_{\mathrm{d}}} \prod_{i \in \Ical_{z} }
\Bigl( 1 - \Pbb \bigl(\bigl\{y_i \in \Ycal ~|~ y_{i} = y_{\vf_z} \bigr\} \bigr) \Bigr)
\\
& - \sum_{z \in \Zcal_{\mathrm{c}}} \prod_{i \in \Ical_{z} }
\Bigl( 1 - \Pbb \bigl(\bigl\{y_i \in \Ycal ~|~ y_{i} \in \upsilon_{\varepsilon_{\vf_z}} (y_{\vf_z}) \bigr\} \bigr) \Bigr)
\\
= & ~ 1 - \sum_{z \in \Zcal_{\mathrm{d}}} \prod_{i \in \Ical_{z} }
\Bigl( 1 - \int_{\{ y_{\vf_z} \}} f_{\xf_i} \, \dmu \Bigr)
- \sum_{z \in \Zcal_{\mathrm{c}}} \prod_{i \in \Ical_{z} }
\Bigl( 1 - \int_{\upsilon_{\varepsilon_{\vf_z}} (y_{\vf_z})} f_{\xf_i} \, \dmu \Bigr)
,
\end{align*}
Since $f_{\xf}$ is ($\mu$-a.e.) positive for all $\xf \in \Xcal$,
$\Xcal$ and $\Ycal$ are compact, 
and $f_{\xf} (y)$ is continuous on $\Xcal \times \Ycal$,
$\min \{ f_{\xf}(y) ~|~ \xf \in \Xcal, y \in \Ycal\} > 0$ exists.
Furthermore,
we have
$\mu (\{ y_{\vf_z} \}) = \delta (\{ y_{\vf_z} \}) > 0$ for all $z \in \Zcal_{\mathrm{d}}$
and
$\mu (\upsilon_{\varepsilon_{\vf_z}} (y_{\vf_z})) \geq \lambda (\upsilon_{\varepsilon_{\vf_z}} (y_{\vf_z})) = 2\, \varepsilon_{\vf_z} > 0$ for all $z \in \Zcal_{\mathrm{c}}$.
Thus, 
\begin{align*}
\delta 
&:= \min \Bigl(
\Bigl\{ \int_{\{ y_{\vf_z} \}} f_{\xf} \, \dmu ~\big|~ z \in \Zcal_{\mathrm{d}}, x \in \Xcal \Bigr\} 
\cup 
\Bigl\{ \int_{\upsilon_{\varepsilon_{\vf_z}} (y_{\vf_z})} f_{\xf} \, \dmu ~\big|~ z \in \Zcal_{\mathrm{c}}, x \in \Xcal \Bigr\}  
\Bigr)
\end{align*}
exists and is in $(0, 1)$.
Hence,
$\Pbb ( A_N)
\geq
1 - \sum_{z = 1}^Z ( 1 - \delta )^{|\Ical_{z}|}$.
Finally, we show $\lim_{N \ra \infty} |\Ical^{(z)}| = \infty$ for all $z = 1, \ldots , Z$, which immediately implies the desired result $\lim_{N \ra \infty} \Pbb (A_N) = 1$.
For this purpose, let $\Mt \in \Rbb$ be an arbitrary constant.
We show that there exists an $\Nt \in \Nbb$ such that $\tilde{M} \leq |\Ical^{(z)}|$ for all $N \geq \Nt$.
Let $\alpha \in (0, 1), M'' > 0$, and $N'' \in \Nbb$ be the constants from~\ref{P2a:assumption_appendix_stone_2}.
Since $\varepsilon_{\vf_z}$ is independent of $N$ and $\lim_{N \ra \infty} N^{\alpha -1} = 0$, there exists an $N_{\alpha} \in \Nbb$ such that $N^{\alpha -1} < \diam \left( \upsilon_{\varepsilon_{\vf_z}} (\xf_{\vf_z}^{\Cfrak}) \right)$ for all $N \geq N_{\alpha}$.
Setting
$\Nt := \max \Bigl\{ N_{\alpha}, N'', \frac{\Mt}{M'' \, \diam\left( \upsilon_{\varepsilon_{\vf_z}} (\xf_{\vf_z}^{\Cfrak}) \right)^{|\Cfrak|}} \Bigr\}$,
we obtain for any
$ 
N \geq 
\Nt
,
$ 
by assumption~\ref{P2a:assumption_appendix_stone_2},
\begin{align*}
\Mt \leq
M'' \, N \, \diam\left( \upsilon_{\varepsilon_{\vf_z}} (\xf_{\vf_z}^{\Cfrak}) \right)^{|\Cfrak|}
\leq \sum_{i=1}^N \mathbbm{1}_{\{ \xf_{\vf_z}^{\Dfrak}\} \times [\upsilon_{\varepsilon_{\vf_z}} (\xf_{\vf_z}^{\Cfrak})]} (\xf^\Dfrak_i , \xf^\Cfrak_i)
\leq | \Ical^{(z)} |
.
\end{align*}
This shows 
$\lim_{N \ra \infty} |\Ical^{(z)}| = \infty$ and thus $\lim_{N \ra \infty} \Pbb (A_N) = 1$.
\item\label{P2a:rem_assumption_appendix_domain_basis_nonsingular_proof}
We first show that~\ref{P2a:assumption_appendix_domain_basis_nonsingular} holds, if 
$(\Bft_{\Ycal}^{(l)} : \mathbf{1}_{\Gamma^{(l)}} )$ has rank $K_{\Ycal} + 1$
for all $l \in \{1, \ldots , L\}$.
Thus, let $l \in \{1, \ldots , L\}$ and assume that the matrix 
$(\Bft_{\Ycal}^{(l)} : \mathbf{1}_{\Gamma^{(l)}} )$ has rank $K_{\Ycal} + 1$.
Then, $\Bft_{\Ycal}^{(l)}$ has rank $K_{\Ycal} < K_{\Ycal} + 1$ and there thus is no $\vf_{\Ycal} \in \Rbb^{K_{\Ycal}}$ such that $\Bft_{\Ycal}^{(l)} \vf_{\Ycal} = \mathbf{1}_{\Gamma^{(l)}}$ \citep[Corollary~6.6~(1)]{liesen2015}.
Equivalently, for $c \in \Rbb \setminus \{ 0 \}$, there is no $\vf_{\Ycal} \in \Rbb^{K_{\Ycal}}$ such that $\Bft_{\Ycal}^{(l)} \vf_{\Ycal} = c \, \mathbf{1}_{\Gamma^{(l)}}$. 
However, for $c = 0$, we obtain $\vf_\Ycal = \mathbf{0}$ as the unique solution 
by \citet[Corollary~6.6~(2)]{liesen2015}.
This shows assumption~\ref{P2a:assumption_appendix_domain_basis_nonsingular}.
\\
Now assume that $\bft_{\Ycal}$ is chosen as proposed in Section~\ref{P2:chapter_bayes_space_regression}.
If $\Yc \neq \emptyset$, furthermore assume that $K_{\Yc} + 1 \leq G^{(l)}$ and that for each unconstrained B-spline basis $\bb_{\Yc, \, m}$, there exists a $u_g^{(l)}$ in its support.
We show that then, the matrix 
$( 
\Bft_{\Ycal}^{(l)} : \mathbf{1}_{\Gamma^{(l)}}
) 
\in \Rbb^{\Gamma^{(l)} \times (K_\Ycal + 1)}$ has rank $K_{\Ycal}+1$.
We first consider the continuous and discrete special cases (where $\Gamma^{(l)} = G^{(l)}$ and $\Gamma^{(l)} = D$, respectively). 
By Lemma~\ref{P2a:lem_characterization_constrined_basis} and \citet[Lemmas 9.22 and 9.23]{liesen2015},
there exist transformation matrices 
$\Af_{\Yc} \in \Rbb^{(K_{\Yc} + 1) \times (K_{\Yc} + 1)}$ with rank $K_{\Yc} + 1$
and 
$\Af_{\Yd} \in \Rbb^{(K_{\Yd} + 1) \times (K_{\Yd} + 1)}$ with rank $K_{\Yd} + 1$
such that
$(\bt_{\Yc, \, 1}, \ldots , \bt_{\Yc, \, K_{\Yc}}, 1_{\Yc}) = \bfb_{\Yc} 
\Af_{\Yc}$
and
$(\bt_{\Yd, \, 1}, \ldots , \bt_{\Yd, \, K_{\Yd}}, 1_{\Yd}) = \bfb_{\Yd} 
\Af_{\Yd}$.
\footnote{Actually, the first $K_{\Yc}$ ($K_{\Yd}$) columns of $\Af_{\Yc}$ ($\Af_{\Yd}$) equal the transformation matrix $\Zf_{\Yc}$ ($\Zf_{\Yd}$), 
while the last column is $\mathbf{1}_{K_{\Yc} + 1}$ ($\mathbf{1}_{K_{\Yd} + 1}$).} 
Furthermore, the matrix $\bar{\Bf}_{\Yc}^{(l)} \in \Rbb^{G^{(l)} \times (K_{\Yc} + 1)}$ containing $\bfb_{\Yc} (u_g^{(l)})$ in the $g$-th row, $g = 1, \ldots , G^{(l)}$, has rank $K_{\Yc} + 1$, since for every unconstrained B-spline basis function $\bar{b}_{\Yc, \, 1}, \ldots , \bar{b}_{\Yc, \, K_{\Yc} + 1}$ there is at least one $g \in \{ 1, \ldots , G^{(l)}\}$ such that $u_g^{(l)}$ is in its support. 
This follows from \citet[Section XIV, Lemmas~(23) and (25)]{deboor2001}, compare also footnote~\ref{P2a:footnote_bspline_design_matrix_full_rank} in Remark~\ref{P2a:rem_assumptions_lem}~\ref{P2a:rem_assumption_covariate_basis_nonsingular}.
Similarly, the matrix $\bar{\Bf}_{\Yd} \in \Rbb^{D \times (K_{\Yd} + 1)}$ containing $\bfb_{\Yd} (t_{d})$ in the $d$-th row, $d = 1, \ldots , D$, is simply the identity matrix of dimension $K_{\Yd} + 1 = D$ and thus has rank $K_{\Yd} + 1$.
Thus,
$( 
\Bft_{\Yc}^{(l)} : \mathbf{1}_{G^{(l)}}
) 
= \bar{\Bf}_{\Yc}^{(l)} \Af_{\Yc}$
has rank $K_{\Yc} + 1$
and
$
( 
\Bft_{\Yd} : \mathbf{1}_{D}
) 
= \bar{\Bf}_{\Yd} \Af_{\Yd} 
$
has rank $K_{\Yd} + 1$ \citep[Lemma 9.23]{liesen2015}.
\\
In the mixed case, using the embeddings $\Jtc$ and $\Jtd$ as given in~\eqref{P2a:eq_embedding_L20_c} and~\eqref{P2a:eq_embedding_L20_d}
and recalling $u_g^{(l)} = t_{g - G^{(l)}}$ for $g = G^{(l)} + 1 , \ldots , \Gamma^{(l)}$, we obtain
\begin{align*}
\bigl( 
\Bft_{\Ycal}^{(l)} : \mathbf{1}_{\Gamma^{(l)}}
\bigr) 
= 
\begin{pmatrix}
\Bft_{\Yc}^{(l)} & \bft_{\Yd^\bullet}(t_{D + 1})^\top \otimes \mathbf{1}_{G^{(l)}} & \mathbf{1}_{G^{(l)}}
\\
\mathbf{0}_{D \times K_{\Yc}} & \Bft_{\Yd^\bullet} (t_1, \ldots , t_D) & \mathbf{1}_D
\end{pmatrix},
\end{align*}
where $\mathbf{0}_{K_{\Yd^\bullet} \times K_{\Yc}}$ is the $(K_{\Yd^\bullet} \times K_{\Yc})$-dimensional matrix containing zero in every entry and $\Bft_{\Yd^\bullet} (t_1, \ldots , t_D) \in \Rbb^{D \times K_{\Yd^\bullet}}$ is the matrix containing $\bft_{\Yd^\bullet} (t_{d})$ in the $d$-th row, $d = 1, \ldots , D$, i.e., the matrix $\Bft_{\Yd^\bullet}$ without the last row $\bft_{\Yd^\bullet} (t_{D + 1})$.
Note that this last row is, however, contained $G^{(l)}$ times in the block above.
Since we already showed the statement for the continuous case, we in particular have that $\Bft_{\Yc}^{(l)}$ has rank $K_{\Yc}$.
Thus, using elementary row operations, the upper left matrix $\Bft_{\Yc}^{(l)}$ can be transformed to
a $(G^{(l)} \times K_{\Yc})$-dimensional matrix containing the $K_{\Yc}$-dimensional identity matrix $\Id_{K_{\Yc}}$ in the upper $K_{\Yc}$ rows and $G^{(l)} - K_{\Yc}\geq 1$ rows containing zeros below \citep[Section 5]{liesen2015}. 
Applying these row operations to the whole matrix, the lower blocks are unaffected, while the remaining matrix in the upper block sharing the same rows as the upper left block $\Bft_{\Yc}^{(l)}$ 
contains the same vector $(\bt_{\Yd^\bullet, \, 1}(t_{D + 1}), \ldots , \bt_{\Yd^\bullet, \, K_{\Yd^\bullet}}(t_{D + 1}), 1)$ in each row.
Thus, performing the row operations on this block simply results in each row containing a multiple of the original vector (with the multiplicity depending on the row).
The first $K_{\Yc}$ rows in this upper middle-right block can now be eliminated via elementary column operations using the identity matrix in the left block.
Since the blocks below the identity matrix contains only zeros, the remaining rows stay unchanged, when applying these column operations to the whole matrix.
After these transformations, the matrix has the form
\begin{align*}
\begin{pmatrix}
\Id_{K_{\Yc}} & \mathbf{0}_{K_{\Yc} \times K_{\Yd^\bullet}} & \mathbf{0}_{K_{\Yc} \times 1}
\\
\mathbf{0}_{(G^{(l)} - K_{\Yc}) \times K_{\Yc}} & \bft_{\Yd^\bullet}(t_{D + 1})^\top \otimes \vf_{G^{(l)} - K_{\Yc}} & \vf_{G^{(l)} - K_{\Yc}}
\\
\mathbf{0}_{D \times K_{\Yc}} & \Bft_{\Yd^\bullet} (t_1, \ldots , t_D) & \mathbf{1}_{D}
\end{pmatrix},
\end{align*}
where $\vf_{G^{(l)} - K_{\Yc}} \in \Rbb^{G^{(l)} - K_{\Yc}}$ is the vector containing the multiplicity of the original row $(\bt_{\Yd^\bullet, \, 1}(t_{D + 1}), \ldots , \bt_{\Yd^\bullet, \, K_{\Yd^\bullet}}(t_{D + 1}), 1)$ for each row.
Thus, the last of these rows can be used to eliminate all of the rows above in the middle-row-blocks.
Now, the left block stays unchanged as it contains only zero.
Finally, we divide the last row by its multiplicity and obtain
\begin{align*}
\begin{pmatrix}
\Id_{K_{\Yc}} & \mathbf{0}_{K_{\Yc} \times K_{\Yd^\bullet}} & \mathbf{0}_{K_{\Yc} \times 1}
\\
\mathbf{0}_{(G^{(l)} - K_{\Yc} - 1) \times K_{\Yc}} & \mathbf{0}_{(G^{(l)} - K_{\Yc} - 1) \times K_{\Yd^\bullet}} & \mathbf{0}_{(G^{(l)} - K_{\Yc} - 1) \times 1}
\\
\mathbf{0}_{(D + 1) \times K_{\Yc}} & \Bft_{\Yd^\bullet} & \mathbf{1}_{D + 1}
\end{pmatrix}.
\end{align*}
Using the results above for the discrete case, we have that 
$( 
\Bft_{\Yd^\bullet} : \mathbf{1}_{D + 1}
) 
$ has rank $K_{\Yd^\bullet} + 1$ 
and thus
$( 
\Bft_{\Ycal}^{(l)} : \mathbf{1}_{\Gamma^{(l)}}
) 
$ has rank $K_{\Yc}+K_{\Yd^\bullet} + 1 = K_{\Ycal}+1$.
\\
Finally, the assumptions for the case $\Yc \neq \emptyset$ 
are fulfilled for small enough bin width:
The first assumption $K_{\Yc} + 1 \leq G^{(l)}$ means that the number of bins needs to be at least the number of unconstrained B-spline functions is straightforwardly fulfilled for small enough bin widths.
The second assumption that for each unconstrained B-spline basis $\bb_{\Yc, \, m}$, there exists a $u_g^{(l)}$ in its support, is fulfilled, if the maximal bin width is so small, that each pair of adjacent knots underlying the unconstrained B-spline basis contains at least one bin. 
Thus, both conditions are fulfilled for small enough bin width.
\qedhere
\end{enumerate}
\end{proof}

\subsubsection{Auxiliary statements used in multiple proofs in Section~\ref{P2a:chapter_proofs}}\label{P2a:chapter_auxiliary_statements}

Before we start formulating and proving the auxiliary statements, we give a brief overview.
In Proposition~\ref{P2a:prop_appendix_fisher_informations_positive_definite}, we show negative (positive) definiteness of the Hessian matrices (penalized Fisher informations) of $\ell_{\pen}$ and $\ell_{\Zcal, \pen}^{\mathrm{mn}}$, which also implies invertibility of those matrices.
Lemma~\ref{P2a:lem_matrix_inversion_continuous} states the technical result that matrix inversion is a continuous function. 
The remaining theorems,~\ref{P2a:thm_generalized_laplace_approximation},~\ref{P2a:thm_loglikelihood_coercive} and~\ref{P2a:thm_appendix_loglikelihood_level_bounded} actually form a cascade of statements: 
We derive a generalization of Laplace's approximation in Theorem~\ref{P2a:thm_generalized_laplace_approximation}, 
which is used to show $\lim_{\Vert \thetaf \Vert_2 \ra \infty} \ell_{\pen} (\thetaf) = - \infty$ in Theorem~\ref{P2a:thm_loglikelihood_coercive},
which helps us to prove that $- \ell_{\pen}$ is level-bounded in Theorem~\ref{P2a:thm_appendix_loglikelihood_level_bounded}.
Ultimately, of these last three theorems in this section, only Theorem~\ref{P2a:thm_appendix_loglikelihood_level_bounded} is referred to in the proofs in Section~\ref{P2a:chapter_proofs}.

Denote the (multivariate) expectation with respect to the conditional distributions $\Yf_i ~|~ \xf_i, ~ i = 1, \ldots , N$, with $\Ebb$ and the (multivariate) expectation with respect to the corresponding multinomial distributions with respect to partitions $\Zcal$ with $\Ebb_\Zcal^{\mathrm{mn}}$.

\begin{prop}\label{P2a:prop_appendix_fisher_informations_positive_definite}
Assume that \ref{P2a:assumption_appendix_covariate_basis_nonsingular} holds. Then, for any $\thetaf \in \Rbb^K$,
\begin{enumerate}[label=\arabic*)]
\item\label{P2a:prop_appendix_bayes_fisher_information_positive_definite}
the Hessian matrix $\Hf_{\mathrm{pen}}(\thetaf)$ of $\ell_{\mathrm{pen}} (\thetaf)$ is negative definite and thus invertible, 
and $\Hf_{\mathrm{pen}}(\thetaf) = \Ebb (\Hf_{\mathrm{pen}} (\thetaf))$,
\item\label{P2a:prop_appendix_multinomial_fisher_information_positive_definite}
under 
assumption \ref{P2a:assumption_appendix_domain_basis_nonsingular}, 
the Hessian matrix $\Hf_{\Zcal, \mathrm{pen}}^{\mathrm{mn}}(\thetaf)$ of $\ell_{\Zcal, \mathrm{pen}}^{\mathrm{mn}} (\thetaf)$ is negative definite and thus invertible, 
and $\Hf_{\Zcal, \mathrm{pen}}^{\mathrm{mn}}(\thetaf) = \Ebb_\Zcal^{\mathrm{mn}} (\Hf_{\Zcal, \mathrm{pen}}^{\mathrm{mn}}(\thetaf))$.
\end{enumerate}
The corresponding penalized Fisher informations are positive definite, invertible, and equal their expected versions.
\end{prop}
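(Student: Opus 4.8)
The plan is to compute the Hessians of $\ell_{\mathrm{pen}}$ and $\ell^{\mathrm{mn}}_{\Zcal,\mathrm{pen}}$ in closed form, recognise each as minus a non-negatively weighted sum of covariance matrices of the basis-evaluation vectors plus the penalty contribution $-2\Pf$, and then deduce strict definiteness from the fact that at least one of these variances is strictly positive, using assumption~\ref{P2a:assumption_appendix_covariate_basis_nonsingular} (and, in the multinomial case, additionally~\ref{P2a:assumption_appendix_domain_basis_nonsingular}). The ``equal their expectation'' clauses are then immediate, because all these Hessians turn out to be deterministic given the covariates (respectively, given the bin totals).

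\emph{Part 1.} Writing $A_i(\thetaf):=\log\int_\Ycal\exp[\bft(\xf_i)^\top\thetaf]\,\dmu$ and differentiating \eqref{P2:equation_bayes_log_likelihood} twice, the term linear in $\thetaf$ drops out and the standard log-partition-function computation gives $D^2A_i(\thetaf)=\Cov_{f_{\xf_i,\thetaf}}(\bft(\xf_i))$, the covariance matrix of $\bft(\xf_i)(Y)$ when $Y$ has $\mu$-density $f_{\xf_i,\thetaf}$. Hence $\Hf_{\mathrm{pen}}(\thetaf)=D^2\ell_{\mathrm{pen}}(\thetaf)=-\sum_{i=1}^N\Cov_{f_{\xf_i,\thetaf}}(\bft(\xf_i))-2\Pf$, which does not depend on $y_1,\dots,y_N$, so $\Hf_{\mathrm{pen}}(\thetaf)=\Ebb(\Hf_{\mathrm{pen}}(\thetaf))$ trivially, and by~\ref{P2a:assumption_penalties_positive} it is negative semi-definite. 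For strict definiteness I would suppose $\vf\in\Rbb^K$ satisfies $\vf^\top\Hf_{\mathrm{pen}}(\thetaf)\vf=0$; since this is a sum of non-positive terms, $\Var_{f_{\xf_i,\thetaf}}(\bft(\xf_i)^\top\vf)=0$ for every $i$, and as $f_{\xf_i,\thetaf}$ is $\mu$-a.e.\ positive this forces $\bft(\xf_i)^\top\vf$ to be constant over $\Ycal$ for every $i$. Lemma~\ref{P2a:lem_scalar_product_constant_implies_0} applied to $\vf/\Vert\vf\Vert_2$ then yields $\vf=\mathbf{0}$. Thus $\Hf_{\mathrm{pen}}(\thetaf)$ is negative definite, hence invertible, and $\Ff_{\mathrm{pen}}(\thetaf)=-\Hf_{\mathrm{pen}}(\thetaf)$ is positive definite, invertible, and equal to its expectation.

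\emph{Part 2.} The same scheme applies to \eqref{P2:equation_multinomial_log_likelihood}. Setting $\af_g^{(l)}:=\bft(\xf^{(l)})(u_g^{(l)})$ and letting $p_g^{(l)}(\thetaf)$ denote the modelled multinomial probabilities, a softmax computation gives $\Hf^{\mathrm{mn}}_{\Zcal,\mathrm{pen}}(\thetaf)=-\sum_{l=1}^L n^{(l)}\Cov_{p^{(l)}(\thetaf)}(\af^{(l)})-2\Pf$, where $\Cov_{p^{(l)}(\thetaf)}$ is the covariance under the probability vector $(p_g^{(l)}(\thetaf))_g$ on $\{1,\dots,\Gamma^{(l)}\}$ and $n^{(l)}=|\Ical^{(l)}|\ge 1$ is a fixed constant; this is again deterministic given the totals $n^{(l)}$ on which the multinomial model conditions, so it equals its multinomial expectation, and it is negative semi-definite by~\ref{P2a:assumption_penalties_positive}. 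If $\vf^\top\Hf^{\mathrm{mn}}_{\Zcal,\mathrm{pen}}(\thetaf)\vf=0$, then $\Var_{p^{(l)}(\thetaf)}((\af^{(l)})^\top\vf)=0$ for every $l$, and since all $p_g^{(l)}(\thetaf)>0$ this means $g\mapsto(\af_g^{(l)})^\top\vf$ is constant. Reshaping $\vf$ into the $K_\Xcal\times K_\Ycal$ matrix $\Vf$ determined by the Kronecker structure, so that $\bft(\xf)(y)^\top\vf=\bfe_\Xcal(\xf)^\top\Vf\,\bft_\Ycal(y)$, and putting $\wf^{(l)}:=\Vf^\top\bfe_\Xcal(\xf^{(l)})$, the constancy reads $\bft_\Ycal(u_g^{(l)})^\top\wf^{(l)}=c^{(l)}$ for all $g=1,\dots,\Gamma^{(l)}$, so~\ref{P2a:assumption_appendix_domain_basis_nonsingular} gives $\wf^{(l)}=\mathbf{0}$ for every $l$, i.e.\ $\bfe_\Xcal(\xf^{(l)})^\top\Vf_{\cdot,m}=0$ for all $m$ and $l$. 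Since $\{\xf^{(l)}\}_{l}=\{\xf_i\}_{i}$, assumption~\ref{P2a:assumption_appendix_covariate_basis_nonsingular} applied column-wise gives $\Vf=\mathbf{0}$, hence $\vf=\mathbf{0}$. Therefore $\Hf^{\mathrm{mn}}_{\Zcal,\mathrm{pen}}(\thetaf)$ is negative definite and $\Ff^{\mathrm{mn}}_{\Zcal,\mathrm{pen}}(\thetaf)$ is positive definite, invertible, and equal to its expectation.

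The routine part is the two Hessian computations (log-partition function / softmax). The main obstacle I expect is Part 2: the ``constant in $g$'' condition must be disentangled through the tensor structure $\bft=\bfe_\Xcal\otimes\bft_\Ycal$ so that~\ref{P2a:assumption_appendix_domain_basis_nonsingular} and then~\ref{P2a:assumption_appendix_covariate_basis_nonsingular} can be applied in turn, and one has to check carefully that $\mu$-a.e.\ positivity of the densities $f_{\xf_i,\thetaf}$ (resp.\ positivity of the bin weights $\Delta_g^{(l)}$), together with continuity of the basis functions on $\Yc\setminus\Yd$ and positivity of the $\mu$-mass of each point of $\Yd$, legitimately upgrades ``zero variance'' to ``pointwise constant'' before Lemma~\ref{P2a:lem_scalar_product_constant_implies_0} / assumption~\ref{P2a:assumption_appendix_domain_basis_nonsingular} is invoked.
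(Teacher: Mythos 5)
Your proposal is correct and follows essentially the same route as the paper: both compute the (penalized) Hessians as minus a non-negatively weighted sum of variance/covariance terms plus $-2\Pf$, deduce strict definiteness by tracing a vanishing quadratic form back to constancy of $\bft(\xf_i)^\top\vf$ (resp.\ of $g\mapsto\bft(\xf^{(l)})(u_g^{(l)})^\top\vf$) and then through the tensor structure via Lemma~\ref{P2a:lem_scalar_product_constant_implies_0} in Part~1 and via \ref{P2a:assumption_appendix_domain_basis_nonsingular} followed by \ref{P2a:assumption_appendix_covariate_basis_nonsingular} in Part~2, and obtain the expectation identities from the Hessians being deterministic given the covariates, respectively the conditioning totals $n^{(l)}$. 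Your covariance/log-partition and matrix-reshaping phrasing is only a notational variant of the paper's explicit second-derivative computation and indexed subvectors $\vf_{j,n}$.
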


\begin{proof}
We have $\Hf_{\mathrm{pen}} (\thetaf) = \Hf (\thetaf) + D^2 \pen (\thetaf)$ and $\Hf_{\Zcal, \mathrm{pen}}^{\mathrm{mn}} (\thetaf) = \Hf_{\Zcal}^{\mathrm{mn}} (\thetaf) + D^2 \pen (\thetaf)$.
Since the matrices $\Pf_j$ are positive semi-definite for all $j \in \{1, \ldots , J\}$, the matrix $D^2 \pen (\thetaf) = - 2 \Pf$ is negative semi-definite.
Thus, it suffices to show that the Hessian matrices $\Hf (\thetaf)$ and $\Hf_{\Zcal}^{\mathrm{mn}} (\thetaf)$ are negative definite, since a sum of a negative definite and a negative semi-definite matrix is negative definite 
%
%
and in particular invertible.
The statements for the penalized Fisher informations follows immediately.

In the following, we 
use the notation $\gammaf_{[k]}$ to denote the $k$-th entry of a vector $\gammaf$.

\begin{enumerate}[label=\arabic*)]
\item
Setting $C_i(\thetaf) := - \log \int_{\Ycal} \exp [ \bft (\xf_i)^\top \thetaf ]\, \dmu$ for $i = 1, \ldots , N$, we have
$\ell (\thetaf)
= \sum_{i=1}^N [ \bft (\xf_i)(y_i)^\top \thetaf - C_i(\thetaf) ]
$
and thus $\Hf(\thetaf) = \sum_{i=1}^N \Hf_i (\thetaf)$,
where $\Hf_i (\thetaf) \in \Rbb^{K \times K}$ denotes the Hessian matrix of 
$C_i(\thetaf)$ for $i = 1, \ldots , N$.
For its computation, we briefly show that $\frac{\partial}{\partial \theta_k} \int_\Ycal \exp [ \bft (\xf_i)^\top \thetaf ] \, \dmu = \int_\Ycal \frac{\partial}{\partial \theta_k} \exp [ \bft (\xf_i)^\top \thetaf ] \, \dmu$ by verifying the three properties given in Corollary 16.3 in \citet{bauer2001} for the considered function $\exp [ \bft (\xf_i)^\top \thetaf ]$. 
The first two required properties are straightforward:
For all $\thetaf \in \Rbb^K 
$, $\exp [ \bft (\xf_i)^\top \thetaf ]$ is $\mu$-inte\-grable
and for all $y \in \Ycal$ and $k \in \{ 1, \ldots , K \}$,
the partial derivative 
$\frac{\partial}{\partial \theta_k} \exp [ \bft (\xf_i)(y)^\top \thetaf ] 
= \exp [ \bft (\xf_i)(y)^\top \thetaf ] \, (\bft (\xf_i))_{[k]}
$ exists.
To show the third required property, note that since the basis functions contained in $\bft_{\Ycal}$ are bounded and $\mu$ is finite, $\vert \frac{\partial}{\partial \theta_k} \exp [ \bft (\xf_i)(y)^\top \thetaf ] \vert$ is bounded on an open ball 
$\upsilon_\varepsilon(\thetaf) := \{\thetaf' \in \Rbb^K 
~|~ \Vert \thetaf' - \thetaf \Vert_2 < \varepsilon \}$ for each 
$\thetaf \in \Rbb^K 
$ and $\varepsilon > 0$ by a nonnegative finite constant,
defining a constant and thus $\mu$-integrable function on $\Ycal$.
Thus, $\frac{\partial}{\partial \theta_k} \int_\Ycal \exp [ \bft (\xf_i)^\top \thetaf ] \, \dmu = \int_\Ycal \frac{\partial}{\partial \theta_k} \exp [ \bft (\xf_i)^\top \thetaf ] \, \dmu$ for every 
$\thetaf \in \Rbb^K 
$ \citep[Corollary 16.3]{bauer2001}.
Hence, for each $\thetaf \in \Rbb^K 
$, $k = 1, \ldots , K$ and $i = 1, \ldots , N$
, we have
\begin{align}
\frac{\partial}{\partial \theta_k} C_i(\thetaf)
&= - \frac{\int_\Ycal \exp [ \bft (\xf_i)^\top \thetaf ] \, (\bft (\xf_i))_{[k]} \, \dmu}{\int_\Ycal \exp [ \bft (\xf_i)^\top \thetaf ] \, \dmu },
\label{P2a:eq_first_derivative_C_i}
\end{align}
Using the same arguments as above, we show for $k, k' = 1, \ldots , K$, we have $\frac{\partial}{\partial \theta_{k'}} \int_\Ycal \exp [ \bft (\xf_i)^\top \thetaf ] \, (\bft (\xf_i))_{[k]} \, \dmu = \int_\Ycal \frac{\partial}{\partial \theta_{k'}} \exp [ \bft (\xf_i)^\top \thetaf ] \, (\bft (\xf_i))_{[k]} \, \dmu$.
Thus, the $(k, k')$-th element of $\Hf_i (\thetaf), ~i = 1, \ldots , N$, is
\begin{align}
\frac{\partial^2}{\partial \theta_k \partial \theta_{k'}} C_i(\thetaf)
&= - \frac{ \int_\Ycal \exp [ \bft (\xf_i)^\top \thetaf ] \, (\bft (\xf_i))_{[k]} \, (\bft (\xf_i))_{[k']} \, \dmu}{\int_\Ycal \exp [ \bft (\xf_i)^\top \thetaf ] \, \dmu} \notag
\\
&\hspace{0.5cm}+ 
\frac{\int_\Ycal \exp [ \bft (\xf_i)^\top \thetaf ] \cdot (\bft (\xf_i))_{[k]} \, \dmu }{\int_\Ycal \exp [ \bft (\xf_i)^\top \thetaf ] \, \dmu} \notag
\\
&\hspace{1cm} \cdot
\frac{\int_\Ycal \exp [ \bft (\xf_i)^\top \thetaf ] \cdot (\bft (\xf_i))_{[k']} \, \dmu}{\int_\Ycal \exp [ \bft (\xf_i)^\top \thetaf ] \, \dmu}
\, . \label{P2a:eq_hessian_bayes}
\end{align}
Using 
$f_{\xf_i, \thetaf} = \frac{\exp[\bft (\xf_i)^\top \thetaf]}{\int_{\Ycal} \exp[\bft (\xf_i)^\top \thetaf] \, \dmu} 
$, 
we obtain
\begin{align}
\frac{\partial^2}{\partial \theta_k \partial \theta_{k'}} C_i(\thetaf)
&=
- \int_\Ycal f_{\xf_i, \thetaf} \cdot (\bft (\xf_i))_{[k]} \, (\bft (\xf_i))_{[k']} \, \dmu \notag
\\
&\hspace{0.5cm}
+ \int_\Ycal f_{\xf_i, \thetaf} \cdot (\bft (\xf_i))_{[k]} \, \dmu \,
\int_\Ycal f_{\xf_i, \thetaf} \cdot (\bft (\xf_i))_{[k']} \, \dmu .
\label{P2a:eq_hesse_matrix_log_likelihood}
\end{align}
Then, 
with $\int_\Ycal f_{\xf_i, \thetaf} \, \dmu = 1$ for all $i = 1, \ldots , N$,
we have for $\vf \in \Rbb^K$,
\begin{align*}
\vf^\top \Hf_i (\thetaf) \vf
&= - \int_\Ycal f_{\xf_i, \thetaf} \cdot \bigl(\bft (\xf_i) ^\top \vf \bigr)^2 \, \dmu
+ \Bigl( \int_\Ycal f_{\xf_i, \thetaf} \cdot \bft (\xf_i)^\top \vf \, \dmu \Bigr)^2
\\
&= - \Bigl[ \int_\Ycal f_{\xf_i, \thetaf} \cdot \bigl(\bft (\xf_i) ^\top \vf \bigr)^2 \, \dmu
- 2 \Bigl( \int_\Ycal f_{\xf_i, \thetaf} \cdot \bft (\xf_i)^\top \vf \, \dmu \Bigr)^2 \Bigr.
\\
&\hspace{0.5cm}
\Bigl. + \Bigl( \int_\Ycal f_{\xf_i, \thetaf} \cdot \bft (\xf_i)^\top \vf \, \dmu \Bigr)^2 \cdot \int_\Ycal f_{\xf_i, \thetaf} \, \dmu\Bigr]
\\
&=- \int_\Ycal f_{\xf_i, \thetaf} \cdot \Bigl( \bft (\xf_i)^\top \vf - \int f_{\xf_i, \thetaf} \cdot \bft (\xf_i)^\top \vf \, \dmu \Bigr)^2 \, \dmu ,
\end{align*}
which gives us
\begin{align}
\vf^\top \Hf (\thetaf) \vf 
= - \sum_{i=1}^N \int_\Ycal f_{\xf_i, \thetaf} \cdot \Bigl( \bft (\xf_i)^\top \vf - \int f_{\xf_i, \thetaf} \cdot \bft (\xf_i)^\top \vf \, \dmu \Bigr)^2 \, \dmu .
\label{P2a:eq_quatratic_form_Hessematrix}
\end{align}
Since $f_{\xf_i, \thetaf} > 0$ on $\Ycal$ for all $i= 1, \ldots , N$, all summands are non-negative and thus $\vf^\top \Hf (\thetaf) \vf \leq 0$.
Now, assume $\vf^\top \Hf (\thetaf) \vf = 0$, which then implies $\bft (\xf_i)^\top \vf = \int f_{\xf_i, \thetaf} \cdot \bft (\xf_i)^\top \vf \, \dmu$ for $i = 1, \ldots , N$.
This means, $\bft (\xf_i)^\top \vf$ is constant over $\Ycal$ for every $i = 1, \ldots , N$ and 
by Lemma~\ref{P2a:lem_scalar_product_constant_implies_0}, we obtain $\vf = \mathbf{0}$.
Hence, we have $\vf^\top \Hf (\thetaf) \vf < 0$ for all $\vf \in \Rbb^K \setminus \{ \mathbf{0} \}$, i.e., $\Hf (\thetaf)$ is negative definite, which implies it is invertible.
As $\Hf_i (\thetaf) + D^2 \pen (\thetaf)$ is constant regarding $\Ebb$ (compare \eqref{P2a:eq_hessian_bayes}), we have $\Ebb (\Hf_{\mathrm{pen}} (\thetaf)) = \Hf_{\mathrm{pen}}(\thetaf)$.
The results for the penalized Fisher information follow immediately with $\Ff_{\mathrm{pen}}(\thetaf) = - \Hf_{\mathrm{pen}}(\thetaf)$.
\item
Setting 
$C_{\Zcal}^{\text{mn} \, (s)}(\thetaf) := - \log \sum_{g'=1}^{\Gamma^{(l)}} \Delta_{g'}^{(l)}\exp [ \bft(\xf^{(l)}) (u_{g'}^{(l)}) ^\top \thetaf ]$ 
for $l = 1, \ldots , L$, we have
$
\ell^{\text{mn}}_{\Zcal} (\thetaf) 
=
\sum_{l=1}^L \sum_{g=1}^{\Gamma^{(l)}} n_g^{(l)} [ \bft(\xf^{(l)}) (u_g^{(l)}) ^\top \thetaf 
- C_{\Zcal}^{\text{mn} \, (s)}(\thetaf) ]
$.
Then,
we obtain
$\Hf^{\text{mn}}_{\Zcal} (\thetaf)
= \sum_{l=1}^L \sum_{g=1}^{\Gamma^{(l)}} n_g^{(l)} \Hf_{\Zcal}^{\text{mn} \, (s)}(\thetaf) 
$, where $\Hf_{\Zcal}^{\text{mn} \, (s)}(\thetaf)$ denotes the Hessian matrix of $C_{\Zcal}^{\text{mn} \, (s)}(\thetaf)$.
The $(k, k')$-th element of $\Hf_{\Zcal}^{\text{mn} \, (s)}(\thetaf)$ is
\begin{align}
& ~\frac{\partial^2}{\partial \theta_k \partial \theta_{k'}} C_{\Zcal}^{\text{mn} \, (s)}(\thetaf) \notag
\\
= &~- \frac{ \sum_{g'=1}^{\Gamma^{(l)}} \bigl[ \Delta_{g'}^{(l)}\exp [ \bft(\xf^{(l)}) (u_{g'}^{(l)}) ^\top \thetaf ] \, (\bft(\xf^{(l)})(u_{g'}^{(l)}))_{[k]} 
\, (\bft (\xf^{(l)})(u_{g'}^{(l)}))_{[k']} \bigr]}{\sum_{g'=1}^{\Gamma^{(l)}} \Delta_{g'}^{(l)}\exp [ \bft(\xf^{(l)}) (u_{g'}^{(l)}) ^\top \thetaf ] } \notag
\\
&~+ \frac{\sum_{g'=1}^{\Gamma^{(l)}} \bigl[ \Delta_{g'}^{(l)}\exp [ \bft(\xf^{(l)}) (u_{g'}^{(l)}) ^\top \thetaf ] \, (\bft(\xf^{(l)})(u_{g'}^{(l)}))_{[k]} \bigr]}{\sum_{g'=1}^{\Gamma^{(l)}} \Delta_{g'}^{(l)}\exp [ \bft(\xf^{(l)}) (u_{g'}^{(l)}) ^\top \thetaf ]} \notag
\\
&~ \cdot \frac{\sum_{g'=1}^{\Gamma^{(l)}} \bigl[ \Delta_{g'}^{(l)}\exp [ \bft(\xf^{(l)}) (u_{g'}^{(l)}) ^\top \thetaf ] \, (\bft(\xf^{(l)})(u_{g'}^{(l)}))_{[k']} \bigr] \,}{\sum_{g'=1}^{\Gamma^{(l)}} \Delta_{g'}^{(l)}\exp [ \bft(\xf^{(l)}) (u_{g'}^{(l)}) ^\top \thetaf ]} \label{P2a:eq_hessian_multinomial}
. 
\end{align}
Using 
$p_{g'}^{(l)} (\thetaf)
= \frac{\Delta_{g'}^{(l)}\exp [ \bft(\xf^{(l)})(u_{g'}^{(l)})^\top \thetaf]}{\sum_{g=1}^{\Gamma^{(l)}} \Delta_g^{(l)}\exp [ \bft(\xf^{(l)})(u_g^{(l)})^\top \thetaf ]} 
$, 
we obtain,
\begin{align*}
&~ \frac{\partial^2}{\partial \theta_k \partial \theta_{k'}} C_{\Zcal}^{\text{mn} \, (s)}(\thetaf)
\\
=
&~- \sum_{g'=1}^{\Gamma^{(l)}} p_{g'}^{(l)}(\thetaf) \, (\bft(\xf^{(l)})(u_{g'}^{(l)}))_{[k]} \, (\bft (\xf^{(l)})(u_{g'}^{(l)}))_{[k']}
\\
&~
+ \sum_{g'=1}^{\Gamma^{(l)}} \Bigl[ p_{g'}^{(l)}(\thetaf) \, (\bft(\xf^{(l)})(u_{g'}^{(l)}))_{[k]} \Bigr] \, 
\sum_{g'=1}^{\Gamma^{(l)}} \Bigl[ p_{g'}^{(l)}(\thetaf) \, (\bft(\xf^{(l)})(u_{g'}^{(l)}))_{[k']} \Bigr] ,
\end{align*}
Then, 
with $\sum_{g'=1}^{\Gamma^{(l)}} p_{g'}^{(l)} (\thetaf) = 1$ for all $l = 1, \ldots , L$, 
we have for $\vf \in \Rbb^K$
\begin{align*}
\vf^\top \Hf_{\Zcal}^{\text{mn} \, (s)}(\thetaf) \vf
= 
- \sum_{g'=1}^{\Gamma^{(l)}} p_{g'}^{(l)} (\thetaf) \, \Bigl( \bft(\xf^{(l)})(u_{g'}^{(l)})^\top \vf - \sum_{g''=1}^{\Gamma^{(l)}} p_{g''}^{(l)} (\thetaf) \, \bft(\xf^{(l)})(u_{g''}^{(l)})^\top \vf \Bigr)^2 ,
\end{align*}
which gives us
\begin{align*}
&\vf^\top \Hf^{\text{mn}}_{\Zcal} (\thetaf) \vf 
= \sum_{l=1}^L \sum_{g=1}^{\Gamma^{(l)}} n_g^{(l)} \vf^\top \Hf_{\Zcal}^{\text{mn} \, (s)}(\thetaf) \vf \notag
\\
= &- \sum_{l=1}^L \sum_{g=1}^{\Gamma^{(l)}} n_g^{(l)} \sum_{g'=1}^{\Gamma^{(l)}} p_{g'}^{(l)} (\thetaf) \, \Bigl( \bft(\xf^{(l)})(u_{g'}^{(l)})^\top \vf - \sum_{g''=1}^{\Gamma^{(l)}} p_{g''}^{(l)} (\thetaf)\, \bft(\xf^{(l)})(u_{g''}^{(l)})^\top \vf \Bigr)^2 .
\end{align*}
Since $n_g^{(l)} \geq 0$ and $p_{g}^{(l)} (\thetaf) > 0$ for all $l = 1, \ldots , L ,~ g, k = 1, \ldots , \Gamma^{(l)}$, all summands are non-negative and thus $\vf^\top \Hf^{\text{mn}}_{\Zcal} (\thetaf) \vf \leq 0$.
Assuming $\vf^\top \Hf^{\text{mn}}_{\Zcal} (\thetaf) \vf = 0$ implies $\bft(\xf^{(l)})(u_{g'}^{(l)})^\top \vf = \sum_{g''=1}^{\Gamma^{(l)}} p_{g''}^{(l)} (\thetaf) \, \bft(\xf^{(l)})(u_{g''}^{(l)})^\top \vf$ for all $l = 1, \ldots , L, ~g' = 1, \ldots , \Gamma^{(l)}$, since for all $l = 1, \ldots , L$, there is a $g \in \{ 1, \ldots , \Gamma^{(l)} \}$ with $n_g^{(l)} > 0$.
This means, $\bft(\xf^{(l)})(u_1^{(l)})^\top \vf = \ldots = \bft(\xf^{(l)})(u_{\Gamma^{(l)}}^{(l)})^\top \vf$.
Let $\vf_{\Ycal}^{(l)} = (\bfe_{\Xcal}(\xf^{(l)})^\top \vf_{1}$, $\ldots, \bfe_{\Xcal}(\xf^{(l)})^\top \vf_{K_{\Ycal}})$ 
with $\vf_{m} = (v_{1, 1, m} , \ldots , v_{J, K_{\Xcal, \, J}, m})^\top$.
It is straightforward to show that then $\bft (\xf^{(l)})^\top \vf = \bft_{\Ycal}^\top \vf_{\Ycal}^{(l)}$. 
Thus, by assumption~\ref{P2a:assumption_appendix_domain_basis_nonsingular}, we have $\vf_{\Ycal}^{(l)} = \mathbf{0}$, 
i.e., $\bfe_{\Xcal}(\xf^{(l)})^\top \vf_{m} = 0$
for all $l = 1, \ldots , L, m = 1, \ldots , K_\Ycal$. 
This is equivalent to $\bfe_{\Xcal} (\xf_i)^\top \vf_{m} = 0$ for all $i = 1, \ldots , N$ and thus, by assumption~\ref{P2a:assumption_appendix_covariate_basis_nonsingular}, $\vf_{m} = 0$ for $ m = 1, \ldots , K_\Ycal$,
i.e., $\vf = \mathbf{0}$.
Hence, we have $\vf^\top \Hf^{\text{mn}}_{\Zcal} (\thetaf) \vf < 0$ for all $\vf \in \Rbb^K \setminus \{ \mathbf{0} \}$, i.e., $\Hf^{\text{mn}}_{\Zcal} (\thetaf)$ is negative definite and 
thus invertible. 
\\
Note that 
$\Hf_{\Zcal}^{\text{mn} \, (s)}(\thetaf) 
$ is constant 
regarding $\Ebb_\Zcal^{\mathrm{mn}}$, 
compare~\eqref{P2a:eq_hessian_multinomial}.
Replacing the observations $n_g^{(l)}$ with the underlying random variables $N_g^{(l)}$, the random vectors $(N_1^{(l)}, \ldots , N_{\Gamma^{(l)}}^{(l)})$ are conditionally multinomially distributed given $\sum_{g=1}^{\Gamma^{(l)}} N_g^{(l)} = n ^{(l)}$, $g = 1, \ldots , \Gamma^{(l)}, \, l = 1, \ldots , L$.
Thus, 
\begin{align*}
\Ebb_\Zcal^{\mathrm{mn}} (\Hf_\Zcal^{\mathrm{mn}}(\thetaf)) 
&= \Ebb_\Zcal^{\mathrm{mn}} \Bigl (\sum_{l=1}^L \sum_{g=1}^{\Gamma^{(l)}} N_g^{(l)} \Hf_{\Zcal}^{\text{mn} \, (s)}(\thetaf) \Bigr)
\\
&= \Ebb_\Zcal^{\mathrm{mn}} \Bigl( \sum_{l=1}^L n^{(l)} \Hf_{\Zcal}^{\text{mn} \, (s)}(\thetaf) \Bigr)
= \sum_{l=1}^L n^{(l)} \Hf_{\Zcal}^{\text{mn} \, (s)}(\thetaf)
= \Hf_\Zcal^{\mathrm{mn}}(\thetaf),
\end{align*}
and adding the constant $D^2 \pen (\thetaf)$ yields
$\Hf_{\Zcal, \mathrm{pen}}^{\mathrm{mn}}(\thetaf) = \Ebb_\Zcal^{\mathrm{mn}} (\Hf_{\Zcal, \mathrm{pen}}^{\mathrm{mn}}(\thetaf))$.
\\
Since $\Ff_{\Zcal, \mathrm{pen}}^{\mathrm{mn}}(\thetaf) = - \Hf_{\Zcal, \mathrm{pen}}^{\mathrm{mn}}(\thetaf)$,
the results for the penalized Fisher information follow immediately.
\qedhere
\end{enumerate}
\end{proof}

\begin{lem}\label{P2a:lem_matrix_inversion_continuous}
Denote the set of invertible $(K \times K)$-matrices with $\GL(K , \Rbb)$.
Then, $\inv : \GL(K , \Rbb) \ra \GL(K , \Rbb), \, \Ff \mapsto \Ff^{-1}$ is differentiable and thus continuous.
\end{lem}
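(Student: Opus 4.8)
The plan is to prove that matrix inversion is differentiable on the open set $\GL(K,\Rbb) \subset \Rbb^{K \times K}$, from which continuity follows immediately. First I would note that $\GL(K,\Rbb)$ is open in $\Rbb^{K \times K}$ (it is the preimage of $\Rbb \setminus \{0\}$ under the continuous determinant map), so it makes sense to speak of differentiability there. The cleanest route is via Cramer's rule: for $\Ff \in \GL(K,\Rbb)$, each entry of $\Ff^{-1}$ equals $\frac{1}{\det \Ff}$ times a cofactor of $\Ff$, i.e.\ a polynomial in the entries of $\Ff$. Since $\det \Ff$ is a polynomial in the entries of $\Ff$ and is nonzero on $\GL(K,\Rbb)$, each entry of $\Ff^{-1}$ is a rational function of the entries of $\Ff$ with nonvanishing denominator, hence infinitely differentiable (in particular differentiable, hence continuous) on $\GL(K,\Rbb)$.

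Alternatively, and perhaps more in the spirit of a short functional-analytic argument, one could proceed directly from the identity $\Ff^{-1} - \Gf^{-1} = \Ff^{-1}(\Gf - \Ff)\Gf^{-1}$ together with the Neumann series: for $\Hf$ with $\Vert \Hf \Vert$ small enough (operator norm), $\Ff + \Hf$ is invertible with $(\Ff+\Hf)^{-1} = \sum_{m=0}^\infty (-\Ff^{-1}\Hf)^m \Ff^{-1}$, whence $(\Ff+\Hf)^{-1} - \Ff^{-1} + \Ff^{-1}\Hf\Ff^{-1} = o(\Vert \Hf \Vert)$, exhibiting the derivative of $\inv$ at $\Ff$ as the linear map $\Hf \mapsto -\Ff^{-1}\Hf\Ff^{-1}$. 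Either argument suffices; I would present the Cramer's-rule version as it is the most elementary and self-contained.

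There is essentially no obstacle here — the statement is a standard fact and the proof is a one-paragraph verification. The only minor point to be careful about is to invoke openness of $\GL(K,\Rbb)$ before asserting differentiability (so that the notion is well-defined on an open domain), and to make clear that differentiability of each coordinate function implies differentiability of the matrix-valued map and hence continuity. No nontrivial auxiliary results from the paper are needed.
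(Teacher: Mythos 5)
Your proposal is correct, and in fact your second, "functional-analytic" alternative is exactly the paper's proof: the appendix argument invokes the Neumann series $(\If + \Af)^{-1} = \sum_{m=0}^{\infty}(-1)^m \Af^m$ for $\Vert \Af \Vert_{\mathrm{op}} < 1$, writes $\Ff + \Bf = \Ff(\If + \Ff^{-1}\Bf)$ for $\Vert \Bf \Vert_{\mathrm{op}} < 1/\Vert \Ff^{-1}\Vert_{\mathrm{op}}$, and reads off $(\Ff+\Bf)^{-1} = \Ff^{-1} - \Ff^{-1}\Bf\,\Ff^{-1} + O(\Vert \Ff^{-1}\Bf \Vert_{\mathrm{op}}^2)$, identifying the derivative $D\inv(\Ff):\Bf \mapsto -\Ff^{-1}\Bf\,\Ff^{-1}$ and hence continuity. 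The route you would actually present, via Cramer's rule, is a genuinely different and equally valid argument: it buys maximal elementarity (each entry of $\Ff^{-1}$ is a rational function of the entries with nonvanishing denominator on $\GL(K,\Rbb)$, hence $C^\infty$), at the cost of not exhibiting the derivative in closed operator form; the Neumann-series route gives the explicit derivative, which is the form the paper states, and uses the same reference machinery (the Neumann series) that the paper cites elsewhere. Your remark about first noting openness of $\GL(K,\Rbb)$ is a sound point of care, though the paper's argument sidesteps it by directly producing the expansion on the ball $\Vert \Bf \Vert_{\mathrm{op}} < 1/\Vert \Ff^{-1}\Vert_{\mathrm{op}}$, which implicitly establishes that neighborhood of invertible matrices. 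No gap in either of your arguments.
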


\begin{proof}
For any matrix $\Af \in \Rbb^{K \times K}$ with 
spectral norm
$\Vert \Af \Vert_{\mathrm{op}} 
:= \sup_{\thetaf \in \Rbb^K \setminus \{ \mathbf{0} \}} \frac{ \Vert \Af \thetaf \Vert_2}{\Vert \thetaf \Vert_2}
< 1
$ 
we have the Neumann series \citep[Statement~5.7]{alt2016} 
\begin{align}
(\If + \Af)^{-1} 
&= \sum_{m=0}^{\infty} (-1)^m \Af^m
= \If - \Af + O(\Vert \Af \Vert_{\mathrm{op}}^2).
\label{P2a:eq_geometric_series}
\end{align}
For $\Ff \in \GL(K, \Rbb)$ consider some matrix $\Bf \in \Rbb^{K \times K}$, such that $\Vert \Bf \Vert_{\mathrm{op}} < \frac{1}{\Vert \Ff^{-1} \Vert_{\mathrm{op}}}$.
Then, $\Vert \Ff^{-1} \Bf \Vert_{\mathrm{op}} \leq \Vert \Ff^{-1} \Vert_{\mathrm{op}} \, \Vert \Bf \Vert_{\mathrm{op}} < 1$ and with \eqref{P2a:eq_geometric_series}, $\Ff^{-1} \Bf$ is invertible.
Thus, also $(\Ff + \Bf) = \Ff (\If + \Ff^{-1} \Bf)$ is invertible with
\begin{align*}
(\Ff + \Bf)^{-1}
= (\If + \Ff^{-1} \Bf)^{-1} \, \Ff^{-1}
\overset{\eqref{P2a:eq_geometric_series}}{=} \Ff^{-1} - \Ff^{-1} \Bf  \, \Ff^{-1} + O(\Vert \Ff^{-1} \Bf \Vert_{\mathrm{op}}^2)
\end{align*}
and the matrix inversion function $\inv$ is differentiable with the derivative given as $D \inv(\Ff) : \Rbb^{K \times K} \ra \Rbb^{K \times K}, \,  \Bf \mapsto - \Ff^{-1} \Bf  \, \Ff^{-1}$.
Thus, $\inv$ is also continuous.
\end{proof}

\begin{thm}\label{P2a:thm_generalized_laplace_approximation}
For $\kappa \geq 2$, let $s_{\mathrm{c}}: \Ycal_{\mathrm{c}} \ra \Rbb $ be a $(\kappa-1)$ times continuously differentiable piecewise polynomial consisting of finitely many polynomials of degree at most $\kappa$
and let $s : \Ycal \ra \Rbb$ such that $s(y) = s_{\mathrm{c}} (y)$ for all $y \in \Ycal_{\mathrm{c}} \setminus \Ycal_{\mathrm{d}}$. 
For 
$s^{\max} := \sup_{y \in \Ycal} s(y) 
$ consider the sets 
$\Dcal^{\max} := \Dcal^{\max} (s) := \{ d \in \{ 1, \ldots , D\} ~|~ s(t_d) = s^{\max} \}$
and
$\Ycal_{\mathrm{c}}^{\max} := \Ycal_{\mathrm{c}}^{\max} (s) := \{ y \in \Yc ~|~ s_{\mathrm{c}}(y) = s^{\max}\}$.
Then, at least one of these sets is not empty.
Furthermore, 
$\Ycm = \Ycmflat \sqcup \Ycmpoint$, where
$\Ycmflat = \Ycmflat (s) = 
\bigsqcup_{m=1}^M [a_m, b_m]
$
for $a_m < b_m, m = 1, \ldots, M \in \Nbb_0,$
and
$\Ycmpoint = \Ycmpoint (s)$ is a finite (possibly empty) set containing isolated values of $\Ycm$. 
Then, for each 
$y \in \Ycal_{\mathrm{c}, \, \bullet}^{\max} 
$, there exist $c_y > 0$ and $\alpha_y \in \Nbb$ with $\alpha_y \leq \kappa$ such that
\begin{align}
\lim_{z \ra \infty} \frac{\int_{\Ycal} \exp [ z \, s ] \, \dmu}{\exp [ z \, s^{\max} ] \, 
\Bigl( \sum_{d \in \Dcal^{\max}} w_d 
+ \lambda \bigl( \Ycal_{\mathrm{c}, \, -}^{\max} \bigr) 
+ \sum_{y \in \Ycal_{\mathrm{c}, \, \bullet}^{\max}} c_y \, z^{-\alpha_y^{-1}} \Bigr)
} = 1 .
\label{P2a:eq_laplace_approx_generalized}
\end{align}
\end{thm}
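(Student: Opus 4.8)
\emph{Proof plan.} The plan is to split $I(z):=\int_{\Ycal}\exp[z\,s]\,\dmu$ along the two components of the mixed reference measure. Since $\lambda$ lives on $\Yc$ (with $s=s_{\mathrm{c}}$ off the finite set $\Yc\cap\Yd$) while the discrete part of $\mu$ is $\sum_{d=1}^{D}w_d\delta_{t_d}$, one gets
\begin{align*}
I(z)=I_{\mathrm{c}}(z)+I_{\mathrm{d}}(z),\qquad
I_{\mathrm{c}}(z):=\int_{\Yc}\exp[z\,s_{\mathrm{c}}]\,\dlamb,\qquad
I_{\mathrm{d}}(z):=\sum_{d=1}^{D}w_d\exp[z\,s(t_d)].
\end{align*}
As $\Yc$ is compact with $s_{\mathrm{c}}$ continuous and $\Yd$ is finite, both $\sup_{\Yc}s_{\mathrm{c}}$ and $\max_{d}s(t_d)$ are attained and $s^{\max}$ is the larger of the two, so at least one of $\Dcal^{\max}$, $\Ycm$ is nonempty. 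The discrete part is explicit, $I_{\mathrm{d}}(z)=\exp[z\,s^{\max}]\bigl(\sum_{d\in\Dcal^{\max}}w_d+\sum_{d\notin\Dcal^{\max}}w_d\exp[-z(s^{\max}-s(t_d))]\bigr)$, and the second sum is exponentially small since $s(t_d)<s^{\max}$ for $d\notin\Dcal^{\max}$.

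Second, I would establish the claimed shape of $\Ycm$. Writing $s_{\mathrm{c}}$ as a polynomial of degree $\le\kappa$ on each of its finitely many pieces, the level set $\{s_{\mathrm{c}}=s^{\max}\}$ meets a piece either in all of it (when that polynomial is $\equiv s^{\max}$) or in at most $\kappa$ points; hence $\Ycm$ is a finite union of closed subintervals and points of $\Yc$. Merging adjacent members and separating off the genuinely isolated points gives $\Ycm=\Ycmflat\sqcup\Ycmpoint$ with $\Ycmflat=\bigsqcup_{m=1}^{M}[a_m,b_m]$, $a_m<b_m$, and $\Ycmpoint$ finite with each point isolated in $\Ycm$; in particular $\lambda(\Ycmflat)=\sum_{m=1}^{M}(b_m-a_m)$, which is positive exactly when $M\ge1$.

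The core is a Laplace analysis of $I_{\mathrm{c}}$. If $\sup_{\Yc}s_{\mathrm{c}}<s^{\max}$ then $\Ycm=\emptyset$, so $M=0$, $\Ycmpoint=\emptyset$, $I_{\mathrm{c}}(z)\le(b-a)\exp[z\sup_{\Yc}s_{\mathrm{c}}]$ is exponentially small relative to $\exp[z\,s^{\max}]$, and \eqref{P2a:eq_laplace_approx_generalized} follows from the first paragraph (with $\Dcal^{\max}\ne\emptyset$ there). Otherwise $\sup_{\Yc}s_{\mathrm{c}}=s^{\max}$. Choose $\varepsilon>0$ small enough that the sets $V_m:=(a_m-\varepsilon,b_m+\varepsilon)\cap\Yc$ and $V_y:=(y-\varepsilon,y+\varepsilon)\cap\Yc$ ($m=1,\dots,M$, $y\in\Ycmpoint$) are pairwise disjoint; with $U$ their union, $s_{\mathrm{c}}<s^{\max}$ on the compact set $\Yc\setminus U$, so $\int_{\Yc\setminus U}\exp[z(s_{\mathrm{c}}-s^{\max})]\,\dlamb\le(b-a)e^{-\eta z}$ for $\eta:=s^{\max}-\max_{\Yc\setminus U}s_{\mathrm{c}}>0$. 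On $V_m$ the subinterval $[a_m,b_m]$ contributes exactly $b_m-a_m$ to $\exp[-z\,s^{\max}]I_{\mathrm{c}}(z)$, while on each of the at most two short pieces of $V_m\setminus[a_m,b_m]$ the function $s^{\max}-s_{\mathrm{c}}$ is a nonconstant polynomial vanishing to some finite order $k\le\kappa$ at the relevant endpoint (finite and $\le\kappa$, else that polynomial piece would be $\equiv s^{\max}$, enlarging $[a_m,b_m]$), so rescaling by $z^{1/k}$ shows that piece contributes $O(z^{-1/k})=o(1)$. On $V_y$ with $y\in\Ycmpoint$, letting $\alpha_L,\alpha_R\le\kappa$ be the (finite, by the same isolation argument) orders of vanishing of $s^{\max}-s_{\mathrm{c}}$ at $y$ from the left and right — one side only if $y\in\{a,b\}$ — the substitution $u=z^{1/\alpha}(t-y)$ on each side plus dominated convergence yields, with $\alpha_y:=\max(\alpha_L,\alpha_R)\in\Nbb$, $\alpha_y\le\kappa$, a constant $c_y>0$ (built from the leading Taylor coefficients of $s_{\mathrm{c}}$ at $y$ and $\Gamma$-values) with $\int_{V_y}\exp[z(s_{\mathrm{c}}-s^{\max})]\,\dlamb\sim c_y\,z^{-1/\alpha_y}$. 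Summing,
\begin{align*}
\exp[-z\,s^{\max}]\,I_{\mathrm{c}}(z)=\lambda(\Ycmflat)+\sum_{y\in\Ycmpoint}c_y\,z^{-1/\alpha_y}+o(1),
\end{align*}
the $o(1)$ absorbing the exponentially small tail and the flat-endpoint corrections (present only when $M\ge1$). Adding $\exp[-z\,s^{\max}]I_{\mathrm{d}}(z)=\sum_{d\in\Dcal^{\max}}w_d+o(1)$ and dividing by the bracket in \eqref{P2a:eq_laplace_approx_generalized} gives the limit $1$: if $M\ge1$ both numerator and denominator converge to $\sum_{d\in\Dcal^{\max}}w_d+\lambda(\Ycmflat)>0$; if $M=0$ there are no flat-endpoint terms, $\Ycm=\Ycmpoint$, and the denominator is bounded below by a positive multiple of $z^{-1/\alpha_{\max}}$ (with $\alpha_{\max}:=\max_y\alpha_y$, using that $\Dcal^{\max}$ or $\Ycmpoint$ is nonempty), so the remaining error — exponentially small tail plus the $o(z^{-1/\alpha_{\max}})$ from the local equivalences — vanishes in the ratio.

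The genuinely delicate point is the local asymptotics $\int_{V_y}\exp[z(s_{\mathrm{c}}-s^{\max})]\,\dlamb\sim c_y z^{-1/\alpha_y}$ near an isolated maximum: one must pin down the exact exponent $1/\alpha_y$ and the positivity of $c_y$ while accommodating a possibly high order of vanishing, the possibility that $y$ is a boundary point $a$ or $b$ (one-sided integral) or a breakpoint of the piecewise polynomial where the left and right orders differ (the larger governing the rate, the faster-decaying side entering the error, or its leading constant being added to $c_y$ when the orders agree), and the need for a uniform integrable majorant to pass to the limit under the integral — all handled by a Watson-lemma-style rescaling after Taylor-expanding $s_{\mathrm{c}}$ on each side of $y$. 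The remaining ingredients (exactness on flat intervals, exponential negligibility away from $\Ycm$, the $O(z^{-1/k})$ bound at flat endpoints, and the discrete sum) are routine.
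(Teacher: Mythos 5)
Your proposal is correct and follows essentially the same route as the paper's proof: split off the exact discrete sum, classify $\Ycm$ into flat intervals and isolated points by root-counting on each polynomial piece, and run a Laplace--Watson analysis with the rescaling $u = z^{1/\alpha_y}(t-y)$ at each isolated maximum, handling the case $M=0$, $\Dcal^{\max}=\emptyset$ by renormalizing numerator and denominator with $z^{1/\alpha_{\max}}$. The only cosmetic differences are that the paper sandwiches the local integrals between explicit $\pm\varepsilon$-perturbations of the leading Taylor coefficient (sending $\varepsilon \to 0$ last) rather than invoking dominated convergence, and that your case of differing left/right vanishing orders at a breakpoint cannot actually occur, since the $(\kappa-1)$-fold continuous differentiability forces $\alpha_L = \alpha_R$.
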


\begin{proof}
We first show that at least one of $\Yc^{\max}$ and $\Dcal^{\max}$ is not empty.
Since $\Yc$ is a compact interval and $\gc$ is continuous, there exists a $\yc \in \Yc$ with $\gc (\yc) = \max_{y \in \Yc} \gc (y) =: s_{\mathrm{c}}^{\max}$.
Furthermore, since $\Yd$ is a finite set, there exists a $\yd \in \Yd$ with $s (\yd) = \max_{y \in \Yd} s (y) =: s_{\mathrm{d}}^{\max}$.
Noting that $s^{\max} = \max \{ s_{\mathrm{c}}^{\max}, s_{\mathrm{d}}^{\max}\}$, we obtain the result. 

Now, let $\Ycmpoint \subseteq \Ycm$ be the set of all isolated values of $\Ycm$.
We show that $\Ycm = \Ycmflat \sqcup \Ycmpoint$, where $\Ycmflat$ is a disjoint union of $M \in \Nbb_0$ nontrivial closed intervals.
For this purpose, we assume $\Ycm \neq \Ycmpoint$ (otherwise $M = 0$ and there is nothing left to show) and show that any $\yc \in \Ycm \setminus \Ycmpoint$ is contained in a nontrivial closed interval.
Thus, let $\yc \in \Ycm \setminus \Ycmpoint$ and recall that $\gc$ is a piecewise polynomial.
Since $\yc$ might be a breakpoint of $\gc$, we distinguish between the polynomial to the left and the one to the right of $\yc$. 
However, if $\yc$ is not a breakpoint, both polynomials coincide.
If $\yc \in \Ycm \cap (a, b]$, there exists a polynomial
$
p_< 
$ with degree 
$\kappa_< 
\leq \kappa$ 
and $\varepsilon > 0$,
such that $\gc (y) = p_< (y)$ for all $y \in [\yc - \varepsilon , \yc]$. 
Let $\varepsilon_< = \varepsilon_< (\yc) > 0$ be maximal with this property (i.e., $\yc - \varepsilon_<$ is a breakpoint of the piecewise polynomial) and set
$\overline{\Ucal (\yc , \varepsilon_<)}_< := [\yc - \varepsilon_< , \yc]$.
Analogously, if $\yc \in \Ycm \cap [a, b)$, there exists a polynomial
$p_>$ with degree 
$\kappa_> \leq \kappa$ 
and a maximal $\varepsilon_>  = \varepsilon_> (\yc) > 0$
such that $\gc (y) = p_> (y)$ for all $y \in [\yc , \yc + \varepsilon_> ]=: \overline{\Ucal (\yc , \varepsilon_>)}_>$.
In the following, we treat both polynomials simultaneously with the subscript ``$_\lessgtr$'' either corresponding to ``$_<$'' or to ``$_>$''.
For the boundary values, only consider the respective existent case, i.e., if $\yc = a$, only consider the objects with subscript ``$_>$'', if $\yc = b$ only the objects with subscript ``$_<$''.
Since $\yc \in \Ycm \setminus \Ycmpoint$ is not isolated, it is possible to find further points belonging to $\Ycm$ in any neighborhood of $\yc$ and thus by successively diminishing the size of the neighborhoods we obtain arbitrary many values in $\Ycm$.
More precisely, let $y_{0, <} := \yc - \varepsilon_<$ and $y_{0, >} := \yc + \varepsilon_>$.
Recursively, there exist $y_{m, <} \in (y_{m - 1, <}, \yc)$ and $y_{m, >} \in (\yc, y_{m - 1, >})$ with 
$y_{m, \lessgtr} \in \Ycm$, 
i.e., with
$p_\lessgtr (y_{m, \lessgtr}) 
= \gm$, for 
$m = 1, \ldots , \kappa_\lessgtr + 1$.
These $\kappa_\lessgtr + 1$ values are thus pairwise distinct roots of the polynomial $p_\lessgtr - \gm$, which \citep[e.g., using][Lemma~11.1]{laczkovich2015} implies $p_\lessgtr (y) = \gm$ for all $y \in \overline{\Ucal (\yc , \varepsilon_\lessgtr)}_\lessgtr$.
Thus, setting $\varepsilon_< := 0$ if $\yc = a$ and $\varepsilon_> := 0$, if $\yc = b$, we obtain that if $\yc \in \Ycm$ is not isolated, it is contained in the nontrivial closed interval $[\yc - \varepsilon_<, \yc + \varepsilon_>]$ on which $\gc$ is constantly $\gm$. 
Note that this interval cannot be contained in a larger open interval, since $\overline{\Ucal (\yc , \varepsilon_\lessgtr)}_\lessgtr$ was constructed as the maximal neighborhood to the left/right of $\yc$, on which $p_\lessgtr = \gc$.
In particular, the boundaries of this interval then correspond to two adjacent breakpoints of the piecewise polynomial $\gc$.
As there are only finitely many breakpoints, 
we obtain that there exist only finitely many nontrivial closed intervals on which $\gc$ is constantly $\gm$ and thus $\Ycm = \Ycmflat \sqcup \Ycmpoint$ with $\Ycmflat = \bigsqcup_{m=1}^M [a_m, b_m]$, where $a_m < b_m, m = 1, \ldots, M \in \Nbb_0$.
Furthermore, we have $| \Ycmpoint| < \infty$ as each polynomial can take the value $\gm$ at most $\kappa$ times (or even less, if its degree is smaller than $\kappa$) and $\gc$ consists of a finite number of polynomials.

Finally, we show equation~\eqref{P2a:eq_laplace_approx_generalized}, first proving that the limit of the fraction on the left is at least $1$ and afterwards that it is at most $1$.
In both cases, we proceed similarly:
The idea is to bound the integral in the numerator of the fraction by a term that mimics the denominator.
More precisely, these bounds will differ from the denominator in the terms corresponding to $\Ycmpoint$ (and $\Ycmflat$ in case of the upper bound)
by some small value $\varepsilon > 0$, but converge to the denominator for $\varepsilon \ra 0$, 
which then yields the result.
As we ultimately consider $z \ra \infty$, we always assume $z > 0$ in the following.
Furthermore, note that 
$\int_{\Ycal} \exp [ z \, s ] \, \dmu
= \int_{\Ycal_{\mathrm{c}}} \exp [ z \, s ] \, \dlamb + 
\int_{\Ycal_{\mathrm{d}}} \exp [ z \, s ] \, \ddel $
and
$\int_{\Ycal_{\mathrm{c}}} \exp [ z \, s ] \, \dlamb
= \int_{\Ycal_{\mathrm{c}}} \exp [ z \, s_{\mathrm{c}} ] \, \dlamb$.

To show that the limit in~\eqref{P2a:eq_laplace_approx_generalized} is at least $1$,
let $\varepsilon > 0$ be small enough such that 
$\Ycmflat (\varepsilon) 
\cap \Ycmpoint = \emptyset$, where $\Ycmflat (\varepsilon) := \bigcup_{m=1}^M (a_m - \varepsilon, b_m + \varepsilon) \cap \Yc$.
We have
\begin{align}
\int_{\Ycmflat (\varepsilon)} \exp [ z \, \gc ] \, \dlamb
\geq
\int_{\Ycmflat} \exp [ z \, \gc ] \, \dlamb
= \exp [z \,  s^{\max}] \, \lambda \bigl( \Ycmflat \bigr) .
\label{P2a:eq_laplace_lower_bound_flat}
\end{align}

For $\yc \in \Ycmpoint$, consider $p_<$ and $p_>$ as above (if existent). 
If $\yc \in (a, b)$, then $\yc$ being isolated implies that $\gc$ has a strict local maximum at $\yc$, with first derivative equal to zero.
At the boundary values $a$ and $b$, this is not necessarily true.
However, if the first derivative is not zero, it then has to be negative for $\yc = a$ and positive for $\yc = b$.
Furthermore recall that $\gc (\yc) = p_< (\yc) = p_> (\yc)$ with the first $\kappa - 1$ derivatives of $p_<$ and $p_>$ coinciding as well.
With these considerations in mind, we now define the smallest non-zero derivative of the piecewise polynomials:
Since $\kappa_\lessgtr$ is the degree of $p_\lessgtr$, the $\kappa_\lessgtr$-st derivative of $p_\lessgtr$ is a non-zero constant and thus there exists a natural number $\alpha_{\yc, \, \lessgtr} \leq \kappa_\lessgtr \leq \kappa$ such that $p_\lessgtr^{(\alpha_{\yc, \, \lessgtr})}(\yc) \neq 0$ and $p_\lessgtr^{(\alpha)}(\yc) = 0$ for all $\alpha \in \{ \alpha' \in \Nbb ~|~ \alpha' < \alpha_{\yc, \, \lessgtr}\}$.
Note that this set is empty, if $\alpha_{\yc, \, \lessgtr} = 1$ (which can only occur if $\yc \in \{ a, b\}$). 
Since the first $\kappa - 1$ derivatives of $p_<$ and $p_>$ coincide in $\yc$, $\alpha_{\yc, \, >} = \alpha_{\yc, \, <} =: \alpha_{\yc}$.
Note that if $\alpha_{\yc} > 1$ (which is always fulfilled, if $\yc \in (a, b)$ and might be fulfilled, if $\yc \in \{a, b\}$), then $p_\lessgtr$ (and $\gc$) has a strict local maximum at $\yc$, 
which implies that $\alpha_{\yc}$ is even with $p_\lessgtr^{(\alpha_{\yc})}(\yc) < 0$ \citep[Theorem~12.62]{laczkovich2015}.
Thus, one of the following applies to $\yc$:
\begin{enumerate}[label=(\roman*)]
\item\label{P2a:isolated_maximum_interior}
$\yc \in (a,b)$ with $\alpha_{\yc}$ even and $p_\lessgtr^{(\alpha_{\yc})}(\yc) < 0$,
\item\label{P2a:isolated_maximum_a_even}
$\yc = a$ with $\alpha_{\yc}$ even and $p_>^{(\alpha_{\yc})}(\yc) < 0$ ($p_<$ not existent),
\item\label{P2a:isolated_maximum_a_1}
$\yc = a$ with $\alpha_{\yc} = 1$ and $p_>^{(\alpha_{\yc})}(\yc) = p_>'(a) < 0$ ($p_<$ not existent),
\item\label{P2a:isolated_maximum_b_even}
$\yc = b$ with $\alpha_{\yc}$ even and $p_<^{(\alpha_{\yc})}(\yc) < 0$ ($p_>$ not existent),
\item\label{P2a:isolated_maximum_b_1}
$\yc = b$ with $\alpha_{\yc} = 1$ and $p_<^{(\alpha_{\yc})}(\yc) = p_<'(b) > 0$ ($p_>$ not existent).
\end{enumerate}
%
As before, we only proceed with the existent case regarding the subscripts ``$_>$'' and  ``$_<$'', i.e., if $\yc = a$ (cases \ref{P2a:isolated_maximum_a_even} and~\ref{P2a:isolated_maximum_a_1}), only consider the objects with subscript ``$_>$'', if $\yc = b$ (cases \ref{P2a:isolated_maximum_b_even} and~\ref{P2a:isolated_maximum_b_1}), only the objects with subscript ``$_<$''.
For $\delta > 0$, set 
$\Ucal_> (\yc , \delta) := [y_{\mathrm{c}}, y_{\mathrm{c}} + \delta)$ 
and 
$\Ucal_< (\yc , \delta) := (y_{\mathrm{c}} - \delta, y_{\mathrm{c}}]$. 
As $p_\lessgtr^{(\alpha_{\yc})}$ is continuous, for any $\yc \in \Ycmpoint$, 
there exists $\deltat_{\yc, \, \lessgtr} > 0$ such that
\begin{align}
p_\lessgtr^{(\alpha_{\yc})}(\yt) \geq p_\lessgtr^{(\alpha_{\yc})}(\yc) - \varepsilon 
&& \text{and} &&
p_\lessgtr^{(\alpha_{\yc})} (\yt) \leq p_\lessgtr^{(\alpha_{\yc})} (\yc) + \varepsilon
\label{P2a:eq_laplace_lower_bound_derivative}
\end{align}
for all $\yt \in \Ucal_\lessgtr (\yc , \deltat_{\yc, \, \lessgtr})$.
Furthermore, there exists $\delta_{\yc} > 0$ with
$\Ucal_\lessgtr (\yc , \delta_{\yc}) 
\subset \Yc$,
$\Ucal_\lessgtr (\yc , \delta_{\yc}) 
\cap 
\Ycmflat (\varepsilon) = \emptyset$,
$\Ucal_\lessgtr (\yc , \delta_{\yc}) \subseteq \Ucal_\lessgtr (\yc , \deltat_{\yc, \lessgtr})$,
and $\delta_{\yc} \leq \max \{ \varepsilon_< (\yc), \varepsilon_> (\yc) \}$.
By Taylor's formula \citep[e.g.,][Theorem 1.97]{laczkovich2017}, for any 
$y \in \Ucal_\lessgtr (\yc , \delta_{\yc})$, we have
\begin{align}
p_\lessgtr(y) = p_\lessgtr(y_{\mathrm{c}}) + 
\frac{1}{\alpha_{\yc}!} \, p_\lessgtr^{(\alpha_{\yc})}(\yt ) \, (y-\yc)^{\alpha_{\yc}}
\label{P2a:eq_laplace_lower_bound_taylor_expansion}
\end{align}
for a $\yt 
$ between $y$ and $\yc$, 
which implies $\yt \in \Ucal_\lessgtr (\yc , \deltat_{\yc, \, \lessgtr})$. 
First consider cases~\ref{P2a:isolated_maximum_interior}-\ref{P2a:isolated_maximum_b_even}.
Then, $(y-\yc)^{\alpha_{\yc}} \geq 0$ (in cases~\ref{P2a:isolated_maximum_interior}, \ref{P2a:isolated_maximum_a_even}, and \ref{P2a:isolated_maximum_b_even} this follows from $\alpha_{\yc}$ being even, while in case~\ref{P2a:isolated_maximum_a_1}, only the case with subscript $_>$ is considered, with $y - \yc \geq 0$ for all $y \in \Ucal_> (\yc , \deltat_{\yc, \, >})$)
and with the left inequality of~\eqref{P2a:eq_laplace_lower_bound_derivative}, we obtain 
\begin{align*}
p_\lessgtr(y) 
\geq p_\lessgtr(\yc) 
+ \frac1{\alpha_{\yc}!} ( p_\lessgtr^{(\alpha_{\yc})}(\yc) - \varepsilon ) (y-\yc)^{\alpha_{\yc}}
\end{align*}
for any $y \in \Ucal_\lessgtr (\yc , \delta_{\yc})$. 
Then,
\begin{align}
& \int_{\Ucal_\lessgtr (\yc , \delta_{\yc})} \exp [ z \, p_\lessgtr ] \, \dlamb \notag
\\ 
\geq ~& 
\exp \left( z \, p_\lessgtr (\yc) \right) \int_{\Ucal_\lessgtr (\yc , \delta_{\yc})} \exp \Bigl( 
z \, \frac1{\alpha_{\yc}!} \left( p_\lessgtr^{(\alpha_{\yc})}(\yc) - \varepsilon \right) (y-\yc)^{\alpha_{\yc}}
\Bigr) \, \dlamb (y) 
.
\label{P2a:eq_laplace_lower_bound_integral}
\end{align}
Note that as we consider cases~\ref{P2a:isolated_maximum_interior}-\ref{P2a:isolated_maximum_b_even}, 
we have $p_\lessgtr^{(\alpha_{\yc})}(\yc) < 0$,
i.e., 
$
- p_\lessgtr^{(\alpha_{\yc})}(\yc) 
> 0$.
Using
$\exp \left( z \, p_\lessgtr (\yc) \right) 
=
\exp \left( z \,  s^{\max} \right)$,
substituting
$\upsilon = [z \, \frac1{\alpha_{\yc}!} \, (- p_\lessgtr^{(\alpha_{\yc})}(\yc) + \varepsilon )]^{\alpha_{\yc}^{-1}} (y-y_{\mathrm{c}}) 
$, 
and setting
\begin{align*}
\Ucal_> (\yc , z, \varepsilon) 
&:=
\Bigl[ 0 \, , \, \delta_{y_{\mathrm{c}}} \, \bigl[z \, \frac1{\alpha_{\yc}!} \, (| p_>^{(\alpha_{\yc})}(\yc)| + \varepsilon ) \bigr]^{\alpha_{\yc}^{-1}} \Bigr) 
\\
\Ucal_< (\yc , z, \varepsilon) 
&:= 
\Bigl( - \delta_{y_{\mathrm{c}}} \, \bigl[z \, \frac1{\alpha_{\yc}!} \, (| p_<^{(\alpha_{\yc})}(\yc)| + \varepsilon )\bigr]^{\alpha_{\yc}^{-1}} \, , \, 0 \Bigr]
\end{align*}
in the right side of \eqref{P2a:eq_laplace_lower_bound_integral} yields
\begin{align*}
\int_{\Ucal_\lessgtr (\yc , \delta_{\yc})} \exp [ z \, p_\lessgtr ] \, \dlamb 
\geq ~& 
\frac{\exp \left( z \,  s^{\max} \right)}{\left(z \, \frac1{\alpha_{\yc}!} \, \left(- p_>^{(\alpha_{\yc})}(\yc) + \varepsilon  \right)\right)^{\alpha_{\yc}^{-1}}} 
\int_{\Ucal_\lessgtr (\yc , z, \varepsilon)} 
\exp \left( -\upsilon^{\alpha_{\yc}} \right) \, \dlamb (\upsilon) .
\end{align*} 

Defining
$
I_\lessgtr (\yc , z, \varepsilon)
:=
\int_{\Ucal_\lessgtr (\yc , z, \varepsilon)} 
\exp \left( -\upsilon^{\alpha_{\yc}} \right) \, \dlamb (\upsilon) 
$ 
and recalling that 
$\gc = p_\lessgtr$ on $\Ucal_\lessgtr (\yc , \delta_{\yc})$,
we obtain
\begin{align}
& \int_{\Ucal_\lessgtr (\yc , \delta_{\yc})} \exp [ z \, \gc ] \, \dlamb 
\geq
\frac{\exp ( z \,  s^{\max} ) \, I_\lessgtr (\yc , z, \varepsilon) \, z^{-\alpha_{\yc}^{-1}}}{\bigl(\frac1{\alpha_{\yc}!} \, 
\left( | p_>^{(\alpha_{\yc})}(\yc)| + \varepsilon  \right) \bigr)^{\alpha_{\yc}^{-1}}} 
.
\label{P2a:eq_laplace_lower_bound_points}
\end{align}
In case~\ref{P2a:isolated_maximum_b_1}, where $\yc = b$ with $\alpha_{\yc} = 1$ and 
$p_<^{(\alpha_{\yc})}(\yc) 
= p_<'(b) > 0$ (and $p_>$ not existent), we have to make small adjustments to obtain~\eqref{P2a:eq_laplace_lower_bound_points}.
As $(y-\yc)^{\alpha_{\yc}} = (y-b) \leq 0$ for all $y \in \Ucal_< (b , \deltat_{b, \, <})$,
we insert the right inequality of~\eqref{P2a:eq_laplace_lower_bound_derivative} into~\eqref{P2a:eq_laplace_lower_bound_taylor_expansion} and obtain 
$ 
p_<(y) 
\geq p_<(b) 
+ ( p_<'(b) + \varepsilon ) (y-b)
$ 
for any $y \in \Ucal_< (b , \delta_{b})$.
Then,
\begin{align}
& \int_{\Ucal_< (b , \delta_{b})} \exp [ z \, p_< ] \, \dlamb \notag
\geq 
\exp \left( z \, p_< (b) \right) \int_{\Ucal_< (b , \delta_{b})} \exp \Bigl( 
z \,  \left( p_<' (b) + \varepsilon \right) (y-b)
\Bigr) \, \dlamb (y) 
.
\label{P2a:eq_laplace_lower_bound_integral}
\end{align}
Substituting 
$\upsilon = z \, (p_<'(b) + \varepsilon) (y - b)$ 
and setting
$I_< (b , z, \varepsilon)
:=
\int_{\Ucal_< (b , z, \varepsilon)} 
\exp \left( \upsilon \right) \, \dlamb (\upsilon) 
$
yields~\eqref{P2a:eq_laplace_lower_bound_points}.
\\
To define the constants $c_{\yc}$ for $\yc \in \Ycmpoint$ in the denominator of the fraction on the left of~\eqref{P2a:eq_laplace_approx_generalized},
we first set $I_< (a, z, \varepsilon) := 0 =: I_> (b, z, \varepsilon)$, $p_<^{(\alpha_{\yc})}(a) := p_>^{(\alpha_{\yc})}(a)$, and $p_>^{(\alpha_{\yc})}(b) := p_<^{(\alpha_{\yc})}(b)$
for the non-existent polynomials in cases~\ref{P2a:isolated_maximum_a_even}-\ref{P2a:isolated_maximum_b_1}.
Furthermore, set $I_< (\yc)
:= \lim_{z \ra \infty} I_< (\yc , z, \varepsilon)$
and 
$I_> (\yc)
:= \lim_{z \ra \infty} I_> (\yc , z, \varepsilon)$.
Then,
\begin{align*}
I_> (\yc)
&=
\begin{cases}
\int_0^{\infty} \exp ( -\upsilon^{\alpha_{\yc}} ) \, \dlamb (\upsilon) 
= \frac1{\alpha_{\yc}} \, \Gamma ( \frac1{\alpha_{\yc}} ) 
&\hspace{3.9cm} \text{\ref{P2a:isolated_maximum_interior},~\ref{P2a:isolated_maximum_a_even} or~\ref{P2a:isolated_maximum_a_1}}
\\
0
&\hspace{3.9cm} \text{\ref{P2a:isolated_maximum_b_even} or~\ref{P2a:isolated_maximum_b_1}}
\end{cases}
\\
I_< (\yc)
&=
\begin{cases}
\int_{-\infty}^0 \exp ( -\upsilon^{\alpha_{\yc}} ) \, \dlamb (\upsilon) = 
\int_0^{\infty} \exp ( -\upsilon^{\alpha_{\yc}} ) \, \dlamb (\upsilon) 
= \frac1{\alpha_{\yc}} \, \Gamma ( \frac1{\alpha_{\yc}} ) 
& \text{\ref{P2a:isolated_maximum_interior} or ~\ref{P2a:isolated_maximum_b_even}}
\\
0
& \text{\ref{P2a:isolated_maximum_a_even} or~\ref{P2a:isolated_maximum_a_1}} \, ,
\\
\int_{-\infty}^0 \exp ( \upsilon ) \, \dlamb (\upsilon) 
= 1 
& \text{\ref{P2a:isolated_maximum_b_1}}
\end{cases}
\end{align*}
with  $\Gamma(x)$ denoting the Gamma function at $x > 0$, where we used \citet[Equation~3.326~2.]{gradshteyn2007} and that $\alpha_{\yc}$ is even.\footnote{Note that $\Gamma ( 1 ) = 1$, i.e., $I_> (\yc) = I_> (a) = 1$ in case~\ref{P2a:isolated_maximum_a_1}.}
Then, for any $\yc \in \Ycmpoint$, both limits $I_> (\yc)$ and $I_< (\yc)$ are nonnegative and finite with at least one limit being positive.
Thus,
\[ 
c_{\yc}
:=
\frac{I_> (\yc)}{[\frac1{\alpha_{\yc}!} \, | p_>^{(\alpha_{\yc})}(\yc)|]^{\alpha_{\yc}^{-1}}} + \frac{I_< (\yc)}{[\frac1{\alpha_{\yc}!} \, | p_<^{(\alpha_{\yc})}(\yc)|]^{\alpha_{\yc}^{-1}}}
\]
is a positive constant for any $\yc \in \Ycmpoint$.\footnote{Note that if $p_> = p_<$ (i.e., if $\yc \in (a, b)$ is not a breackpoint of the piecewise polynomial $\gc$) and if $\alpha_{\yc} < \kappa$, both summands in the definition of $c_{\yc}$ are equal.}
To finally bound the numerator of the fraction on the left of~\eqref{P2a:eq_laplace_approx_generalized} from below, set
$\Yc^{\max} (\varepsilon) := \Ycmflat (\varepsilon) \cup \Ycmpoint (\varepsilon)
$,
where
$\Ycmpoint (\varepsilon) := \bigcup_{\yc \in \Ycmpoint} \left( \Ucal_< (\yc , \delta_{\yc}) \cup \Ucal_> (\yc , \delta_{\yc}) \right)$. 
Then,
\begin{align*}
\int_{\Ycal} \exp [ z \, s ] \, \dmu 
&= \int_{\Ycal_{\mathrm{d}}} \exp [ z \, s] \, \ddel  + 
\int_{\Ycal_{\mathrm{c}}} \exp [ z \, s] \, \dlamb 
\\
&= 
\sum_{d=1}^D w_d \exp [ z \, s(t_d) ]
+ \int_{(\Yc^{\max} (\varepsilon))^\complement} \exp [ z \, s ] \, \dlamb
\\
&\hspace{0.4cm} 
+ \int_{\Ycmflat (\varepsilon)} \exp [ z \, s ] \, \dlamb
+ \int_{\Ycmpoint (\varepsilon)} \exp [ z \, s ] \, \dlamb  
\\
&\geq 
\exp [ z \, s^{\max} ] \, \biggl(
\sum_{d \in \Dcal^{\max}} w_d 
+ \lambda ( \Ycmflat ) 
\\
&\hspace{0.4cm} 
+ \sum_{\yc \in \Ycmpoint} 
\Bigl(
\frac{z^{-\alpha_{\yc}^{-1}} \, I_< (\yc , z, \varepsilon)}{[\frac1{\alpha_{\yc}!} \, (| p_<^{(\alpha_{\yc})}(\yc)| + \varepsilon ) ]^{\alpha_{\yc}^{-1}}}
+
\frac{z^{-\alpha_{\yc}^{-1}} \, I_> (\yc , z, \varepsilon)}{[\frac1{\alpha_{\yc}!} \, (| p_>^{(\alpha_{\yc})}(\yc)| + \varepsilon ) ]^{\alpha_{\yc}^{-1}}}
\Bigr) \biggr),
\end{align*}
where we used~\eqref{P2a:eq_laplace_lower_bound_flat} and~\eqref{P2a:eq_laplace_lower_bound_points} to obtain the last inequality.
Dividing the inequality by 
$\exp ( z \,  s^{\max} ) 
\, [ \,\sum_{d \in \Dcal^{\max}} w_d 
+ \lambda ( \Ycmflat ) 
+ \sum_{y \in \Ycmpoint} c_y \, z^{-\alpha_y^{-1}} ]$,
which is nonzero as at least one of the sets $\Dcal^{\max}$, $\Ycmflat$, and $\Ycmpoint$ is nonempty and thus at least one of the summands in the second factor is positive (recalling $c_y > 0$ for any $y \in \Ycmpoint$),
yields
\begin{align*}
&~ 
\frac{\int_{\Ycal} \exp [ z \, s ] \, \dmu}{\exp [ z \, s^{\max} ] \, 
\bigl( \sum_{d \in \Dcal^{\max}} w_d 
+ \lambda ( \Ycal_{\mathrm{c}, \, -}^{\max} ) 
+ \sum_{y \in \Ycal_{\mathrm{c}, \, \bullet}^{\max}} c_y \, z^{-\alpha_y^{-1}} \bigr)}
\\
\geq &~
\frac{\sum_{d \in \Dcal^{\max}} w_d 
+ \lambda ( \Ycmflat ) 
+ \sum_{\yc \in \Ycmpoint}
\Bigl( \frac{z^{-\alpha_{\yc}^{-1}} \, I_<(\yc , z, \varepsilon)}{[\frac1{\alpha_{\yc}!} \, ( | p_<^{(\alpha_{\yc})}(\yc)| + \varepsilon )]^{\alpha_{\yc}^{-1}}}
+ \frac{z^{-\alpha_{\yc}^{-1}} \, I_>(\yc , z, \varepsilon)}{[\frac1{\alpha_{\yc}!} \, ( | p_>^{(\alpha_{\yc})}(\yc)| + \varepsilon )]^{\alpha_{\yc}^{-1}}} \Bigr)}
{\sum_{d \in \Dcal^{\max}} w_d 
+ \lambda ( \Ycal_{\mathrm{c}, \, -}^{\max} ) 
+ \sum_{\yc \in \Ycal_{\mathrm{c}, \, \bullet}^{\max}} 
\Bigl( \frac{z^{-\alpha_y^{-1}} \, I_<(\yc) }{[\frac1{\alpha_{\yc}!} \, | p_<^{(\alpha_{\yc})}(\yc)|]^{\alpha_{\yc}^{-1}}}
+ \frac{z^{-\alpha_y^{-1}} \, I_>(\yc) }{[\frac1{\alpha_{\yc}!} \, | p_>^{(\alpha_{\yc})}(\yc)|]^{\alpha_{\yc}^{-1}}} \Bigr)}
\,.
\end{align*}
Consider the limit for $z \ra \infty$.
If $\Dcal^{\max} \neq \emptyset$ or $\Ycmflat \neq \emptyset$, the limit in the second line can be computed separately in the numerator and the denominator with the last terms converging to zero, respectively, and thus numerator and denominator are equal.
Otherwise, we multiply numerator and denominator (then both consisting of only their last term) by $z^{\alpha^{-1}}$ for $\alpha = \max_{\yc \in \Ycmpoint} \alpha_{\yc}$ (which exists as $|\Ycmpoint| < \infty$) and again can compute the limit separately for the numerator and the denominator.
Then, all summands with $\alpha_{\yc} < \alpha$ converge to zero and the remaining terms in the numerator only include $z$ in $I_\lessgtr(\yc , z, \varepsilon)$, which converges to $I_\lessgtr(\yc)$ for $z \ra \infty$.
Then, the numerator converges to the denominator for $\varepsilon \ra 0$. 
Thus, in all cases, we have 
$ 
\lim_{z \ra \infty} \frac{\int_{\Ycal} \exp [ z \, s ] \, \dmu}{\exp [ z \, s^{\max} ] \, 
[ \, \sum_{d \in \Dcal^{\max}} w_d 
+ \lambda ( \Ycal_{\mathrm{c}, \, -}^{\max} ) 
+ \sum_{y \in \Ycal_{\mathrm{c}, \, \bullet}^{\max}} c_y \, z^{-\alpha_y^{-1}} ]}
\geq 1.
$ 
%
\\[0.2cm]
Now, we show that the limit is at most $1$, proceeding similarly.
Let $\varepsilon > 0$ be small enough such that 
$\Ycmflat (\varepsilon) \cap \Ycmpoint = \emptyset$,
$p_\lessgtr^{(\alpha_{\yc})} (\yc) + \varepsilon < 0$ for all $\yc \in \Ycmpoint$ with $p_\lessgtr^{(\alpha_{\yc})} (\yc) < 0$ (cases~\ref{P2a:isolated_maximum_interior}-\ref{P2a:isolated_maximum_b_even}),
and $p_<' (b) - \varepsilon > 0$ if $b \in \Ycmpoint$ with $p_<' (b) > 0$ (case~\ref{P2a:isolated_maximum_b_1}).
Note that such an $\varepsilon$ exists as $| \Ycmpoint | < \infty$ and recall for $\yc \in \Ycmpoint$ that if $\yc = a$, we only consider the objects with subscript ``$_>$'', if $\yc = b$, we only consider the objects with subscript ``$_<$''.
%
We have
\begin{align}
\int_{\Ycmflat (\varepsilon)} \exp [ z \, \gc ] \, \dlamb
\leq
\exp [z \,  s^{\max}] \, \bigl( \lambda ( \Ycmflat ) + 2 \, M \, \varepsilon \bigr)
.
\label{P2a:eq_laplace_upper_bound_flat}
\end{align}
Furthermore, for any $\yc \in \Ycmpoint$, 
as 
in the proof of the lower bound above,
there exists $\delta_{\yc} > 0$ such that 
the inequalities~\eqref{P2a:eq_laplace_lower_bound_derivative} hold
for all $\yt \in \Ucal_\lessgtr (\yc , \delta_{\yc})$,
$\Ucal_\lessgtr (\yc , \delta_{\yc}) \subset \Yc$,
$\Ucal_\lessgtr (\yc , \delta_{\yc}) \cap \Ycmflat (\varepsilon) = \emptyset$,
and $\delta_{\yc} \leq \max \{ \varepsilon_< (\yc), \varepsilon_> (\yc) \}$. 
Consider $y \in \Ucal_\lessgtr (\yc , \delta_{\yc})$ and recall that $(y-\yc)^{\alpha_{\yc}} \geq 0$ in cases~\ref{P2a:isolated_maximum_interior}-\ref{P2a:isolated_maximum_b_even} and $(y-\yc)^{\alpha_{\yc}} = y-b \leq 0$ in case~\ref{P2a:isolated_maximum_b_1}. 
Thus, inserting the right inequality of~\eqref{P2a:eq_laplace_lower_bound_derivative} into equality~\eqref{P2a:eq_laplace_lower_bound_taylor_expansion} 
in cases~\ref{P2a:isolated_maximum_interior}-\ref{P2a:isolated_maximum_b_even}
and inserting the left inequality of~\eqref{P2a:eq_laplace_lower_bound_derivative} into~\eqref{P2a:eq_laplace_lower_bound_taylor_expansion} in case~\ref{P2a:isolated_maximum_b_1},
we obtain
\begin{align*}
p_\lessgtr(y) &\leq 
p_\lessgtr(\yc) 
+ \frac1{\alpha_{\yc}!} ( p_\lessgtr^{(\alpha_{\yc})}(\yc) + \varepsilon ) (y-\yc)^{\alpha_{\yc}}
&& \text{in cases~\ref{P2a:isolated_maximum_interior}-\ref{P2a:isolated_maximum_b_even}},
\\
p_<(y) &\leq 
p_<(b) 
+ ( p_<'(b) - \varepsilon ) (y-b)
&& \text{in case~\ref{P2a:isolated_maximum_b_1}}.
\end{align*}

Thus,
%
\begin{align*}
&\int_{\Ucal_\lessgtr (\yc , \delta_{\yc})} \exp [ z \, \gc 
] \, \dlamb 
= \int_{\Ucal_\lessgtr (\yc , \delta_{\yc})} \exp [ z \, p_\lessgtr 
] \, \dlamb 
\notag
\\
\leq ~ & 
\begin{cases}
\exp [ z \, p_\lessgtr (\yc) ] \, \int_{\Ucal_\lessgtr (\yc , \delta_{\yc})} \exp \bigl( z \, \frac1{\alpha_{\yc}!} \left( p_\lessgtr^{(\alpha_{\yc})}(\yc) + \varepsilon \right) (y-\yc)^{\alpha_{\yc}} \bigr) \, \dlamb (y)
& \text{\ref{P2a:isolated_maximum_interior}-\ref{P2a:isolated_maximum_b_even}},
\\[0.2cm]
\exp [ z \, p_< (b) ] \, \int_{\Ucal_< (b , \delta_{b})} \exp \bigl( z \, \left( p_<'(b) - \varepsilon \right) (y-b) \bigr) \, \dlamb (y)
& \text{\ref{P2a:isolated_maximum_b_1}}
\end{cases}
\notag
\\
= ~&
\frac{\exp \left( z \,  s^{\max} \right)}{\left(z \, \frac1{\alpha_{\yc}!} \, (| p_\lessgtr^{(\alpha_{\yc})}(\yc) | - \varepsilon ) \right)^{\alpha_{\yc}^{-1}}} \, I_\lessgtr (\yc , z, \varepsilon)
\notag
\, ,
\end{align*}
where we substitute as in the proof of the lower bound, but changing the sign before~$\varepsilon$ from $+$ to $-$ to obtain the last equality.
Note that 
this also changes the sign before~$\varepsilon$ from $+$ to $-$ in the definition of the sets $\Ucal_\lessgtr (\yc, z, \varepsilon)$, on which the integrals $I_\lessgtr (\yc , z, \varepsilon)$ are defined.
Since $\Ucal_< (\yc, z, \varepsilon) \subset (-\infty , 0]$ and $\Ucal_> (\yc, z, \varepsilon) \subset [0, \infty)$ for all $\yc \in \Ycmpoint$, all $z > 0$, and all $\varepsilon > 0$, we have $I_\lessgtr (\yc , z, \varepsilon) \leq I_\lessgtr (\yc)$ for all $z > 0$ and all $\varepsilon > 0$ and thus
\begin{align}
\int_{\Ucal_\lessgtr (\yc , \delta_{\yc})} \exp [ z \, \gc 
] \, \dlamb 
\leq
\frac{\exp \left( z \,  s^{\max} \right) \, I_\lessgtr (\yc) \, z^{-\alpha_{\yc}^{-1}}}{\left(\frac1{\alpha_{\yc}!} \, ( | p_\lessgtr^{(\alpha_{\yc})}(\yc) | - \varepsilon \right)^{\alpha_{\yc}^{-1}}}
\, .
\label{P2a:eq_laplace_upper_bound_points}
\end{align}

Per construction, there exists a $\xi > 0$ such that $\gc(y) < s^{\max} - \xi$ for all $y \in \Ycm (\varepsilon )^\complement$ and $s(t_d) < s^{\max} - \xi$ for all $d \in (\Dcal^{\max})^\complement$. 
Then, by~\eqref{P2a:eq_laplace_upper_bound_flat} and~\eqref{P2a:eq_laplace_upper_bound_points}, we have
\begin{align*}
& \int_{\Ycal} \exp [ z \, s ] \, \dmu
= \int_{\Ycal_{\mathrm{d}}} \exp [ z \, s ] \, \ddel  + 
\int_{\Ycal_{\mathrm{c}}} \exp [ z \, s ] \, \dlamb
\\
= &~
\sum_{d \in \Dcal^{\max}} w_d \exp [ z \, s(t_d) ]
+ \int_{\Ycmflat (\varepsilon)} \exp [ z \, s ] \, \dlamb 
+ \int_{\Ycmpoint (\varepsilon)} \exp [ z \, s ] \, \dlamb  
\\
&~+ \sum_{d \in (\Dcal^{\max})^\complement} w_d \exp [ z \, s(t_d) ]
+ \int_{(\Yc^{\max} (\varepsilon))^\complement} \exp [ z \, s ] \, \dlamb
\\
\leq &~ 
\exp [ z \, s^{\max} ] \, \biggl(
\sum_{d \in \Dcal^{\max}} w_d 
+ \lambda ( \Ycmflat ) + 2 \, M \, \varepsilon
\\
&\hspace{2.2cm}~
+ \sum_{\yc \in \Ycmpoint} \Bigl(
\frac{z^{-\alpha_{\yc}^{-1}} \, I_< (\yc)}{[\frac1{\alpha_{\yc}!} \, (| p_<^{(\alpha_{\yc})}(\yc) | - \varepsilon )]^{\alpha_{\yc}^{-1}}}
+ \frac{z^{-\alpha_{\yc}^{-1}} \, I_> (\yc)}{[\frac1{\alpha_{\yc}!} \, (| p_<^{(\alpha_{\yc})}(\yc) | - \varepsilon )]^{\alpha_{\yc}^{-1}}}
\Bigr)
\\
&\hspace{2.2cm}~
+ \lambda \left( (\Yc^{\max} (\varepsilon))^\complement \right) \, \exp [ - z \, \xi ]
+ \sum_{d \in (\Dcal^{\max})^\complement} w_d \exp [ - z \, \xi ]
\biggr) 
.
\end{align*}
Dividing the inequality by 
$\exp ( z \,  s^{\max} ) 
\, [ \,\sum_{d \in \Dcal^{\max}} w_d 
+ \lambda ( \Ycal_{\mathrm{c}, \, -}^{\max} ) 
+ \sum_{y \in \Ycal_{\mathrm{c}, \, \bullet}^{\max}} c_y \, z^{-\alpha_y^{-1}} ]$
yields
\begin{align*}
&
\frac{\int_{\Ycal} \exp [ z \, s ] \, \dmu}{\exp [ z \, s^{\max} ] \, 
\bigl( \sum_{d \in \Dcal^{\max}} w_d 
+ \lambda ( \Ycal_{\mathrm{c}, \, -}^{\max} ) 
+ \sum_{y \in \Ycal_{\mathrm{c}, \, \bullet}^{\max}} c_y \, z^{-\alpha_y^{-1}} \bigr)}
\\
\leq &~
\frac{\sum_{d \in \Dcal^{\max}} w_d 
+ \lambda ( \Ycmflat ) + 2 \, M \, \varepsilon
+ \sum_{\yc \in \Ycmpoint} 
\Biggl( 
\frac{z^{-\alpha_{\yc}^{-1}} \, I_<(\yc)}{\bigl[\frac{| p_<^{(\alpha_{\yc})}(\yc) | - \varepsilon }{\alpha_{\yc}!} \bigr]^{\alpha_{\yc}^{-1}}}
+ 
\frac{z^{-\alpha_{\yc}^{-1}} \, I_>(\yc)}{\bigl[\frac{| p_>^{(\alpha_{\yc})}(\yc) | - \varepsilon }{\alpha_{\yc}!} \bigr]^{\alpha_{\yc}^{-1}}}
\Biggr)
}
{\sum_{d \in \Dcal^{\max}} w_d 
+ \lambda ( \Ycal_{\mathrm{c}, \, -}^{\max} ) 
+ \sum_{\yc \in \Ycal_{\mathrm{c}, \, \bullet}^{\max}} 
\Biggl( \frac{z^{-\alpha_y^{-1}} \, I_<(\yc) }{\bigl[\frac{| p_<^{(\alpha_{\yc})}(\yc)|}{\alpha_{\yc}!} \bigr]^{\alpha_{\yc}^{-1}}}
+ \frac{z^{-\alpha_y^{-1}} \, I_>(\yc) }{\bigl[\frac{| p_>^{(\alpha_{\yc})}(\yc)|}{\alpha_{\yc}!} \bigr]^{\alpha_{\yc}^{-1}}} 
\Biggr)}
\\
&~ + 
\frac{\lambda\left( (\Yc^{\max} (\varepsilon))^\mathrm{c} \right) \, \exp [ - z \, \xi]
+ \sum_{d \in (\Dcal^{\max})^\complement} w_d \exp [- z \, \xi ]}{
\sum_{d \in \Dcal^{\max}} w_d 
+ \lambda ( \Ycal_{\mathrm{c}, \, -}^{\max} ) 
+ \sum_{y \in \Ycal_{\mathrm{c}, \, \bullet}^{\max}} c_y \, z^{-\alpha_y^{-1}}}
.
\end{align*}
Consider the limit for $z \ra \infty$.
The limit of the fraction in the last line is zero, which can be seen by expanding the fraction by $\exp [z \,  \xi]$ and using $\lim_{z \ra \infty} \exp [z \,  \xi] \, z^{-\alpha_{\yc}^{-1}} = \infty$ for all $\alpha_{\yc} \in \Nbb$ as $\xi > 0$.
If $\Dcal^{\max} \neq \emptyset$ or $\Ycmflat \neq \emptyset$, the limit of the fraction in the second line can be computed separately in the numerator and the denominator with the last terms converging to zero, respectively.
Otherwise, expand the fraction (where numerator/denominator then only consist of the last term) by $z^{\alpha^{-1}}$ where $\alpha = \max_{\yc \in \Ycmpoint} \alpha_{\yc}$.
Finally,
$ 
\lim_{z \ra \infty} \frac{\int_{\Ycal} \exp [ z \, s ] \, \dmu}{\exp [ z \, s^{\max} ] \, 
[ \, \sum_{d \in \Dcal^{\max}} w_d 
+ \lambda ( \Ycal_{\mathrm{c}, \, -}^{\max} ) 
+ \sum_{y \in \Ycal_{\mathrm{c}, \, \bullet}^{\max}} c_y \, z^{-\alpha_y^{-1}} ]}
\leq 1
$ 
is obtained by considering $\varepsilon \ra 0$.
\qedhere

\end{proof}

\pagebreak

\begin{thm}\label{P2a:thm_loglikelihood_coercive}
Under assumption 
\ref{P2a:assumption_appendix_scalar_product_non_constant},
$\lim_{\Vert \thetaf \Vert_2 \ra \infty} \ell_{\pen} (\thetaf) = - \infty$. 
\end{thm}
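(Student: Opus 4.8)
The plan is to reduce to the unpenalized log-likelihood and then exploit positive homogeneity of the predictor in $\thetaf$. Since each $\Pf_j$ is positive semi-definite (assumption~\ref{P2a:assumption_penalties_positive}), $\pen(\thetaf) = -\thetaf^\top\Pf\thetaf \le 0$, so $\ell_{\pen}(\thetaf) \le \ell(\thetaf)$ and it suffices to show $\lim_{\Vert\thetaf\Vert_2\to\infty}\ell(\thetaf) = -\infty$. Writing $\thetaf = r\vf$ with $r = \Vert\thetaf\Vert_2$ and $\vf\in\Sbb^{K-1}$, we have $\bft(\xf_i)^\top\thetaf = r\,s_{\vf,\xf_i}$ as functions on $\Ycal$, so~\eqref{P2:equation_bayes_log_likelihood} becomes
\[
\ell(r\vf) = \sum_{i=1}^N\Bigl(r\,s_{\vf,\xf_i}(y_i) - \log\int_\Ycal\exp[r\,s_{\vf,\xf_i}]\,\dmu\Bigr).
\]
The workhorse is a lower bound for the integral term.

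For a fixed direction I would establish $\log\int_\Ycal\exp[r\,s_{\vf,\xf_i}]\,\dmu \ge r\,s_{\vf,\xf_i}^{\max} - \tfrac1\kappa\log r - C$ for all large $r$. This is exactly where the generalized Laplace approximation (Theorem~\ref{P2a:thm_generalized_laplace_approximation}) enters; it applies since $s_{\vf,\xf_i,\mathrm{c}}$ is a $(\kappa-1)$-times continuously differentiable piecewise polynomial of finitely many polynomials of degree at most $\kappa$ by the standing assumption~\ref{P2a:assumption_piecewise_polynomial}. Taking logarithms in the asymptotic relation, and noting that the bracket $\sum_{d\in\Dcal^{\max}}w_d + \lambda(\Ycmflat) + \sum_y c_y\,r^{-\alpha_y^{-1}}$ has at least one positive summand and, since $\alpha_y\le\kappa$, is bounded below by a positive constant times $r^{-1/\kappa}$, gives the bound. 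Combined with $s_{\vf,\xf_i}(y_i)\le s_{\vf,\xf_i}^{\max}$, the display above yields
\[
\ell(r\vf) \le -r\,\rho(\vf) + N\Bigl(\tfrac1\kappa\log r + C\Bigr), \qquad \rho(\vf) := \sum_{i=1}^N\bigl(s_{\vf,\xf_i}^{\max} - s_{\vf,\xf_i}(y_i)\bigr) \ge 0 .
\]
By assumption~\ref{P2a:assumption_appendix_scalar_product_non_constant}, for each $\vf\in\Sbb^{K-1}$ some summand of $\rho(\vf)$ is strictly positive, so $\rho > 0$ on $\Sbb^{K-1}$; since $\vf\mapsto s_{\vf,\xf_i}(y_i)$ is linear and $\vf\mapsto s_{\vf,\xf_i}^{\max}$ is continuous (Lemma~\ref{P2a:lem_sup_continuous}), $\rho$ is continuous on the compact sphere, hence $\rho_{\min} := \min_{\vf\in\Sbb^{K-1}}\rho(\vf) > 0$. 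Then $\ell(r\vf)\le -r\,\rho_{\min} + N(\tfrac1\kappa\log r + C)\to-\infty$ as $r\to\infty$, which is the assertion.

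The main obstacle is making the integral bound uniform in the direction $\vf$: Theorem~\ref{P2a:thm_generalized_laplace_approximation} is an asymptotic for each fixed $s$, but the maximizing sets $\Dcal^{\max},\Ycmflat,\Ycmpoint$, the exponents $\alpha_y$, and the constants $c_y$ all depend on $\vf$, so neither $C$ nor the threshold beyond which the bound holds is obviously $\vf$-independent. I see two ways to close this gap. One is a compactness argument: if the claim failed, there would be $\thetaf_n = r_n\vf_n$ with $r_n\to\infty$ and $\ell(\thetaf_n)$ bounded below; extracting $\vf_n\to\vf^\ast$ and using that $\Vert s_{\vf_n,\xf_i} - s_{\vf^\ast,\xf_i}\Vert_\infty\to 0$ (the piecewise-polynomial coefficients and the values at the discrete points depend linearly on $\vf$), one sandwiches $\int\exp[r_n s_{\vf_n,\xf_i}]\,\dmu \ge e^{-r_n\varepsilon}\int\exp[r_n s_{\vf^\ast,\xf_i}]\,\dmu$ and applies the fixed-direction Laplace bound to $s_{\vf^\ast,\xf_i}$, obtaining $\limsup_n r_n^{-1}\ell(\thetaf_n)\le -\rho(\vf^\ast) < 0$, a contradiction. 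The other is to replace the appeal to Laplace in the first step by a direct estimate of $\int_{\Yc}\exp[r\,s_{\vf,\xf_i,\mathrm{c}}]\,\dlamb$ over a fixed-length interval around a maximizer of $s_{\vf,\xf_i,\mathrm{c}}$, using that the B-spline breakpoints and the degree are fixed and the basis functions --- hence $s_{\vf,\xf_i,\mathrm{c}}$ and, by Markov's inequality, its derivative --- are bounded uniformly over $\Sbb^{K-1}$ by assumption~\ref{P2a:assumption_domain_basis}; this yields the first-step bound with $C$ and the threshold independent of $\vf$, after which only $\rho_{\min} > 0$ needs compactness. Either way, the remaining steps are routine.
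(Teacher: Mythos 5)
Your proposal is correct and follows the same basic route as the paper: drop the penalty using positive semi-definiteness of $\Pf$, write $\thetaf = r\vf$, lower-bound the log-normalizing integral by $r\,s_{\vf,\xf_i}^{\max}$ minus a logarithmic correction via the generalized Laplace approximation (Theorem~\ref{P2a:thm_generalized_laplace_approximation}), and let assumption~\ref{P2a:assumption_appendix_scalar_product_non_constant} supply a strictly negative linear term that dominates $N\log r$. Where you go beyond the paper is in treating the limit $\Vert\thetaf\Vert_2\to\infty$ as genuinely two-dimensional: the paper's proof argues along a fixed ray $\vf$ and does not explicitly address that the constants and thresholds produced by the Laplace approximation depend on $\vf$. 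Your observation that $\rho_{\min}:=\min_{\vf\in\Sbb^{K-1}}\rho(\vf)>0$ (by Lemma~\ref{P2a:lem_sup_continuous} and compactness) handles half of this, and both of your patches for the remaining non-uniformity are sound: the subsequence/contradiction argument with $\Vert s_{\vf_n,\xf_i}-s_{\vf^\ast,\xf_i}\Vert_\infty\to0$ closes the gap cleanly, and the direct interval estimate around a maximizer (using the uniform bound on the splines and their derivatives) is arguably the nicer route, since for coercivity one only needs a lower bound on the integral and can bypass Theorem~\ref{P2a:thm_generalized_laplace_approximation} entirely. One small slip: since $\alpha_y\le\kappa$ gives $\alpha_y^{-1}\ge\kappa^{-1}$, the term $r^{-\alpha_y^{-1}}$ is bounded \emph{below} by $r^{-1}$ (attained when $\alpha_y=1$ at a boundary maximizer), not by a constant times $r^{-1/\kappa}$; the correct uniform lower bound on the bracket is of order $r^{-1}$, so the logarithmic correction should be $\log r$ rather than $\tfrac1\kappa\log r$. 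This only changes the constant in front of the (dominated) logarithmic term and does not affect the conclusion.
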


\begin{proof}
We first show the result for the unpenalized log-likelihood, i.e., 
$\lim_{\Vert \thetaf \Vert_2 \ra \infty} \ell (\thetaf)$ $= - \infty$.
Let $i \in \{1, \ldots , N\}$.
For $\thetaf \in \Rbb^K \setminus \{ \mathbf{0} \}$, let $z := \Vert \thetaf \Vert_2 > 0$ and $\vf := \frac{\thetaf}{\Vert \thetaf \Vert_2}\in \Sbb^{K-1}$, i.e., $\thetaf = z \, \vf$.
As we later apply Theorem~\ref{P2a:thm_generalized_laplace_approximation} to $s_{\vf, \xf_i}$, we briefly show that the assumptions are fulfilled, i.e., that $s_{\vf, \xf_i, \mathrm{c}}$ is a $(\kappa -1)$ times continuously differentiable piecewise polynomial consisting of finitely many polynomials of degree at most $\kappa$ for some $\kappa \geq 2$.
In the continuous special case, this is straightforward (using property~\ref{P2a:assumption_piecewise_polynomial}),
since for every $\vf \in \Sbb^{K-1}$ and every $\xf \in \Xcal$, 
we have $s_{\vf, \xf, \mathrm{c}} \in \spano (\bft_{\Yc})$. 
In the mixed case, recalling the construction of the basis functions via orthogonal decomposition of $\B$ from appropriate bases for the continuous and discrete special cases via the embeddings~\eqref{P2a:eq_embedding_L20_c} and~\eqref{P2a:eq_embedding_L20_d}, 
we have
$\bft_{\Ycal} = (\tilde{\iota}_{\mathrm{c}} (\bt_{\Yc, \, 1}), \ldots, \tilde{\iota}_{\mathrm{c}} (\bt_{\Yc, \, K_{\Yc}}), \tilde{\iota}_{\mathrm{d}} (\bt_{\Yd^\bullet, \, 1}), \ldots , \tilde{\iota}_{\mathrm{d}} (\bt_{\Yd^\bullet, \, K_{\Yd^\bullet}}))^\top$, where $\tilde{\iota}_{\mathrm{c}} (\bt_{\Yc, \, m})$ is a piecewise polynomial with the desired properties 
on $\Yc \setminus \Yd$ for all $m = 1, \ldots , K_{\Yc}$ and $\tilde{\iota}_{\mathrm{d}} (\bt_{\Yd, \, m})$ is constant on $\Yc \setminus \Yd$ for all $m = 1, \ldots , K_{\Yd^\bullet}$.
Thus, for all $\vf \in \Sbb^{K-1}$ and all $\xf \in \Xcal$, 
as $s_{\vf, \xf} \in \spano (\bft_{\Ycal})$, the function $s_{\vf, \xf, \mathrm{c}}$ is a 
$(\kappa -1)$ times continuously differentiable piecewise polynomial consisting of finitely many polynomials of degree at most $\kappa$ for some $\kappa \geq 2$.
\\
Thus, let $s_{\vf, \xf_i}^{\max} := \sup_{y \in \Ycal} s_{\vf, \xf_i} (y)$ and let $\Dcal^{\max} (s_{\vf, \xf_i})$ and $\Ycm (s_{\vf, \xf_i}) = \Ycmflat (s_{\vf, \xf_i}) \sqcup \Ycmpoint (s_{\vf, \xf_i})$ be the sets as in Theorem~\ref{P2a:thm_generalized_laplace_approximation}.
Furthermore, for $y \in \Ycmpoint (s_{\vf, \xf_i})$ let $c_{\vf, \xf_i , y}>0$ and $\alpha_{\vf, \xf_i , y} \in \Nbb$ with $\alpha_{\vf, \xf_i , y} \leq \kappa$ such that~\eqref{P2a:eq_laplace_approx_generalized} holds.
Then, setting
$\ell_i(\thetaf) := \bft(\xf_i)(y_i)^\top \thetaf 
- \log \int_{\Ycal} \exp \bigl[ \bft(\xf_i)^\top \thetaf \bigr]\, \dmu$,
we have 
$\ell(\thetaf) = \sum_{i=1}^N \ell_i(\thetaf)$
with $\ell_i(\thetaf)$ for $i \in \{1, \ldots , N\}$ equal to
\begin{align*}
%
&~ 
z \, s_{\vf, \xf_i} (y_i) 
\\
& 
- \log \frac{\int_{\Ycal} \exp [ z \, s_{\vf, \xf_i} 
] \, \dmu}
{\exp [ z \, s_{\vf, \xf_i}^{\max} 
] \Bigl( \sum_{d \in \Dcal^{\max} (s_{\vf, \xf_i})} w_d 
+ \lambda \bigl( \Ycmflat (s_{\vf, \xf_i}) \bigr) 
+ \sum_{y \in \Ycmpoint (s_{\vf, \xf_i})} c_{\vf, \xf_i , y} \, z^{-\alpha_{\vf, \xf_i , y}^{-1}} \Bigr) } 
\\
& 
- z \, s_{\vf, \xf_i}^{\max} 
- \log \Bigl( 
\sum_{d \in \Dcal^{\max} (s_{\vf, \xf_i})} w_d 
+ \lambda \bigl( \Ycmflat (s_{\vf, \xf_i}) \bigr) 
+ \sum_{y \in \Ycmpoint (s_{\vf, \xf_i})} c_{\vf, \xf_i , y} \, z^{-\alpha_{\vf, \xf_i , y}^{-1}}
\Bigr)
\, .
\end{align*}
By 
Theorem~\ref{P2a:thm_generalized_laplace_approximation}
and setting 
$C_{\vf, \xf_i} := 
\sum_{d \in \Dcal^{\max} (s_{\vf, \xf_i})} w_d 
+ \lambda ( \Ycmflat (s_{\vf, \xf_i}) ) 
$ 
and 
$c_{\vf}^{\min} := \min \bigl\{ c_{\vf, \xf_i , y} ~|~ y \in \Ycmpoint (s_{\vf, \xf_i}), i \in \{1, \ldots, N\} \bigr\}$,
we obtain
\begin{align*}
&\lim_{\Vert \thetaf \Vert_2 \ra \infty} \ell (\thetaf)
\\
= &~ 
\lim_{z \ra \infty} \sum_{i=1}^N \Bigl( z \, 
[ s_{\vf, \xf_i} (y_i) - s_{\vf, \xf_i}^{\max}]
- \log 
\bigl( C_{\vf, \xf_i} +
\sum_{y \in \Ycmpoint (s_{\vf, \xf_i})} c_{\vf, \xf_i , y} \, z^{-\alpha_{\vf, \xf_i , y}^{-1}}
\bigr)
\Bigr)
\\
\leq &~
\lim_{z \ra \infty} \sum_{i=1}^N \Bigl( z \, 
[ s_{\vf, \xf_i} (y_i) - s_{\vf, \xf_i}^{\max}]
- \log \bigl( 
|\Ycmpoint (s_{\vf, \xf_i})| \, c_{\vf}^{\min} \, z^{-1} \bigr)
\Bigr)
\\
= &~ 
\lim_{z \ra \infty} \biggl( \sum_{i=1}^N \bigl( z \, 
[ s_{\vf, \xf_i} (y_i) - s_{\vf, \xf_i}^{\max}] \bigr)
+ N \, \log  z
\biggr)
- \sum_{i=1}^N 
\log \bigl( |\Ycmpoint (s_{\vf, \xf_i})| \, c_{\vf}^{\min} \bigr) 
.
\end{align*}
We have 
$ 
s_{\vf, \xf_i} (y_i) - s_{\vf, \xf_i}^{\max} \leq 0$ for all $i \in \{1, \ldots , N\}$.
Per assumption~\ref{P2a:assumption_appendix_scalar_product_non_constant}, there exists an $i \in \{ 1, \ldots , N\}$ such that 
$
s_{\vf, \xf_i} (y_i) - s_{\vf, \xf_i}^{\max}< 0$ and thus, 
$\lim_{z \ra \infty} z \, 
[s_{\vf, \xf_i} (y_i) - s_{\vf, \xf_i}^{\max}] = -\infty$. 
Since this linear term dominates the logarithmic term $N \, \log z$ (which tends towards $+ \infty$) for $z \ra \infty$, we obtain 
$\lim_{\Vert \thetaf \Vert_2 \ra \infty} \ell (\thetaf) 
= -\infty$.
As the matrices $\Pf_j$ are positive semi-definite for all $j \in \{1, \ldots , J\}$, also $\Pf$ is positive semi-definite and thus 
$\lim_{\Vert \thetaf \Vert_2 \ra \infty} \ell_{\pen} (\thetaf) 
= \lim_{\Vert \thetaf \Vert_2 \ra \infty} \ell (\thetaf) - \lim_{\Vert \thetaf \Vert_2 \ra \infty} \thetaf ^\top \Pf \thetaf
= -\infty$.
\end{proof}

\begin{thme}\label{P2a:thm_appendix_loglikelihood_level_bounded}
Under assumption~\ref{P2a:assumption_appendix_scalar_product_non_constant}, the function $-\ell_{\mathrm{pen}}$ is level-bounded, i.e., for every $\alpha \in \Rbb$, the set 
$\{ \thetaf \in \Rbb^K ~|~ - \ell_{\pen} (\thetaf) \leq \alpha \}$ is bounded.
\end{thme}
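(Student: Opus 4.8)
The plan is to derive level-boundedness of $-\ell_{\pen}$ directly from Theorem~\ref{P2a:thm_loglikelihood_coercive}, which already establishes that $\ell_{\pen}(\thetaf) \to -\infty$ as $\Vert\thetaf\Vert_2 \to \infty$ (equivalently, $-\ell_{\pen}(\thetaf) \to +\infty$). This is the classical fact that a coercive (norm-coercive) function is level-bounded; I would spell it out via a contrapositive/contradiction argument.

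First I would fix $\alpha \in \Rbb$ and suppose, for contradiction, that the sublevel set $S_\alpha := \{\thetaf \in \Rbb^K ~|~ -\ell_{\pen}(\thetaf) \leq \alpha\}$ is unbounded. Then there exists a sequence $(\thetaf_n)_{n \in \Nbb}$ in $S_\alpha$ with $\Vert\thetaf_n\Vert_2 \to \infty$. By Theorem~\ref{P2a:thm_loglikelihood_coercive} (applicable since assumption~\ref{P2a:assumption_appendix_scalar_product_non_constant} is in force), $\lim_{\Vert\thetaf\Vert_2 \to \infty} \ell_{\pen}(\thetaf) = -\infty$, so along this sequence $\ell_{\pen}(\thetaf_n) \to -\infty$, i.e.\ $-\ell_{\pen}(\thetaf_n) \to +\infty$. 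In particular there is an $n$ with $-\ell_{\pen}(\thetaf_n) > \alpha$, contradicting $\thetaf_n \in S_\alpha$. Hence $S_\alpha$ is bounded, and since $\alpha$ was arbitrary, $-\ell_{\pen}$ is level-bounded.

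Alternatively — and perhaps cleaner to write — I would argue directly: by the definition of the limit in Theorem~\ref{P2a:thm_loglikelihood_coercive}, for the value $-\alpha$ there exists $R > 0$ such that $\ell_{\pen}(\thetaf) < -\alpha$ (equivalently $-\ell_{\pen}(\thetaf) > \alpha$) whenever $\Vert\thetaf\Vert_2 > R$. Therefore $S_\alpha \subseteq \{\thetaf \in \Rbb^K ~|~ \Vert\thetaf\Vert_2 \leq R\}$, which is bounded.

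There is essentially no obstacle here: the entire analytic content — the generalized Laplace approximation controlling the integral term, the use of assumption~\ref{P2a:assumption_appendix_scalar_product_non_constant} to find an index $i$ where the score term $s_{\vf,\xf_i}(y_i) - s_{\vf,\xf_i}^{\max}$ is strictly negative, and the domination of the logarithmic growth by the linear decay — has already been carried out in Theorem~\ref{P2a:thm_loglikelihood_coercive}. The only thing to be careful about is the direction of the inequalities (coercivity of $\ell_{\pen}$ downward versus level-boundedness of $-\ell_{\pen}$), which the above argument handles. If anything, I would note explicitly that this theorem is stated separately because level-boundedness (rather than just coercivity) is the precise hypothesis needed in the subsequent existence argument for the PMLE (e.g.\ to invoke that a level-bounded lower-semicontinuous — here continuous — function attains its infimum).
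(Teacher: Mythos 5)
Your proposal is correct and matches the paper's proof essentially verbatim: the paper also argues by contradiction, taking a sequence $\thetaf^{(t)}$ in an unbounded sublevel set with $\Vert\thetaf^{(t)}\Vert_2 \to \infty$ and invoking Theorem~\ref{P2a:thm_loglikelihood_coercive} (under assumption~\ref{P2a:assumption_appendix_scalar_product_non_constant}) to obtain $-\ell_{\pen}(\thetaf^{(t)}) \to \infty$, contradicting the sublevel bound. Your alternative direct inclusion of the sublevel set in a ball of radius $R$ is an equally valid phrasing of the same argument.
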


\begin{proof}
Assume $-\ell_{\mathrm{pen}}$ is not level-bounded.
Then, there exists $\alpha \in \Rbb$, such that the set 
$\{ \thetaf \in \Rbb^K 
~|~ - \ell_{\pen} (\thetaf) \leq \alpha \}$ is not bounded.
Thus, there exist 
$\thetaf^{(t)} \in \{ \thetaf \in \Rbb^K 
~|~ - \ell_{\pen} (\thetaf) \leq \alpha \},
t \in \Nbb,$ with $\lim_{t \ra \infty} \Vert \thetaf^{(t)} \Vert_2 = \infty$.
However, we have $\lim_{\Vert \thetaf \Vert_2 \ra \infty} \ell_{\pen} (\thetaf) = -\infty$ by Theorem~\ref{P2a:thm_loglikelihood_coercive}, since assumption 
\ref{P2a:assumption_appendix_scalar_product_non_constant} holds.
Thus, $\lim_{t \ra \infty} - \ell_{\pen} (\thetaf^{(t)}) = \infty$, which is a contradiction to $-\ell_{\pen} (\thetaf^{(t)}) \leq \alpha$ for all $t \in \Nbb$.
Thus, $-\ell_{\mathrm{pen}}$ is level-bounded and $\thetafh$ exists.
\end{proof}

\subsection{Proofs of Theorems and Lemmas from the Main Manu\-script 
(and Formulation and Proof of Lemma~\ref{P2a:lemma_confidence_regions_simultaneos})
}\label{P2a:chapter_proofs}

\begin{proof}[Proof of Theorem~\ref{P2:thm_existence_uniqueness_bayes_PMLE}]
We first show uniqueness of the PMLE.
From Proposition~\ref{P2a:prop_appendix_fisher_informations_positive_definite} \ref{P2a:prop_appendix_bayes_fisher_information_positive_definite}, we have that the Hessian matrix of $\ell_{\pen} (\thetaf)$ is negative definite, which implies that $\ell_{\pen} (\thetaf)$ is strictly concave on $\Rbb^K$.
Per assumption, there exists a PMLE $\thetafh$. 
Now, assume that 
it is not unique, i.e., there are 
$\thetafh_1 \neq \thetafh_2 \in \Rbb^K 
$ such that 
$\ell_{\text{pen}} (\thetafh_1) = \ell_{\text{pen}} (\thetafh_2) = \max_{\thetaf \in \Rbb^K 
} \ell_{\text{pen}} (\thetaf)$.
Due to the strict concavity of $\ell_{\text{pen}}$, we have
$
\ell_{\text{pen}} ( (1-\alpha) \, \thetafh_1 + \alpha \, \thetafh_2 )
> (1-\alpha ) \, \ell_{\text{pen}} (\thetafh_1) + \alpha \, \ell_{\text{pen}} (\thetafh_2) = \max_{\thetaf \in \Rbb^K 
} \ell_{\text{pen}} (\thetaf)
$
for $\alpha \in (0, 1)$, which is a contradiction, i.e., the PMLE is unique, if it exists.
\\
To obtain existence of the PMLE $\thetafh$ (which is also the minimizer of $-\ell_{\mathrm{pen}}$) it suffices to show that $-\ell_{\mathrm{pen}}$ is lower semicontinuous, level-bounded and proper \citep[Theorem 1.9]{rockafellar2009}. 
As $-\ell_{\mathrm{pen}}$ takes values in $\Rbb$, it is proper.
Furthermore, it is continuous, implying it is lower semicontinuous \citep[Exercise 1.12]{rockafellar2009}.
Finally, by Theorem~\ref{P2a:thm_appendix_loglikelihood_level_bounded}, $-\ell_{\mathrm{pen}}$ is level-bounded and thus $\thetafh$ exists. 
\end{proof}

\subsubsection*{Proof of Theorem~\ref{P2:thm_MLE_asymptotic_normal}}\label{P2a:chapter_asymptotic_PMLE}
The following auxiliary Lemmas~\ref{P2a:lem_assumption_smoothing_parameters_implies_convergence_penalty}, \ref{P2a:lem_score_convergesP_zero}, and~\ref{P2a:lem_lower_bound_quadratic_form_Fisher_info} are dedicated to ultimately 
prove consistency and asymptotic normal distribution of the PMLE~$\hat{\thetaf}$, 
i.e., Theorem~\ref{P2:thm_MLE_asymptotic_normal}.
Both of these properties are well-known for ``regular log-likeli\-hoods''.
However, this term is often not specified further in the literature, especially in our case of non-identically distributed (but independent) observations.
Exceptions are \citet{hoadley1971, philippou1975} who both discuss concrete conditions in this case for asymptotic normality of the unpenalized MLE \citep[also for consistency]{hoadley1971}, however, without obtaining the result of asymptotic existence.
Furthermore, one condition of \citet{philippou1975} is consistency, whose proof already requires some ideas needed to show asymptotic normality as well.
Our proof generalizes the one of \citet[Section 33.3]{cramer1946} for a scalar parameter in the i.i.d.\ case for unpenalized estimation to a parameter vector in the i.non-i.d.\ case for penalized estimation.
This includes replacing the weak Law of Large Numbers (Khintchine's Theorem) and the (Lindeberg-L\'{e}vy) Central Limit Theorem with variants not requiring identically distributed data and including the penalty term, as well as
using a multivariate generalization of the intermediate value theorem.

%
We denote convergence in probability with $\xra[N \ra \infty]{\Pbb}$ and convergence in distribution with $\xra[N \ra \infty]{\mathrm{D}}$.
Furthermore, we write $\Zf_N = \ocal_{\Pbb}(a_N)$ to denote $\frac{\Zf_N}{a_N} \xra[N \ra \infty]{\Pbb} 0$ for a sequence of random vectors $(\Zf_N)_{N \in \Nbb}$ and a sequence of real numbers $(a_N)_{N \in \Nbb}$.

\begin{lem}\label{P2a:lem_assumption_smoothing_parameters_implies_convergence_penalty}
For any sequence of real numbers $(a_N)_{N \in \Nbb}$, such that
\begin{align*}
\xi_{\max} := \max \{\xi_{\Xcal, \, 1}, \ldots , \xi_{\Xcal, \, J}, \xi_{\Ycal, \, 1}, \ldots , \xi_{\Ycal, \, J} \} = \ocal_{\Pbb}(a_N),
\end{align*}
we have
$\frac1{a_N} \Vert \Pf \Vert_{\mathrm{op}} 
\xra[N \ra \infty]{\Pbb} 0$
and
$\frac1{a_N} 
\Pf \thetaf 
\xra[N \ra \infty]{\Pbb} \mathbf{0}$ for every $\thetaf \in \Rbb^K$, where $\Vert \Af \Vert_{\mathrm{op}} 
:= \sup_{\wf \in \Rbb^K \setminus\{\mathbf{0}\}} \frac{\Vert \Af \wf \Vert_2}{\Vert \wf \Vert_2}$ denotes the spectral norm of $\Af \in \Rbb^{K \times K}$.
\end{lem}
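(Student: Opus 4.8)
The plan is to unwind the definition of the penalty matrix $\Pf$ and reduce the statement to elementary facts about the spectral norm of a block‑diagonal matrix together with the hypothesis $\xi_{\max} = \ocal_{\Pbb}(a_N)$. Recall that $\Pf = \diag(\Pf_1, \ldots, \Pf_J)$ with $\Pf_j = \xi_{\Xcal,\,j}(\Pf_{\Xcal,\,j}\otimes\If_{K_{\Ycal}}) + \xi_{\Ycal,\,j}(\If_{K_{\Xcal,\,j}}\otimes\Pf_{\Ycal})$; in particular the only randomness in $\Pf$ enters through the smoothing parameters, while the marginal penalty matrices $\Pf_{\Xcal,\,j}$, $\Pf_{\Ycal}$ are fixed (deterministic). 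So the first step is to bound $\Vert\Pf\Vert_{\mathrm{op}}$ deterministically in terms of $\xi_{\max}$ and a fixed constant.

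First I would observe that for a block‑diagonal matrix the spectral norm is the maximum of the spectral norms of the blocks, $\Vert\Pf\Vert_{\mathrm{op}} = \max_{j} \Vert\Pf_j\Vert_{\mathrm{op}}$. Then, by the triangle inequality and submultiplicativity of the spectral norm under Kronecker products ($\Vert A\otimes B\Vert_{\mathrm{op}} = \Vert A\Vert_{\mathrm{op}}\,\Vert B\Vert_{\mathrm{op}}$, using $\Vert\If\Vert_{\mathrm{op}}=1$), each block satisfies
\begin{align*}
\Vert\Pf_j\Vert_{\mathrm{op}} \le \xi_{\Xcal,\,j}\,\Vert\Pf_{\Xcal,\,j}\Vert_{\mathrm{op}} + \xi_{\Ycal,\,j}\,\Vert\Pf_{\Ycal}\Vert_{\mathrm{op}} \le \xi_{\max}\bigl(\Vert\Pf_{\Xcal,\,j}\Vert_{\mathrm{op}} + \Vert\Pf_{\Ycal}\Vert_{\mathrm{op}}\bigr).
\end{align*}
Setting $C := \max_{j}\bigl(\Vert\Pf_{\Xcal,\,j}\Vert_{\mathrm{op}} + \Vert\Pf_{\Ycal}\Vert_{\mathrm{op}}\bigr)$, a finite deterministic constant, gives $\Vert\Pf\Vert_{\mathrm{op}} \le C\,\xi_{\max}$, hence $\frac1{a_N}\Vert\Pf\Vert_{\mathrm{op}} \le C\,\frac{\xi_{\max}}{a_N}$.

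From here both claims follow quickly. Since $\xi_{\max} = \ocal_{\Pbb}(a_N)$ means $\xi_{\max}/a_N \xra[N\ra\infty]{\Pbb} 0$, multiplying by the constant $C$ and invoking the continuous mapping theorem (or simply the definition of convergence in probability together with the sandwich $0 \le \frac1{a_N}\Vert\Pf\Vert_{\mathrm{op}} \le C\,\xi_{\max}/a_N$) yields $\frac1{a_N}\Vert\Pf\Vert_{\mathrm{op}} \xra[N\ra\infty]{\Pbb} 0$. For the second claim, fix $\thetaf\in\Rbb^K$ and note $\Vert\frac1{a_N}\Pf\thetaf\Vert_2 \le \frac1{a_N}\Vert\Pf\Vert_{\mathrm{op}}\,\Vert\thetaf\Vert_2$, which converges to $0$ in probability by the first claim (again a sandwich argument), so $\frac1{a_N}\Pf\thetaf \xra[N\ra\infty]{\Pbb}\mathbf{0}$.

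Honestly there is no real obstacle here — the lemma is a bookkeeping step separating the deterministic structure of $\Pf$ from the stochastic order of $\xi_{\max}$. The only points requiring a modicum of care are: (i) writing the spectral norm of a block‑diagonal matrix as the max over blocks, and (ii) making sure the constant $C$ is genuinely independent of $N$, which it is because the marginal penalty matrices and the basis dimensions are fixed by the model specification (cf.\ assumption~\ref{P2a:assumption_penalties_positive} and the standing assumptions on bases/penalties in Section~\ref{P2a:chapter_arbitrary_bases_and_penalties}). I would state these two facts explicitly and then let the $\ocal_{\Pbb}$ calculus do the rest.
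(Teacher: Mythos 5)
Your proposal is correct and follows essentially the same route as the paper: exploit the block-diagonal structure to write $\Vert\Pf\Vert_{\mathrm{op}}$ as a maximum over blocks, bound each block deterministically by $\xi_{\max}$ times a fixed constant, and transfer the $\ocal_{\Pbb}(a_N)$ hypothesis via the resulting sandwich and the submultiplicativity bound $\Vert\Pf\thetaf\Vert_2\le\Vert\Pf\Vert_{\mathrm{op}}\Vert\thetaf\Vert_2$. The only cosmetic difference is your choice of constant (triangle inequality plus Kronecker-norm multiplicativity, giving $\Vert\Pf_{\Xcal,\,j}\Vert_{\mathrm{op}}+\Vert\Pf_{\Ycal}\Vert_{\mathrm{op}}$) versus the paper's $P_{\max}$ built from the norm of the unweighted sum, which changes nothing of substance.
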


\begin{proof}
Set 
$P_{\max} := \max
\left\{ \Vert (\Pf_{\Xcal, \, j} \otimes \If_{K_{\Ycal}}) + (\If_{K_{\Xcal, \, j}} \otimes \Pf_{\Ycal}) \Vert_{\mathrm{op}}
~|~
j \in \{1, \ldots , J\} \right\}$.
Since $\Pf$ is block diagonal, we have $\Vert \Pf \Vert_{\mathrm{op}} = \max \{ \Vert \Pf_1 \Vert_{\mathrm{op}}, \ldots , \Vert \Pf_J \Vert_{\mathrm{op}} \}$ \citep[e.g.,][(5.2.12)]{meyer2000}.
Furthermore, for any $j \in \{ 1, \ldots , J\}$, we have
$ 
\Vert \Pf_j \Vert_{\mathrm{op}}
\leq 
\xi_{\max} 
\,
P_{\max} 
$ 
and thus in particular $ \Vert \Pf \Vert_{\mathrm{op}} \leq \xi_{\max} \, P_{\max}$.
Hence, for any $\varepsilon > 0$,
\begin{align*}
\Pbb_{\varthetaf} \left( \Bigl\vert \frac1{a_N} \Vert \Pf \Vert_{\mathrm{op}} - 0 \Bigr\vert \geq \varepsilon \right)
\leq
\Pbb_{\varthetaf} \left( \frac1{a_N} \xi_{\max} \geq \frac{\varepsilon}{P_{\max}} \right)
\xra[N \ra \infty]{} 0,
\end{align*}
and using $\Vert \Pf \thetaf \Vert_2 \leq \Vert \Pf \Vert_{\mathrm{op}} \, \Vert \thetaf \Vert_2$ (by definition of the spectral norm),
\begin{align*}
\Pbb_{\varthetaf} \left( \Bigl\Vert \frac1{a_N} \Pf \thetaf - \mathbf{0} \Bigr\Vert_2 \geq \varepsilon \right)
\leq
\Pbb_{\varthetaf} \left( \frac1{a_N} \xi_{\max} \geq \frac{\varepsilon}{P_{\max} \, \Vert \thetaf \Vert_2} \right)
\xra[N \ra \infty]{} 0,
\end{align*}
for any $\thetaf \in \Rbb^K$, i.e., $\frac1{a_N} \Vert \Pf \Vert_{\mathrm{op}} \xra[N \ra \infty]{\Pbb} 0$
and
$\frac1{a_N} \Vert \Pf \thetaf \Vert_2 \xra[N \ra \infty]{\Pbb} 0$.
\end{proof}

\begin{lem}\label{P2a:lem_score_convergesP_zero}
Let $\nabla \varphi := ( \frac{\partial}{\partial \theta_{1}} \varphi (\thetaf), \ldots ,\frac{\partial}{\partial \theta_{K}} \varphi (\thetaf))^\top$ denote the gradient of a function $\varphi : \Rbb^K \ra \Rbb$ 
and $\Ff (\varthetaf)$ the Fisher information of $\ell (\varthetaf)$.
\begin{enumerate}[label=\arabic*)]
\item\label{P2a:lem_score_convergesP_zero_item_expectation_zero}
$\Ebb_{\varthetaf} \left( \nabla \log f_{\xf_i, \varthetaf} (Y_i) \right) = \mathbf{0}$ for all $i \in \{1, \ldots , N\}$,
\item\label{P2a:lem_score_convergesP_zero_item_variance_fisherinfo}
$\Var_{\varthetaf} ( \nabla \ell (\varthetaf) )
= \Var_{\varthetaf} ( \sum_{i=1}^N \nabla \log f_{\xf_i, \varthetaf} (Y_i) )
= \Ff (\varthetaf)$,
\item\label{P2a:lem_score_convergesP_zero_item_convergesP_unpenalized}
$\frac1{N} \nabla \ell (\varthetaf) 
\xra[N \ra \infty]{\Pbb} \mathbf{0}$,
\item\label{P2a:lem_score_convergesP_zero_item_convergesP}
Assume that
$\xi_{\max} 
= \ocal_{\Pbb}(N)$.
Then, $\frac1{N} \nabla \ell_{\mathrm{pen}} (\varthetaf) \xra[N \ra \infty]{\Pbb} \mathbf{0}$. 
\end{enumerate}
\end{lem}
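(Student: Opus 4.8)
The plan is to handle the four parts in sequence, each being a version of a standard score-function identity adapted to our independent but non-identically distributed observations; the first two parts are algebraic, the last two are a weak law of large numbers plus the bound from Lemma~\ref{P2a:lem_assumption_smoothing_parameters_implies_convergence_penalty}.

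For \ref{P2a:lem_score_convergesP_zero_item_expectation_zero} I would first compute the gradient explicitly. Writing $\log f_{\xf_i,\varthetaf}(y) = \bft(\xf_i)(y)^\top\varthetaf - \log\int_\Ycal\exp[\bft(\xf_i)^\top\varthetaf]\,\dmu$ and differentiating in $\varthetaf$ — the interchange of derivative and integral being justified exactly as in the proof of Proposition~\ref{P2a:prop_appendix_fisher_informations_positive_definite}\ref{P2a:prop_appendix_bayes_fisher_information_positive_definite} (boundedness of the components of $\bft_\Ycal$, finiteness of $\mu$, Corollary~16.3 in Bauer) — gives $\nabla\log f_{\xf_i,\varthetaf}(y) = \bft(\xf_i)(y) - \int_\Ycal f_{\xf_i,\varthetaf}\,\bft(\xf_i)\,\dmu$. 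Taking the expectation under $Y_i\mid\xf_i$ and using $\Ebb_{\varthetaf}(\bft(\xf_i)(Y_i)) = \int_\Ycal f_{\xf_i,\varthetaf}\,\bft(\xf_i)\,\dmu$ yields $\mathbf{0}$. For \ref{P2a:lem_score_convergesP_zero_item_variance_fisherinfo}, since $\nabla\ell(\varthetaf) = \sum_{i=1}^N\nabla\log f_{\xf_i,\varthetaf}(Y_i)$ and the $Y_i$ are independent, $\Var_{\varthetaf}(\nabla\ell(\varthetaf)) = \sum_{i=1}^N\Var_{\varthetaf}(\nabla\log f_{\xf_i,\varthetaf}(Y_i))$, and by \ref{P2a:lem_score_convergesP_zero_item_expectation_zero} each summand equals $\Ebb_{\varthetaf}(\nabla\log f_{\xf_i,\varthetaf}(Y_i)\,\nabla\log f_{\xf_i,\varthetaf}(Y_i)^\top)$. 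I would then match its $(k,k')$-entry with the Hessian entry $\frac{\partial^2}{\partial\theta_k\partial\theta_{k'}}C_i(\varthetaf)$ computed in \eqref{P2a:eq_hesse_matrix_log_likelihood}: expanding the square, abbreviating $\mu_k := \int f_{\xf_i,\varthetaf}(\bft(\xf_i))_{[k]}\,\dmu$, and using $\int f_{\xf_i,\varthetaf}\,\dmu = 1$, one gets exactly $-\Hf_i(\varthetaf)$, i.e.\ $\Ff_i(\varthetaf)$; summing over $i$ and invoking $\Ff(\varthetaf) = -\Hf(\varthetaf) = -\Ebb_{\varthetaf}(\Hf(\varthetaf))$ from the Proposition gives $\Ff(\varthetaf)$.

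For \ref{P2a:lem_score_convergesP_zero_item_convergesP_unpenalized} I would argue by Chebyshev. By \ref{P2a:lem_score_convergesP_zero_item_expectation_zero} and \ref{P2a:lem_score_convergesP_zero_item_variance_fisherinfo}, $\Ebb_{\varthetaf}(\tfrac1N\nabla\ell(\varthetaf)) = \mathbf{0}$ and $\Var_{\varthetaf}(\tfrac1N\nabla\ell(\varthetaf)) = \tfrac1{N^2}\Ff(\varthetaf) = \tfrac1{N^2}\sum_{i=1}^N\Ff_i(\varthetaf)$. Boundedness of the basis functions in $\bft_\Ycal$ and in $\bfe_\Xcal$ (assumption~\ref{P2a:assumption_domain_basis}), finiteness of $\mu$, and $\int f_{\xf_i,\varthetaf}\,\dmu = 1$ give a constant $C$, independent of $i$ and $N$, with $\Vert\Ff_i(\varthetaf)\Vert_{\mathrm{op}}\le C$; hence $\Vert\Var_{\varthetaf}(\tfrac1N\nabla\ell(\varthetaf))\Vert_{\mathrm{op}}\le C/N\ra 0$, and applying Chebyshev's inequality to each of the $K$ components yields $\tfrac1N\nabla\ell(\varthetaf)\xra[N\ra\infty]{\Pbb}\mathbf{0}$. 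Finally \ref{P2a:lem_score_convergesP_zero_item_convergesP}: since $\nabla\ell_{\mathrm{pen}}(\varthetaf) = \nabla\ell(\varthetaf) + \nabla\pen(\varthetaf) = \nabla\ell(\varthetaf) - 2\Pf\varthetaf$, I split $\tfrac1N\nabla\ell_{\mathrm{pen}}(\varthetaf) = \tfrac1N\nabla\ell(\varthetaf) - \tfrac2N\Pf\varthetaf$; the first term converges to $\mathbf{0}$ in probability by \ref{P2a:lem_score_convergesP_zero_item_convergesP_unpenalized}, the second by Lemma~\ref{P2a:lem_assumption_smoothing_parameters_implies_convergence_penalty} applied with $a_N = N$ (using the hypothesis $\xi_{\max} = \ocal_{\Pbb}(N)$), so the sum does as well.

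None of these steps is deep; the only genuinely delicate bookkeeping is the information-matrix identity in \ref{P2a:lem_score_convergesP_zero_item_variance_fisherinfo}, where the second moment of the score has to be matched entrywise with the negative Hessian from the Proposition, and — to a lesser extent — making the uniform-in-$i$ bound $\Vert\Ff_i(\varthetaf)\Vert_{\mathrm{op}}\le C$ in \ref{P2a:lem_score_convergesP_zero_item_convergesP_unpenalized} fully explicit.
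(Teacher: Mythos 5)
Your proposal is correct and follows essentially the same route as the paper's proof: explicit score computation with the Bauer-type interchange of derivative and integral for \ref{P2a:lem_score_convergesP_zero_item_expectation_zero}, independence plus entrywise matching of the score's second moment with the negative Hessian from \eqref{P2a:eq_hesse_matrix_log_likelihood} for \ref{P2a:lem_score_convergesP_zero_item_variance_fisherinfo}, a componentwise Chebyshev argument based on a uniform-in-$i$ bound on the score (the paper bounds the partial derivatives directly rather than $\Vert\Ff_i(\varthetaf)\Vert_{\mathrm{op}}$, which is only a cosmetic difference) for \ref{P2a:lem_score_convergesP_zero_item_convergesP_unpenalized}, and the split $\tfrac1N\nabla\ell(\varthetaf)-\tfrac2N\Pf\varthetaf$ with Lemma~\ref{P2a:lem_assumption_smoothing_parameters_implies_convergence_penalty} for \ref{P2a:lem_score_convergesP_zero_item_convergesP}.
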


\begin{rem}
For independent and identically distributed observations, \ref{P2a:lem_score_convergesP_zero_item_convergesP_unpenalized} -- which is also used to show \ref{P2a:lem_score_convergesP_zero_item_convergesP} -- would follow from the weak Law of Large Numbers. 
However, we consider independent, but non-identically distributed data and thus prove the result by showing the variances of the first derivatives of the log-likelihood contribution are bounded and applying Chebyshev's inequality.
\end{rem}

\begin{proof}[Proof of Lemma~\ref{P2a:lem_score_convergesP_zero}]
\begin{enumerate}[label=\arabic*)]
\item
Let $i \in \{ 1, \ldots , N\}$.
Provided that
\begin{align}
\int_{\Ycal} \nabla f_{\xf_i, \varthetaf} \, \dmu
= \nabla \int_{\Ycal} f_{\xf_i, \varthetaf} \, \dmu ,
\label{P2a:eq_exchange_integral_derivative_density}
\end{align}
we directly obtain the result, using that $f_{\xf_i, \thetaf}$ integrates to $1$ for all 
$\thetaf \in \Rbb^K 
$:
\begin{align}
\Ebb_{\varthetaf} ( \nabla \log f_{\xf_i, \varthetaf} (Y_i))
= \int_{\Ycal} \Bigl( \frac1{f_{\xf_i, \varthetaf}} \nabla f_{\xf_i, \varthetaf} \Bigr) f_{\xf_i, \varthetaf} \, \dmu
= \nabla \int_{\Ycal} f_{\xf_i, \varthetaf} \, \dmu
= \mathbf{0}.
\label{P2a:eq_expectation_score_zero}
\end{align}
To show \eqref{P2a:eq_exchange_integral_derivative_density}, we verify the three properties given in Corollary 16.3 in \citet{bauer2001} for the considered function $f_{\xf_i, \thetaf} = \frac{\exp [\bft (\xf_i)^\top \thetaf]}{\int_{\Ycal} \exp [ \bft (\xf_i)^\top \thetaf ]\, \dmu}$. 
First, $f_{\xf_i, \thetaf}$ is $\mu$-inte\-grable for all 
$\thetaf \in \Rbb^K 
$.
Second,
the partial derivative 
$\frac{\partial}{\partial \theta_k} f_{\xf_i, \thetaf} (y_i)$ exists. 
It is given as
\begin{align*}
\frac{\exp [\bft (\xf_i)(y_i)^\top \thetaf] \bigl( \int_{\Ycal} \exp [ \bft (\xf_i)^\top \thetaf ]\, \dmu \, \bft (\xf_i)(y_i)_{[k]} - \int_{\Ycal} \exp [ \bft (\xf_i)^\top \thetaf ] \, \bft (\xf_i)_{[k]}\, \dmu \bigr)}{\bigl( \int_{\Ycal} \exp [ \bft (\xf_i)^\top \thetaf ]\, \dmu\bigr)^2} ,
\end{align*} 
where we used $\frac{\partial}{\partial \theta_k} \int_\Ycal \exp [ \bft (\xf_i)^\top \thetaf ] \, \dmu = \int_\Ycal \frac{\partial}{\partial \theta_k} \exp [ \bft (\xf_i)^\top \thetaf ] \, \dmu$ as shown in the paragraph before \eqref{P2a:eq_first_derivative_C_i}.
For the third required property, note that since the basis functions contained in $\bft_\Ycal$ and $\bfe_{\Xcal}$ are bounded and $\mu$ is finite, $|\frac{\partial}{\partial \theta_k} f_{\xf_i, \thetaf}|$ is bounded on an open ball 
$\upsilon_\varepsilon(\thetaf) := \{\thetaf' \in \Rbb^K 
~|~ \Vert \thetaf' - \thetaf \Vert_2 < \varepsilon \}$ for each 
$\thetaf \in \Rbb^K 
$ and $\varepsilon > 0$ by a nonnegative finite constant, 
which defines a constant and thus $\mu$-integrable function on $\Ycal$.
Thus, \eqref{P2a:eq_exchange_integral_derivative_density} follows with \citet[Corollary 16.3]{bauer2001} and \eqref{P2a:eq_expectation_score_zero} is proven.
\item
From \ref{P2a:lem_score_convergesP_zero_item_expectation_zero}, we have
\begin{align*}
\Var_{\varthetaf} \Bigl( \sum_{i=1}^N \nabla \log f_{\xf_i, \varthetaf} (Y_i) \Bigr)
&= \Ebb_{\varthetaf} \Bigl[ \Bigl( \sum_{i=1}^N \nabla \log f_{\xf_i, \varthetaf} (Y_i) \Bigr) \Bigl( \sum_{j=1}^N \nabla \log f_{\xf_j, \varthetaf} (Y_j) \Bigr)^\top \Bigr] 
,
\end{align*}
whose $(k, k')$-th entry is given as 
$ 
\sum_{i, j=1}^N \Ebb_{\varthetaf} ( \frac{\partial}{\partial \theta_k} \log f_{\xf_i, \varthetaf} (Y_i) \frac{\partial}{\partial \theta_{k'}} \log f_{\xf_j, \varthetaf} (Y_j) )
$ 
for $k, k' \in \{1, \ldots , K\}$.
Since $(Y_1, \Xf_1), \ldots , (Y_N, \Xf_N)$ are independent, $\frac{\partial}{\partial \theta_k} \log f_{\xf_i, \varthetaf} (Y_i)$ and $ \frac{\partial}{\partial \theta_{k'}} \log f_{\xf_j, \varthetaf} (Y_j)$ are independent as well. 
With \eqref{P2a:eq_expectation_score_zero}, we obtain that all summands with $i \neq j$ are zero.
Using \eqref{P2a:eq_first_derivative_C_i}, 
the remaining summands (with $i = j$) are
\begin{align*}
&\, \int_{\Ycal} \frac{\partial}{\partial \theta_k} \log f_{\xf_i, \varthetaf} \, \frac{\partial}{\partial \theta_{k'}} \log f_{\xf_i, \varthetaf} \, f_{\xf_i, \varthetaf} \, \dmu
\\
=& \, \int_{\Ycal} \Bigl( \bft (\xf_i)_{[k]} - \int_\Ycal f_{\xf_i, \varthetaf} \, \bft (\xf_i)_{[k]} \, \dmu \Bigr) 
\, \Bigl( \bft (\xf_i)_{[k']} - \int_\Ycal f_{\xf_i, \varthetaf} \, \bft (\xf_i)_{[k']} \, \dmu \Bigr) \, f_{\xf_i, \varthetaf} \, \dmu
\\
=& \int_{\Ycal} f_{\xf_i, \varthetaf} \, \bft (\xf_i)_{[k]} \, \bft (\xf_i)_{[k']} \, \dmu
- \int_{\Ycal} f_{\xf_i, \varthetaf} \,  \bft (\xf_i)_{[k]} \, \dmu \, \int_\Ycal f_{\xf_i, \varthetaf} \, \bft (\xf_i)_{[k']} \, \dmu
.
\end{align*}
From~\eqref{P2a:eq_hesse_matrix_log_likelihood}
we obtain that this is the $(k, k')$-th element of the $i$-th contribution of the negative Hessian matrix of $\ell (\varthetaf)$ 
and thus,
$\Var_{\varthetaf} ( \sum_{i=1}^N \nabla \log f_{\xf_i, \varthetaf} (Y_i) ) = \Ff (\varthetaf)$. 
\item
Since the basis functions contained in $\bft_\Ycal$ and $\bfe_{\Xcal}$ are bounded and $\mu$ is finite, there exists $\sigma_{\varthetaf}$ 
such that for all $k = 1, \ldots , K$ and all $i = 1, \ldots , N$
\begin{align*}
\Bigl|\frac{\partial}{\partial \theta_k} \log f_{\xf_i, \varthetaf} \Bigr|
= \Bigl| \bft (\xf_i)(y_i)_{[k]} - \int_\Ycal f_{\xf_i, \varthetaf} \, \, (\bft (\xf_i))_{[k]} \, \dmu \Bigr| 
\leq \sigma_{\varthetaf}
\end{align*} 
and thus,
$\Var_{\varthetaf} ( \frac{\partial}{\partial \theta_k} \log f_{\xf_i, \varthetaf} (Y_i) )
= \int_{\Ycal} ( \frac{\partial}{\partial \theta_k} \log f_{\xf_i, \varthetaf} )^2 \, f_{\xf_i, \varthetaf} \, \dmu
\leq \sigma_{\varthetaf}^2$.
Noting that $\{ \Vert \frac1{N} \sum_{i=1}^N  \nabla \log f_{\xf_i, \varthetaf} (Y_i) \Vert_2 \geq \varepsilon \} \subseteq \bigcup_{k=1}^K \{ | \frac1{N} \sum_{i=1}^N \frac{\partial}{\partial \vartheta_k} \log f_{\xf_i, \varthetaf} (Y_i) | \geq \frac{\varepsilon}{\sqrt{K}} \}$ for all $\varepsilon >0$,
%
%
using \ref{P2a:lem_score_convergesP_zero_item_expectation_zero}, and applying Chebyshev's inequality \citep[e.g.,][Corollary 4.7.3]{borovkov2013}, we have for all $\varepsilon >0$,
\begin{align*}
& \Pbb_{\varthetaf} \Bigl( \Bigl\Vert \frac1{N} \sum_{i=1}^N  \nabla \log f_{\xf_i, \varthetaf} (Y_i) - \mathbf{0} \Bigr\Vert_2 \geq \varepsilon \Bigr)
\\
\leq ~& \sum_{k=1}^K \Pbb_{\varthetaf} \Bigl( \Bigl| \frac1{N} \sum_{i=1}^N \frac{\partial}{\partial \theta_k} \log f_{\xf_i, \varthetaf} (Y_i) \Bigr| \geq \frac{\varepsilon}{\sqrt{K}} \Bigr) \notag
\\
\leq ~& \sum_{k=1}^K \frac{ K \,\Var_{\varthetaf} (\frac1{N} \sum_{i=1}^N \frac{\partial}{\partial \theta_k} \log f_{\xf_i, \varthetaf} (Y_i)  )}{\varepsilon^2} \notag
\\
\leq ~& \frac{K^2 N \sigma_{\varthetaf}^2}{N^2 \varepsilon^2}
\xra[N \ra \infty]{} 0,
\end{align*}
where we used that $(Y_i, \Xf_i)$ and thus 
$\frac{\partial}{\partial \theta_k} \log f_{\xf_i, \varthetaf} (Y_i)
$ are independent for $i = 1, \ldots , N$. 
Hence, $\frac1{N} \nabla \ell (\varthetaf) \xra[N \ra \infty]{\Pbb} \mathbf{0}$.
\item
For the proof of 
$ \frac1{N} \nabla \ell_{\mathrm{pen}} (\varthetaf) 
\xra[N \ra \infty]{\Pbb} \mathbf{0}$, 
note that
$
\frac1{N} \nabla \ell_{\mathrm{pen}} (\varthetaf) 
= \frac1{N} \nabla \ell (\varthetaf) - \frac1{N} \, 2 \, \Pf \varthetaf
$.
By~\ref{P2a:lem_score_convergesP_zero_item_convergesP_unpenalized}, the first term converges in probability to $\mathbf{0}$.
Furthermore, using $\xi_{\max} = \ocal_{\Pbb}(N)$, also the second term converges in probability to $\mathbf{0}$ by Lemma~\ref{P2a:lem_assumption_smoothing_parameters_implies_convergence_penalty}.
Let $\varepsilon > 0$.
Noting that
$
\{ \Vert \frac1{N} \nabla \ell (\varthetaf) - \frac1{N} \, 2 \, \Pf \varthetaf \Vert_2 \geq \varepsilon \} 
\subseteq \{ \Vert \frac1{N} \nabla \ell (\varthetaf) \Vert_2 + \Vert \frac1{N} \, 2 \, \Pf \varthetaf \Vert_2 \geq \varepsilon \}
\subseteq \{ \Vert \frac1{N} \nabla \ell (\varthetaf) \Vert_2 \geq \frac{\varepsilon}2 ~ \vee ~ \Vert \frac1{N} \, 2 \, \Pf \varthetaf \Vert_2  \geq \frac{\varepsilon}2\}
$
%
%
we obtain,
\begin{align*}
\Pbb \Bigl( \Bigl\Vert 
\frac1{N} \nabla \ell_{\mathrm{pen}} (\varthetaf) 
- \mathbf{0} \Bigr\Vert_2 \geq \varepsilon \Bigr)
&= \Pbb \Bigl( \Bigl\Vert 
\frac1{N} \nabla \ell (\varthetaf) - \frac1{N} \, 2 \, \Pf \varthetaf \Bigr\Vert_2 \geq \varepsilon \Bigr)
\\
&\leq
\Pbb \Bigl( \Bigl\Vert \frac1{N} \nabla \ell (\varthetaf) \Bigr\Vert_2 \geq \frac{\varepsilon}2 \Bigr) + \Pbb \Bigl( \Bigl\Vert \frac1{N} \, 2 \, \Pf \varthetaf \Bigr\Vert_2  \geq \frac{\varepsilon}2 \Bigr)
\\
&\xra[N \ra \infty]{} 0,
\end{align*}
i.e., $\frac1{N} \nabla \ell_{\mathrm{pen}} (\varthetaf) \xra[N \ra \infty]{\Pbb} \mathbf{0}$.
\qedhere
\end{enumerate}
\end{proof}

\begin{lem}\label{P2a:lem_lower_bound_quadratic_form_Fisher_info}
Assume that 
assumption \ref{P2a:assumption_appendix_stone_1} 
holds.
Then, for each $\thetaf \in \Rbb^K 
$, there exist $M > 0$ (independent of $N$) 
such that for all $\vf \in \Rbb^K$ and all $N \geq N'$,
\begin{align*}
N \, M \, \Vert \vf \Vert_2^2 \leq \vf^\top \Ff_{\mathrm{pen}} (\thetaf) \vf ,
\end{align*}
where $\Ff_{\mathrm{pen}} (\thetaf)$ denotes the Fisher information of the penalized log-likelihood $\ell_{\mathrm{pen}} (\thetaf)$. 
In particular, this shows that there exists $N' \in \Nbb$ such that for all $N \geq N'$, we have
$N \, M \, \Id_K \preceq \Ff_{\mathrm{pen}} (\thetaf)$,
where $\Af \preceq \Bf$ denotes that $\Bf - \Af$ is positive semi-definite for symmetric matrices $\Af, \Bf \in \Rbb^{K \times K}$. 
\end{lem}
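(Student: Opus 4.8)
The plan is to reduce the lower bound on the penalized Fisher information to a lower bound on the unpenalized Fisher information, and then to relate the latter to the quadratic form of assumption~\ref{P2a:assumption_appendix_stone_1}. First I would recall from Proposition~\ref{P2a:prop_appendix_fisher_informations_positive_definite}~\ref{P2a:prop_appendix_bayes_fisher_information_positive_definite} that $\Ff_{\mathrm{pen}}(\thetaf) = \Ff(\thetaf) + 2\Pf$, where $\Ff(\thetaf) = -\Hf(\thetaf)$ is the unpenalized Fisher information and $\Pf$ is positive semi-definite by assumption~\ref{P2a:assumption_penalties_positive}. Hence $\vf^\top \Ff_{\mathrm{pen}}(\thetaf)\vf \geq \vf^\top \Ff(\thetaf)\vf$ for all $\vf$, and it suffices to prove the bound $N\,M\,\Vert\vf\Vert_2^2 \leq \vf^\top\Ff(\thetaf)\vf$.

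For the unpenalized part I would start from the closed form of the quadratic form obtained in~\eqref{P2a:eq_quatratic_form_Hessematrix}, namely
\begin{align*}
\vf^\top \Ff(\thetaf)\vf = \sum_{i=1}^N \int_\Ycal f_{\xf_i,\thetaf}\cdot\Bigl(\bft(\xf_i)^\top\vf - \int f_{\xf_i,\thetaf}\cdot\bft(\xf_i)^\top\vf\,\dmu\Bigr)^2\dmu,
\end{align*}
which is the $\mu$-weighted variance of the function $y\mapsto s_{\vf,\xf_i}(y) = \bft(\xf_i)(y)^\top\vf$ under the probability density $f_{\xf_i,\thetaf}$. The key analytic step is to bound this weighted variance from below by $c_{\thetaf}\cdot\Vert s_{\vf,\xf_i}\Vert_{L^2_0(\mu)}^2 = c_{\thetaf}\cdot(\bfe_{\Xcal}(\xf_i)^\top\vf_{\Xcal})^2\,\Vert\bft_{\Ycal}^\top\vf_{\Ycal}\Vert_{L^2(\mu)}^2$ type quantities, uniformly over $i$, using that $f_{\xf_i,\thetaf}$ is bounded away from zero on $\Ycal$ by a constant depending only on $\thetaf$ (via boundedness of the basis functions and finiteness of $\mu$, as in the proof of Proposition~\ref{P2a:prop_appendix_fisher_informations_positive_definite}) — but since $\xf_i$ ranges over the compact set $\Xcal$, the relevant infimum $\min\{f_{\xf,\thetaf}(y)\mid \xf\in\Xcal,\,y\in\Ycal\}>0$ exists. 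Concretely, if $m_{\thetaf}$ and $M_{\thetaf}$ are lower and upper bounds for $f_{\xf,\thetaf}$, the weighted variance is at least $m_{\thetaf}$ times the unweighted (i.e.\ $\mu$-normalized) variance, and the latter is comparable to the $L^2_0(\mu)$-seminorm of $s_{\vf,\xf_i}$. Decomposing $\vf$ as in Lemma~\ref{P2a:lem_scalar_product_constant_implies_0} and using that $\bft_{\Ycal}$ consists of linearly independent $L^2_0(\mu)$-functions (so the induced norm on the coefficient space is equivalent to $\Vert\cdot\Vert_2$), one arrives at
\begin{align*}
\vf^\top\Ff(\thetaf)\vf \geq c_{\thetaf}\sum_{i=1}^N \bigl(\bfe_{\Xcal}(\xf_i)^\top\vf_{\Xcal}\bigr)^2\cdot\bigl(\text{a positive quadratic form in }\vf_{\Ycal}\text{-parts of }\vf\bigr).
\end{align*}

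Here is where assumption~\ref{P2a:assumption_appendix_stone_1} enters: for $N\geq N'$ it gives $\sum_{i=1}^N(\bfe_{\Xcal}(\xf_i)^\top\vf_{\Xcal})^2 \geq M'\,N\int_\Xcal(\bfe_{\Xcal}(\xf)^\top\vf_{\Xcal})^2\dnuf(\xf)$, and the right-hand integral defines a norm on $\Rbb^{K_{\Xcal}}$ (equivalent to $\Vert\cdot\Vert_2$, cf.\ the proof of Lemma~\ref{P2a:lem_assumptions}~\ref{P2a:lem_stone_assumptions_implication}), so it is bounded below by a constant times $\Vert\vf_{\Xcal}\Vert_2^2$. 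Combining this with the $\vf_{\Ycal}$-lower bound and reassembling yields $N\,M\,\Vert\vf\Vert_2^2 \leq \vf^\top\Ff(\thetaf)\vf \leq \vf^\top\Ff_{\mathrm{pen}}(\thetaf)\vf$ with $M$ independent of $N$. The statement $N\,M\,\Id_K \preceq \Ff_{\mathrm{pen}}(\thetaf)$ is then just the matrix reformulation of the scalar inequality over all $\vf\in\Rbb^K$. I expect the main obstacle to be making the two ``norm equivalence'' arguments fully uniform — in particular, handling the mixed-space structure of $\bft_{\Ycal}$ carefully so that the $\mu$-weighted variance of $s_{\vf,\xf_i}$ is genuinely controlled from below by a fixed positive definite quadratic form, rather than just being strictly positive for each fixed $\vf$ as in Proposition~\ref{P2a:prop_appendix_fisher_informations_positive_definite}; this requires a compactness argument over $\Sbb^{K-1}\times\Xcal$ together with the integral identity, rather than a pointwise non-degeneracy statement.
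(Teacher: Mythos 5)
Your overall route is the paper's: drop the penalty using positive semi-definiteness of $\Pf$, start from the representation \eqref{P2a:eq_quatratic_form_Hessematrix}, bound $f_{\xf_i,\thetaf}$ below by $M_{\thetaf}=\inf_{\xf\in\Xcal,\,y\in\Ycal}f_{\xf,\thetaf}(y)>0$, use the integrate-to-zero constraint to kill the centering term, feed the result into \ref{P2a:assumption_appendix_stone_1}, and finish with equivalence of norms on a finite-dimensional space. There is, however, one step that does not work as you state it: you write $\Vert s_{\vf,\xf_i}\Vert_{L^2(\mu)}^2$ as $(\bfe_{\Xcal}(\xf_i)^\top\vf_{\Xcal})^2\,\Vert\bft_{\Ycal}^\top\vf_{\Ycal}\Vert_{L^2(\mu)}^2$ and later bound $\vf^\top\Ff(\thetaf)\vf$ below by $\sum_{i}(\bfe_{\Xcal}(\xf_i)^\top\vf_{\Xcal})^2$ times ``a positive quadratic form in the $\vf_{\Ycal}$-parts''. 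A general $\vf\in\Rbb^K$ is not a rank-one tensor $\vf_{\Xcal}\otimes\vf_{\Ycal}$, so no such factorization exists, and \ref{P2a:assumption_appendix_stone_1} cannot be applied to a single fixed vector $\vf_{\Xcal}\in\Rbb^{K_{\Xcal}}$. The paper's way around this is to set $\vf_{\Xcal}:=(\bft_{\Ycal}^\top\vf_{1,1},\ldots,\bft_{\Ycal}^\top\vf_{J,K_{\Xcal,\,J}})\in L^2_0(\mu)^{K_{\Xcal}}$, i.e.\ a vector of \emph{functions} of $y$, so that $\bft(\xf_i)^\top\vf=\bfe_{\Xcal}(\xf_i)^\top\vf_{\Xcal}$ holds as an identity of functions; after interchanging sum and integral, \eqref{P2a:eq_stone_1} is applied pointwise in $y$ (for each fixed $y$, $\vf_{\Xcal}(y)$ is a genuine vector in $\Rbb^{K_{\Xcal}}$), giving the lower bound $M_{\thetaf}\,M'\,N\int_{\Ycal}\int_{\Xcal}(\bft(\xf)^\top\vf)^2\,\dnuf(\xf)\,\dmu$, and the last step is a single norm equivalence for the joint quantity $\Vert\vf\Vert_{\bft}:=[\int_{\Ycal}\int_{\Xcal}(\bft(\xf)^\top\vf)^2\,\dnuf\,\dmu]^{\frac12}$ on $\Rbb^K$ --- not two separate marginal equivalences that would then have to be ``reassembled''.

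Relatedly, the compactness argument over $\Sbb^{K-1}\times\Xcal$ that you anticipate as the main obstacle would not help: for a fixed $\xf$ the variance of $s_{\vf,\xf}$ under $f_{\xf,\thetaf}$ can vanish for suitable $\vf\neq\mathbf{0}$ (e.g.\ when the relevant entries of $\bfe_{\Xcal}(\xf)$ are zero), so the infimum of the per-observation contribution over $\Sbb^{K-1}\times\Xcal$ is in general $0$. Uniformity in $\vf$ must instead come from aggregating over the observations via \ref{P2a:assumption_appendix_stone_1} and from finite-dimensional norm equivalence, exactly as above; with the ``vector of functions'' correction your argument coincides with the paper's proof.
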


\begin{proof}
We first show the statement for the unpenalized setting. 
The proof 
is inspired by \citet[Lemma 11]{stone1991}, but uses some different ideas to obtain the result. 
Let $\thetaf \in \Rbb^K 
$.
Since $\bfe_{\Xcal}$ and $\bft_{\Ycal}$ are bounded, and $\mu$ is finite, $M_{\thetaf} := \inf_{\xf \in \Xcal, \, y \in \Ycal} f_{\xf, \thetaf} (y) > 0$ exists.
For $\vf \in \Rbb^K$, set $\vf_{\Xcal} := (\bft_{\Ycal}^\top \vf_{1,1}, \ldots , \bft_{\Ycal}^\top \vf_{J, K_{\Xcal, \, J}}) \in L_0^2(\mu)^{K_{\Xcal}}$ with $\vf_{j, n} = (v_{j, n, 1} , \ldots , v_{j, n, K_{\Ycal}})^\top$, where $v_{j, n, m}$ for $j = 1, \ldots , J,~n = 1, \ldots , K_{\Xcal, \, j}$, and $m = 1, \ldots , K_{\Ycal}$, corresponds to the respective entry of $\vf$, where the indices are arranged in the same order as the vector $\thetaf \in \Rbb^K$ containing the values $\theta_{j, n, m}$ in Section~2.2. 
It is straightforward to show $\bft (\xf_i)^\top \vf = \bfe_{\Xcal} (\xf_i)^\top \vf_{\Xcal}$ for $i = 1, \ldots , N$.
Let $M' > 0$ and $N' \in \Nbb$ such that~\eqref{P2a:eq_stone_1} holds for $N \geq N'$.
From~\eqref{P2a:eq_quatratic_form_Hessematrix}, 
we obtain for the Fisher information $\Ff (\thetaf)$ of $\ell (\thetaf)$, 
\begin{align*}
\vf^\top \Ff (\thetaf) \vf
&= \sum_{i=1}^N \int_\Ycal f_{\xf_i, \thetaf} \, \Bigl( \bft (\xf_i)^\top \vf - \int f_{\xf_i, \thetaf} \cdot \bft (\xf_i)^\top \vf \, \dmu \Bigr)^2 \, \dmu
\\
&\geq 
M_{\thetaf} \sum_{i=1}^N \int_\Ycal \Bigl( \bft (\xf_i)^\top \vf - \int f_{\xf_i, \thetaf} \cdot \bft (\xf_i)^\top \vf \, \dmu \Bigr)^2 \, \dmu
\\
&\geq M_{\thetaf} \sum_{i=1}^N  \int_\Ycal \left( \bft (\xf_i)^\top \vf \right) ^2 \, \dmu - 2 \, \int_\Ycal \bft (\xf_i)^\top \vf \, \dmu \, \int f_{\xf_i, \thetaf} \cdot \bft (\xf_i)^\top \vf \, \dmu 
\\
&\hspace{-0.1cm}\overset{(\ast_1)}{=} M_{\thetaf} \int_\Ycal \sum_{i=1}^N \left( \bfe_{\Xcal} (\xf_i)^\top \vf_{\Xcal} \right) ^2 \, \dmu
\\
&\hspace{-0.25cm}\overset{\eqref{P2a:eq_stone_1}}{\geq} M_{\thetaf} \, M' \, N \int_\Ycal \int_{\Xcal} \left( \bft (\xf)^\top \vf \right) ^2 \, \dnuf (\xf) \, \dmu
\\
&\hspace{-0.1cm}\overset{(\ast_2)}{\geq} N \, (M_{\thetaf} \, M' \, M_{\bft}) \Vert \vf \Vert_2^2 
\end{align*}
where in $(\ast_1)$, we used $\bft (\xf_i)^\top \vf \in L_0^2(\mu)$ and thus $\int_\Ycal \bft (\xf_i)^\top \vf \, \dmu = 0$.
For $(\ast_2)$, note that $\Vert \vf \Vert_{\bft} := [\int_\Ycal \int_{\Xcal} ( \bft (\xf)^\top \vf ) ^2 \, \dnuf (\xf) \, \dmu]^{\frac12}$ defines a norm on $\Rbb^K$ (which is verified analogously to $\Vert \cdot \Vert_{\bfe_{\Xcal}}$ being a norm in the proof of Lemma~\ref{P2a:lem_assumptions}~\ref{P2a:lem_assumption_stone_implication} via the corresponding inner product, using the linear independence of the functions contained in $\bft_{\Ycal}$ and of the functions contained in $\bfe_{\Xcal}$) and thus $M_{\bft} > 0$ as in $(\ast_2)$ exists 
\citep[e.g.,][Corollary 5.4.5]{horn2012}.
Finally, for $M := M_{\thetaf} \, M' \, M_{\bft}$ we then have $\vf^\top \Ff (\thetaf) \vf - N \, M \, \Vert \vf \Vert_2^2 \geq 0$ and equivalently $\vf^\top (\Ff (\thetaf) - N \, M \, \Id) \vf \geq 0$.

Now consider the penalized case, i.e., $\Ff_{\mathrm{pen}} (\thetaf) = \Ff (\thetaf) + 2 \, \Pf$. 
Since $\Pf$ is positive semi-definite (which follows from the matrices $\Pf_j$ being positive semi-definite for all $j \in \{1, \ldots , J\}$), 
we have $\vf^\top 2 \, \Pf \vf \geq 0$ and thus
$\vf^\top (\Ff_{\mathrm{pen}} (\thetaf) - N \, M \, \Id) \vf \geq \vf^\top (\Ff (\thetaf) - N \, M \, \Id) \vf \geq 0$ and equivalently $\vf^\top (\Ff_{\mathrm{pen}} (\thetaf) - N \, M \, \Id) \vf \geq 0$.
Thus, $N \, M \, \Id \preceq \Ff_{\mathrm{pen}} (\thetaf)$ for $N \geq N'$.
\qedhere
\end{proof}

\begin{proof}[Proof of Theorem~\ref{P2:thm_MLE_asymptotic_normal}] 
\begin{enumerate}[label=\arabic*)]
\item
We showed in Lemma~\ref{P2a:lem_assumptions}~\ref{P2a:lem_stone_assumptions_implication} that assumption \ref{P2a:assumption_appendix_stone_1} implies assumption \ref{P2a:assumption_appendix_covariate_basis_nonsingular} and thus, by Proposition~\ref{P2a:prop_appendix_fisher_informations_positive_definite}, the Hessian matrix $\Hf_{\mathrm{pen}} (\thetaf)$ of $\ell_{\mathrm{pen}} (\thetaf)$ is negative definite for large enough $N$.
Hence, every root of $\nabla \ell_{\mathrm{pen}}$ is a maximizer of $\ell_{\mathrm{pen}}$.
Furthermore, 
$\frac1{N} \, 
\Hf_{\mathrm{pen}} (\varthetaf)
$ is a symmetric (and thus normal) matrix and the spectral theorem \citep[e.g.,][Theorem 2.5.3]{horn2012} yields that there is an orthonormal basis $\vf_1, \ldots , \vf_K$ of $\Rbb^K$ and values $\lambda_1, \ldots , \lambda_K \in \Rbb$ such that $\frac1{N} \, \Hf_{\mathrm{pen}} (\varthetaf) \vf_k = \lambda_k \vf_k$ for all $k = 1, \ldots , K$.
We show: 
\begin{enumerate}
\nitem{(Q)}\label{P2a:claim_solution_in_cube}
For all $\varepsilon, \delta > 0$, there exists an $N_0 \in \Nbb$, such that for all $N \geq N_0$, the equation $\nabla \ell_{\mathrm{pen}}(\thetaf) = \mathbf{0}$ has a solution in the cube 
$Q_\delta (\varthetaf) := \bigtimes_{k=1}^K (\varthetaf - \frac{\delta}{K} \vf_k, \varthetaf + \frac{\delta}{K} \vf_k)$ on a set whose probability exceeds $1 - \varepsilon$. 
\end{enumerate}
In the following, we write $f (\thetaf) 
:= f_{\xf , \thetaf} (y) = \frac{\exp [\bft (\xf)(y)^\top \thetaf]}{\int_{\Ycal} \exp [ \bft (\xf)^\top \thetaf ]\, \dmu}$ for arbitrary $(y, \xf) \in \Ycal \times \Xcal$ and $f_i (\thetaf) 
:= f_{\xf_i , \thetaf} (y_i)$ for $i = 1, \ldots , N$, since we are interested in $\thetaf$ as function argument.
We have 
$\ell_{\mathrm{pen}} (\thetaf) 
= \sum_{i = 1}^N ( \log f_i (\thetaf) + \frac1{N} \pen (\thetaf))
$.
Note that for almost all $(y, \xf) \in \Ycal \times \Xcal$ the derivatives $\nabla ( \log f (\thetaf) + \frac1{N} \pen (\thetaf) ), D^2 (\log f (\thetaf) + \frac1{N} \pen (\thetaf))$, and $D^3 (\log f (\thetaf) + \frac1{N} \pen (\thetaf)) = D^3 \log f (\thetaf)$ exist 
for every $\thetaf \in \Rbb^K 
$. 
Thus, 
we can expand the first derivative of $\log f (\thetaf) + \frac1{N} \pen (\thetaf)$ around the true value $\varthetaf$ using Taylor's formula \citep[e.g.,][Theorem 1.97]
{laczkovich2017} 
for every $\thetaf \in \Rbb^K 
$ as
\begin{align}
\nabla \Bigl( \log f (\thetaf) + \frac1{N} \pen (\thetaf) \Bigr)
&= \nabla \log f (\varthetaf) - \frac1{N} \, 2 \, \Pf \varthetaf \notag
\\
&\hspace{0.5cm} 
+ \Bigl( D^2 \log f (\varthetaf) - \frac1{N} \, 2 \, \Pf \Bigr) (\thetaf - \varthetaf) \notag
\\
&\hspace{0.5cm} + \frac1{2} (\thetaf - \varthetaf)^\top D^3 \log f (\thetaft) (\thetaf - \varthetaf)
, 
\label{P2a:eq_taylor_expansion_score_function}
\end{align}
for a $\thetaft = \thetaft_{(y, \xf)}$ 
in the segment
$\{ \varthetaf + \alpha \, (\thetaf - \varthetaf) ~|~ \alpha \in [0, 1] \}$
connecting $\varthetaf$ and $\thetaf$, 
and $D^3 \log f (\thetaft) := (D^2 \frac{\partial}{\partial \theta_1} \log f (\thetaft), \ldots , D^2 \frac{\partial}{\partial \theta_K} \log f (\thetaft)) \in \Rbb^{K \times K \times K}$.\footnote{For $\vf \in \Rbb^K$ and $\Wcalf = (\Wf_1 , \ldots , \Wf_K) \in \Rbb^{K \times K \times K}$, where $\Wf_k \in \Rbb^{K \times K}$ for $k = 1, \ldots , K$,
$\vf^\top \Wcalf \in \Rbb^{K \times K}$ is the matrix containing $\vf^\top \Wf_k$ in the $k$-th row, $k = 1, \ldots, K$. 
}
Building on~\eqref{P2a:eq_hessian_bayes}, 
where $\frac{\partial^2}{\partial \theta_k \partial \theta_{k'}} C_i(\thetaf) = \frac{\partial^2}{\partial \theta_k \partial \theta_{k'}}\log f_i (\thetaf)$, we obtain for $k, k', k'' \in \{ 1, \ldots , K\}$ that $\frac{\partial^3}{\partial \theta_k \partial \theta_{k'} \partial \theta_{k''}} \log f (\thetaf)$ is a sum of products, whose factors have the form
$ 
\pm \frac{ \int_\Ycal \exp [ \bft (\xf)^\top \thetaf ] \, \prod_{j \in \{k, k', k''\}} (\bft (\xf)_{[j]})^{\delta_j} \, \dmu}{\int_\Ycal \exp [ \bft (\xf)^\top \thetaf ] \, \dmu},
$ 
where $\delta_j \in \{0, 1\}$ and $\bft (\xf)_{[j]}$ denotes the $j$-th entry of the vector $\bft (\xf)$.
Since the basis functions contained in $\bft_\Ycal$ and $\bfe_{\Xcal}$ are bounded and $\mu$ is finite, there thus exists an $M > 0$, such that 
$ 
\sup_{\thetaf \in Q_\delta (\varthetaf)} \Vert D^3 \log f (\thetaf) \Vert_{\mathrm{op}} \leq M 
$ 
for almost all $(y, \xf) \in \Ycal \times \Xcal$, where 
$\Vert \Wcalf \Vert_{\mathrm{op}} 
:= \sup_{\vf \in \Rbb^K \setminus\{\mathbf{0}\}} \frac{\Vert \vf^\top \Wcalf \Vert_{\mathrm{op}}}{\Vert \vf \Vert_2}$ 
denotes the operator norm of $\Wcalf \in \Rbb^{K \times K \times K}$ with $\Vert \Af \Vert_{\mathrm{op}} 
:= \sup_{\wf \in \Rbb^K \setminus\{\mathbf{0}\}} \frac{\Vert \Af \wf \Vert_2}{\Vert \wf \Vert_2}$ the spectral norm of $\Af \in \Rbb^{K \times K}$.
Thus,
\begin{align*}
\Bigl\Vert \frac{D^3 \log f (\thetaft)}{M} \Bigr\Vert_{\mathrm{op}} 
\leq \frac1{M} \sup_{\thetaf \in Q_\delta (\varthetaf)} \left\Vert D^3 \log f (\thetaf) \right\Vert_{\mathrm{op}} 
\leq 1 ,
\end{align*}
and \eqref{P2a:eq_taylor_expansion_score_function} can be written as
\begin{align*}
\nabla \Bigl( \log f (\thetaf) + \frac1{N} \pen (\thetaf) \Bigr)
&= \nabla \log f (\varthetaf) - \frac1{N} \, 2 \, \Pf \varthetaf \notag
\\
&\hspace{0.5cm} 
+ \Bigl( D^2 \log f (\varthetaf) - \frac1{N} \, 2 \, \Pf \Bigr) (\thetaf - \varthetaf) \notag
\\
&\hspace{0.5cm} + \frac{M}{2} (\thetaf - \varthetaf)^\top \Ecalf_{(y, \xf)} (\thetaf - \varthetaf), 
\end{align*}
where $\Ecalf_{(y, \xf)} := \frac{D^3 \log f (\thetaft)}M 
\in \Rbb^{K \times K \times K}$ with $\Vert \Ecalf_{(y, \xf)} \Vert_{\mathrm{op}} \leq 1$.
Thus,
the equation $\nabla \ell_{\mathrm{pen}} (\thetaf) = \mathbf{0}$ is equivalent to
\begin{align}
\mathbf{0} 
&= \varsigmaf_1 + \varsigmaf_2 (\thetaf - \varthetaf) + \frac{M}{2} 
(\thetaf - \varthetaf)^\top \Ecalf (\thetaf - \varthetaf)
,
\label{P2a:eq_score_equation_taylor}
\end{align}
where
\begin{align*}
\varsigmaf_1 
&:= \frac1{N} \sum_{i = 1}^N \nabla \log f_i (\varthetaf) - \frac1{N} \, 2 \, \Pf \varthetaf
= \frac1{N} \nabla \ell_{\mathrm{pen}} (\varthetaf),
\\
\varsigmaf_2 
&:= \frac1{N} \sum_{i = 1}^N D^2 \log f_i (\varthetaf) - \frac1{N} \, 2 \, \Pf = \frac1{N} \, \Hf_{\mathrm{pen}} (\varthetaf) ,
&& \text{and} &&
\Ecalf 
:= \frac1{N} \sum_{i=1}^N \Ecalf_{(y_i, \xf_i)} .
\end{align*}
Note that $\Vert \Ecalf \Vert_{\mathrm{op}} \leq \frac1{N} \sum_{i=1}^N \Vert \Ecalf_{(y_i, \xf_i)} \Vert_{\mathrm{op}} \leq 1$.
Now let $\varepsilon, \delta > 0$.
Since 
$\xi_{\max} = \ocal_{\Pbb}(N)$,
we have from Lemma~\ref{P2a:lem_score_convergesP_zero}~\ref{P2a:lem_score_convergesP_zero_item_convergesP} 
that 
$\lim_{N \ra \infty} \Pbb_{\varthetaf} (\Vert \varsigmaf_1 \Vert_2 \geq \delta^2) = 0$. 
Thus, 
there exists an $N_1 \in \Nbb$ such that for all $N \geq N_1$, we have
$ 
\Pbb_{\varthetaf} (\Vert \varsigmaf_1 \Vert_2 \geq \delta^2) < \varepsilon
.
$ 
Setting
$\Psi_{N} := \{(y_1, \xf_1) , \ldots , (y_{N}, \xf_{N}) ~|~ 
\Vert \varsigmaf_1 \Vert_2 < \delta^2 
\}$,
this yields $\Pbb_{\varthetaf} (\Psi_{N}) \geq 1 - \varepsilon$ for all $N \geq N_1$.
We show that there exists $N_0 \geq N_1$ such that \eqref{P2a:eq_score_equation_taylor} has a solution in the cube $Q_\delta (\varthetaf) 
$
on the set $\Psi_{N}$ (and thus with probability $\geq 1 - \varepsilon$) for $N \geq N_0$. 
For this purpose, we use a multivariate generalization of the intermediate value theorem, the Poincar\'e-Miranda Theorem\footnote{The theorem is (without loss of generality) formulated for the cube $[0, 1]^K$, we use the straightforward generalization for arbitrary cubes. 
} \citep[e.g.,][]{kulpa1997}.
Let $k \in \{1, \ldots, K\}$ and denote the $k$-th opposite faces of $Q_\delta (\varthetaf)$ with 
\begin{align*}
Q_k^\pm &:= 
\Bigl\{ \thetaf = \varthetaf + \frac1{K} \, \sum_{k'=1}^K \alpha_{k'} \vf_{k'} ~\Big|~ \alpha_k = \pm \delta ~\wedge \forall k' \in \{1, \ldots , K\}: \alpha_{k'} \in [-\delta , \delta] \Bigr\}
.
\end{align*} 
We show that
$\langle v_k , \varsigmaf_1 + \varsigmaf_2 (\thetaf - \varthetaf) + \frac{M}{2} (\thetaf - \varthetaf)^\top \Ecalf (\thetaf - \varthetaf)\rangle$ is positive for all $\thetaf \in Q_k^-$ and negative for all $\thetaf \in Q_k^+$.
For $\thetaf \in Q_k^- \cup Q_k^+$, we have 
\begin{align*}
\langle \vf_k , \varsigmaf_2 (\thetaf - \varthetaf) \rangle_2
&= \frac1{K} \, \Bigl\langle \vf_k , \varsigmaf_2 \sum_{k'=1}^K \alpha_{k'} \vf_{k'} \Bigr\rangle_2
= \frac1{K} \, \sum_{k'=1}^K \alpha_{k'} \langle \vf_k , \varsigmaf_2 \vf_{k'} \rangle_2
\\
&= \frac1{K} \, \sum_{k'=1}^K \alpha_{k'} \lambda_{k'} \langle \vf_k , \vf_{k'} \rangle_2
= \frac1{K} \, \alpha_k \lambda_k
\\
&= \frac1{K} \, \begin{cases}
- \delta \lambda_k 
& , \, \thetaf \in Q_k^-
\\
\delta \lambda_k 
& , \, \thetaf \in Q_k^+
\end{cases} \,
.
\end{align*}
From Lemma~\ref{P2a:lem_lower_bound_quadratic_form_Fisher_info}, 
we obtain that there exist $\Lambda > 0$ and $N_2 \in \Nbb$ such that for all $N \geq N_2$, we have $\varsigmaf_2 \preceq -\Lambda \, \If_{K}$ 
which implies $\lambda_k \leq -\Lambda$ for all $k = 1, \ldots , K$. \citep[Corollary 7.7.4 (c)]{horn2012}.
Set $N_0 := \max \{N_1, N_2\}$ and let $N \geq N_0$ in the following.
Then, 
\begin{align*}
\frac{\delta \, \Lambda}{K}
\leq -\frac{\delta \, \lambda_k}{K}
= \begin{cases}
\langle \vf_k , \varsigmaf_2 (\thetaf - \varthetaf) \rangle_2 & , \, \thetaf \in Q_k^-
\\
- \langle \vf_k , \varsigmaf_2 (\thetaf - \varthetaf) \rangle_2 & , \, \thetaf \in Q_k^+
\end{cases}
\, .
\end{align*}
Furthermore, for data in $\Psi_{N}$, $\thetaf \in Q_k^- \cup Q_k^+$ and $\delta < \frac{\Lambda}{K \, (1 + M)}$, using the Cauchy-Schwarz inequality \citep[e.g.,][Theorem 5.1.4]{horn2012} yields\footnote{If $\delta \geq \frac{\Lambda}{K \, (1 + M)}$, consider $\deltat < \frac{\Lambda}{K \, (1 + M)}$ and adapt $N_1$ above if necessary. We then effectively show that \eqref{P2a:eq_score_equation_taylor} has a solution in $Q_{\deltat} (\varthetaf) \subset Q_\delta (\varthetaf)$.}
\begin{align*}
&\, \Bigl\vert \Bigl\langle \vf_k , \varsigmaf_1 + \frac{M}2 (\thetaf - \varthetaf)^\top \Ecalf (\thetaf - \varthetaf) \Bigr\rangle_2 \Bigr\vert
\\
\leq &\, 
\Vert \vf_k \Vert_2 \, 
\Bigl\Vert \varsigmaf_1 + \frac{M}2 (\thetaf - \varthetaf)^\top \Ecalf (\thetaf - \varthetaf) \Bigr\Vert_2 
\\
\leq &\, 
\Vert \varsigmaf_1 \Vert_2 + \frac{M}2 \Bigl\Vert \Bigl(\frac1{K} \, \sum_{k'=1}^K \alpha_{k'} \vf_{k'} \Bigr)^\top \Ecalf \Bigl(\frac1{K} \, \sum_{k'=1}^K \alpha_{k'} \vf_{k'} \Bigr) \Bigr\Vert_2
\\
\overset{(\ast)}{<} &\, \delta^2 + \frac{M}2 \Bigl(\frac1{K} \, \sum_{k'=1}^K | \alpha_{k'} |  \, \Vert \Ecalf \Vert_{\mathrm{op}} \, \frac1{K} \, \sum_{k'=1}^K | \alpha_{k'} | \Bigr)
\\
\leq &\, \delta^2 + M \, \delta^2 
\leq (1 + M \, ) \delta^2
< \frac{\Lambda \, \delta}{K}
\leq 
\begin{cases}
\langle \vf_k , \varsigmaf_2 (\thetaf - \varthetaf) \rangle_2 & , \, \thetaf \in Q_k^-
\\
- \langle \vf_k , \varsigmaf_2 (\thetaf - \varthetaf) \rangle_2 & , \, \thetaf \in Q_k^+
\end{cases}
\, ,
\end{align*}
where we used in $(\ast)$ that 
$\Vert \Wcalf \Vert_{\mathrm{op}} = \sup_{\vf, \wf \in \Rbb^K \setminus\{\mathbf{0}\}} \frac{\Vert \vf^\top \Wcalf \wf \Vert_2}{\Vert \vf \Vert_2 \, \Vert \wf \Vert_2}$ yields 
$\Vert \wf^\top \Wcalf \wf \Vert_2$ $\leq \Vert \wf \Vert_2 \, \Vert \Wcalf \Vert_{\mathrm{op}} \, \Vert \wf \Vert_2$ for all $\Wcalf \in \Rbb^{K \times K \times K}$ and $\wf \in \Rbb^K$.
In particular, the inequality above yields $\langle \vf_k , \varsigmaf_1 + \frac{M}2 (\thetaf - \varthetaf)^\top \Ecalf (\thetaf - \varthetaf) \rangle_2 < - \langle \vf_k , \varsigmaf_2 (\thetaf - \varthetaf) \rangle_2$ for $\thetaf \in Q_k^+$ and
$- \langle \vf_k , \varsigmaf_1 + \frac{M}2 (\thetaf - \varthetaf)^\top \Ecalf (\thetaf - \varthetaf) \rangle_2 < \langle \vf_k , \varsigmaf_2 (\thetaf - \varthetaf) \rangle_2$ for $\thetaf \in Q_k^-$, 
which is equivalent to 
$\langle \vf_k , \varsigmaf_1 + \frac{M}2 (\thetaf - \varthetaf)^\top \Ecalf (\thetaf - \varthetaf) \rangle_2 > - \langle \vf_k , \varsigmaf_2 (\thetaf - \varthetaf) \rangle_2$.
Thus,
\begin{align*}
& \, \Bigl\langle \vf_k , \varsigmaf_1 + \varsigmaf_2 (\thetaf - \varthetaf) + \frac{M}2 (\thetaf - \varthetaf)^\top \Ecalf (\thetaf - \varthetaf) \Bigr\rangle_2
\\
=& \, \Bigl\langle \vf_k , \varsigmaf_1 + \frac{M}2 (\thetaf - \varthetaf)^\top \Ecalf (\thetaf - \varthetaf) \Bigr\rangle_2
+ \langle \vf_k , \varsigmaf_2 (\thetaf - \varthetaf) \rangle_2
~
\begin{cases}
> 0 & , \, \thetaf \in Q_k^-
\\
< 0 & , \, \thetaf \in Q_k^+
\end{cases}
\, ,
\end{align*}
and the Poincar\'e-Miranda Theorem yields that \eqref{P2a:eq_score_equation_taylor} 
has a solution in $Q_\delta (\varthetaf)$. 
This proves statement \ref{P2a:claim_solution_in_cube}.
Since 
$Q_{\delta} (\varthetaf) \subseteq \{\thetaf \in \Rbb^K 
~|~ \Vert \thetaf - \varthetaf \Vert_2 < \delta \}$ this also implies consistency of the PMLE.
%
\item
First note that $\xi_{\max} = \ocal_{\Pbb}(\sqrt{N})$ implies $\xi_{\max} = \ocal_{\Pbb}(N)$ and thus, we have asymptotic existence and consistency of the PMLE $\thetafh$ by \ref{P2:thm_item_PMLE_cosistent}.
To show the asymptotic normality, we consider $\Ff_\alpha (\varthetaf) := \Ff (\varthetaf) + \alpha \, \Pf$ for $\alpha \geq 0$, 
which yields the Fisher information $\Ff_{\mathrm{pen}} (\varthetaf)
$ for $\alpha = 2$.
For any $\alpha \geq 0$,
$\Ff_\alpha (\varthetaf)$ is symmetric and positive definite by Proposition~\ref{P2a:prop_appendix_fisher_informations_positive_definite}. 
Thus, there exists a unique symmetric and positive definite 
matrix $\Ff_\alpha (\varthetaf)^{\nicefrac12} $ such that $\Ff_\alpha (\varthetaf)^{\nicefrac12}  \Ff_\alpha (\varthetaf)^{\nicefrac12}  = (\Ff_\alpha (\varthetaf)^{\nicefrac12} )^\top \Ff_\alpha (\varthetaf)^{\nicefrac12}  = \Ff_\alpha (\varthetaf)$ (\citealp[e.g.,][Theorem 7.2.6]{horn2012}\,\footnote{The Theorem actually assumes a symmetric and positive semi-definite matrix $\Af \in \Rbb^K$ and yields existence of a unique symmetric positive semi-definite square root $\Af^{\nicefrac12}$. However if $\Af$ is even strictly positive definite, one can prove identically that also the unique $\Af^{\nicefrac12}$ is strictly positive definite.}).
Since it is positive definite, its inverse $\Ff_\alpha (\varthetaf)^{-\nicefrac12}  := (\Ff_\alpha (\varthetaf)^{\nicefrac12} )^{-1}$ exists.
Considering~\eqref{P2a:eq_score_equation_taylor} for $\thetaf = \thetafh$, we obtain after 
reorganizing, 
\begin{align*}
(\thetafh - \varthetaf)
&= \Bigl( \Ff_{\mathrm{pen}} (\varthetaf) - \frac{NM}{2} (\thetafh - \varthetaf)^\top \Ecalf \Bigr)^{-1} N\varsigmaf_1 
. 
\end{align*}
Multiplying from the left with $\Ff_{\mathrm{pen}} (\varthetaf)^{\nicefrac12}$, inserting $\If_K = \Ff_{\mathrm{pen}} (\varthetaf)^{\nicefrac12}  \, \Ff_{\mathrm{pen}} (\varthetaf)^{-\nicefrac12}$ before $\varsigmaf_1$, 
and reorganizing yields 
\begin{align}
& \Ff_{\mathrm{pen}} (\varthetaf)^{\nicefrac12}  (\thetafh - \varthetaf) \notag
\\
= ~&\Bigl( \Ff_{\mathrm{pen}} (\varthetaf)^{-\nicefrac12}  \bigl( \Ff_{\mathrm{pen}} (\varthetaf) - \frac{N \, M}{2} (\thetafh - \varthetaf)^\top \Ecalf \bigr) \Ff_{\mathrm{pen}} (\varthetaf)^{-\nicefrac12}  \Bigr)^{-1} 
\, \Ff_{\mathrm{pen}} (\varthetaf)^{-\nicefrac12}  \, N \, \varsigmaf_1 
, \notag
\\ 
%
\intertext{and equivalently}
&\Ff_{\mathrm{pen}} (\varthetaf)^{\nicefrac12}  (\thetafh - \varthetaf) \notag
\\
=~ &\Bigl( 
\If_K - \frac{N \, M}{2} \Ff_{\mathrm{pen}} (\varthetaf)^{-\nicefrac12}  (\thetafh - \varthetaf)^\top \Ecalf \Ff_{\mathrm{pen}} (\varthetaf)^{-\nicefrac12}  
\Bigr)^{-1} 
\Ff_{\mathrm{pen}} (\varthetaf)^{-\nicefrac12}  \, N \, \varsigmaf_1 
\label{P2a:eq_asymptotic_distribution_factorization_zwischenschritt}
.
\end{align}
Here, the matrix $\Jf := \If_K - \frac{N \, M}{2} \Ff_{\mathrm{pen}} (\varthetaf)^{-\nicefrac12}  (\thetafh - \varthetaf)^\top \Ecalf \Ff_{\mathrm{pen}} (\varthetaf)^{-\nicefrac12}$ (and equivalently $\Ff_{\mathrm{pen}} (\varthetaf) - \frac{N \, M}{2} (\thetafh - \varthetaf)^\top \Ecalf$) 
is invertible with probability tending to one for $N \ra \infty$.
To see this, we first show for $\Ff_\alpha (\varthetaf) 
$ and in particular for $\Ff_{\mathrm{pen}} (\varthetaf) = \Ff_2 (\varthetaf)$:
\begin{enumerate}
\nitem{(F)}\label{P2a:claim_inverse_root_Fisher_info_bounded}
For every $\alpha \geq 0$, there exist $N_3 \in \Nbb$ and $C>0$ such that for all $N \geq N_3$:
\[
\Vert \Ff_\alpha (\varthetaf) ^{-\nicefrac12} \Vert_{\mathrm{op}} \leq \frac1{\sqrt{N \, C}}.
\]
\end{enumerate}
Since $\Ff_\alpha (\varthetaf)^{\nicefrac12}$ is symmetric, also $\Ff_\alpha (\varthetaf)^{-\nicefrac12}$ is symmetric and thus, denoting the set of eigenvalues of a matrix $\Af \in \Rbb^{K \times K}$ with $\sigma(\Af)$, $\max \sigma (\Ff_\alpha (\varthetaf)^{-\nicefrac12}) = \Vert \Ff_\alpha (\varthetaf)^{-\nicefrac12} \Vert_{\mathrm{op}}$ \citep[Theorem~11.28~(b)]{rudin1991}. 
Furthermore, 
$\sigma (\Ff_\alpha (\varthetaf)^{-\nicefrac12}) = \{ \frac1{\sqrt{\lambda}} ~|~ \lambda \in \sigma (\Ff_\alpha(\varthetaf)) \}$ due to the spectral mapping theorem \citep[Theorem~10.28~(b)]{rudin1991}.
From Lemma~\ref{P2a:lem_lower_bound_quadratic_form_Fisher_info}, 
there exist $C > 0$ and $N_3 \in \Nbb$ such that for $N \geq N_3$, we have $N \, C \, \Id \preceq \Ff_\alpha (\varthetaf)$ and thus $N \, C \leq \lambda$ for all $\lambda \in \sigma (\Ff_\alpha(\varthetaf))$~\citep[Corollary 7.7.4~(c)]{horn2012}.
This implies $\frac1{\sqrt{\lambda}} \leq \frac1{\sqrt{N \, C}}$ for all $\lambda \in \sigma (\Ff_\alpha(\varthetaf))$ and in particular $\Vert \Ff_\alpha (\varthetaf)^{-\nicefrac12} \Vert_{\mathrm{op}} = \max \sigma (\Ff_\alpha (\varthetaf)^{-\nicefrac12}) \leq \frac1{\sqrt{N \, C}}$, which proves statement~\ref{P2a:claim_inverse_root_Fisher_info_bounded}.
\\
Noting that $\Vert \Af \Bf \Vert_{\mathrm{op}} \leq \Vert \Af  \Vert_{\mathrm{op}} \Vert \Bf \Vert_{\mathrm{op}}$ for $\Af, \Bf \in \Rbb^{K \times K}$ \citep[Theorem~5.6.2~(c)]{horn2012} 
and recalling that $\Vert \Ecalf \Vert_{\mathrm{op}} \leq 1$, 
we then obtain for $N \geq N_3$,
\begin{align}
\Bigl\Vert \frac{N \, M}{2} \Ff_{\mathrm{pen}} (\varthetaf)^{-\nicefrac12}  (\thetafh - \varthetaf)^\top \Ecalf \Ff_{\mathrm{pen}} (\varthetaf)^{-\nicefrac12} \Bigr\Vert_{\mathrm{op}} 
& \leq \frac{N \, M}{2} \frac1{\sqrt{N \, C}} \Vert \thetafh - \varthetaf \Vert_2  \frac1{\sqrt{N \, C}}
\notag
\\
&= \frac{M}{2 \, C} \Vert \thetafh - \varthetaf \Vert_2,
\label{P2a:eq_Abschaetzung_Nenner_asymptotische_NV}
\end{align}
and thus
\begin{align*}
&\Pbb_{\varthetaf} \Bigl( \Bigl\Vert \frac{N \, M}{2} \Ff_{\mathrm{pen}} (\varthetaf)^{-\nicefrac12}  (\thetafh - \varthetaf)^\top \Ecalf \Ff_{\mathrm{pen}} (\varthetaf)^{-\nicefrac12} \Bigr\Vert_{\mathrm{op}} < 1 \Bigr)
\\ \geq ~&
\Pbb_{\varthetaf} \Bigl( \frac{M}{2 \, C} \Vert \thetafh - \varthetaf \Vert_2 < 1 \Bigr)
= \Pbb_{\varthetaf} \Bigl( \Vert \thetafh - \varthetaf \Vert_2 < \frac{2 \, C}{M} \Bigr)
\xra[N \ra \infty]{} 1
,
\end{align*}
since $\thetafh \xra[N \ra \infty]{\Pbb} \varthetaf $.
The invertibility of $\Jf 
$ with probability tending to one 
then follows via convergence of the Neumann series \citep[Statement~5.7]{alt2016}. 
Using $N \, \varsigmaf_1 = \nabla \ell (\varthetaf) 
- 2 \, \Pf \varthetaf$ and multiplying with $\If_K = \Ff (\varthetaf)^{\nicefrac12} \, \Ff (\varthetaf)^{-\nicefrac12}$ 
in \eqref{P2a:eq_asymptotic_distribution_factorization_zwischenschritt}, we obtain the equivalent equation
\begin{align}
\Ff_{\mathrm{pen}} (\varthetaf)^{\nicefrac12}  (\thetafh - \varthetaf)
&= \Jf^{-1} 
\Ff_{\mathrm{pen}} (\varthetaf)^{-\nicefrac12} \, \Ff (\varthetaf)^{\nicefrac12} \, \Ff (\varthetaf)^{-\nicefrac12} \, \nabla \ell (\varthetaf) \notag
\\
&\hspace{0.4cm}- \Jf^{-1} 
\Ff_{\mathrm{pen}} (\varthetaf)^{-\nicefrac12} \, 2 \, \Pf \varthetaf
\label{P2a:eq_asymptotic_distribution_factorization}
\end{align}

We now show
\begin{enumerate}[label=(\alph*)]
\item\label{P2a:item_inverse_matrix_convergence_in_probability}
$\Jf = \If_K - \frac{N \, M}{2} \Ff_{\mathrm{pen}} (\varthetaf)^{-\nicefrac12}  (\thetafh - \varthetaf)^\top \Ecalf \Ff_{\mathrm{pen}} (\varthetaf)^{-\nicefrac12} \xra[N \ra \infty]{\Pbb} \If_K$,
\item\label{P2a:item_Fpen_F_convergence_in_probability}
$\Ff_{\mathrm{pen}} (\varthetaf)^{-\nicefrac12} \, \Ff (\varthetaf)^{\nicefrac12} \xra[N \ra \infty]{\Pbb} \If_K$,
\item\label{P2a:item_convergence_in_distribution}
$\Ff (\varthetaf)^{-\nicefrac12} \, \nabla \ell (\varthetaf)
\xra[N \ra \infty]{\mathrm{D}} ~ N(\mathbf{0}, \If_K )$, 
\item\label{P2a:item_remainder_convergence_in_probability}
$\Jf^{-1} 
\Ff_{\mathrm{pen}} (\varthetaf)^{-\nicefrac12} \, 2 \, \Pf \varthetaf \xra[N \ra \infty]{\Pbb} \mathbf{0}$,
\end{enumerate}
which then with \eqref{P2a:eq_asymptotic_distribution_factorization} and Slutsky's thoerem \citep[e.g.,][Lemma 2.8
]{vaart1998} 
yields 
$\Ff_{\mathrm{pen}} (\varthetaf)^{\nicefrac12} (\thetafh - \varthetaf) 
\xra[N \ra \infty]{\mathrm{D}} 
N(\mathbf{0}, \If_K)$, 
i.e., $\thetafh 
\overset{\text{a}}{\sim} \mathcal{N} (\varthetaf, \Ff_{\mathrm{pen}} (\varthetaf)^{-1})$.
Proof of \ref{P2a:item_inverse_matrix_convergence_in_probability} to \ref{P2a:item_remainder_convergence_in_probability}:

\begin{enumerate}[label=(\alph*)]
\item
Using \eqref{P2a:eq_Abschaetzung_Nenner_asymptotische_NV} 
and $\thetafh \xra[N \ra \infty]{\Pbb} \varthetaf $ we obtain for any $\varepsilon > 0$,
\begin{align*}
&\, \Pbb_{\varthetaf} \Bigl( \Bigl\Vert \If_K - \frac{N \, M}{2} \Ff_{\mathrm{pen}} (\varthetaf)^{-\nicefrac12}  (\thetafh - \varthetaf)^\top \Ecalf \Ff_{\mathrm{pen}} (\varthetaf)^{-\nicefrac12}  - \If_K \Bigr\Vert_{\mathrm{op}} \geq \varepsilon \Bigr)
\\
\leq \, & \, \Pbb_{\varthetaf} \Bigl( \frac{M}{2 \,C} \Vert \thetafh - \varthetaf \Vert_2  \geq \varepsilon \Bigr) 
= \, \Pbb_{\varthetaf} \Bigl( \Vert \thetafh - \varthetaf \Vert_2 \geq \frac{2\varepsilon \, C}{M} \Bigr) 
\xra[N \ra \infty]{} 0
%
.
\end{align*}
\item
With $\Ff_{\mathrm{pen}} (\varthetaf) = \Ff (\varthetaf) + 2 \, \Pf$, we obtain
\begin{align*}
&\, \bigl\Vert \Ff_{\mathrm{pen}} (\varthetaf)^{-\nicefrac12} \, \Ff (\varthetaf)^{\nicefrac12} - \If_K \bigr\Vert_{\mathrm{op}}
\\
= &\, \bigl\Vert \left( (\Ff (\varthetaf) + 2 \, \Pf)^{-\nicefrac12} - \Ff (\varthetaf)^{-\nicefrac12} \right) \, \Ff (\varthetaf)^{\nicefrac12} \bigr\Vert_{\mathrm{op}} \notag
\\
= &\, \Bigl\Vert \Bigl( \int_0^1 \frac{\mathrm{d}}{\dt} \left( (\Ff (\varthetaf) + t \, 2 \, \Pf)^{-\nicefrac12} \right) \, \dt \Bigr) \, \Ff (\varthetaf)^{\nicefrac12} \Bigr\Vert_{\mathrm{op}} \notag
\\
= &\, \Bigl\Vert  \Bigl( \int_0^1 \, (\Ff (\varthetaf) + t \, 2 \, \Pf)^{-\nicefrac32} \, \dt \Bigr)  \, \Pf \, \Ff (\varthetaf)^{\nicefrac12} \Bigr\Vert_{\mathrm{op}} \notag
\\
\leq &\, \max_{t \in [0, 1]} \Vert (\Ff (\varthetaf) + t \, 2 \, \Pf)^{-\nicefrac12} \Vert_{\mathrm{op}}^3  \, \Vert \Pf \Vert_{\mathrm{op}} \, \Vert \Ff (\varthetaf)^{\nicefrac12} \Vert_{\mathrm{op}}.
\end{align*}
By~\ref{P2a:claim_inverse_root_Fisher_info_bounded}, there exists $N_3 \in \Nbb$ and $C > 0$, such that
\begin{align*}
\max_{t \in [0, 1]} \Vert (\Ff (\varthetaf) + t \, 2 \, \Pf)^{-\nicefrac12} \Vert_{\mathrm{op}}^3 \leq \frac{1}{\sqrt{N \, C}^3}
\end{align*} 
for all $N \geq N_3$. 
Furthermore, we have $\Vert \Ff (\varthetaf)^{\nicefrac12} \Vert_{\mathrm{op}} = \Vert \Ff (\varthetaf) \Vert_{\mathrm{op}}^{\nicefrac12}$, which follows from the spectral norm of a symmetric matrix being equal to its maximal eigenvalue \citep[Theorem~11.28~(b)]{rudin1991} and the spectral mapping theorem \citep[Theorem~10.28~(b)]{rudin1991}, and
%
%
\begin{align*}
\Vert \Ff (\varthetaf) \Vert_{\mathrm{op}}
&= \Bigl\Vert \sum_{i=1}^N D^2 \log f_i (\varthetaf) \Bigr\Vert_{\mathrm{op}} 
\leq N \, \max_{i \in \{1, \ldots , N\}} \Vert D^2 \log f_i (\varthetaf) \Vert_{\mathrm{op}} 
.
\end{align*}
Recalling from~\eqref{P2a:eq_hessian_bayes} that
$\frac{\partial^2}{\partial \theta_k \partial \theta_{k'}} \log f (\varthetaf)$
is a sum of products, whose factors have the form
$ 
\pm \frac{ \int_\Ycal \exp [ \bft (\xf)^\top \varthetaf ] \, \prod_{j \in \{k, k'\}} (\bft (\xf)_{[j]})^{\delta_j} \, \dmu}{\int_\Ycal \exp [ \bft (\xf)^\top \varthetaf ] \, \dmu},
$ 
where $\delta_j \in \{ 0, 1\}$ and $\bft (\xf)_{[j]}$ denotes the $j$-the entry of the vector $ \bft (\xf)$,
we obtain that
there exists an $M > 0$ such that $\sup_{\xf \in \Xcal} \Vert D^2 \log f (\varthetaf) 
\Vert_{\mathrm{op}} \leq M$, since the basis functions contained in $\bft_\Ycal$ and $\bfe_{\Xcal}$ are bounded and $\mu$ is finite.
Hence, $\Vert \Ff (\varthetaf)^{\nicefrac12} \Vert_{\mathrm{op}} \leq \sqrt{N \, M}$.
All in all, we have for every $\varepsilon > 0$ and $N \geq N_3$,
\begin{align*}
& \Pbb_{\varthetaf} \Bigl( \Bigl\Vert \Ff_{\mathrm{pen}} (\varthetaf)^{-\nicefrac12} \, \Ff (\varthetaf)^{\nicefrac12} - \If_K \Bigr\Vert_{\mathrm{op}} \geq \varepsilon \Bigr)
\\
\leq ~&
\Pbb_{\varthetaf} \Bigl( \frac{1}{\sqrt{N \, C}^3} \, \Vert \Pf \Vert_{\mathrm{op}}  \, \sqrt{N \, M} \geq \varepsilon \Bigr)
\\
= ~& 
\Pbb_{\varthetaf} \Bigl( \frac{1}{N} \, \Vert \Pf \Vert_{\mathrm{op}} \geq \frac{\sqrt{C^3} \, \varepsilon}{\sqrt{M}} \Bigr)
\xra[N \ra \infty]{} 0,
\end{align*}
by Lemma~\ref{P2a:lem_assumption_smoothing_parameters_implies_convergence_penalty}, since $\xi_{\max} = \ocal_{\Pbb} (N)$.
Thus, $\Ff_{\mathrm{pen}} (\varthetaf)^{-\nicefrac12} \, \Ff (\varthetaf)^{\nicefrac12} \xra[N \ra \infty]{\Pbb} \If_K$.
\item
Recall that $\nabla \ell (\varthetaf) = \sum_{i=1}^N \nabla \log f_i (\varthetaf)$.
We have $\Ebb_{\varthetaf} (\nabla \log f_i (\varthetaf) ) = \mathbf{0}$ for all $i = 1, \ldots ,N$ and
$\Var (\sum_{i=1}^N \nabla \log f_i (\varthetaf)) = \Ff (\varthetaf)$ from Lemma~\ref{P2a:lem_score_convergesP_zero}.
Thus, the multivariate version of the Berry–Esseen bound \citep[e.g.,][Theorem 1.1]{bentkus2005} yields
\begin{align}
&\sup_{A \in \Ccal} \Bigl\vert \Pbb_{\varthetaf} \Bigl( \Ff (\varthetaf)^{-\nicefrac12} \sum_{i=1}^N \nabla \log f_i (\varthetaf) \in A \Bigr) - \Pbb_{\varthetaf} (\Zf \in A) \Bigr\vert \notag
\\
\leq &\, c \, K^{\nicefrac14} \, \sum_{i=1}^N \Ebb_{\varthetaf} (\Vert \Ff (\varthetaf)^{-\nicefrac12}  \nabla \log f_i (\varthetaf) \Vert_2^3),
\label{P2a:eq_berry-esseen-bound_score_function}
\end{align}
where $\Ccal$ is the set of all convex subsets of 
$\Rbb^K 
$, $\Zf$ is a $K$-dimensional multivariate standard normally distributed random vector, 
and $c$ is some positive constant.
We now show
$\lim_{N \ra \infty} \sum_{i=1}^N \Ebb_{\varthetaf} (\Vert \Ff (\varthetaf)^{-\nicefrac12}  \nabla \log f_i (\varthetaf) \Vert_2^3) = 0$, which together with \eqref{P2a:eq_berry-esseen-bound_score_function} yields \ref{P2a:item_convergence_in_distribution}, since $\{ (v_1, \ldots , v_K) \in \Rbb^K 
~|~ v_1 \leq z_1 , \ldots , v_K \leq z_K \}$ is a convex subset of 
$\Rbb^K 
$ for all $(z_1, \ldots , z_K) \in \Rbb^K$. 
%
%
Note that since $\bfe_{\Xcal}$ and $\bft_{\Ycal}$ are bounded and $\mu$ is finite, there exists a $C_2 > 0$ such that for all $\xf_i \in \Xcal$,
\begin{align*}
&\, \Ebb_{\varthetaf} \bigl( \left\Vert \nabla \log f_{\xf_i, \varthetaf} (Y_i) \right\Vert_2^3 \bigr)
\\
= &\, \int \Bigl( \sum_{k=1}^K \Bigl( \bft (\xf_i)_{[k]} - \int_{\Ycal} f_{\xf_i, \varthetaf} \, \bft (\xf_i)_{[k]} \, \dmu \Bigr)^2 \Bigr)^{\frac32} \, f_{\xf_i , \varthetaf} \, \dmu
\leq C_2 ,
\end{align*}
where the partial derivatives are computed using~\eqref{P2a:eq_first_derivative_C_i}.
Using~\ref{P2a:claim_inverse_root_Fisher_info_bounded}, we obtain that there exist $N_3 \in \Nbb$ and $C>0$, such that $\Vert \Ff (\varthetaf)^{-\nicefrac12} \Vert_{\mathrm{op}} \leq \frac1{\sqrt{N \, C}}$ for $N \geq N_3$.
Thus, we have
\begin{align*}
\sum_{i=1}^N \Ebb_{\varthetaf} \bigl( \left\Vert \Ff (\varthetaf)^{-\nicefrac12}  \nabla \log f_i (\varthetaf) \right\Vert_2^3 \bigr)
&\leq
\left\Vert \Ff (\varthetaf)^{-\nicefrac12} \right\Vert_{\mathrm{op}}^3 \, \sum_{i=1}^N \Ebb_{\varthetaf} \bigl( \left\Vert  \nabla \log f_i (\varthetaf) \right\Vert_2^3 \bigr)
\\
&\leq
\frac{1}{\sqrt{N \, C}^3} \, N \, C_2
= \frac{1}{\sqrt{N \, C^3}}\, C_2
\xra[N \ra \infty]{} 0.
\end{align*}
\item
From \ref{P2a:item_inverse_matrix_convergence_in_probability}, we have $\Jf \xra[N \ra \infty]{\Pbb} \If_K$. 
Since the matrix inversion function 
is continuous (compare Lemma~\ref{P2a:lem_matrix_inversion_continuous}), 
we obtain $\Jf^{-1} \xra[N \ra \infty]{\Pbb} \If_K$ by the continuous mapping theorem \citep[e.g.,][Theorem~2.3]{vaart1998}. 
Furthermore, by~\ref{P2a:claim_inverse_root_Fisher_info_bounded}, there exist $N_3 \in \Nbb$ and $C>0$, such that 
for $N \geq N_3$
\begin{align*}
\left\Vert 
\Ff_{\mathrm{pen}} (\varthetaf)^{-\nicefrac12} \, 2 \, \Pf \varthetaf \right\Vert_2
&\leq 
2 \, \left\Vert \Ff_{\mathrm{pen}} (\varthetaf)^{-\nicefrac12} \right\Vert_{\mathrm{op}} 
\, \left\Vert \Pf \varthetaf \right\Vert_2
\leq 
\frac{2}{\sqrt{N \, C}} \left\Vert \Pf \varthetaf \right\Vert_2 ,
\end{align*}
Thus, using $\xi_{\max} = \ocal_{\Pbb} (\sqrt{N})$, we obtain for any $\varepsilon > 0$,
\begin{align*}
\Pbb_{\varthetaf} \left( \left\Vert 
\Ff_{\mathrm{pen}} (\varthetaf)^{-\nicefrac12} \, 2 \, \Pf \varthetaf \right\Vert_2
\geq \varepsilon \right)
\leq
\Pbb_{\varthetaf} \left( 
\frac1{\sqrt{N}} \,
\left\Vert 
\Pf \varthetaf \right\Vert_2
\geq \frac{\sqrt{C} \, \varepsilon}2 \right)
\xra[N \ra \infty]{} 0,
\end{align*}
by 
Lemma~\ref{P2a:lem_assumption_smoothing_parameters_implies_convergence_penalty}, i.e., $\Ff_{\mathrm{pen}} (\varthetaf)^{-\nicefrac12} \, 2 \, \Pf \varthetaf \xra[N \ra \infty]{\Pbb} \mathbf{0}$.
Since convergence in distribution and convergence in probability are equivalent, if the limiting random vector is constant \citep[e.g.,][Theorem 2.7~(iii)]{vaart1998}, this yields
$ 
\Jf^{-1}\Ff_{\mathrm{pen}} (\varthetaf)^{-\nicefrac12} \, 2 \, \Pf \varthetaf \xra[N \ra \infty]{\Pbb} \If_K \mathbf{0} = \mathbf{0}.
$ 
by Slutsky's thoerem \citep[e.g.,][Lemma 2.8
]{vaart1998}.
\qedhere
\end{enumerate}
\end{enumerate}

\end{proof}

\begin{proof}[Proof of Lemma~\ref{P2:lemma_confidence_regions}]
For the matrix $\Af$ as in Lemma~\ref{P2:lemma_confidence_regions}, set $\Vf_\Ycal := \Af \Ff_{\mathrm{pen}}(\varthetaf) ^{-1}
\Af^\top$.
From Theorem~\ref{P2:thm_MLE_asymptotic_normal}, we have 
$\thetafh 
\overset{\text{a}}{\sim} \mathcal{N} (\varthetaf, (\Ff_{\mathrm{pen}}(\varthetaf))^{-1}) 
$, 
which yields
$
\thetafh_\Ycal 
\overset{\mathrm{a}}{\sim} 
\mathcal{N} (\varthetaf_\Ycal, \Vf_\Ycal)
$ 
and thus, $(\thetafh_\Ycal - \varthetaf_\Ycal)^\top \Vf_\Ycal^{-1} (\thetafh_\Ycal - \varthetaf_\Ycal) 
\overset{\mathrm{a}}{\sim} \chi^2( K_\Ycal)$. 
Hence, $CR_{\varthetaf_\Ycal} := \{ \thetaf_\Ycal \in \Rbb^{K_\Ycal} ~|~ (\thetaf_\Ycal - \thetafh_\Ycal)^\top \Vfh_\Ycal^{-1} (\thetaf_\Ycal - \thetafh_\Ycal) \leq \chi^2_{\alpha} (K_\Ycal)\}$ is an asymptotic $\alpha \cdot 100\%$ confidence 
region for $\varthetaf_\Ycal$. 
Since the basis functions contained in the vector $\bft_\Ycal$ are linearly independent, 
$\spano (\bft_\Ycal) = \spano (\tilde{b}_{\Ycal, 1}, \ldots , \tilde{b}_{\Ycal, K_\Ycal})$ is a $K_\Ycal$-dimensional subspace of $L^2_0 (\mu)$ and thus is isomorphic to $\Rbb^{K_\Ycal}$ via the mapping $\Phi : \spano (\bft_\Ycal) \ra \Rbb^{K_\Ycal}, \bft_\Ycal^\top \thetaf_\Ycal 
\mapsto \thetaf_\Ycal$. 
Then, 
$
\{ \bft_\Ycal^\top \thetaf_\Ycal \in \spano (\bft_{\Ycal}) ~|~ \thetaf_\Ycal \in CR_{\varthetaf_\Ycal} \}$ is an $\alpha \cdot 100\%$ confidence 
region for 
\begin{align*}
\bft_\Ycal^\top \varthetaf_\Ycal
&= \bft_\Ycal^\top \Af \varthetaf
= \bft_\Ycal^\top \Bigl(\sum_{j \in \Jcal} \bigl( \bfe_{\Xcal, \, j} (\xf)^\top \otimes \Id_{K_\Ycal} \bigr) \Sf_{j} \Bigr) \varthetaf
\\
&= \sum_{j \in \Jcal} \bft_\Ycal^\top  \bigl( \bfe_{\Xcal, \, j} (\xf)^\top \otimes \Id_{K_\Ycal} \bigr) \varthetaf_{
j}
= \sum_{j \in \Jcal} \bigl( \bfe_{\Xcal, \, j}(\xf) \otimes \bft_{\Ycal} \bigr)^\top \varthetaf_{
j} 
= \clr[h_{\Jcal}(\xf)].
\end{align*} 
Using that the inverse clr transformation is ismorphic, as well, 
we obtain that
$
\{ \bfe_\Ycal^\top \thetaf_\Ycal \in \spano (b_{\Ycal, 1}, \ldots , b_{\Ycal, K_\Ycal}) ~|~ \thetaf_\Ycal \in CR_{\varthetaf_\Ycal} \}$ is an $\alpha \cdot 100\%$ confidence 
region for $\bfe_\Ycal^\top \varthetaf_\Ycal = h_{\Jcal}(\xf)$.

Denoting the probability measure 
corresponding to the $\chi^2(K_{\Ycal})$ distribution by $\Pbb_{\chi^2(K_{\Ycal})}$ 
and considering a random variable $T \sim \chi^2(K_{\Ycal})$, 
we obtain asymptotic p-values for $\varthetaf_{\Ycal} = \mathbf{0}$ and equivalently for $h_{\Jcal}(\xf) = 0 \in \B$ and for $\clr[h_{\Jcal}(\xf)] = 0 \in L^2_0(\mu)$ via
\begin{align*}
\Pbb_{\chi^2(K_{\Ycal})} \left(T \geq (\thetafh_{\Ycal} - \varthetaf_{\Ycal})^\top \Vfh_{\Ycal}^{-1} (\thetafh_{\Ycal} - \varthetaf_{\Ycal}) ~|~ \varthetaf_{\Ycal} = \mathbf{0} \right)
&= 1 - F_{\chi^2(K_{\Ycal})} (\thetafh_{\Ycal}^\top \Vfh_{\Ycal}^{-1} \thetafh_{\Ycal}). 
\hspace{0.3cm}
\qedhere
\end{align*}
\end{proof}

\begin{lem}\label{P2a:lemma_confidence_regions_simultaneos}
Consider the $j$-th partial effect as a function of the covariates, i.e., $h_j : \Xcal \ra \B$ with $h_j = (\bfe_{\Xcal, \, j} \ootimes \bfe_{\Ycal})^\top \varthetaf_j$.
For $K_j := K_{\Xcal, \, j} K_\Ycal$, denote the matrix mapping the vector $\thetafh
$ to its $j$-th subvector $\thetafh_{
j}$ with $\Sf_j \in \Rbb^{K_j \times K}$ (i.e.,
$\varthetaf_j = \Sf_j \varthetaf
$,
$\thetafh_j = \Sf_j \thetafh
$)
and set
$
\Vfh_j := \Sf_j \Ff_{\mathrm{pen}}(\thetafh) ^{-1} \Sf_j^\top$.
Assume~\ref{P2a:assumption_appendix_stone_1} and $\xi_{\max} = \ocal_{\Pbb}(\sqrt{N})$.
Then, 
$CR_{\varthetaf_j} := \{ \thetaf_j \in \Rbb^{K_j} ~|~ (\thetaf_j - \thetafh_j)^\top \Vfh_j^{-1} (\thetaf_j - \thetafh_j) \leq \chi^2_{\alpha} (K_j)\}$,
is an asymptotic $\alpha \cdot 100\%$ confidence 
region for $\varthetaf_j$,
$ 
\{ (\bfe_{\Xcal, \, j} \ootimes \bfe_{\Ycal})^\top \thetaf_j 
~|~ \thetaf_j \in CR_{\varthetaf_j} \} \subseteq \{ \Xcal \ra \B\}
$ 
is an asymptotic $\alpha \cdot 100\%$ confidence 
region for $h_{j}$, and 
$
\{ (\bfe_{\Xcal, \, j} \otimes \bft_{\Ycal})^\top \thetaf_j 
~|~ \thetaf_j \in CR_{\varthetaf_j} \} \subseteq \{ \Xcal \ra L_0^2 (\mu)\}
$ is an asymptotic $\alpha \cdot 100\%$ confidence 
region for 
$
\hti_{j}
:= (\bfe_{\Xcal, \, j} \otimes \bft_{\Ycal})^\top \varthetaf_j$.
Furthermore, $1 - F_{\chi^2(K_j)} (\thetafh_j^\top \Vfh_j^{-1} \thetafh_j)$ is a valid asymptotic p-value for 
$H_0: [\forall \xf \in \Xcal: h_j (\xf) = 0 \in \B]$
and for 
$H_0: [\forall \xf \in \Xcal: \hti_j (\xf) = 0 \in L^2_0(\mu)]$.
\end{lem}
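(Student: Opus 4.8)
The plan is to mimic the proof of Lemma~\ref{P2:lemma_confidence_regions} almost verbatim, the only change being that we now work with the full subvector $\thetafh_j = \Sf_j \thetafh$ of dimension $K_j = K_{\Xcal,\,j} K_\Ycal$ rather than with the $K_\Ycal$-dimensional contraction $\Af\thetafh$. First I would invoke Theorem~\ref{P2:thm_MLE_asymptotic_normal}~\ref{P2:thm_item_PMLE_asymptotically_normal}, whose hypotheses \ref{P2:assumption_stone_1} and $\xi_{\max} = \ocal_{\Pbb}(\sqrt{N})$ are exactly those assumed here, to get $\thetafh \overset{\mathrm{a}}{\sim} \mathcal{N}(\varthetaf, \Ff_{\mathrm{pen}}(\varthetaf)^{-1})$. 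Applying the linear map $\Sf_j$ yields $\thetafh_j \overset{\mathrm{a}}{\sim} \mathcal{N}(\varthetaf_j, \Vf_j)$ with $\Vf_j := \Sf_j \Ff_{\mathrm{pen}}(\varthetaf)^{-1}\Sf_j^\top$, which is invertible because $\Sf_j$ has full row rank $K_j$ and $\Ff_{\mathrm{pen}}(\varthetaf)^{-1}$ is positive definite (Proposition~\ref{P2a:prop_appendix_fisher_informations_positive_definite}). Hence $(\thetafh_j - \varthetaf_j)^\top \Vf_j^{-1}(\thetafh_j - \varthetaf_j) \overset{\mathrm{a}}{\sim} \chi^2(K_j)$, and since $\thetafh$ is consistent (same theorem, part~\ref{P2:thm_item_PMLE_cosistent}) together with continuity of matrix inversion (Lemma~\ref{P2a:lem_matrix_inversion_continuous}) we may replace $\Vf_j$ by its plug-in estimate $\Vfh_j$; this gives that $CR_{\varthetaf_j}$ is an asymptotic $\alpha\cdot 100\%$ confidence region for $\varthetaf_j$.

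Next I would transport this region through the two isomorphisms, exactly as in the proof of Lemma~\ref{P2:lemma_confidence_regions}. For each fixed $\xf$, the map $\thetaf_j \mapsto (\bfe_{\Xcal,\,j}(\xf)\otimes\bft_\Ycal)^\top\thetaf_j$ need not be injective, but the point here is different: a confidence region for the \emph{whole function} $\hti_j : \Xcal \to L^2_0(\mu)$ is obtained by pushing $CR_{\varthetaf_j}$ forward under the injective linear map $\thetaf_j \mapsto \big[\xf \mapsto (\bfe_{\Xcal,\,j}(\xf)\otimes\bft_\Ycal)^\top\thetaf_j\big]$ from $\Rbb^{K_j}$ into the space of functions $\Xcal\to L^2_0(\mu)$; injectivity follows from linear independence of the functions in $\bft_\Ycal$ together with assumption~\ref{P2:assumption_covariate_basis_nonsingular} (implied by~\ref{P2:assumption_stone_1}, via Lemma~\ref{P2a:lem_assumptions}~\ref{P2a:lem_assumption_stone_implication}), which forces distinctness of the coefficient vectors. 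Because the map is a bijection onto its image, the pushforward of a region of asymptotic coverage $\alpha$ again has coverage $\alpha$ for the image point $\hti_j = (\bfe_{\Xcal,\,j}\otimes\bft_\Ycal)^\top\varthetaf_j$. Composing further with the clr inverse, which is a $\B$-isometry and in particular a bijection of $L^2_0(\mu)$ onto $\B$ applied pointwise in $\xf$, yields that $\{(\bfe_{\Xcal,\,j}\ootimes\bfe_\Ycal)^\top\thetaf_j \mid \thetaf_j\in CR_{\varthetaf_j}\}$ is an asymptotic $\alpha\cdot 100\%$ confidence region for $h_j$.

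Finally, for the p-value I would specialize to the null $\varthetaf_j = \mathbf{0}$. Note that $[\forall\xf\in\Xcal: h_j(\xf)=0\in\B]$ is equivalent to $[\forall\xf\in\Xcal: \hti_j(\xf)=0\in L^2_0(\mu)]$ (clr is an isomorphism), and both are equivalent to $\varthetaf_j = \mathbf{0}$: the nontrivial direction uses again linear independence of $\bft_\Ycal$ and assumption~\ref{P2:assumption_covariate_basis_nonsingular} to deduce that $(\bfe_{\Xcal,\,j}(\xf)\otimes\bft_\Ycal)^\top\varthetaf_j = 0$ for all $\xf$ forces $\varthetaf_j=\mathbf{0}$ (this is essentially the content of Lemma~\ref{P2a:lem_scalar_product_constant_implies_0} restricted to a single partial effect). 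Under $\varthetaf_j=\mathbf 0$ the statistic $\thetafh_j^\top\Vfh_j^{-1}\thetafh_j$ is asymptotically $\chi^2(K_j)$, so $1-F_{\chi^2(K_j)}(\thetafh_j^\top\Vfh_j^{-1}\thetafh_j)$ is a valid asymptotic p-value. The main obstacle — really the only non-bookkeeping point — is the careful justification that the forward maps onto function spaces are injective, so that ``push a coverage-$\alpha$ set forward'' is legitimate; this is where assumption~\ref{P2:assumption_covariate_basis_nonsingular} and the linear independence of the $\Ycal$-basis enter, and it must be stated cleanly because, unlike in Lemma~\ref{P2:lemma_confidence_regions} where one contracts to $K_\Ycal$ dimensions, here the region lives genuinely in $K_j$ dimensions and its image is a $K_j$-parameter family of functions.
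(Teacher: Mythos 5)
Your proposal is correct and follows essentially the same route as the paper's proof: asymptotic normality of $\thetafh$ from Theorem~\ref{P2:thm_MLE_asymptotic_normal}, the linear map $\Sf_j$ giving the $\chi^2(K_j)$ pivot and $CR_{\varthetaf_j}$, and then transporting the region through the isomorphism onto $\spano(\bfe_{\Xcal,\,j}\otimes\bft_{\Ycal})$ and the inverse clr transformation, with the p-value obtained under $\varthetaf_j=\mathbf{0}$. The only (harmless) deviation is your justification of injectivity and of the equivalence $[\forall \xf:\, h_j(\xf)=0] \Leftrightarrow \varthetaf_j=\mathbf{0}$ via the sample-based assumption~\ref{P2:assumption_covariate_basis_nonsingular}, whereas the paper simply uses the standing linear independence of the functions contained in $\bfe_{\Xcal,\,j}$ and $\bft_{\Ycal}$ (assumption~\ref{P2a:assumption_domain_basis}), which is design-free and already suffices.
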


\begin{proof}
For the matrix $\Sf_j$ 
as in Lemma~\ref{P2a:lemma_confidence_regions_simultaneos}, s
Set $\Vf_j := \Sf_j \Ff_{\mathrm{pen}}(\varthetaf) ^{-1} \Sf_j^\top$.
From Theorem~\ref{P2:thm_MLE_asymptotic_normal}, we have 
$\thetafh 
\overset{\text{a}}{\sim} \mathcal{N} (\varthetaf, (\Ff_{\mathrm{pen}}(\varthetaf))^{-1}) 
$, 
yielding 
$
\thetafh_j 
\overset{\mathrm{a}}{\sim} 
\mathcal{N} (\varthetaf_j, \Vf_j)
$ 
and thus, $(\thetafh_j - \varthetaf_j)^\top \Vf_j^{-1} (\thetafh_j - \varthetaf_j) 
\overset{\mathrm{a}}{\sim} \chi^2( K_j)$. 
Hence, $CR_{\varthetaf_j} := \{ \thetaf_j \in \Rbb^{K_j} ~|~ (\thetaf_j - \thetafh_j)^\top \Vfh_j^{-1} (\thetaf_j - \thetafh_j) \leq \chi^2_{\alpha} (K_j)\}$ is an asymptotic $\alpha \cdot 100\%$ confidence 
region for $\varthetaf_j$. 
Since the basis functions contained in the vector $\bfe_{\Xcal, \, j} \otimes \bft_{\Ycal}$ are linearly independent -- which follows from the basis functions contained in $\bfe_{\Xcal, \, j}$ and the basis functions contained in $\bft_{\Ycal}$ being linearly independent, respectively --, 
$\spano (\bfe_{\Xcal, \, j} \otimes \bft_{\Ycal}) = \spano ((\bfe_{\Xcal, \, j} \otimes \bft_{\Ycal})_{[1]}, \ldots , (\bfe_{\Xcal, \, j} \otimes \bft_{\Ycal})_{[K_j]})$ is a $K_j$-dimensional subspace of $\{\Xcal \mapsto L^2_0 (\mu) \}$ and thus is isomorphic to $\Rbb^{K_j}$ via the mapping $\Phi : \spano (\bfe_{\Xcal, \, j} \otimes \bft_{\Ycal}) \ra \Rbb^{K_j}, (\bfe_{\Xcal, \, j} \otimes \bft_{\Ycal})^\top \thetaf_j
\mapsto \thetaf_j$. 
Then, 
$
\{ (\bfe_{\Xcal, \, j} \otimes \bft_{\Ycal})^\top \thetaf_j \in \spano (\bfe_{\Xcal, \, j} \otimes \bft_{\Ycal}) ~|~ \thetaf_j \in CR_{\varthetaf_j} \}$ is an $\alpha \cdot 100\%$ confidence 
region for $(\bfe_{\Xcal, \, j} \otimes \bft_{\Ycal})^\top \varthetaf_j = \clr[h_{j}]$.
Using that the inverse clr transformation is isomorphic, as well, 
we obtain that
$
\{ (\bfe_{\Xcal, \, j} \ootimes \bfe_{\Ycal})^\top \thetaf_j \in \spano ((\bfe_{\Xcal, \, j} \ootimes \bfe_{\Ycal})_{[1]}, \ldots , (\bfe_{\Xcal, \, j} \ootimes \bfe_{\Ycal})_{[K_j]}) ~|~ \thetaf_j \in CR_{\varthetaf_j} \}$ is an $\alpha \cdot 100\%$ confidence 
region for $(\bfe_{\Xcal, \, j} \ootimes \bfe_{\Ycal})^\top \varthetaf_j = h_j$.

Denoting the probability measure 
corresponding to the $\chi^2(K_{j})$ distribution with $\Pbb_{\chi^2(K_{j})}$ 
and considering a random variable $T \sim \chi^2(K_{j})$, 
we obtain p-values for $\varthetaf_{j} = \mathbf{0}$ and equivalently for $h_{j}(\xf) = 0 \in \B$ for all $\xf \in \Xcal$ and for $\clr[h_{j}(\xf)] = 0 \in L^2_0(\mu)$ for all $\xf \in \Xcal$ via
\begin{align*}
\Pbb_{\chi^2(K_{j})} \left(T \geq (\thetafh_{j} - \varthetaf_{j})^\top \Vfh_{j}^{-1} (\thetafh_{j} - \varthetaf_{j}) ~|~ \varthetaf_{j} = \mathbf{0} \right)
&= 1 - F_{\chi^2(K_{j})} (\thetafh_{j}^\top \Vfh_{j}^{-1} \thetafh_{j}).
\qedhere
\end{align*}
\end{proof}

\begin{proof}[Proof of Theorem~\ref{P2:thm_approximation_bayes_multinomial}]
Per construction, we have $\ft_{\xf^{(l)}, \thetaf} = \ft_{\xf_i, \thetaf}$ for all $i \in \Ical^{(l)}$ and 
$\thetaf \in \Rbb^K 
$. 
The (shifted) log-likelihoods of interest are
\begin{align*}
\ell^{\mathrm{mn}}_\Zcal (\thetaf) 
&=\sum_{l=1}^L \sum_{g= 1}^{\Gamma^{(l)}} n_g^{(l)} \Bigl(  \ft_{\xf^{(l)}, \thetaf}(u_g^{(l)}) - \log \underbrace{
\sum_{g'=1}^{\Gamma^{(l)}} \Delta_{g'}^{(l)}\exp [ \ft_{\xf^{(l)}, \thetaf}(u_{g'}^{(l)}) ]}_{=: \Sigma(\Zcal^{(l)})} 
\Bigr)
\\
\ell (\thetaf)
&= \sum_{i=1}^N \bigl(\ft_{\xf_i, \thetaf}(y_i) - \log \int_{\Ycal} \exp(\ft_{\xf_i, \thetaf}) \, \dmu\bigr) .
\end{align*}

Using $\Delta_{g'}^{(l)} = w_{g' - G^{(l)}}$, $u_{g'}^{(l)} = t_{g' - G^{(l)}}$ for $g' = G^{(l)} + 1, \ldots , G^{(l)} + D = \Gamma^{(l)}$, we get
\begin{align}
\Sigma(\Zcal^{(l)})
&= \sum_{g'=1}^{\Gamma^{(l)}} \Delta_{g'}^{(l)}\exp [ \ft_{\xf^{(l)}, \thetaf}(u_{g'}^{(l)}) ]
= \underbrace{\sum_{g'=1}^{G^{(l)}} \Delta_{g'}^{(l)}\exp [ \ft_{\xf^{(l)}, \thetaf}(u_{g'}^{(l)}) ]}_{=: \sigma(\Zcal^{(l)})} 
+ \sum_{g'= 1}^{D} w_{g'} \exp [ \ft_{\xf^{(l)}, \thetaf}(t_{g'}) ] \notag
\\
&= \sigma(\Zcal^{(l)}) + \int_{\Yd} \exp ( \ft_{\xf^{(l)}, \thetaf} ) \, \ddel .
\label{P2a:eq_split_sum}
\end{align}

Let $\Yc = \emptyset$, i.e., we consider the discrete special case with $\Ycal = \Yd = \{t_1, \ldots , t_D\}$ and $\mu = \delta$. 
Then, $G^{(l)} = 0$ 
and $\Gamma^{(l)} = D$
for all $l = 1, \ldots , L$.
Thus,
\begin{align}
\ell^{\mathrm{mn}}_\Zcal (\thetaf) 
&=\sum_{l=1}^L \sum_{g=1}^{D} n_g^{(l)} \Bigl(  \ft_{\xf^{(l)}, \thetaf}(u_g^{(l)}) - \log 
\int_{\Yd} \exp ( \ft_{\xf^{(l)}, \thetaf} ) \, \ddel
\Bigr) \notag
\\
&=\sum_{l=1}^L \sum_{i \in \Ical_{\mathrm{d}}^{(l)}} \sum_{g=1}^{D} \mathbbm{1}_{\{ t_g \} }(y_i) \Bigl(  \ft_{\xf^{(l)}, \thetaf}(t_g) - \log 
\int_{\Ycal} \exp(\ft_{\xf^{(l)}, \thetaf}) \, \dmu
\Bigr) \notag
\\
%
%
&\overset{(\ast_1)}{=} \sum_{l=1}^L \sum_{i \in \Ical_{\mathrm{d}}^{(l)}} \Bigl(  \ft_{\xf^{(l)}, \thetaf}(y_i) - \log 
\int_{\Ycal} \exp(\ft_{\xf_i, \thetaf}) \, \dmu
\Bigr) \notag
\\
&\overset{(\ast_2)}{=} \sum_{i=1}^N \bigl(\ft_{\xf_i, \thetaf}(y_i) - \log \int_{\Ycal} \exp(\ft_{\xf_i, \thetaf}) \, \dmu\bigr) 
= \ell (\thetaf), \label{P2a:equation_approximation_bayes_multinomial_discrete}
\end{align}
where we used in $(\ast_1)$ that for every $l \in \{ 1, \ldots , L \}$ and every $i \in \Ical_{\mathrm{d}}^{(l)}$, there exists a unique $g_i \in \{ 1, \ldots , D \}$ with $y_i = t_{g_i}$, since the $y_i$ are observations of the (distinct) values in $\Yd$. 
In $(\ast_2)$, we used that in the discrete case $\Ical_{\mathrm{d}}^{(l)} = \Ical^{(l)}$ for all $l = 1 \ldots , L$, i.e., the sets $\Ical_{\mathrm{d}}^{(l)}$ form a partition of $\{ 1, \ldots , N\}$ for $l = 1, \ldots , L$.
Adding the penalty term to \ref{P2a:equation_approximation_bayes_multinomial_discrete} yields 
$\ell^{\mathrm{mn}}_{\Zcal, \mathrm{pen}} (\thetaf) = \ell_{\mathrm{pen}} (\thetaf)$, which
directly implies equality of the penalized Fisher informations, i.e.,
$
\Ff^{\mathrm{mn}}_\Zcal (\thetaf) 
= \Ff (\thetaf)
$.
\\[0.2cm]
Now, let $\Yc \neq \emptyset$. 
We show the convergence for the unpenalized log-likelihoods, i.e., 
$\lim_{\Delta \ra 0} \ell^{\mathrm{mn}}_{\Zcal} (\thetaf) = \ell (\thetaf)$,
which immediately implies convergence for the penalized log-likelihoods, since the penalty term is constant regarding $\Delta$.

For this purpose, we need to show that for all $\varepsilon > 0$ there exists a  $\delta > 0$, such that for all partitions $\Zcal^{(1)}, \ldots , \Zcal^{(L)}$ with overall maximal bin width $\Delta < \delta$, we have $\left|  \ell^{\mathrm{mn}}_\Zcal (\thetaf) - \ell (\thetaf) \right| < \varepsilon
$.
For this purpose, let $\varepsilon > 0$ and thus also $\frac{\varepsilon}{N}>0$.
Let $l \in \{1, \ldots , L \}$.
Define $\zeta^{(l)} := \int_{\Yd} \exp(\ft_{\xf^{(l)}, \thetaf}) \, \ddel \geq 0$.
Since the function $\xi \mapsto \log (\xi + \zeta^{(l)})$ is continuous for all $\xi > 0$, there exists an $\tilde{\varepsilon}^{(l)} > 0$, such that for all $\xi > 0$ with $|\xi - \int_{\Yc} \exp(\ft_{\xf^{(l)}, \thetaf}) \, \dlamb| < \tilde{\varepsilon}^{(l)}$, we have $| \log (\xi + \zeta^{(l)}) - \log (\int_{\Yc} \exp(\ft_{\xf^{(l)}, \thetaf}) \, \dlamb  + \zeta^{(l)}) | < \frac{\varepsilon}{N}$.
Set $\tilde{\varepsilon} := \min \{\tilde{\varepsilon}^{(1)}, \ldots , \tilde{\varepsilon}^{(L)} \} > 0$.
Since $\sigma(\Zcal^{(l)}) > 0$ is a Riemann sum 
of the integrable function $\exp ( \ft_{\xf^{(l)}, \thetaf} )$, for $\tilde{\varepsilon} > 0$ there exists a $\delta^{(l)}_{\tilde{\varepsilon}} > 0$ such that every partition $\Zcal^{(l)}$ with maximal bin width
$\Delta^{(l)} 
:= \max \{\Delta_g^{(l)} ~|~ g = 1, \ldots , G^{(l)} \} < \delta^{(l)}_{\tilde{\varepsilon}}$ fulfills $\vert \sigma(\Zcal^{(l)} ) - \int_{\Yc} \exp ( \ft_{\xf^{(l)}, \thetaf} ) \, \dlamb \vert < \tilde{\varepsilon}$ \citep[Theorem 14.23 (iii)]{laczkovich2015}.
Set $\delta_{\tilde{\varepsilon}} := \min \{ \delta^{(1)}_{\tilde{\varepsilon}} , \ldots , \delta^{(L)}_{\tilde{\varepsilon}} \}$.
If there exist $y_i \neq y_j$ with $i, j \in \bigcup_{l = 1}^L \Ical_{\mathrm{c}}^{(l)}$, set $\delta_y := \min \{|y_i - y_j| ~|~ y_i \neq y_j, ~ i, j \in \bigcup_{l = 1}^L \Ical_{\mathrm{c}}^{(l)} \} > 0$, and $\delta := \min \{ \delta_y , \delta_{\tilde{\varepsilon}}\} >0$.
Otherwise, set $\delta := \delta_{\tilde{\varepsilon}} >0$.\footnote{
Note that the latter case corresponds to only having observed one distinct continuous value overall. 
This is included for completeness of the theory, but in practice is highly unlikely unless the number $N$ of observations is very small (in which case regression results and results of statistical analyses. 
}
\\
Now, let $\Zcal^{(1)}, \ldots , \Zcal^{(L)}$ be partitions with $\Delta < \delta$ and let $l \in \{ 1, \ldots, L\}$. 
We have $\Delta^{(l)} \leq \Delta < \delta \leq \delta_{\tilde{\varepsilon}} \leq \delta^{(l)}_{\tilde{\varepsilon}}$, which yields $\vert \sigma(\Zcal^{(l)} ) - \int_{\Yc} \exp ( \ft_{\xf^{(l)}, \thetaf} ) \, \dlamb \vert < \tilde{\varepsilon} \leq \tilde{\varepsilon}^{(l)}$, implying $| \log (\sigma(\Zcal^{(l)} ) + \zeta^{(l)}) - \log (\int_{\Yc} \exp ( \ft_{\xf^{(l)}, \thetaf} ) \, \dlamb + \zeta^{(l)}) | < \frac{\varepsilon}{N}~(\ast_3)$.
Assume there exist 
$g \in \{ 1, \ldots , G^{(l)} \}$ and 
$i, j \in \Ical_{\mathrm{c}}^{(l)}$, such that 
$y_i \neq y_j$ and 
$y_i, y_j \in U_g^{(l)}$. 
Then, we have $|y_i - y_j| \leq a_{g}^{(l)} - a_{g-1}^{(l)} \leq \Delta < \delta \leq \delta_y
$, which is a contradiction.
Thus, for each 
$g \in \{ 1, \ldots , G^{(l)} \}$, there exists at most one $y_{g}^{(l)} \in \{y_i ~|~ i \in \Ical_{\mathrm{c}}^{(l)} \}$ 
such that $y_{g}^{(l)} 
\in U_g^{(l)}$.
Then, for $g \in \{ 1, \ldots , G^{(l)}\}$,
\[
n_g^{(l)} = 
\begin{cases}
| \{i \in \Ical_{\mathrm{c}}^{(l)} ~|~ y_i = y_{g}^{(l)} \} | & , \exists ! \, y_{g}^{(l)} \in \{y_i ~|~ i \in \Ical_{\mathrm{c}}^{(l)} \} : y_{g}^{(l)} \in U_g^{(l)}
\\
0 & , \text{otherwise}
\end{cases} \, .
\]
Recall that if such a unique $y_{g}^{(l)}$ exists, we choose $u_{g}^{(l)} = y_{g}^{(l)}$.
Thus, 
for each 
$i \in \Ical_{\mathrm{c}}^{(l)}$, there exists a unique $g_i \in \{ 1, \ldots , G^{(l)}\}$ such that $y_i = u_{g_i}^{(l)} 
$ (and $n_{g_i}^{(l)} > 0$), since the intervals $U_g^{(l)}$ are disjoint for $g = 1, \ldots , G^{(l)}$, and
as in argument $(\ast_1)$ in \eqref{P2a:equation_approximation_bayes_multinomial_discrete}, for each 
$i \in \Ical_{\mathrm{d}}^{(l)}$, there exists a unique $g_i \in \{ G^{(l)} + 1, \ldots , G^{(l)} + D\}$ such that $y_i = t_{g_i - G^{(l)}} = u_{g_i}^{(l)}$. 
%
Then, we get for the shifted multinomial log-likelihood,
\begin{align*}
\ell^{\mathrm{mn}}_\Zcal (\thetaf) 
%
%
&=
\sum_{l=1}^L \sum_{g=1}^{G^{(l)}} n_g^{(l)} \Bigl( \ft_{\xf^{(l)}, \thetaf}(u_g^{(l)}) - \log  \Sigma(\Zcal^{(l)}) \Bigr) +  
\\
&\hspace{0.5cm} 
\sum_{l=1}^L \sum_{g= G^{(l)} + 1}^{\Gamma^{(l)}} n_g^{(l)} \Bigl( \ft_{\xf^{(l)}, \thetaf}(u_g^{(l)}) - \log  \Sigma(\Zcal^{(l)}) \Bigr)
\\
&= \sum_{l=1}^L \sum_{i \in \Ical_{\mathrm{c}}^{(l)}} \Bigl( \ft_{\xf_i, \thetaf}(y_i) - \log \Sigma(\Zcal^{(l)}) \Bigr)
+ \sum_{l=1}^L \sum_{i \in \Ical_{\mathrm{d}}^{(l)}} \Bigl( \ft_{\xf_i, \thetaf}(y_i) - \log  \Sigma(\Zcal^{(l)}) \Bigr)
\\
&= \sum_{i =1}^N \ft_{\xf_i, \thetaf}(y_i) - \sum_{l=1}^L \sum_{i \in \Ical^{(l)}}  \log \Sigma(\Zcal^{(l)}),
\end{align*}
where we used in the last equation that $\Ical^{(l)}$ form a partition of $\{ 1, \ldots , N\}$ for $l = 1, \ldots , L$, and $\Ical_{\mathrm{c}}^{(l)}$ and $\Ical_{\mathrm{d}}^{(l)}$ form a partition of $\Ical^{(l)}$ for each $l = 1, \ldots , L$.
Finally, 
we then have
\begin{align*}
\left| \ell^{\mathrm{mn}}_\Zcal (\thetaf) - \ell (\thetaf) \right| 
&=
\Big| \sum_{i =1}^N \ft_{\xf_i, \thetaf}(y_i) - \sum_{l=1}^L \sum_{i \in \Ical^{(l)}}  \log \Sigma(\Zcal^{(l)}) \, - 
\\
&\hspace{0.7cm} 
\Bigl( \sum_{i=1}^N \ft_{\xf_i, \thetaf}(y_i) - \sum_{i=1}^N \log \int_{\Ycal} \exp(\ft_{\xf_i, \thetaf}) \, \dmu \Bigr) \Big|
\\
%
%
&=
\Big|  \sum_{l=1}^L \sum_{i \in \Ical^{(l)}} \Bigl( \log \bigl( \sigma(\Zcal^{(l)}) + \int_{\Yd} \exp(\ft_{\xf^{(l)}, \thetaf}) \, \ddel \bigr)  -  
\\
&\hspace{2.6cm} 
\log \bigl( \int_{\Yc} \exp(\ft_{\xf^{(l)}, \thetaf}) \, \dlamb + \int_{\Yd} \exp(\ft_{\xf^{(l)}, \thetaf}) \, \ddel \Bigr) \Big|
\\
&\leq
\sum_{l=1}^L \sum_{i \in \Ical^{(l)}} \Big| \Bigl( \log \bigl( \sigma(\Zcal^{(l)}) + \zeta^{(l)} \bigr) -
\\
&\hspace{2.6cm} 
\log \bigl( \int_{\Yc} \exp(\ft_{\xf^{(l)}, \thetaf}) \, \dlamb + \zeta^{(l)} \bigr) \Bigr) \Big|
\\
&\overset{(\ast_3)}{<}
\sum_{l=1}^L \sum_{i \in \Ical^{(l)}} \frac{\varepsilon}{N}
= N \frac{\varepsilon}{N} 
= \varepsilon ,
\end{align*}
which proves $\lim_{\Delta \ra 0} \ell^{\mathrm{mn}}_\Zcal (\thetaf) = \ell (\thetaf)$ 
and thus
$\lim_{\Delta \ra 0} \ell^{\mathrm{mn}}_{\Zcal, \mathrm{pen}} (\thetaf) = \ell_{\mathrm{pen}} (\thetaf)$.

Now, we show the convergence of the Hessian matrices $\Hf^{\text{mn}}_{\Zcal} (\thetaf)$ of $\ell^{\text{mn}}_{\Zcal} (\thetaf)$ to the Hessian matrix $\Hf (\thetaf)$ of $\ell (\thetaf)$ for decreasing bin widths, which is equivalent to convergence of the (unpenalized) Fisher informations.
For this purpose, we 
use the notation $\gammaf_{[k]}$ to denote the $k$-th entry of a vector $\gammaf$.
As in the proof of Proposition~\ref{P2a:prop_appendix_fisher_informations_positive_definite}, we have
$\ell (\thetaf)
= \sum_{i=1}^N [ \bft (\xf_i)(y_i)^\top \thetaf - C_i(\thetaf) ]
$
with Hessian matrix $\Hf(\thetaf) = \sum_{i=1}^N \Hf_i (\thetaf)$,
where $\Hf_i (\thetaf) \in \Rbb^{K \times K}$ denotes the Hessian matrix of 
$C_i(\thetaf) = - \log \int_{\Ycal} \exp [ \bft (\xf_i)^\top \thetaf ]\, \dmu$ for $i = 1, \ldots , N$.
By \eqref{P2a:eq_hessian_bayes}, the $(k, k')$-th element of $\Hf_i (\thetaf), ~i = 1, \ldots , N$, is
\begin{align*}
\frac{\partial^2}{\partial \theta_k \partial \theta_{k'}} C_i(\thetaf)
&= - \frac{ \int_\Ycal \exp [ \bft (\xf_i)^\top \thetaf ] \, (\bft (\xf_i))_{[k]} \, (\bft (\xf_i))_{[k']} \, \dmu}{\int_\Ycal \exp [ \bft (\xf_i)^\top \thetaf ] \, \dmu} \notag
\\
&\hspace{0.5cm}+ \frac{\int_\Ycal \exp [ \bft (\xf_i)^\top \thetaf ] \cdot (\bft (\xf_i))_{[k]} \, \dmu }{\int_\Ycal \exp [ \bft (\xf_i)^\top \thetaf ] \, \dmu} \, \frac{\int_\Ycal \exp [ \bft (\xf_i)^\top \thetaf ] \cdot (\bft (\xf_i))_{[k']} \, \dmu}{\int_\Ycal \exp [ \bft (\xf_i)^\top \thetaf ] \, \dmu}
\, . 
\end{align*}
Similarly, consider
$
\ell^{\text{mn}}_{\Zcal} (\thetaf) 
=
\sum_{l=1}^L \sum_{g=1}^{\Gamma^{(l)}} n_g^{(l)} [ \bft(\xf^{(l)}) (u_g^{(l)}) ^\top \thetaf 
- C_{\Zcal}^{\text{mn} \, (s)}(\thetaf) ]
$
with Hessian matrix
$\Hf^{\text{mn}}_{\Zcal} (\thetaf)
= \sum_{l=1}^L \sum_{g=1}^{\Gamma^{(l)}} n_g^{(l)} \Hf_{\Zcal}^{\text{mn} \, (s)}(\thetaf) 
$
where $\Hf_{\Zcal}^{\text{mn} \, (s)}(\thetaf)$
denotes the Hessian matrix of
$C_{\Zcal}^{\text{mn} \, (s)}(\thetaf) = - \log \sum_{g'=1}^{\Gamma^{(l)}} \Delta_{g'}^{(l)}\exp [ \bft(\xf^{(l)}) (u_{g'}^{(l)}) ^\top \thetaf ]$ 
for $l = 1, \ldots , L$.
By~\eqref{P2a:eq_hessian_multinomial}, the $(k, k')$-th element of $\Hf_{\Zcal}^{\text{mn} \, (s)}(\thetaf)$ is
\begin{align*}
&~ \frac{\partial^2}{\partial \theta_k \partial \theta_{k'}} C_{\Zcal}^{\text{mn} \, (s)}(\thetaf)
\\
= &~ - \frac{ \sum_{g'=1}^{\Gamma^{(l)}} \bigl[ \Delta_{g'}^{(l)}\exp [ \bft(\xf^{(l)}) (u_{g'}^{(l)}) ^\top \thetaf ] \, (\bft(\xf^{(l)})(u_{g'}^{(l)}))_{[k]} 
\, (\bft (\xf^{(l)})(u_{g'}^{(l)}))_{[k']} \bigr]}{\sum_{g'=1}^{\Gamma^{(l)}} \Delta_{g'}^{(l)}\exp [ \bft(\xf^{(l)}) (u_{g'}^{(l)}) ^\top \thetaf ] } \notag
\\
&~+ \frac{\sum_{g'=1}^{\Gamma^{(l)}} \bigl[ \Delta_{g'}^{(l)}\exp [ \bft(\xf^{(l)}) (u_{g'}^{(l)}) ^\top \thetaf ] \, (\bft(\xf^{(l)})(u_{g'}^{(l)}))_{[k]} \bigr]}{\sum_{g'=1}^{\Gamma^{(l)}} \Delta_{g'}^{(l)}\exp [ \bft(\xf^{(l)}) (u_{g'}^{(l)}) ^\top \thetaf ]} \notag
\\
&~ \cdot \frac{\sum_{g'=1}^{\Gamma^{(l)}} \bigl[ \Delta_{g'}^{(l)}\exp [ \bft(\xf^{(l)}) (u_{g'}^{(l)}) ^\top \thetaf ] \, (\bft(\xf^{(l)})(u_{g'}^{(l)}))_{[k']} \bigr] \,}{\sum_{g'=1}^{\Gamma^{(l)}} \Delta_{g'}^{(l)}\exp [ \bft(\xf^{(l)}) (u_{g'}^{(l)}) ^\top \thetaf ]} 
. 
\end{align*}
As in~\eqref{P2a:eq_split_sum}, we split each of the appearing sums over $g' = 1, \ldots , \Gamma^{(l)}$ into the respecting sums over $g' = 1, \ldots , G^{(l)}$ and over $g' = G^{(l)} + 1, \ldots , \Gamma^{(l)}$. 
In each case, the former sum is a Riemann sum of the corresponding integrable function and thus converges against the function's integral over $\Yc$ with respect to $\lambda$ for $\Delta \ra 0$, while the latter sum equals the integral over $\Yd$ with respect to $\delta$. 
The sum of both integrals corresponds to the integral over $\Ycal$ with respect to $\mu$.
Note that the limiting integral in the denominators, $\int_\Ycal \exp [ \bft (\xf^{(l)})^\top \thetaf ] \, \dmu$, is positive and finite, while the integrals in the numerators are also finite, since all integrands are finite on $\Ycal$ and $\mu(\Ycal) < \infty$.
Thus, sums, products, and quotients are preserved under the limit 
\citep[Theorems~5.12, 5.14 and 5.16]{laczkovich2015} and the limit of \eqref{P2a:eq_hessian_multinomial} for $\Delta \ra 0$ is 
\begin{align*}
& - \frac{ \int_\Ycal \exp [ \bft (\xf^{(l)})^\top \thetaf ] \, (\bft (\xf^{(l)}))_{[k]} \, (\bft (\xf^{(l)}))_{[k']} \, \dmu}{\int_\Ycal \exp [ \bft (\xf^{(l)})^\top \thetaf ] \, \dmu} \notag
\\
&+ \frac{\int_\Ycal \exp [ \bft (\xf^{(l)})^\top \thetaf ] \, (\bft (\xf^{(l)}))_{[k]} \, \dmu }{\int_\Ycal \exp [ \bft (\xf^{(l)})^\top \thetaf ] \, \dmu} \, \frac{\int_\Ycal \exp [ \bft (\xf^{(l)})^\top \thetaf ] \, (\bft (\xf^{(l)}))_{[k']} \, \dmu}{\int_\Ycal \exp [ \bft (\xf^{(l)})^\top \thetaf ] \, \dmu} \notag
.
\end{align*}
Finally, we then have
\begin{align*}
\lim_{\Delta \ra 0} \Hf^{\text{mn}}_{\Zcal} (\thetaf)
= \sum_{l=1}^L |\Ical^{(l)}| \, \lim_{\Delta \ra 0} \Hf_{\Zcal}^{\text{mn} \, (s)}(\thetaf)
= \sum_{i=1}^N \Hf_i (\thetaf)
= \Hf(\thetaf) ,
\end{align*}
where we used $\xf_i = \xf^{(l)}$ for all $i \in \Ical^{(l)}$.
Since the Hessian matrices of the penalized log-likelihoods $\ell^{\mathrm{mn}}_{\Zcal, \mathrm{pen}} (\thetaf)$ and $\ell_{\mathrm{pen}} (\thetaf)$ are given by $\Hf^{\text{mn}}_{\Zcal} (\thetaf) + D^2 \pen (\thetaf)$ and $\Hf(\thetaf) + D^2 \pen (\thetaf)$, where $D^2 \pen (\thetaf)$ is constant in $\Delta$, we obtain $\lim_{\Delta \ra 0} \Ff^{\text{mn}}_{\Zcal, \mathrm{pen}} (\thetaf) = \Ff_{\mathrm{pen}} (\thetaf)$.
\\
Now assume, all involved penalized Fisher informations are invertible. 
Since the matrix inversion function $\inv : \GL(K , \Rbb) \ra \GL(K , \Rbb), \, \Ff \mapsto \Ff^{-1}$ is continuous (compare Lemma~\ref{P2a:lem_matrix_inversion_continuous}), we obtain $\lim_{\Delta \ra 0} (\Ff^{\text{mn}}_{\Zcal, \mathrm{pen}} (\thetaf))^{-1} = (\Ff_{\mathrm{pen}}(\thetaf))^{-1}$. 
\end{proof}

\begin{proof}[Proof of Theorem~\ref{P2:thm_approximation_bayes_multinomial_MLE}]

\begin{enumerate}[label=\arabic*)]
\item
The uniqueness of $\thetafh_{\Zcal}^{\mathrm{mn}}$ follows completely analogously to the proof of the uniqueness of $\thetafh$ in Theorem~\ref{P2:thm_existence_uniqueness_bayes_PMLE}, using Proposition~\ref{P2a:prop_appendix_fisher_informations_positive_definite}~\ref{P2a:prop_appendix_multinomial_fisher_information_positive_definite} to obtain strict concavity of $\ell_{\Zcal, \pen}^{\mathrm{mn}} (\thetaf)$ on $\Rbb^K$.
\item
For $m \in \Nbb$, consider $\Zcal_m \in \Xi^L$ with $\lim_{m \ra \infty} \Delta (\Zcal_m) = 0$.
We use \citet[Theorem 7.33]{rockafellar2009} to obtain asymptotic existence and convergence of the minimizers of $-\ell_{\Zcal_m, \mathrm{pen}}^{\mathrm{mn}}$, which correspond to the PMLEs of $\ell_{\Zcal_m, \mathrm{pen}}^{\mathrm{mn}}$.
The theorem requires that $-\ell_{\Zcal_m, \mathrm{pen}}^{\mathrm{mn}}$ and $-\ell_{\mathrm{pen}}$ are lower semicontinuous and proper and that the sequence of functions $(- \ell_{\Zcal_m, \mathrm{pen}}^{\mathrm{mn}})_{m \in \Nbb}$ is eventually level-bounded and epi-converges to $\ell_{\mathrm{pen}}$.
We already showed in the proof of Theorem~\ref{P2:thm_existence_uniqueness_bayes_PMLE} that $-\ell_{\mathrm{pen}}$ is lower semicontinuous and proper.
For the functions $-\ell_{\Zcal_m, \mathrm{pen}}^{\mathrm{mn}}$ this follows analogously.
Furthermore, by Proposition~\ref{P2a:prop_appendix_fisher_informations_positive_definite}, $-\ell_{\Zcal_m, \mathrm{pen}}^{\mathrm{mn}}$ and $-\ell_{\mathrm{pen}}$ are convex on $\Rbb^K$ (using assumptions 
\ref{P2a:assumption_appendix_covariate_basis_nonsingular} 
and \ref{P2a:assumption_appendix_domain_basis_nonsingular}).
Together with 
$\dom (-\ell_{\mathrm{pen}}) = \Rbb^K$ having nonempty interior
and $\lim_{m \ra \infty} ( - \ell_{\Zcal_m, \mathrm{pen}}^{\mathrm{mn}} (\thetaf) ) = -\ell_{\mathrm{pen}} (\thetaf)$ for all $\thetaf \in \Rbb^K 
$ (Theorem~\ref{P2:thm_approximation_bayes_multinomial}),
this implies that $- \ell_{\Zcal_m, \mathrm{pen}}^{\mathrm{mn}}$ epi-converges to $-\ell_{\mathrm{pen}}$ \citep[Theorem 7.17]{rockafellar2009}.
Moreover, by Theorem~\ref{P2a:thm_appendix_loglikelihood_level_bounded}, $-\ell_{\mathrm{pen}}$ is level-bounded (using assumption~\ref{P2a:assumption_appendix_scalar_product_non_constant}).
Then, the sequence $(- \ell_{\Zcal_m, \mathrm{pen}}^{\mathrm{mn}})_{m \in \Nbb}$ is eventually level-bounded \citep[Exercise 7.32 (c)]{rockafellar2009}.
As $\thetafh$ exists 
and is unique by Theorem~\ref{P2:thm_existence_uniqueness_bayes_PMLE}, 
we obtain from \citet[Theorem 7.33]{rockafellar2009}, that there exists an $M \in \Nbb$ such that the minimizer $\thetafh^{\mathrm{mn}}_{\Zcal_m}$ of $- \ell_{\Zcal_m, \mathrm{pen}}^{\mathrm{mn}} (\thetaf)$ exists (and is unique by~\ref{P2:thm_approximation_multinomial_MLE_unique}, using assumption~\ref{P2a:assumption_appendix_domain_basis_nonsingular}) for all $m \geq M$ and $\lim_{m \ra \infty} \hat{\thetaf}_{\Zcal_m}^{\mathrm{mn}} = \hat{\thetaf}$.
\qedhere
\end{enumerate}
\end{proof}

\begin{proof}[Proof of Theorem~\ref{P2:proposition_equivalence_multniomial_poisson}]
Note that \citet{baker1994} provide a proof of this Theorem for unpenalized estimation, which might be extended to our penalized setting.
When we became aware of their work, we already derived the following proof, which pursues a different strategy. 
While \citet{baker1994} follow a two-step/substitution approach regarding maximization of $\tauf$ and $\thetaf$ and use results of \citet{richards1961}, our proof uses a parameter transformation as proposed by \citet{palmgren1981}.
\\[0.2cm]
We first show that the PMLEs are identical, extending the work of \citet{palmgren1981} for a linear predictor $\eta_g^{(l)} (\thetaf)$ without penalization to an arbitrary predictor with penalized estimation.
Setting $\Pen(\thetaf) := \exp (\pen(\thetaf))$, the penalized likelihood function corresponding to the Poisson distribution assumption~\ref{P2:proposition_equivalence_multniomial_poisson_assumption_poisson} is
\begin{align}
\Lcal_{\mathrm{pen}}^{\mathrm{po}} ( \thetaf, \tauf ) 
&= \prod_{l=1}^L \prod_{g=1}^{\Gamma^{(l)}} 
\biggl( \frac{\lambda_g^{(l)} \left( \thetaf, \tauf \right)^{n_g^{(l)}}}{n_g^{(l)}!} \exp\left(-\lambda_g^{(l)} \left( \thetaf, \tauf \right)\right) \biggr)
\, \Pen(\thetaf). \label{P2a:equation_likelihood}
\end{align}
Consider the transformation 
$\psi: \Rbb^K 
\times \Rbb^L
\ra \Rbb^K 
\times (\Rbb^+)^L
, 
(\thetaf, \tauf) 
\mapsto (\thetaf, \lambdaf) 
,$
%
with $\lambdaf = (\lambda^{(1)}, \ldots , \lambda^{(L)})$ and $\lambda^{(l)} := \sum_{g=1}^{\Gamma^{(l)}} \lambda_g^{(l)} (\thetaf, \tauf) = \sum_{g=1}^{\Gamma^{(l)}} \exp (\eta_g^{(l)} (\thetaf) + \tau^{(l)} )$ for $l = 1, \ldots , L$.
The transformation is bijective with inverse transformation
$\psi^{-1}: \Rbb^K 
\times (\Rbb^+)^L
\ra \Rbb^K 
\times \Rbb ^L
, 
(\thetaf, \lambdaf) 
\mapsto (\thetaf, \tauf),$
with $\tau^{(l)} := \log\lambda^{(l)} - \log (\sum_{g=1}^{\Gamma^{(l)}}  \exp (\eta_g^{(l)} (\thetaf) ) )$ for $l = 1, \ldots , L$.
Thus, $\psi$ 
and $\psi^{-1}$ 
preserve maximizers of the penalized likelihood \eqref{P2a:equation_likelihood}.
Reparameterizing $(\thetaf, \tauf) = \psi^{-1} (\thetaf , \lambdaf)$ 
yields
\begin{align}
\lambda_g^{(l)} (\thetaf , \tauf)
= \lambda_g^{(l)} \bigl( \psi^{-1} (\thetaf , \lambdaf) \bigr)
&= \frac{ \exp \bigl( \eta_g^{(l)} (\thetaf) \bigr) \, \lambda^{(l)}}{\sum_{g'=1}^{\Gamma^{(l)}}  \exp\bigl(\eta_{g'}^{(l)} (\thetaf) \bigr)}
%
\label{P2a:eq_poisson_rate_transformation}
\end{align}
and $\Lcal_{\mathrm{pen}}^{\mathrm{po}} \circ \psi^{-1}$ gives us the penalized likelihood
\begin{align}
\check{\Lcal}_{\mathrm{pen}}^{\mathrm{po}} (\thetaf , \lambdaf) 
&= \prod_{l=1}^L \prod_{g=1}^{\Gamma^{(l)}} \Biggl( \frac{\left( \frac{ \exp ( \eta_g^{(l)} (\thetaf) ) \, \lambda^{(l)}}{\sum_{g'=1}^{\Gamma^{(l)}}  \exp(\eta_{g'}^{(l)} (\thetaf) )} \right)^{n_g^{(l)}}}{n_g^{(l)}!} \, 
\exp \Biggl( \frac{-  \exp \left( \eta_g^{(l)} (\thetaf) \right)\, \lambda^{(l)}}{\sum_{g'=1}^{\Gamma^{(l)}}  \exp\left(\eta_{g'}^{(l)} (\thetaf) \right)} \Biggr) \Biggr) \, \Pen(\thetaf) \notag
\\
&= \prod_{l=1}^L \Biggl( \frac{n^{(l)}!}{\prod_{g=1}^{\Gamma^{(l)}} n_g^{(l)}!} \prod_{g=1}^{\Gamma^{(l)}} \Biggl( \frac{ \exp \bigl( \eta_g^{(l)} (\thetaf) \bigr)}{\sum_{g'=1}^{\Gamma^{(l)}}  \exp\bigl(\eta_{g'}^{(l)} (\thetaf) \bigr)} \Biggr)^{n_g^{(l)}} \Biggr) \, \Pen(\thetaf) \notag \\
&\hspace{0.7cm} \cdot \prod_{l=1}^L \Bigl( \frac{(\lambda^{(l)})^{n^{(l)}}}{n^{(l)}!} \exp(- \lambda^{(l)}) \Bigr), 
\label{P2a:eq_factorized_likelihood}
\end{align}
where we used 
$\sum_{g=1}^{\Gamma^{(l)}} n_g^{(l)} = n^{(l)}$ in the last equation.
Thus, the penalized likelihood $\check{\Lcal}_{\mathrm{pen}}^{\mathrm{po}} (\thetaf , \lambdaf)$ factorizes into two functions, one modeling $\thetaf$ and the other modeling $\lambdaf$, which we denote with $\Lcal_{\mathrm{pen}}^{\mathrm{mn}} (\thetaf )$ and $\Lcal^{\mathrm{po}} (\lambdaf)$. 
To maximize the overall likelihood, we can maximize both parts separately. 
The first part $\Lcal_{\mathrm{pen}}^{\mathrm{mn}} (\thetaf )$ is the penalized likelihood of a multinomial distribution with probabilities 
$p_g^{(l)} (\thetaf) =  \frac{ \exp ( \eta_g^{(l)} (\thetaf) )}{\sum_{g'=1}^{\Gamma^{(l)}}  \exp (\eta_{g'}^{(l)} (\thetaf) )}$ and observations $\mathbf{n}^{(l)}$, 
i.e., maximization yields the 
multinomial PMLE $\hat{\thetaf}^{\mathrm{mn}}$ for $\thetaf$.
The second part $\Lcal^{\mathrm{po}} (\lambdaf)$ yields the Poisson MLEs $\hat{\lambda}^{(l)}$ for $\lambda^{(l)}$ based on one observation $n^{(l)}$, i.e., $\hat{\lambda}^{(l)} = n^{(l)}$ for $l = 1, \ldots , L$. 
Then, due to the invariance of the estimators under $\psi$ 
and $\psi^{-1}$, 
the PMLEs for $\thetaf$ are identical, $\hat{\thetaf}^{\mathrm{po}} = \hat{\thetaf}^{\mathrm{mn}}$. 

We now show the statements on the inverse Fisher informations.
Here, our contribution goes well beyond the work of \citet{palmgren1981}, who, for a linear predictor without penalization, compares the Fisher informations corresponding to $\check{\Lcal}^{\mathrm{po}}$ and $\Lcal^{\mathrm{mn}}$ (denoting the likelihoods $\check{\Lcal}_{\mathrm{pen}}^{\mathrm{po}}$ and $\Lcal_{\mathrm{pen}}^{\mathrm{mn}}$ without penalization).
We take the additional step of considering the Fisher informations corresponding to $\Lcal_{\mathrm{pen}}^{\mathrm{po}} = \check{\Lcal}_{\mathrm{pen}}^{\mathrm{po}} \circ \psi$, which is of great importance for our work, since in practice, our models are estimated on this level.
Denote the (penalized) log-likelihoods corresponding to the (penalized) likelihoods considered so far with $\ell$ and the respective indices etc.\ and set $\thetafh := \thetafh^{\mathrm{po}} = \thetafh^{\mathrm{mn}}$.
Furthermore, denote the differential operators for first and second order derivatives with respect to the full argument vector of a function with $D$ and $D^2$ and with respect to some subvector $\gammaf$ with $D_{\gammaf}$ and $D^2_{\gammaf}$.
Transforming \eqref{P2a:eq_factorized_likelihood} to the level of log-likelihoods yields
\begin{align}\label{P2a:eq_additive_log_likelihood}
\check{\ell}_{\mathrm{pen}}^{\mathrm{po}} (\thetaf , \lambdaf) = \ell_{\mathrm{pen}}^{\mathrm{mn}} (\thetaf ) + \ell^{\mathrm{po}} (\lambdaf) .
\end{align}
Since $\ell_{\mathrm{pen}}^{\mathrm{mn}} (\thetaf )$ is independent of $\lambdaf$, while $\ell^{\mathrm{po}} (\lambdaf)$ is independent of $\thetaf$, 
we obtain a block diagonal form for the Hessian matrix of the penalized log-likelihood $\check{\ell}_{\mathrm{pen}}^{\mathrm{po}} (\thetaf , \lambdaf)$,
\begin{align}
D^2 \check{\ell}_{\mathrm{pen}}^{\mathrm{po}} (\thetaf , \lambdaf) = 
\begin{pmatrix}
D^2 \ell_{\mathrm{pen}}^{\mathrm{mn}} (\thetaf ) & \mathbf{0} \\
\mathbf{0} & D^2 \ell^{\mathrm{po}} (\lambdaf)
\end{pmatrix}.
\label{P2a:eq_block_diagonal_Hessian}
\end{align}
In particular, $D^2_{\thetaf} \check{\ell}_{\mathrm{pen}}^{\mathrm{po}} (\thetaf , \lambdaf) = D^2 \ell_{\mathrm{pen}}^{\mathrm{mn}} (\thetaf)$, i.e., the upper left $K \times K$ submatrix of the Fisher information corresponding to $\check{\ell}_{\mathrm{pen}}^{\mathrm{po}} (\thetaf , \lambdaf)$ is identical to the Fisher information obtained from the multinomial distributional assumption~\ref{P2:proposition_equivalence_multniomial_poisson_assumption_multinom} for all 
$\thetaf \in \Rbb^K 
$. 
To obtain the Hessian matrix of the penalized log-likelihood $\ell_{\mathrm{pen}}^{\mathrm{po}} 
= \check{\ell}_{\mathrm{pen}}^{\mathrm{po}} 
\circ \psi$, we apply the chain rule, which gives
\begin{align}
D \ell_{\mathrm{pen}}^{\mathrm{po}} (\thetaf , \tauf)
&= [ D \check{\ell}_{\mathrm{pen}}^{\mathrm{po}} \circ \psi] (\thetaf , \tauf) \, D \psi (\thetaf , \tauf) , \notag
\\
D^2 \ell_{\mathrm{pen}}^{\mathrm{po}} (\thetaf , \tauf)
&= (D \psi (\thetaf , \tauf))^\top \, [ D^2 \check{\ell}_{\mathrm{pen}}^{\mathrm{po}} \circ \psi] (\thetaf , \tauf) \, D \psi (\thetaf , \tauf) \notag
\\
&\hspace{0.5cm}+ [ D \check{\ell}_{\mathrm{pen}}^{\mathrm{po}} \circ \psi ] (\thetaf , \tauf) \, D^2 \psi (\thetaf , \tauf)
,
\label{P2a:eq_Hessian_poisson_tau}
\end{align}
where $D^2 \psi (\thetaf , \tauf) = (D^2 \psi_1 (\thetaf , \tauf), \ldots , D^2 \psi_{K + L} (\thetaf , \tauf))^\top 
$.
Inserting $(\thetafh, \taufh)$, the last summand is zero since $\psi (\thetafh, \taufh)$ maximizes $\check{\ell}_{\mathrm{pen}}^{\mathrm{po}}$ and thus
\begin{align}
D^2 \ell_{\mathrm{pen}}^{\mathrm{po}} (\thetafh , \taufh)
&= (D \psi (\thetafh , \taufh)^\top \, [ D^2 \check{\ell}_{\mathrm{pen}}^{\mathrm{po}} \circ \psi ] (\thetafh , \taufh) \, D \psi (\thetafh , \taufh) .
\label{P2a:eq_Hessian_poisson_tauhat}
\end{align}
Setting
$\psi_{\lambdaf} : \Rbb^K 
\times \Rbb^L \ra (\Rbb^+)^L$, $(\thetaf, \tauf) \mapsto \lambdaf = (\psi_{K + 1} (\thetaf, \tauf), \ldots , \psi_{K + L} (\thetaf, \tauf))$,
we have
\begin{align}
D \psi (\thetafh , \taufh) =
\begin{pmatrix}
\If_K & \mathbf{0}
\\
D_{\thetaf} \psi_{\lambdaf} (\thetafh , \taufh) & D_{\tauf} \psi_{\lambdaf} (\thetafh , \taufh)
\end{pmatrix},
\label{P2a:eq_Dpsi_hat}
\end{align}
which at $(\thetafh, \taufh)$ (omitting the argument 
in all functions for the sake of readability) yields
\begin{align}
D^2 \ell_{\mathrm{pen}}^{\mathrm{po}} 
&= \begin{pmatrix}
D^2 \ell^{\mathrm{mn}}_{\mathrm{pen}} 
+ (D_{\thetaf} \psi_{\lambdaf} 
)^\top 
\, [D^2 \ell^{\mathrm{po}} \circ \psi_{\lambdaf} 
] \, D_{\thetaf} \psi_{\lambdaf} 
& 
(D_{\thetaf} \psi_{\lambdaf} 
)^\top \, [D^2 \ell^{\mathrm{po}} \circ \psi_{\lambdaf} 
] \, D_{\tauf} \psi_{\lambdaf} 
\\
(D_{\tauf} \psi_{\lambdaf} 
)^\top \, [D^2 \ell^{\mathrm{po}} \circ \psi_{\lambdaf} 
] \, D_{\thetaf} \psi_{\lambdaf} 
&
(D_{\tauf} \psi_{\lambdaf} 
)^\top \, [D^2 \ell^{\mathrm{po}} \circ \psi_{\lambdaf} 
] \, D_{\tauf} \psi_{\lambdaf} 
\end{pmatrix} .
\label{P2a:eq_Hessian_poisson_tauhat_expanded}
\end{align}
In particular, 
$(\Ffh^{\mathrm{po}})_{[K]} 
\neq \Ffh^{\mathrm{mn}}$.
To show the equality for the inverted matrices, 
note that \eqref{P2a:eq_Hessian_poisson_tauhat} yields
\begin{align*}
( D^2 \ell_{\mathrm{pen}}^{\mathrm{po}} (\thetafh , \taufh)
)^{-1}
=
( D \psi (\thetafh , \taufh) )^{-1} 
\, ( [ D^2 \check{\ell}_{\mathrm{pen}}^{\mathrm{po}} \circ \psi ] (\thetafh , \taufh) )^{-1} 
\, ( (D \psi (\thetafh , \taufh))^\top )^{-1}.
\end{align*}
Since the block matrix structure of~\eqref{P2a:eq_block_diagonal_Hessian} is preserved under invertation, we have
\begin{align*}
\left( [ D^2 \check{\ell}_{\mathrm{pen}}^{\mathrm{po}} \circ \psi ] (\thetafh , \taufh) \right)^{-1}
&=
\begin{pmatrix}
(D^2 \ell_{\mathrm{pen}}^{\mathrm{mn}} (\thetafh ) 
)^{-1}
& 
\mathbf{0} 
\\
\mathbf{0} 
& 
([D^2 \ell^{\mathrm{po}} \circ \psi_{\lambdaf}] (\thetafh , \taufh) 
)^{-1}
\end{pmatrix}.
\end{align*}
Furthermore,
\begin{align}
\left( D \psi (\thetafh , \taufh)
\right)^{-1}
&= \begin{pmatrix}
\If_K & \mathbf{0}
\\
- (D_{\tauf} \psi_{\lambdaf} (\thetafh , \taufh))^{-1} \, D_{\thetaf} \psi_{\lambdaf} (\thetafh , \taufh) 
&
(D_{\tauf} \psi_{\lambdaf} (\thetafh , \taufh))^{-1}
\end{pmatrix} .
\label{P2a:eq_Dpsi_hat_inv}
\end{align}
Setting $\Afh = (D_{\tauf} \psi_{\lambdaf} (\thetafh , \taufh) 
)^{-1} \, D_{\thetaf} \psi_{\lambdaf} (\thetafh , \taufh)$ and $\Bfh := ([D^2 \ell^{\mathrm{po}} \circ \psi_{\lambdaf}] (\thetafh , \taufh) 
)^{-1}$, we obtain for the inverse of $D^2 \ell^{\mathrm{po}}_{\mathrm{pen}} (\thetafh , \taufh)$ (again omitting the argument $(\thetafh , \taufh)$ in all functions)
\begin{align*}
(D^2 \ell_{\mathrm{pen}}^{\mathrm{po}} 
)^{-1}
&= \begin{pmatrix}
(D^2 \ell_{\mathrm{pen}}^{\mathrm{mn}} 
)^{-1}
& 
- (D^2 \ell_{\mathrm{pen}}^{\mathrm{mn}} 
)^{-1}
\, \Afh^\top
\\
- \Afh 
\, (D^2 \ell^{\mathrm{mn}}_{\mathrm{pen}})^{-1}
&
\Afh
\, (D^2 \ell^{\mathrm{mn}}_{\mathrm{pen}})^{-1}
\, \Afh ^\top
+
(D_{\tauf} \psi_{\lambdaf} 
)^{-1}
\, \Bfh
\, ((D_{\tauf} \psi_{\lambdaf} 
)^{-1})^\top
\end{pmatrix} ,
\end{align*}
which shows $((\Ffh^{\mathrm{po}})^{-1})_{[K]} = (\Ffh^{\mathrm{mn}})^{-1}$.
Taking the expectation $\Ebb^{\mathrm{po}}_{\thetaf, \tauf}$ of~\eqref{P2a:eq_Hessian_poisson_tau} yields
\begin{align*}
\Ebb^{\mathrm{po}}_{\thetaf, \tauf}(D^2 \ell_{\mathrm{pen}}^{\mathrm{po}} (\thetaf , \tauf))
&= (D \psi (\thetaf , \tauf)^\top \, \Ebb^{\mathrm{po}}_{\thetaf, \tauf}\left( [ D \check{\ell}_{\mathrm{pen}}^{\mathrm{po}} \circ \psi ] (\thetaf , \tauf) \right) \, D^2 \psi (\thetaf , \tauf) \, D \psi (\thetaf , \tauf)
\\
&\hspace{0.5cm} + \Ebb^{\mathrm{po}}_{\thetaf, \tauf}\left( [ D \check{\ell}_{\mathrm{pen}}^{\mathrm{po}} \circ \psi ] (\thetaf , \tauf) \right) \, D^2 \psi (\thetaf , \tauf) ,
\end{align*}
where the last summand is zero. 
To see this, first note that the $(K + 1)$-th to $(K + L)$-th elements in the row vector
$ \Ebb^{\mathrm{po}}_{\thetaf, \tauf} ( [ D \check{\ell}_{\mathrm{pen}}^{\mathrm{po}} \circ \psi ] (\thetaf , \tauf)) 
$, are
\begin{align*}
\Ebb^{\mathrm{po}}_{\thetaf, \tauf} \left( \frac{\partial}{\partial \lambda^{(l)}} \ell^{\mathrm{po}} (\psi_{\lambdaf} (\thetaf , \tauf)) \right)
= \frac{\sum_{g=1}^{\Gamma^{(l)}}\Ebb^{\mathrm{po}}_{\thetaf, \tauf} ( N_g^{(l)} )}{\psi_{K + s} (\thetaf, \tauf) } - 1 
= \frac{\sum_{g=1}^{\Gamma^{(l)}} \lambda_g^{(l)} (\thetaf, \tauf)}{\sum_{g=1}^{\Gamma^{(l)}} \lambda_g^{(l)} (\thetaf, \tauf) } - 1 
= 0
,
\end{align*}
where $l = 1, \ldots , L$.
Furthermore, $D^2 \psi_k (\thetaf , \tauf) = \mathbf{0}$ for $k = 1, \ldots , K$, i.e., the first $K$ elements of $D^2 \psi (\thetaf , \tauf)$ are $\mathbf{0}$.
Together, this yields $\Ebb^{\mathrm{po}}_{\thetaf, \tauf} ( [ D \check{\ell}_{\mathrm{pen}}^{\mathrm{po}} \circ \psi ] (\thetaf , \tauf) ) \, D^2 \psi (\thetaf , \tauf) = \mathbf{0}$. 
Since \eqref{P2a:eq_Dpsi_hat} holds for arbitrary $(\thetaf, \tauf)$ (not only for $(\thetafh, \taufh)$) an equivalent computation as the one leading to~\eqref{P2a:eq_Hessian_poisson_tauhat_expanded} yields
(ommitting the argument $(\thetaf, \tauf)$ in all functions)
\begin{align}
(\Ebb^{\mathrm{po}}_{\thetaf, \tauf} ( D^2 \ell_{\mathrm{pen}}^{\mathrm{po}} 
))_{[K]} 
= 
\Ebb^{\mathrm{po}}_{\thetaf, \tauf} ( D^2 \ell^{\mathrm{mn}}_{\mathrm{pen}} 
)
+ (D_{\thetaf} \psi_{\lambdaf} 
)^\top 
\, \Ebb^{\mathrm{po}}_{\thetaf, \tauf} ( [D^2 \ell^{\mathrm{po}} \circ \psi_{\lambdaf} 
]) \, D_{\thetaf} \psi_{\lambdaf} 
.
\label{P2a:eq_expected_Hessian_poisson_tau_RxR}
\end{align}
Furthermore,
\begin{align*}
&\bigl( \Ebb^{\mathrm{po}}_{\thetaf, \tauf} ( D^2 \ell_{\mathrm{pen}}^{\mathrm{po}} (\thetaf , \tauf)
) \bigr)^{-1}
\\
=
~&( D \psi (\thetaf , \tauf) )^{-1} 
\, \bigl( \Ebb^{\mathrm{po}}_{\thetaf, \tauf} ( [ D^2 \check{\ell}_{\mathrm{pen}}^{\mathrm{po}} \circ \psi ] (\thetaf , \tauf)  ) \bigr)^{-1} 
\, ( (D \psi (\thetaf , \tauf))^\top )^{-1}.
\end{align*}
Since 
\eqref{P2a:eq_Dpsi_hat_inv} holds for arbitrary $(\thetaf, \tauf)$ (not only for $(\thetafh, \taufh)$) and since
\begin{align*}
&\bigl( \Ebb^{\mathrm{po}}_{\thetaf, \tauf} ( [ D^2 \check{\ell}_{\mathrm{pen}}^{\mathrm{po}} \circ \psi ] (\thetaf , \tauf)  ) \bigr)^{-1} 
\\
= ~& 
\begin{pmatrix}
(\Ebb^{\mathrm{po}}_{\thetaf, \tauf} ( D^2 \ell_{\mathrm{pen}}^{\mathrm{mn}} (\thetaf ) 
))^{-1}
& 
\mathbf{0} 
\\
\mathbf{0} 
& 
(\Ebb^{\mathrm{po}}_{\thetaf, \tauf} ( [D^2 \ell^{\mathrm{po}} \circ \psi_{\lambdaf}] (\thetaf , \tauf) 
))^{-1}
\end{pmatrix}
\end{align*}
has again block matrix structure, setting $\Af = (D_{\tauf} \psi_{\lambdaf} (\thetaf , \tauf) 
)^{-1} \, D_{\thetaf} \psi_{\lambdaf} (\thetaf , \tauf)$ and $\Bf := ([D^2 \ell^{\mathrm{po}} \circ \psi_{\lambdaf}] (\thetaf , \tauf) 
)^{-1}$, we obtain
\begin{align*}
&\bigl( \Ebb^{\mathrm{po}}_{\thetaf, \tauf} ( D^2 \ell_{\mathrm{pen}}^{\mathrm{po}} (\thetaf , \tauf)
) \bigr)^{-1}
\\
=& \hspace{-0.1cm} 
\begin{pmatrix}
(\Ebb^{\mathrm{po}}_{\thetaf, \tauf} ( D^2 \ell_{\mathrm{pen}}^{\mathrm{mn}} (\thetaf ) 
))^{-1}
& 
- (\Ebb^{\mathrm{po}}_{\thetaf, \tauf} ( D^2 \ell_{\mathrm{pen}}^{\mathrm{mn}} (\thetaf ) 
))^{-1}
\, \Af^\top
\\
- \Af 
\, (\Ebb^{\mathrm{po}}_{\thetaf, \tauf} ( D^2 \ell_{\mathrm{pen}}^{\mathrm{mn}} (\thetaf ) 
))^{-1}
&
\Af
\, (\Ebb^{\mathrm{po}}_{\thetaf, \tauf} ( D^2 \ell_{\mathrm{pen}}^{\mathrm{mn}} (\thetaf ) 
))^{-1}
\, \Af ^\top
+
(D_{\tauf} \psi_{\lambdaf} 
)^{-1}
\, \Bf
\, ((D_{\tauf} \psi_{\lambdaf} 
)^{-1})^\top
\end{pmatrix} 
\hspace{-0.1cm}.
\end{align*}
Thus, $( ( \Ebb^{\mathrm{po}}_{\thetaf, \tauf} ( \Ff^{\mathrm{po}} 
) )^{-1})_{[K]}
=
(\Ebb^{\mathrm{po}}_{\thetaf, \tauf} ( \Ff^{\mathrm{mn}} 
))^{-1}$.
We now show that on the parameter space restricted by $\lambdaf = \nf$, this is identical to
$(\Ebb^{\mathrm{mn}}_{\thetaf} ( \Ff^{\mathrm{mn}} 
))^{-1}$.
We have
\begin{align*}
D^2_{\thetaf} \ell_{\mathrm{pen}}^{\mathrm{mn}} (\thetaf ) = \sum_{l=1}^L \sum_{g=1}^{\Gamma^{(l)}} n_g^{(l)} D^2_{\thetaf} \log (p_g^{(l)}(\thetaf) ) + D^2_{\thetaf} \pen(\thetaf).
\end{align*}
Taking the expectations $\mathbb{E}^{\mathrm{po}}_{\thetaf, \tauf}$ and $\mathbb{E}^{\mathrm{mn}}_{\thetaf}$ 
(replacing the observations $n_g^{(l)}$ with the random variables $N_g^{(l)}$),
$D^2_{\thetaf} \log (p_g^{(l)}(\thetaf) )$ and $D^2_{\thetaf} \pen(\thetaf)$ are constants and thus the expectations $\mathbb{E}^{\mathrm{po}}_{\thetaf, \tauf} (D^2_{\thetaf} \ell_{\mathrm{pen}}^{\mathrm{mn}} (\thetaf ) )$ and $\mathbb{E}^{\mathrm{mn}}_{\thetaf} (D^2_{\thetaf} \ell_{\mathrm{pen}}^{\mathrm{mn}} (\thetaf ) )$ only differ in $\mathbb{E}^{\mathrm{po}}_{\thetaf, \tauf} (N_g^{(l)}) = \lambda_g^{(l)} (\thetaf, \tauf)$ and  $\mathbb{E}^{\mathrm{mn}}_{\thetaf} (N_g^{(l)}) = n^{(l)} \, p_g^{(l)}(\thetaf)$ for $l = 1, \ldots , L, \, g = 1, \ldots , \Gamma^{(l)}$.
Under the restriction $\lambda^{(l)} = n^{(l)}$ for all $l = 1, \ldots , L$ (corresponding to the estimator $\lambdah^{(l)}$ of \eqref{P2a:eq_factorized_likelihood}), \eqref{P2a:eq_poisson_rate_transformation} gives us $\lambda_g^{(l)} (\thetaf, \tauf) = n^{(l)} \, p_g^{(l)}(\thetaf)$.
Thus, $\mathbb{E}^{\mathrm{po}}_{\thetaf, \tauf} (D^2_{\thetaf} \ell_{\mathrm{pen}}^{\mathrm{mn}} (\thetaf )  \, \vert \, \lambdaf = \nf) = \mathbb{E}^{\mathrm{mn}}_{\thetaf} (D^2_{\thetaf} \ell_{\mathrm{pen}}^{\mathrm{mn}} (\thetaf ) )$, which yields $((\Ebb^{\mathrm{po}}_{\thetaf, \tauf} ( \Ff^{\mathrm{po}} \, \vert \, \lambdaf = \nf))^{-1})_{[K]}
= (\Ebb^{\mathrm{mn}}_{\thetaf} ( \Ff^{\mathrm{mn}}))^{-1}$.
Furthermore, together with~\eqref{P2a:eq_expected_Hessian_poisson_tau_RxR} we obtain $(\Ebb^{\mathrm{po}}_{\thetaf, \tauf} ( \Ff^{\mathrm{po}} \, \vert \, \lambdaf = \nf 
))_{[K]} 
\neq
\Ebb^{\mathrm{mn}}_{\thetaf} ( \Ff^{\mathrm{mn}} 
)$.
\end{proof}

\subsection*{Formulation and Proof of Lemma~\ref{P2a:lemma_splitting_histograms}}
Recall the notation introduced to construct multinomial data in Section~\ref{P2:chapter_multinomial_regression},
in particular the sets of indices
$\Ical^{(l)} := \{ i \in \{ 1, \ldots , N \} ~|~ \xf_i = \xf^{(l)} \}$,
$\Ical_{\mathrm{d}}^{(l)} := \{ i \in \Ical^{(l)} ~|~ y_i \in \Yd 
\}$ 
and 
$\Ical_{\mathrm{c}}^{(l)} := \{ i \in \Ical^{(l)} ~|~  y_i \in \Yc \setminus \Yd \} 
$
for $l = 1, \ldots , L$.
To formalize the statement that finer partitions of $\Ical^{(l)}$, $l \in \{ 1, \ldots , L\}$ can be used for count data construction, without changing estimation results in Lemma~\ref{P2a:lemma_splitting_histograms} below, we extend this notation:
For $l \in \{1, \ldots, L\}$ let $\Ical_1^{(l)}, \ldots , \Ical_{\rho^{(l)}}^{(l)}$ be a partition of $\Ical^{(l)}$, where $\rho^{(l)} \in \Nbb$ and $\rho^{(l)} \leq n^{(l)}$.
For $r \in \{ 1, \ldots, \rho^{(l)}\}$, let $\Ical_{\mathrm{d} \, r}^{(l)} := \{ i \in \Ical_r^{(l)} ~|~ y_i \in \Yd\}$ and $\Ical_{\mathrm{c} \, r}^{(l)} := \Ical_r^{(l)} \setminus \Ical_{\mathrm{d} \, r}^{(l)}$. 
Furthermore,
for $l \in \{ 1, \ldots, L \}$ and $r \in \{ 1, \ldots, \rho^{(l)} \}$
set 
\[
\Ical_{g}^{(l)} := 
\begin{cases}
\Ical_{\mathrm{c}}^{(l)} & \text{for}~ g = 1, \ldots , G^{(l)}
\\
\Ical_{\mathrm{d}}^{(l)} & \text{for}~ g = G^{(l)} + 1, \ldots , \Gamma^{(l)}
\end{cases} 
~\text{and}~
\Ical_{g \, r}^{(l)} := 
\begin{cases}
\Ical_{\mathrm{c} \, r}^{(l)} & \text{for}~ g = 1, \ldots , G^{(l)}
\\
\Ical_{\mathrm{d} \, r}^{(l)} & \text{for}~ g = G^{(l)} + 1, \ldots , \Gamma^{(l)}
\end{cases} 
\, .
\]
Recall that $\Af_{[K]}$ denotes the upper left $K \times K$ submatrix of a matrix~$\Af$.

\begin{lem}\label{P2a:lemma_splitting_histograms}
For $g = 1, \ldots , \Gamma^{(l)}$, $l = 1, \ldots, L$, and $r = 1, \ldots, \rho^{(l)}$, let $n_g^{(l)} = 
\sum_{i \in \Ical_{g}^{(l)}} \mathbbm{1}_{U_g^{(l)}}(y_i)$
and
$n_{g \, r}^{(l)} 
= \sum_{i \in \Ical_{g \, r}^{(l)}} \mathbbm{1}_{U_g^{(l)}} (y_i) 
$.
For $n_g^{(l)}$, consider the distributional assumption as in Theorem~\ref{P2:proposition_equivalence_multniomial_poisson}~\ref{P2:proposition_equivalence_multniomial_poisson_assumption_poisson}, i.e.,
view $n_g^{(l)}$ as realizations of Poisson variables $N_g^{(l)} \sim Po (\lambda_g^{(l)} ( \thetaf, \tauf))$, independent for $g = 1, \ldots , \Gamma^{(l)}, l = 1, \ldots, L$, where the Poisson rates are modeled via 
$\lambda_g^{(l)} ( \thetaf, \tauf) = \exp ( \eta_g^{(l)} (\thetaf) + \tau^{(l)} )$ 
with a predictor $\eta_g^{(l)} (\thetaf) \in \Rbb$ for an unknown coefficient vector 
$\thetaf \in 
\Rbb^{K}$ and 
$\tauf = (\tau^{(1)}, \ldots , \tau^{(L)} )^\top \in \Rbb^L $.
Set $\lambda^{(l)} := \sum_{g=1}^{\Gamma^{(l)}} \lambda_g^{(l)} (\thetaf , \tauf)$, $n^{(l)} := \sum_{g=1}^{\Gamma^{(l)}} n_g^{(l)}$,  $\lambdaf = (\lambda^{(1)} , \ldots , \lambda^{(L)})$, and $\nf = (n^{(1)}, \ldots , n^{(L)})$ for $l = 1, \ldots , L$.
Given an additive penalty $\pen(\thetaf)$ on the log-likelihood, denote the PMLE for $(\thetaf, \tauf)$ with $(\thetafh, \taufh)$,
the 
penalized Fisher information at $(\thetafh, \taufh)$
with $\Ffh_{\mathrm{pen}}$
and the 
expected penalized Fisher information (with respect to the Poisson distributional assumption given $(\thetaf, \tauf)$) 
on the restricted parameter space 
$\{ (\thetaf , \tauf) \in \Rbb^K 
\times \Rbb^L ~|~ \lambdaf = \nf \}$ with $\Ebb_{\thetaf, \tauf} ( \Ff_{\mathrm{pen}} ~|~ \lambdaf = \nf)$. 
\begin{enumerate}[label=\arabic*)]
\item\label{P2a:lemma_splitting_histograms_wo_weights}
Viewing $n_{g \, r}^{(l)}$ as realizations of Poisson variables $N_{g \, r}^{(l)} \sim Po (\lambda_g^{(l)} ( \thetaf, \tauf))$,  independent for $g = 1, \ldots , \Gamma^{l)}, r = 1, \ldots , \rho^{(l)}, l = 1, \ldots , L$,
given the same penalty term $\pen(\thetaf)$ as above,
the PMLE for $\thetaf$ equals $\thetafh$, the upper left $K \times K$ submatrix of the inverse penalized Fisher information at the PMLEs for $(\thetaf, \tauf)$ 
equals $((\Ffh_{\mathrm{pen}})^{-1})_{[K]}$
and the upper left $K \times K$ submatrix of the inverse expected penalized Fisher information equals $((\Ebb_{\thetaf, \tauf} ( \Ff_{\mathrm{pen}} \, \vert \, \lambdaf = \nf))^{-1})_{[K]}$. 
\item\label{P2a:lemma_splitting_histograms_with_weights}
Viewing $n_{g \, r}^{(l)}$ as realizations of Poisson variables $N_{g \, r}^{(l)} \sim Po (\omega_{g \, r}^{(l)} \, \lambda_g^{(l)} ( \thetaf, \tauf)$, independent for $g = 1, \ldots , \Gamma^{l)}, r = 1, \ldots , \rho^{(l)}, l = 1, \ldots , L$, where $\omega_{g \, r}^{(l)} > 0$ are arbitrary weights with $\sum_{r = 1}^{\rho^{(l)}} \omega_{g \, r}^{(l)} = 1$, yields a penalized likelihood function which is proportional to the one of distributional assumption
on $n_g^{(l)}$,
i.e., in particular the same PMLE $(\thetafh, \taufh)$ for $(\thetaf, \tauf)$ and identical penalized Fisher informations $\Ffh_{\mathrm{pen}}$ and $\Ebb_{\thetaf, \tauf} ( \Ff_{\mathrm{pen}} ~|~ \lambdaf = \nf)$.
\end{enumerate}
\end{lem}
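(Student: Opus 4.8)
The plan is to reduce both statements to the observation that summing over a finer partition of $\Ical^{(l)}$ merely re-labels which observation contributes to which count \emph{at the same covariate combination}, so that the aggregated data $(n_g^{(l)})$ is a deterministic function of the refined data $(n_{g\,r}^{(l)})$ via $n_g^{(l)} = \sum_{r=1}^{\rho^{(l)}} n_{g\,r}^{(l)}$, and that the predictor $\eta_g^{(l)}(\thetaf)$ is the same for all refinement indices $r$ because it depends only on $(l,g)$. For part~\ref{P2a:lemma_splitting_histograms_wo_weights}, I would write out the Poisson log-likelihood for the refined data, $\sum_{l=1}^L \sum_{g=1}^{\Gamma^{(l)}} \sum_{r=1}^{\rho^{(l)}} \bigl[ n_{g\,r}^{(l)} \log \lambda_g^{(l)}(\thetaf,\tauf) - \lambda_g^{(l)}(\thetaf,\tauf) - \log(n_{g\,r}^{(l)}!) \bigr]$, and collapse the inner sum over $r$: the first term becomes $n_g^{(l)} \log \lambda_g^{(l)}(\thetaf,\tauf)$, the second term becomes $\rho^{(l)} \lambda_g^{(l)}(\thetaf,\tauf)$, and the third is an additive constant in $(\thetaf,\tauf)$. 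So the refined log-likelihood differs from the aggregated one only by (i) a constant and (ii) the replacement $\lambda_g^{(l)} \mapsto \rho^{(l)}\lambda_g^{(l)}$, i.e. by $\eta_g^{(l)}(\thetaf) + \tau^{(l)} \mapsto \eta_g^{(l)}(\thetaf) + \tau^{(l)} + \log\rho^{(l)}$. This is exactly an $l$-specific shift of the intercept, absorbed by reparameterizing $\tau^{(l)} \mapsto \tau^{(l)} - \log\rho^{(l)}$, which leaves $\thetaf$ (and the penalty $\pen(\thetaf)$, which does not touch $\tauf$) untouched; the reparameterization is a bijective affine map fixing all coordinates except the $\tau^{(l)}$. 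Hence the PMLE for $\thetaf$ is unchanged, and by the same block structure of the Hessian used in the proof of Theorem~\ref{P2:proposition_equivalence_multniomial_poisson} (equation~\eqref{P2a:eq_block_diagonal_Hessian} on the $(\thetaf,\lambdaf)$-level), the upper left $K\times K$ submatrices of the inverse (expected) penalized Fisher information agree. I would spell this out by noting that the chain-rule argument for the Jacobian of $(\thetaf,\tauf)\mapsto(\thetaf,\tauf-\log\rho)$ has $D_{\thetaf}$-block equal to the identity and lower-triangular structure, so conjugating the Hessian by it does not alter the $[K]$-submatrix of the inverse.

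For part~\ref{P2a:lemma_splitting_histograms_with_weights}, I would argue analogously but keep track of the weights $\omega_{g\,r}^{(l)}$. The refined log-likelihood is now $\sum_{l,g,r}\bigl[ n_{g\,r}^{(l)}\log(\omega_{g\,r}^{(l)}\lambda_g^{(l)}(\thetaf,\tauf)) - \omega_{g\,r}^{(l)}\lambda_g^{(l)}(\thetaf,\tauf) - \log(n_{g\,r}^{(l)}!)\bigr]$. Collapsing the inner sum over $r$: the first term splits as $\bigl(\sum_r n_{g\,r}^{(l)}\bigr)\log\lambda_g^{(l)}(\thetaf,\tauf) + \sum_r n_{g\,r}^{(l)}\log\omega_{g\,r}^{(l)} = n_g^{(l)}\log\lambda_g^{(l)}(\thetaf,\tauf) + \text{const}$, and the second term becomes $\bigl(\sum_r \omega_{g\,r}^{(l)}\bigr)\lambda_g^{(l)}(\thetaf,\tauf) = \lambda_g^{(l)}(\thetaf,\tauf)$ precisely because $\sum_{r=1}^{\rho^{(l)}}\omega_{g\,r}^{(l)} = 1$. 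So here there is \emph{no} intercept shift at all: the refined penalized log-likelihood equals the aggregated one plus an additive constant, hence the two penalized likelihoods are proportional. Consequently the PMLE $(\thetafh,\taufh)$ is literally the same, and all Fisher informations (the observed one at $(\thetafh,\taufh)$ and the expected one on the restricted parameter space $\lambdaf=\nf$) coincide, since additive constants in the log-likelihood vanish under differentiation and the constraint $\lambdaf=\nf$ together with $\sum_r\omega_{g\,r}^{(l)}=1$ is unaffected.

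I do not expect a serious obstacle here; the whole lemma is essentially bookkeeping once one notices that refinement is a pushforward along a linear count map and that the predictor is constant in the refinement index. The one step requiring a little care is the Fisher-information claim in part~\ref{P2a:lemma_splitting_histograms_wo_weights}: I need to verify that the $\log\rho^{(l)}$-shift, although it changes the individual entries of the expected penalized Fisher information (via the $\tau$-block, exactly as the $\psi$-reparameterization does in the proof of Theorem~\ref{P2:proposition_equivalence_multniomial_poisson}), does not change the $[K]$-submatrix of its \emph{inverse}. This follows by reusing the block-inversion computation already carried out in that proof — the Jacobian of the intercept shift is the identity on the $\thetaf$-coordinates and block-triangular overall, so the argument leading to $((\Ffh^{\mathrm{po}})^{-1})_{[K]} = (\Ffh^{\mathrm{mn}})^{-1}$ transfers verbatim with $\psi$ replaced by this simpler affine map — and I would simply cite that computation rather than redo it. The restriction to $\{\lambdaf=\nf\}$ likewise carries over, since under the shift the estimator $\widehat{\lambda}^{(l)}$ still equals $n^{(l)}$ (the refined Poisson part of the factorized likelihood is maximized at $\widehat{\lambda}^{(l)} = \sum_{g,r} n_{g\,r}^{(l)}/\rho^{(l)} \cdot \rho^{(l)}$... ) — here I would double-check the precise normalization in the factorization~\eqref{P2a:eq_factorized_likelihood} to confirm the constraint set is preserved, which is the only place a sign or factor could slip.
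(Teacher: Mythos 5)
Your part~\ref{P2a:lemma_splitting_histograms_with_weights} coincides with the paper's proof: collapsing the product over $r$ using $\sum_{r}\omega_{g\,r}^{(l)}=1$ and $\sum_{r}n_{g\,r}^{(l)}=n_g^{(l)}$ gives a penalized likelihood proportional to the aggregated one, so there is nothing to add there. For part~\ref{P2a:lemma_splitting_histograms_wo_weights} you take a genuinely different, and arguably more economical, route. The paper re-runs the machinery of Theorem~\ref{P2:proposition_equivalence_multniomial_poisson}: it reparameterizes via $\psi$ to $(\thetaf,\lambdaf)$, obtains a factorization analogous to~\eqref{P2a:eq_factorized_likelihood} whose $\thetaf$-factor is literally the penalized multinomial likelihood (hence PMLE $\thetafh$) and whose Poisson factor carries the extra $\rho^{(l)}$, and then repeats the block-inversion computation to identify the $[K]$-blocks of the inverse Fisher informations. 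You instead observe, at the $(\thetaf,\tauf)$-level, that the refined penalized log-likelihood equals the aggregated one with the intercepts translated by $\log\rho^{(l)}$, up to an additive constant. This is correct and in fact yields more than you claim: the translation has Jacobian $\If_{K+L}$, so no conjugation or block-triangularity argument is needed at all --- the refined PMLE is $(\thetafh,\taufh-\log\rhof)$ componentwise, and the observed (and, since the refined expected counts $\rho^{(l)}\lambda_g^{(l)}(\thetaf,\tauf)$ match the aggregated model at the shifted intercepts, also the expected) penalized Fisher informations coincide as full $(K+L)\times(K+L)$ matrices at corresponding parameter points, whence the $[K]$-blocks of the inverses agree trivially. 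The paper's factorization buys the explicit link to the multinomial Fisher information; your translation argument buys brevity.

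The one point to fix is your parenthetical about the restriction set: in the refined model the Poisson factor of the factorization is maximized at $\lambdah^{(l)}=n^{(l)}/\rho^{(l)}$, not at $n^{(l)}$. What is preserved is the fitted total mean $\rho^{(l)}\lambdah^{(l)}=n^{(l)}$; equivalently, under your intercept shift the aggregated restriction $\{\lambdaf=\nf\}$ corresponds exactly to the set where the refined model's expected total count per covariate combination equals $n^{(l)}$, which is the restriction relevant for the expected penalized Fisher information. Stated this way the expected-information claim carries over; stated as ``$\lambdah^{(l)}$ still equals $n^{(l)}$'' it would be false.
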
 

\begin{rem}
\begin{enumerate}[label = \arabic*)]
\item
Distributional assumption~\ref{P2a:lemma_splitting_histograms_wo_weights} ``only'' yields equal results with respect to $\thetaf$.
By adjusting the Poisson rates via weights in \ref{P2a:lemma_splitting_histograms_with_weights}, we obtain the same estimation results with respect to the whole vector of unknown parameters $(\thetaf, \tauf)$.
As we are only interested in estimating $\thetaf$, it is sufficient to use distributional assumption~\ref{P2a:lemma_splitting_histograms_wo_weights}.
\item
By Theorem~\ref{P2:proposition_equivalence_multniomial_poisson}, the estimation results with respect to $\thetaf$ obtained by both distributional assumptions~\ref{P2a:lemma_splitting_histograms_wo_weights} and~\ref{P2a:lemma_splitting_histograms_with_weights} on $n_{g \, r}^{(l)}$ are also identical to the ones obtained by the multinomial distributional assumption on $\mathbf{n}^{(l)} = (n_1^{(l)}, \ldots , n_{\Gamma^{(l)}}^{(l)}) \in \Nbb^{\Gamma^{(l)}}$ as in Theorem~\ref{P2:proposition_equivalence_multniomial_poisson}~\ref{P2:proposition_equivalence_multniomial_poisson_assumption_multinom}.
\item
In the most extreme case, $\rho^{(l)} = n^{(l)}$ for all $l = 1, \ldots, L$, the construction in the lemma yields one vector of counts per observation (with one count of $1$ and all others $0$).
While this is usually not desirable with respect to computation time in practice, since it inflates the size of the data set artificially, it shows the estimation does not depend on aggregation of observations and the presented approach is thus overall coherent.
\end{enumerate}
\end{rem}

\begin{proof}[Proof of Lemma~\ref{P2a:lemma_splitting_histograms}]
\begin{enumerate}[label = \arabic*)]
\item
First, note that per construction, the sets $\Ical_{\mathrm{c} \, 1}^{(l)}, \ldots , \Ical_{\mathrm{c} \, \rho^{(l)}}^{(l)}$ form a partition of $\Ical_{\mathrm{c}}^{(l)}$ for $l = 1, \ldots , L$ (analogously for $\Ical_{\mathrm{d}}^{(l)}$).
This yields 
\begin{align*}
\sum_{r=1}^{\rho^{(l)}} \left( n_{g \, r}^{(l)} \right)
= \sum_{r=1}^{\rho^{(l)}} \sum_{i \in \Ical_{g \, r}^{(l)}} \mathbbm{1}_{U_g^{(l)}} (y_i) 
= \sum_{i \in \Ical_{g}^{(l)}} \mathbbm{1}_{U_g^{(l)}} (y_i) 
= n_g^{(l)},
\end{align*}
for $g = 1, \ldots , \Gamma^{(l)}$.
With $\Pen(\thetaf) := \exp (\pen(\thetaf))$, the penalized likelihood function given independent observations $n_{g \, r}^{(l)}$ of $N_{g \, r}^{(l)} \sim Po ( \lambda_g^{(l)} ( \thetaf, \tauf))$ is
\begin{align*}
\prod_{l=1}^L \prod_{g=1}^{\Gamma^{(l)}} \prod_{r=1}^{\rho^{(l)}} 
\Biggl( \frac{\lambda_g^{(l)} \left( \thetaf, \tauf \right)^{n_{g \, r}^{(l)}}}{n_{g \, r}^{(l)}!} \exp\left(-\lambda_g^{(l)} \left( \thetaf, \tauf \right)\right) \Biggr)
\, \Pen(\thetaf)
. 
\end{align*}
As in 
the Proof of Theorem~\ref{P2:proposition_equivalence_multniomial_poisson}, we reparameterize $(\thetaf, \tauf) = \psi^{-1} (\thetaf, \lambdaf)$ to obtain $\lambda_g^{(l)} (\thetaf , \tauf)
= \lambda_g^{(l)} \left( \psi^{-1} (\thetaf , \lambdaf) \right)
= [ \exp ( \eta_g^{(l)} (\thetaf) ) \, \lambda^{(l)}] \, / \, [\sum_{g'=1}^{\Gamma^{(l)}}  \exp (\eta_{g'}^{(l)} (\thetaf) ) ]$.
Then, analogously to \eqref{P2a:eq_factorized_likelihood}, the penalized likelihood is proportional to
\begin{align*}
&\prod_{l=1}^L \Biggl( \frac{n^{(l)}!}{\prod_{g=1}^{\Gamma^{(l)}} n_g^{(l)}!} \prod_{g=1}^{\Gamma^{(l)}} \Biggl( \frac{ \exp \bigl( \eta_g^{(l)} (\thetaf) \bigr)}{\sum_{g'=1}^{\Gamma^{(l)}}  \exp\bigl(\eta_{g'}^{(l)} (\thetaf) \bigr)} \Biggr)^{n_g^{(l)}} \Biggr) \, \Pen(\thetaf) \notag
\\
& 
\cdot \prod_{l=1}^L \Bigl( \frac{(\rho^{(l)} \, \lambda^{(l)})^{n^{(l)}}}{n^{(l)}!} \exp(- \rho^{(l)} \, \lambda^{(l)}) \Bigr). 
\end{align*}
The first part, modeling $\thetaf$, is identically to the first part in the factorized penalized likelihood~\eqref{P2a:eq_factorized_likelihood} and thus yields $\thetafh$ as PMLE for $\thetaf$. 
The second part, which models $\lambda^{(l)}$ (and indirectly $\tau^{(l)}$) for $l = 1, \ldots , L$, differs by the additional factor $\rho^{(l)}$.
Since the structure of the factorization is equal to the one in~\eqref{P2a:eq_factorized_likelihood}, the proof that the upper $K \times K$ submatrices of the inverse penalized Fisher information at the PMLE for $(\thetaf, \tauf)$ and of the inverse of the expected version, i.e., the submatrices corresponding to $\thetaf$, are identical to $((\Ffh_{\mathrm{pen}})^{-1})_{[K]}$ and $((\Ebb^{\mathrm{po}}_{\thetaf, \tauf} ( \Ff_{\mathrm{pen}} \, \vert \, \lambdaf = \nf))^{-1})_{[K]}$, 
can be carried out analogously to the respective part of the Proof of Theorem~\ref{P2:proposition_equivalence_multniomial_poisson}.
\item
Using the adjusted Poisson rates, i.e., viewing $n_{g \, r}^{(l)}$ as independent observations of $N_{g \, r}^{(l)} \sim Po ( \omega_{g \, r}^{(l)} \, \lambda_g^{(l)} ( \thetaf, \tauf))$, where $r = 1, \ldots , \rho^{(l)}$, $g = 1, \ldots , \Gamma^{(l)}$, $l = 1, \ldots , L$, the penalized likelihood function is
\begin{align*}
&\prod_{l=1}^L \prod_{g=1}^{\Gamma^{(l)}} \prod_{r=1}^{\rho^{(l)}} 
\Biggl( \frac{\omega_{g \, r}^{(l)} \, \lambda_g^{(l)} \left( \thetaf, \tau^{(l)} \right)^{n_{g \, r}^{(l)}}}{n_{g \, r}^{(l)}!} \exp\left(-\omega_{g \, r}^{(l)} \, \lambda_g^{(l)} \left( \thetaf, \tau^{(l)} \right)\right) \Biggr)
\, \Pen(\thetaf)
\\
\propto ~&
\prod_{l=1}^L \prod_{g=1}^{\Gamma^{(l)}} 
\Bigl( \lambda_g^{(l)} \left( \thetaf, \tau^{(l)} \right)^{n_{g}^{(l)}} \exp\left(-\lambda_g^{(l)} \left( \thetaf, \tau^{(l)} \right)\right) \Bigr)
\, \Pen(\thetaf) 
= \Lcal_{\mathrm{pen}}^{\mathrm{po}} ( \thetaf, \tauf ),
\end{align*}
where $\Lcal_{\mathrm{pen}}^{\mathrm{po}} ( \thetaf, \tauf )$ is the penalized likelihood~\eqref{P2a:equation_likelihood} given independent observations $n_g^{(l)}$ of $N_g^{(l)} \sim Po ( \lambda_g^{(l)} ( \thetaf, \tauf) )$. 
Hence, the penalized likelihoods are proportional and thus yield the same estimation results.
\qedhere
\end{enumerate}
\end{proof}

\section{Transforming effects to desired reference coding}\label{P2a:chapter_reference_coding_application}
The function \texttt{densreg()} used for fitting density regression models contained in our \texttt{R} package \texttt{DensityRegression} (developer version on \url{https://github.com/Eva2703/DensityRegression})
uses the function \texttt{gam()} of the well-known package \texttt{mgcv} \citep{wood2017} for Poisson regression. To the best of the author's knowledge, the reference coding used in 
\citet{maier2021} cannot be specified directly using \texttt{mgcv::gam()}, since it does not seem intended to specify a reference value for a continuous covariate like \emph{year}.
Thus, also the reference \emph{year} cannot be specified directly in \texttt{densreg()}.
Instead, we transform the received estimated effects, appropriately.
The model actually estimated via \texttt{densreg()} is
\begin{align}
f_{\text{\emph{West\_East, c\_age, year}}} &= \betat_0 \oplus \betat_{\text{\emph{West\_East}}} 
\oplus \betat_{c\_age} \oplus \betat_{\text{\emph{c\_age, West\_East}}} \notag \\
&\hspace{0.45 cm} \oplus \tilde{g}_0(year) \oplus \tilde{g}_{\text{\emph{West\_East}}} (year) \oplus \tilde{g}_{c\_age} (year) 
\notag \\ 
&\hspace{0.45 cm} \oplus \tilde{g}_{\text{\emph{c\_age, West\_East}}}(year)
, \label{P2a:soep_model_effect}
\end{align}
with reference categories \emph{West} for \emph{West\_East} and \emph{other} for \emph{c\_age}.\footnote{Regarding the interaction effects for \emph{West\_East} and \emph{c\_age}, note that it is not possible to use the argument \texttt{by} of the smoother \texttt{mgcv::ti()} with more than one variable.
Thus, we combine \emph{c\_age} and \emph{West\_East} in a new covariate for the corresponding interaction terms, where all combinations containing either $\text{\emph{West\_East}} = \text{\emph{West}}$ or $\text{\emph{c\_age}} = \text{\emph{other}}$ are set to the same (reference) category, such that they all have an effect of zero, in accordance with the reference coding.}
The effects as discussed in Section~\ref{P2a:chapter_application}, 
i.e., with reference category $1991$ for \emph{year}, are obtained by adding the respective smooth effect for $1991$ for (group specific) intercepts and subtracting it for (group specific) smooth effects:
\begin{align*}
\beta_0 &= \betat_0 \oplus \tilde{g}_0(1991)
\\
\beta_{\text{\emph{West\_East}}} &= \betat_{\text{\emph{West\_East}}} 
\oplus \tilde{g}_{\text{\emph{West\_East}}} (1991) 
\\
\beta_{\text{\emph{c\_age}}} &= \betat_{\text{\emph{c\_age}}} 
\oplus \tilde{g}_{\text{\emph{c\_age}}}(1991) 
\\
\beta_{\text{\emph{c\_age, West\_East}}} &= \betat_{\text{\emph{c\_age, West\_East}}} 
\oplus \tilde{g}_{\text{\emph{c\_age, West\_East}}} (1991) 
\\
g_0(\text{\emph{year}}) &= \tilde{g}_0 (\text{\emph{year}}) \ominus \tilde{g}_0(1991) 
\\
g_{\text{\emph{West\_East}}}(\text{\emph{year}}) &= \tilde{g}_{\text{\emph{West\_East}}} (year) 
\ominus \tilde{g}_{West\_East}(1991) 
\\
g_{\text{\emph{c\_age}}} (\text{\emph{year}}) &= \tilde{g}_{\text{\emph{c\_age}}} (\text{\emph{year}}) 
\ominus \tilde{g}_{\text{\emph{c\_age}}}(1991) 
\\
g_{\text{\emph{c\_age, West\_East}}}(\text{\emph{year}}) &= \tilde{g}_{\text{\emph{c\_age, West\_East}}}(\text{\emph{year}}) 
\ominus \tilde{g}_{\text{\emph{c\_age, West\_East}}} (1991) 
\end{align*}
and the resulting model is equivalent to model~(\ref{P2a:soep_model_effect}), i.e.,
\begin{align*}
f_{\text{\emph{West\_East, c\_age, year}}} 
&= \beta_0 \oplus \beta_{\text{\emph{West\_East}}} 
\oplus \beta_{c\_age} \oplus \beta_{\text{\emph{c\_age, West\_East}}} \notag \\
&\hspace{0.45 cm} \oplus g_0(year) \oplus g_{\text{\emph{West\_East}}} (year) \oplus g_{c\_age} (year) 
\notag \\
&\hspace{0.45 cm} \oplus g_{\text{\emph{c\_age, West\_East}}}(year)
\notag \\
&= \betat_0 \oplus \betat_{West\_East} 
\oplus \betat_{c\_age} \oplus \betat_{c\_age, \, West\_East} \notag \\
&\hspace{0.45 cm} \oplus \tilde{g}_0 (year) \oplus \tilde{g}_{\text{\emph{West\_East}}} (year) \oplus \tilde{g}_{c\_age} (year) 
\notag \\
&\hspace{0.45 cm} \oplus \tilde{g}_{c\_age, \, West\_East}(year). \notag 
\end{align*}

To obtain confidence regions for the transformed effects, note that all transformations are of the form 
$\hat{h} 
= \bigoplus_{j \in \Jcal} \varsigma_j \odot \hat{h}_{j} (\xf_{i_j})
= \bigoplus_{j \in \Jcal} \varsigma_j \odot ( \bfe_{\Xcal, \, j}(\xf_{i_j}) \ootimes \bfe_{\Ycal} )^\top \thetafh_{\mathrm{po}, \, j}$, 
where $\varsigma_j \in \{ -1, 1\}, ~ \Jcal \subseteq \{1, \ldots , J\}$, and $i_j \in \{ 1, \ldots , N \}$ for $j \in \Jcal$, where $\hat{h}_{j} (\xf_{i_j})$ denote estimated partial effects, compare Section~\ref{P2:chapter_bayes_space_regression}.
Confidence regions can be constructed as in Lemma~\ref{P2:lemma_confidence_regions}, but letting the covariate $\xf$ vary with $j \in \Jcal$ and adding the signs $\varsigma_j$ to the definition of the transformation matrix, i.e., setting 
$\Af
= \sum_{j \in \Jcal} \varsigma_j \cdot (\bfe_{\Xcal} (\xf_{i_j})^\top \otimes \Id_{K_\Ycal}) \Sf_{j}$.


\section{Including observation weights}\label{P2a:chapter_weights}
The SOEP data analyzed in Section~\ref{P2a:chapter_application} contains sampling weights for all individual observations.
Denote the sampling weight of the $i$-th observation with $\gamma_i$, $i = 1, \ldots , N$.
They would reasonably be included by replacing the counts $n_g^{(l)}$ in Section~\ref{P2:chapter_multinomial_regression} with weighted counts, i.e., setting 
$n_g^{(l)} := \sum_{i \in \Ical_{g}^{(l)}} \gamma_i \mathbbm{1}_{U_g^{(l)}}(y_i)$ for $g = 1, \ldots , \Gamma^{(l)}, l = 1, \ldots , L$,
where $\Ical_{g }^{(l)} := \Ical_{\mathrm{c}}^{(l)}$, if $g = 1, \ldots , G^{(l)}$, and $\Ical_{g}^{(l)} := \Ical_{\mathrm{d}}^{(l)}$, if $g = G^{(l)} + 1, \ldots , \Gamma^{(l)}$.
However, this is not feasible, since the weights $\gamma_i$ included in the SOEP data are not natural numbers, but positive real numbers.\footnote{Actually, there are even weights of exactly zero. The corresponding observations were removed from the data set, with $151,838$ observations having positive real weights remaining.}
Thus, also the weighted counts $n_g^{(l)}$ are not natural numbers and theoretically cannot serve as response observations for a Poisson model.
Consequently, trying to estimate such a model via the function \texttt{gam()} of the \texttt{R} package \texttt{mgcv} \citep{wood2017}, which our newly developed \texttt{R} package \href{P2a:https://github.com/Eva2703/DensityRegression}{\texttt{DensityRegression}} 
uses to fit Poisson models, produces an error.
In the following, we thus present a work-around, which yields the desired likelihood of a weighted counts Poisson model, but is specifiable in \texttt{gam()}.
For this purpose, we 
use the (unweighted) counts $n_g^{(l)}$ as defined in Section~\ref{P2:chapter_multinomial_regression} as observed response values, but appropriately weight the log-likelihood contributions (specifiable via the \texttt{gam()}-argument \texttt{weights}) and include offsets (specifiable in \texttt{gam()} via an \texttt{offset} in the \texttt{formula} of the model) that cancel the undesired terms. 
In the function \texttt{densreg()}, which is used to fit regression models with our approach in our \texttt{R} package \href{P2a:https://github.com/Eva2703/DensityRegression}{\texttt{DensityRegression}}, the sample weights $\gamma_1, \ldots , \gamma_N$ can be specified via the argument \texttt{sample\_weights}. 
The weights for the log-likelihood contributions and the offset are then constructed appropriately by \texttt{densreg()}.
%

The weights for the log-likelihood contributions are
\begin{align*}
v_g^{(l)} 
&:= \begin{cases}
\frac{\sum_{i \in \Ical_{g}^{(l)}} \gamma_i \mathbbm{1}_{U_g^{(l)}}(y_i)}{\sum_{i \in \Ical_{g}^{(l)}} \mathbbm{1}_{U_g^{(l)}}(y_i)} 
= \frac{\sum_{i \in \Ical_{g}^{(l)}} \gamma_i \mathbbm{1}_{U_g^{(l)}}(y_i)}{n_g^{(l)}}
& , n_g^{(l)} > 0 
\\
1 & , n_g^{(l)} = 0
\end{cases}
, \quad l = 1, \ldots , L, ~g = 1, \ldots , \Gamma^{(l)}.
\end{align*}
The Poisson rates then are modeled with offsets $-\log v_g^{(l)}$, i.e., $\tilde{\lambda}_g^{(l)} := \exp(\eta_g^{(l)} - \log v_g^{(l)}) = \frac{\lambda_g^{(l)}}{v_g^{(l)}}$, where $\lambda_g^{(l)} := \exp(\eta_g^{(l)})$ is the Poisson rate of the corresponding model without offset, $l = 1, \ldots , L, ~g = 1, \ldots , \Gamma^{(l)}$.
Then, the $l$-th log-likelihood contribution in the Poisson model with weighted log-likelihood contributions and offsets is (omitting additive constants via $\propto$)
\begin{align*}
\sum_{g=1}^{\Gamma^{(l)}} v_g^{(l)} \cdot \left( n_g^{(l)} \log \tilde{\lambda}_g^{(l)} - \tilde{\lambda}_g^{(l)} \right)
&= \sum_{g=1}^{\Gamma^{(l)}} \left( v_g^{(l)} n_g^{(l)} \log \lambda_g^{(l)} - v_g^{(l)} n_g^{(l)} \log v_g^{(l)} - v_g^{(l)} \frac{\lambda_g^{(l)}}{v_g^{(l)}}\right)
\\
&\propto \sum_{g=1}^{\Gamma^{(l)}} \left( v_g^{(l)} n_g^{(l)} \log \lambda_g^{(l)} - \lambda_g^{(l)} \right),
\end{align*}
which corresponds to the $l$-th log-likelihood contribution in a Poisson model with rate $\lambda_g^{(l)}$ and observations 
\begin{align*}
v_g^{(l)} n_g^{(l)} 
&= \frac{\sum_{i \in \Ical_{g}^{(l)}} \gamma_i \mathbbm{1}_{U_g^{(l)}}(y_i)}{\sum_{i \in \Ical_{g}^{(l)}} \mathbbm{1}_{U_g^{(l)}}(y_i)} \sum_{i \in \Ical_{g}^{(l)}} \mathbbm{1}_{U_g^{(l)}}(y_i) 
= \sum_{i \in \Ical_{g}^{(l)}} \gamma_i \mathbbm{1}_{U_g^{(l)}}(y_i) , ~ g = 1, \ldots , \Gamma^{(l)},
\end{align*}
i.e., to the $l$-th log-likelihood contribution of a Poisson model using the weighted counts as observations as desired.

\pagebreak
\section{Application}\label{P2a:chapter_application}
In this section, we provide further material related to the application presented in Section~\ref{P2a:chapter_application}.

\subsection{Smoothing Parameters}\label{P2a:chapter_application_smoothing_parameters}
Table~\ref{P2a:table_smoothing_parameters} shows the smoothing parameters 
(rounded to three digits) corresponding to the (group-specific) smooth effects -- modeled via cubic B-splines with second order difference penalty -- obtained by REML 
optimization \citep[Section~6.5.2]{wood2017}.
Note that for all unpenalized marginal basis functions (i.e., $\bfe_{\Ycal}$ and all basis functions $\bfe_j$ modeling non-smooth covariate effects), the corresponding smoothing parameters are zero and not listed in the table.

\begin{table}[H]
\scriptsize
\begin{center}
\begin{tabular}{|l|l|}
\hline 
partial effect $h_j (\xf)$ & 
$\xi_{\Xcal, \, j}$
\\
\hline
$g(year)$ &							  $70.979$ \\
$g_{\text{\emph{East}}} (year)$ &        $302.005$ \\
$g_{\text{\emph{0-6}}} (year)$ &        $8.564$ \\
\hline 
\end{tabular} 
\hspace{0.15cm}
\begin{tabular}{|l|l|}
\hline 
partial effect $h_j (\xf)$ & 
$\xi_{\Xcal, \, j}$
\\
\hline
$g_{\text{\emph{7-18}}} (year)$ &        $28.475$ \\
$g_{\text{\emph{0-6, East}}} (year)$ &   $6.399$ \\
$g_{\text{\emph{7-18, East}}} (year)$ &   $18.311$ \\
\hline 
\end{tabular} 
\caption{Smoothing parameters for (group-specific) smooth covariate effects.\label{P2a:table_smoothing_parameters}}
\end{center}
\end{table}

\subsection{Significance and p-values}\label{P2a:chapter_application_pvalues}
%

Table~\ref{P2a:table_pvalues_simultan} shows the p-values (rounded to three digits) for the null hypothesis that a partial effects equals zero as in Lemma~\ref{P2a:lemma_confidence_regions_simultaneos}, which are simultaneous in the covariates.\footnote{Note that the effects considered here are actually the ones obtained by model~\eqref{P2a:soep_model_effect}, i.e., without transformation to the reference year 1991 as described in appendix~\ref{P2a:chapter_reference_coding_application}, which does not change whether a covariate has an influence or not.}
For most effects, the obtained p-values are smaller than $0.01$ and in particular significant for the significance level $\alpha = 5 \%$.
The only two exceptions are the smooth interaction effects of \emph{year} with \emph{West\_East} and of \emph{year} with \emph{c\_age} and \emph{West\_East}, whose p-values are very large with values close to $1$.


\begin{table}[H]
\scriptsize
\begin{center}
\begin{tabular}{|l|l|}
\hline 
partial effect & p-value
\\
\hline
$\beta_0$ &                               $0.000$ \\
$\beta_{\text{\emph{West\_East}}}$ &          $0.000$ \\
$\beta_{\text{\emph{c\_age}}}$ &          $0.000$ \\
$\beta_{\text{\emph{c\_age, West\_East}}}$ &     $0.000$ \\
\hline
\end{tabular}
\hspace{0.15cm}
\begin{tabular}{|l|l|}
\hline 
partial effect & p-value
\\
\hline
$g(year)$ &							  $0.000$ \\
$g_{\text{\emph{West\_East}}} (year)$ &        $0.999$ \\
$g_{\text{\emph{c\_age}}} (year)$ &        $0.000$ \\
$g_{\text{\emph{c\_age, West\_East}}} (year)$ &   $0.070$ \\
\hline 
\end{tabular} 
\caption{p-values for $H_0: h_j(\xf) = 0$ for the different partial effects $h_j(\xf)$ (simultaneous in covariates).\label{P2a:table_pvalues_simultan}}
\end{center}
\end{table}

%
Table~\ref{P2a:table_pvalues_pointwise} shows the p-values pointwise regarding covariates for each partial effect (in the notation of Lemma~\ref{P2:lemma_confidence_regions}, $\Jcal = \{ j \}$ for $j \in \{ 1, \ldots , J\}$) and the respectively relevant covariate values (rounded to three digits).
Effects, which are zero due to reference coding are not included in the table.
Note that most p-values are very small. 
For a significance level $\alpha = 5 \%$, almost all effects are significant with the only exceptions 
the smooth interaction effect of \emph{year}, \emph{0-6}, and \emph{East} for $1992$ to $1994$, and the smooth interaction effect of \emph{year}, \emph{7-18}, and \emph{East} for $1992$ to $1997$.
Note that all of those effects evaluated at the reference year $1991$ are exactly zero, thus it is no surprise that the years following this one (in \emph{East}, there are no earlier observations) are not significant/have larger p-values.

\begin{table}[H]
\scriptsize
\begin{center}
\begin{tabular}{|l|l|}
\hline 
partial effect & p-value
\\
\hline
$\beta_0$ &                               $0.000$ \\
$\beta_{\text{\emph{East}}}$ &          $0.000$ \\
$\beta_{\text{\emph{0-6}}}$ &           $0.000$ \\
$\beta_{\text{\emph{7-18}}}$ &          $0.000$ \\
$\beta_{\text{\emph{0-6, East}}}$ &     $0.000$ \\
$\beta_{\text{\emph{7-18, East}}}$ &    $0.000$ \\
$g(1984)$ &                                $0.000$ \\
$g(1985)$ &                                $0.000$ \\
$g(1986)$ &                                $0.000$ \\
$g(1987)$ &                                $0.000$ \\
$g(1988)$ &                                $0.000$ \\
$g(1989)$ &                                $0.000$ \\
$g(1990)$ &                                $0.000$ \\
$g(1992)$ &                                $0.000$ \\
$g(1993)$ &                                $0.000$ \\
$g(1994)$ &                                $0.000$ \\
$g(1995)$ &                                $0.000$ \\
$g(1996)$ &                                $0.000$ \\
$g(1997)$ &                                $0.000$ \\
$g(1998)$ &                                $0.000$ \\
$g(1999)$ &                                $0.000$ \\
$g(2000)$ &                                $0.000$ \\
$g(2001)$ &                                $0.000$ \\
$g(2002)$ &                                $0.000$ \\
$g(2003)$ &                                $0.000$ \\
$g(2004)$ &                                $0.000$ \\
$g(2005)$ &                                $0.000$ \\
$g(2006)$ &                                $0.000$ \\
$g(2007)$ &                                $0.000$ \\
$g(2008)$ &                                $0.000$ \\
$g(2009)$ &                                $0.000$ \\
$g(2010)$ &                                $0.000$ \\
$g(2011)$ &                                $0.000$ \\
$g(2012)$ &                                $0.000$ \\
$g(2013)$ &                                $0.000$ \\
$g(2014)$ &                                $0.000$ \\
$g(2015)$ &                                $0.000$ \\
$g(2016)$ &                                $0.000$ \\
$g_{\text{\emph{East}}} (1992)$ &        $0.019$ \\
$g_{\text{\emph{East}}} (1993)$ &        $0.012$ \\
$g_{\text{\emph{East}}} (1994)$ &        $0.008$ \\
$g_{\text{\emph{East}}} (1995)$ &        $0.005$ \\
$g_{\text{\emph{East}}} (1996)$ &        $0.003$ \\
$g_{\text{\emph{East}}} (1997)$ &        $0.002$ \\
$g_{\text{\emph{East}}} (1998)$ &        $0.002$ \\
\hline
\end{tabular}
\hfill
\begin{tabular}{|l|l|}
\hline 
partial effect & p-value
\\
\hline
$g_{\text{\emph{East}}} (1999)$ &        $0.001$ \\
$g_{\text{\emph{East}}} (2000)$ &        $0.001$ \\
$g_{\text{\emph{East}}} (2001)$ &        $0.001$ \\
$g_{\text{\emph{East}}} (2002)$ &        $0.000$ \\
$g_{\text{\emph{East}}} (2003)$ &        $0.000$ \\
$g_{\text{\emph{East}}} (2004)$ &        $0.000$ \\
$g_{\text{\emph{East}}} (2005)$ &        $0.000$ \\
$g_{\text{\emph{East}}} (2006)$ &        $0.000$ \\
$g_{\text{\emph{East}}} (2007)$ &        $0.000$ \\
$g_{\text{\emph{East}}} (2008)$ &        $0.000$ \\
$g_{\text{\emph{East}}} (2009)$ &        $0.000$ \\
$g_{\text{\emph{East}}} (2010)$ &        $0.000$ \\
$g_{\text{\emph{East}}} (2011)$ &        $0.000$ \\
$g_{\text{\emph{East}}} (2012)$ &        $0.000$ \\
$g_{\text{\emph{East}}} (2013)$ &        $0.000$ \\
$g_{\text{\emph{East}}} (2014)$ &        $0.000$ \\
$g_{\text{\emph{East}}} (2015)$ &        $0.000$ \\
$g_{\text{\emph{East}}} (2016)$ &        $0.000$ \\
$g_{\text{\emph{0-6}}} (1984)$ &         $0.000$ \\
$g_{\text{\emph{0-6}}} (1985)$ &         $0.000$ \\
$g_{\text{\emph{0-6}}} (1986)$ &         $0.000$ \\
$g_{\text{\emph{0-6}}} (1987)$ &         $0.000$ \\
$g_{\text{\emph{0-6}}} (1988)$ &         $0.000$ \\
$g_{\text{\emph{0-6}}} (1989)$ &         $0.000$ \\
$g_{\text{\emph{0-6}}} (1990)$ &         $0.001$ \\
$g_{\text{\emph{0-6}}} (1992)$ &         $0.016$ \\
$g_{\text{\emph{0-6}}} (1993)$ &         $0.007$ \\
$g_{\text{\emph{0-6}}} (1994)$ &         $0.002$ \\
$g_{\text{\emph{0-6}}} (1995)$ &         $0.001$ \\
$g_{\text{\emph{0-6}}} (1996)$ &         $0.001$ \\
$g_{\text{\emph{0-6}}} (1997)$ &         $0.000$ \\
$g_{\text{\emph{0-6}}} (1998)$ &         $0.000$ \\
$g_{\text{\emph{0-6}}} (1999)$ &         $0.000$ \\
$g_{\text{\emph{0-6}}} (2000)$ &         $0.000$ \\
$g_{\text{\emph{0-6}}} (2001)$ &         $0.000$ \\
$g_{\text{\emph{0-6}}} (2002)$ &         $0.000$ \\
$g_{\text{\emph{0-6}}} (2003)$ &         $0.000$ \\
$g_{\text{\emph{0-6}}} (2004)$ &         $0.000$ \\
$g_{\text{\emph{0-6}}} (2005)$ &         $0.000$ \\
$g_{\text{\emph{0-6}}} (2006)$ &         $0.000$ \\
$g_{\text{\emph{0-6}}} (2007)$ &         $0.000$ \\
$g_{\text{\emph{0-6}}} (2008)$ &         $0.000$ \\
$g_{\text{\emph{0-6}}} (2009)$ &         $0.000$ \\
$g_{\text{\emph{0-6}}} (2010)$ &         $0.000$ \\
~ & ~ \\
\hline
\end{tabular}
\hfill
\begin{tabular}{|l|l|}
\hline 
partial effect & p-value
\\
\hline
$g_{\text{\emph{0-6}}} (2011)$ &         $0.000$ \\
$g_{\text{\emph{0-6}}} (2012)$ &         $0.000$ \\
$g_{\text{\emph{0-6}}} (2013)$ &         $0.000$ \\
$g_{\text{\emph{0-6}}} (2014)$ &         $0.000$ \\
$g_{\text{\emph{0-6}}} (2015)$ &         $0.000$ \\
$g_{\text{\emph{0-6}}} (2016)$ &         $0.000$ \\
$g_{\text{\emph{7-18}}} (1984)$ &        $0.001$ \\
$g_{\text{\emph{7-18}}} (1985)$ &        $0.000$ \\
$g_{\text{\emph{7-18}}} (1986)$ &        $0.000$ \\
$g_{\text{\emph{7-18}}} (1987)$ &        $0.000$ \\
$g_{\text{\emph{7-18}}} (1988)$ &        $0.000$ \\
$g_{\text{\emph{7-18}}} (1989)$ &        $0.000$ \\
$g_{\text{\emph{7-18}}} (1990)$ &        $0.000$ \\
$g_{\text{\emph{7-18}}} (1992)$ &        $0.000$ \\
$g_{\text{\emph{7-18}}} (1993)$ &        $0.000$ \\
$g_{\text{\emph{7-18}}} (1994)$ &        $0.000$ \\
$g_{\text{\emph{7-18}}} (1995)$ &        $0.000$ \\
$g_{\text{\emph{7-18}}} (1996)$ &        $0.000$ \\
$g_{\text{\emph{7-18}}} (1997)$ &        $0.000$ \\
$g_{\text{\emph{7-18}}} (1998)$ &        $0.000$ \\
$g_{\text{\emph{7-18}}} (1999)$ &        $0.000$ \\
$g_{\text{\emph{7-18}}} (2000)$ &        $0.000$ \\
$g_{\text{\emph{7-18}}} (2001)$ &        $0.000$ \\
$g_{\text{\emph{7-18}}} (2002)$ &        $0.000$ \\
$g_{\text{\emph{7-18}}} (2003)$ &        $0.000$ \\
$g_{\text{\emph{7-18}}} (2004)$ &        $0.000$ \\
$g_{\text{\emph{7-18}}} (2005)$ &        $0.000$ \\
$g_{\text{\emph{7-18}}} (2006)$ &        $0.000$ \\
$g_{\text{\emph{7-18}}} (2007)$ &        $0.000$ \\
$g_{\text{\emph{7-18}}} (2008)$ &        $0.000$ \\
$g_{\text{\emph{7-18}}} (2009)$ &        $0.000$ \\
$g_{\text{\emph{7-18}}} (2010)$ &        $0.000$ \\
$g_{\text{\emph{7-18}}} (2011)$ &        $0.000$ \\
$g_{\text{\emph{7-18}}} (2012)$ &        $0.000$ \\
$g_{\text{\emph{7-18}}} (2013)$ &        $0.000$ \\
$g_{\text{\emph{7-18}}} (2014)$ &        $0.000$ \\
$g_{\text{\emph{7-18}}} (2015)$ &        $0.000$ \\
$g_{\text{\emph{7-18}}} (2016)$ &        $0.000$ \\
$g_{\text{\emph{0-6, East}}} (1992)$ &   $0.609$ \\
$g_{\text{\emph{0-6, East}}} (1993)$ &   $0.511$ \\
$g_{\text{\emph{0-6, East}}} (1994)$ &   $0.210$ \\
$g_{\text{\emph{0-6, East}}} (1995)$ &   $0.014$ \\
$g_{\text{\emph{0-6, East}}} (1996)$ &   $0.000$ \\
$g_{\text{\emph{0-6, East}}} (1997)$ &   $0.000$ \\
~ & ~ \\
\hline
\end{tabular}
\hfill
\begin{tabular}{|l|l|}
\hline 
partial effect & p-value
\\
\hline
$g_{\text{\emph{0-6, East}}} (1998)$ &   $0.000$ \\
$g_{\text{\emph{0-6, East}}} (1999)$ &   $0.000$ \\
$g_{\text{\emph{0-6, East}}} (2000)$ &   $0.000$ \\
$g_{\text{\emph{0-6, East}}} (2001)$ &   $0.000$ \\
$g_{\text{\emph{0-6, East}}} (2002)$ &   $0.000$ \\
$g_{\text{\emph{0-6, East}}} (2003)$ &   $0.000$ \\
$g_{\text{\emph{0-6, East}}} (2004)$ &   $0.000$ \\
$g_{\text{\emph{0-6, East}}} (2005)$ &   $0.000$ \\
$g_{\text{\emph{0-6, East}}} (2006)$ &   $0.000$ \\
$g_{\text{\emph{0-6, East}}} (2007)$ &   $0.001$ \\
$g_{\text{\emph{0-6, East}}} (2008)$ &   $0.000$ \\
$g_{\text{\emph{0-6, East}}} (2009)$ &   $0.000$ \\
$g_{\text{\emph{0-6, East}}} (2010)$ &   $0.000$ \\
$g_{\text{\emph{0-6, East}}} (2011)$ &   $0.000$ \\
$g_{\text{\emph{0-6, East}}} (2012)$ &   $0.000$ \\
$g_{\text{\emph{0-6, East}}} (2013)$ &   $0.000$ \\
$g_{\text{\emph{0-6, East}}} (2014)$ &   $0.000$ \\
$g_{\text{\emph{0-6, East}}} (2015)$ &   $0.000$ \\
$g_{\text{\emph{0-6, East}}} (2016)$ &   $0.000$ \\
$g_{\text{\emph{7-18, East}}} (1992)$ &  $0.415$ \\
$g_{\text{\emph{7-18, East}}} (1993)$ &  $0.303$ \\
$g_{\text{\emph{7-18, East}}} (1994)$ &  $0.206$ \\
$g_{\text{\emph{7-18, East}}} (1995)$ &  $0.148$ \\
$g_{\text{\emph{7-18, East}}} (1996)$ &  $0.098$ \\
$g_{\text{\emph{7-18, East}}} (1997)$ &  $0.051$ \\
$g_{\text{\emph{7-18, East}}} (1998)$ &  $0.021$ \\
$g_{\text{\emph{7-18, East}}} (1999)$ &  $0.005$ \\
$g_{\text{\emph{7-18, East}}} (2000)$ &  $0.001$ \\
$g_{\text{\emph{7-18, East}}} (2001)$ &  $0.000$ \\
$g_{\text{\emph{7-18, East}}} (2002)$ &  $0.000$ \\
$g_{\text{\emph{7-18, East}}} (2003)$ &  $0.000$ \\
$g_{\text{\emph{7-18, East}}} (2004)$ &  $0.000$ \\
$g_{\text{\emph{7-18, East}}} (2005)$ &  $0.000$ \\
$g_{\text{\emph{7-18, East}}} (2006)$ &  $0.000$ \\
$g_{\text{\emph{7-18, East}}} (2007)$ &  $0.001$ \\
$g_{\text{\emph{7-18, East}}} (2008)$ &  $0.002$ \\
$g_{\text{\emph{7-18, East}}} (2009)$ &  $0.002$ \\
$g_{\text{\emph{7-18, East}}} (2010)$ &  $0.003$ \\
$g_{\text{\emph{7-18, East}}} (2011)$ &  $0.005$ \\
$g_{\text{\emph{7-18, East}}} (2012)$ &  $0.006$ \\
$g_{\text{\emph{7-18, East}}} (2013)$ &  $0.006$ \\
$g_{\text{\emph{7-18, East}}} (2014)$ &  $0.004$ \\
$g_{\text{\emph{7-18, East}}} (2015)$ &  $0.003$ \\
$g_{\text{\emph{7-18, East}}} (2016)$ &  $0.004$ \\
~ & ~ \\
\hline 
\end{tabular} 
\caption{p-values for partial effects evaluated at all possible covariate values (pointwise in covariates).\label{P2a:table_pvalues_pointwise}} 
\end{center}
\end{table}

\subsection{Estimated effects}\label{P2a:chapter_application_effects}
This section contains all estimated effects of model~\eqref{P2:soep_model} with a sample of their $95\%$ confidence region (sample size: $100$). 
They are structured similar to Figures~\ref{P2:fig_estimated_West_East} to~\ref{P2:fig_estimated_year}, however with one panel per covariate value to avoid overplotting. 
For interaction effects, we display the sum 
of the intercept, the respective effect, and the corresponding main effects for the estimated conditional densities (left side of the figures).
Note that for some effects, (some) functions sampled from the confidence regions spread out for $s \ra 0$ and $s \ra 1$ with $s \in (0, 1)$, inflating the range of function values. 
In these cases we restrict the limits of the y-axis to an interval, which allows to reasonably view the estimated functions, cutting off some of the functions sampled from the confidence regions.

\begin{figure}[H]
\begin{center}
\includegraphics[width=0.49\textwidth]{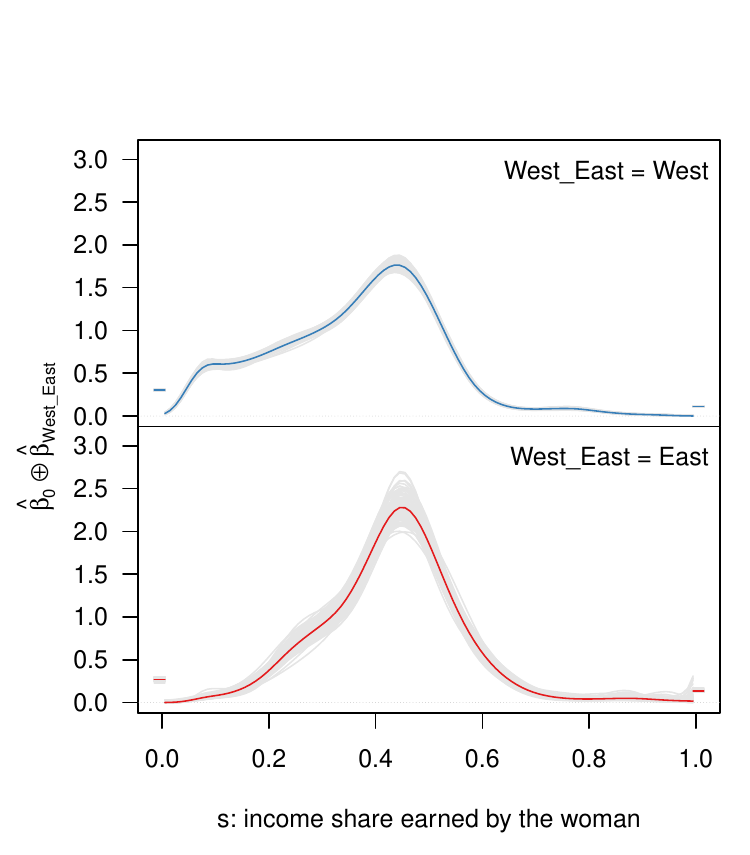}
\includegraphics[width=0.49\textwidth]{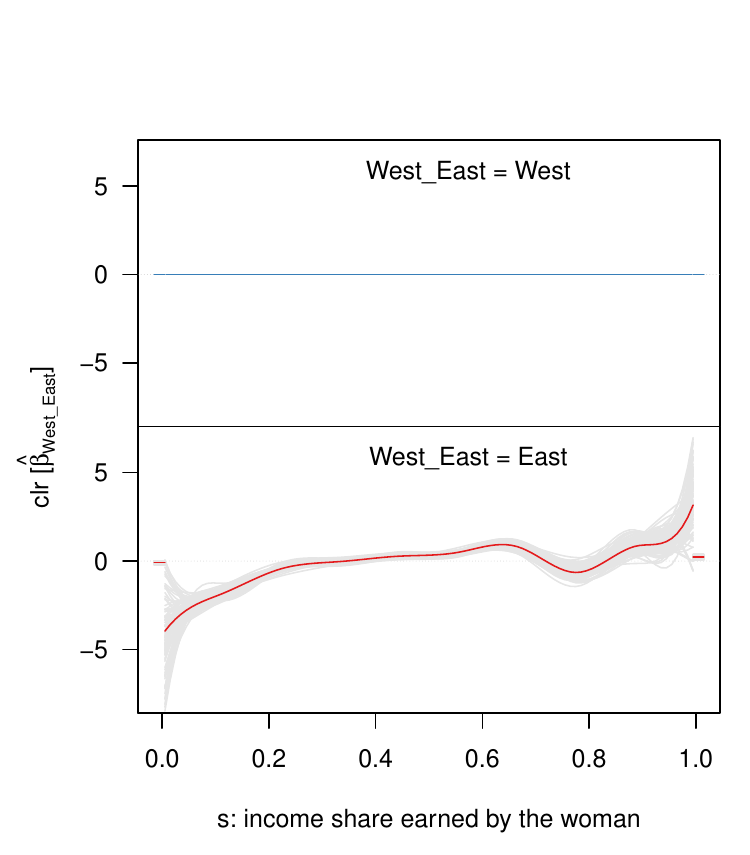}
\end{center}
\vspace{-0.5cm}
\caption{Estimated conditional densities for couples without minor children living in \emph{West} 
vs.\ \emph{East} Germany 
in 1991 [left] and clr transformed estimated effect of \emph{West\_East} [right] with $100$ draws sampled uniformly from the respective $95\%$ simultaneous (over $[0, 1]$) confidence region.}
\end{figure}

%

\begin{figure}[H]
\begin{center}
\includegraphics[width=0.49\textwidth]{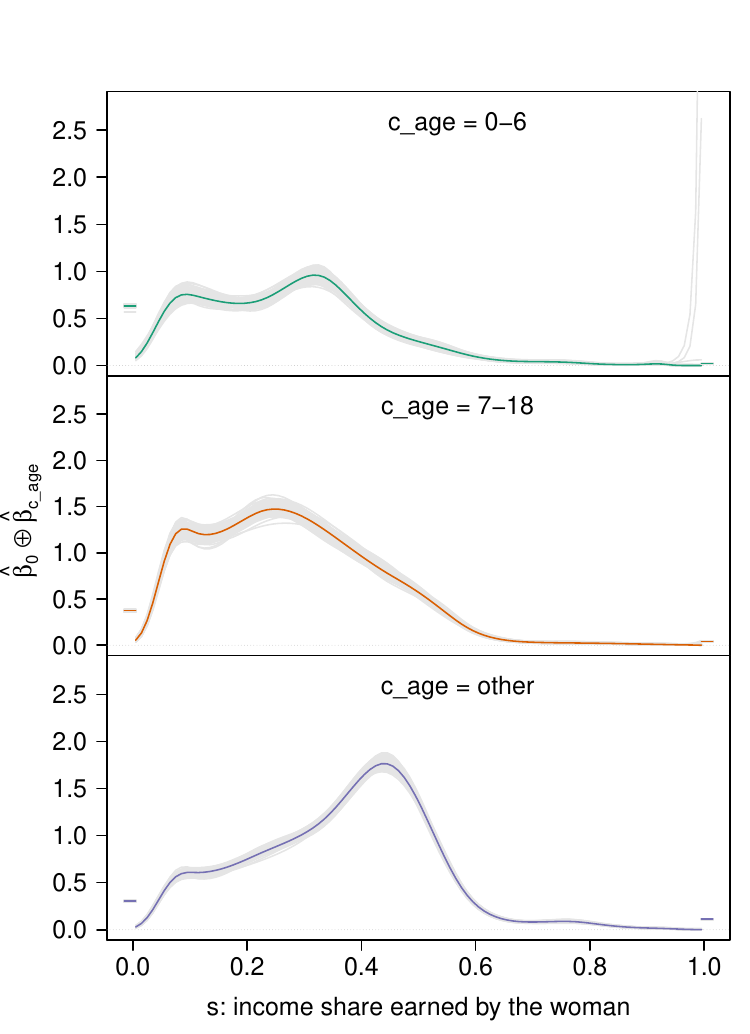}
\includegraphics[width=0.49\textwidth]{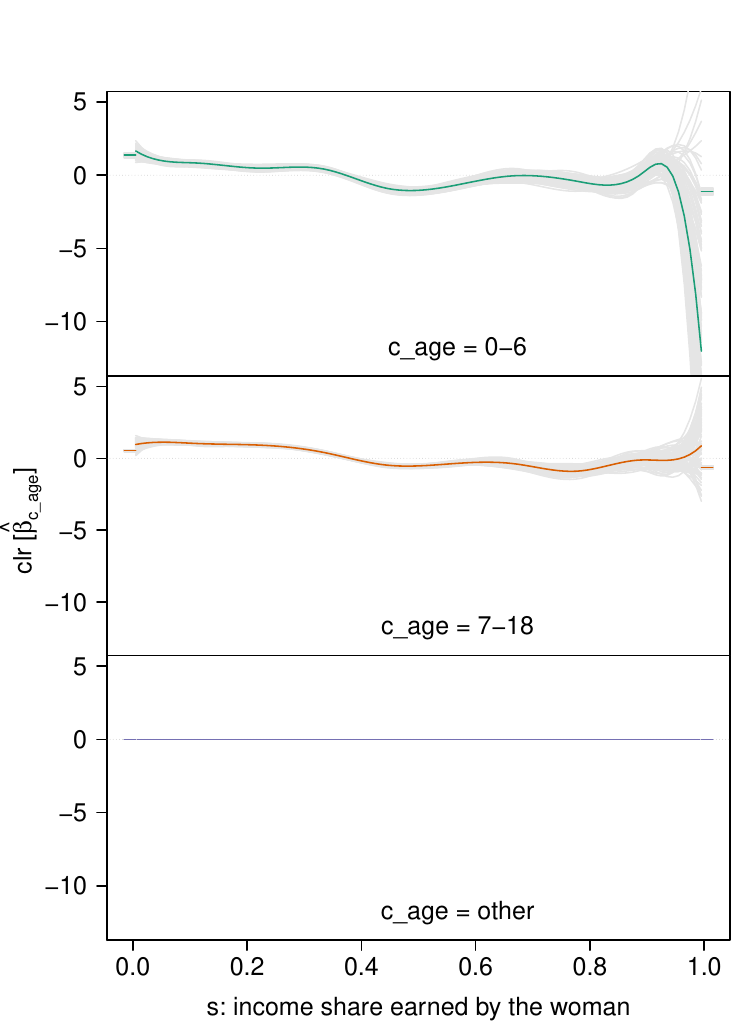}
\end{center}
\vspace{-0.5cm}
\caption{Estimated conditional densities for couples living in \emph{West} Germany in 1991 for all three values of \emph{c\_age} [left] and clr transformed estimated effects of \emph{c\_age} [right] with $100$ draws sampled uniformly from the respective $95\%$ simultaneous (over $[0, 1]$) confidence region.\label{P2a:fig_appendix_estimated_c_age}}
\end{figure}

%

\begin{figure}[H]
\begin{center}
\includegraphics[width=0.49\textwidth]{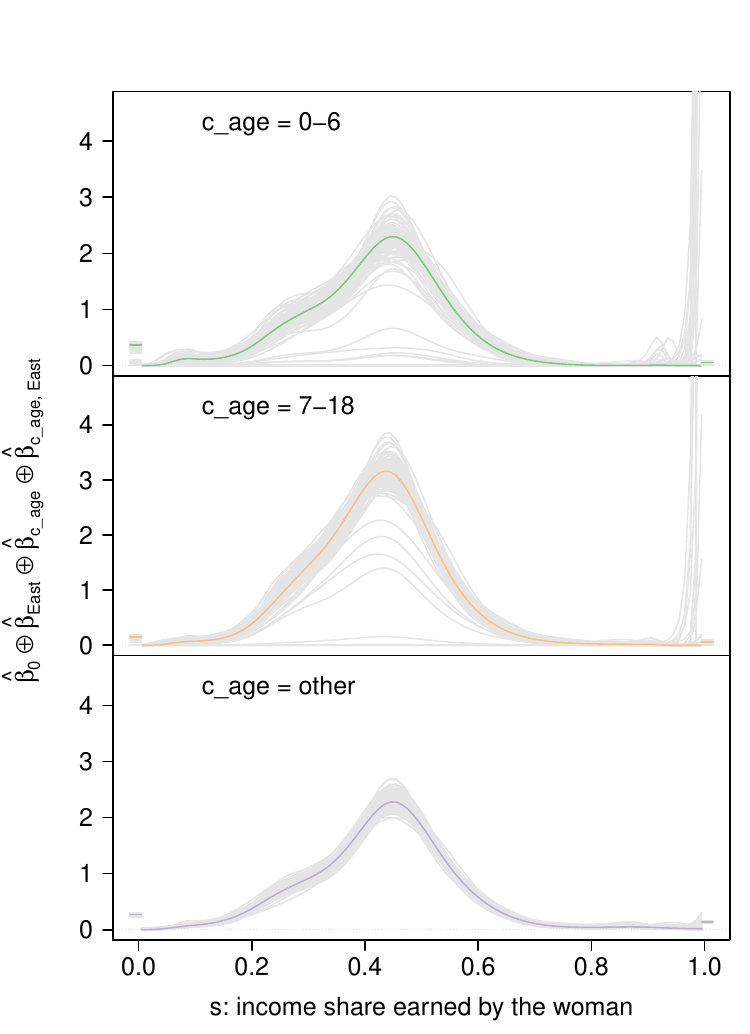}
\includegraphics[width=0.49\textwidth]{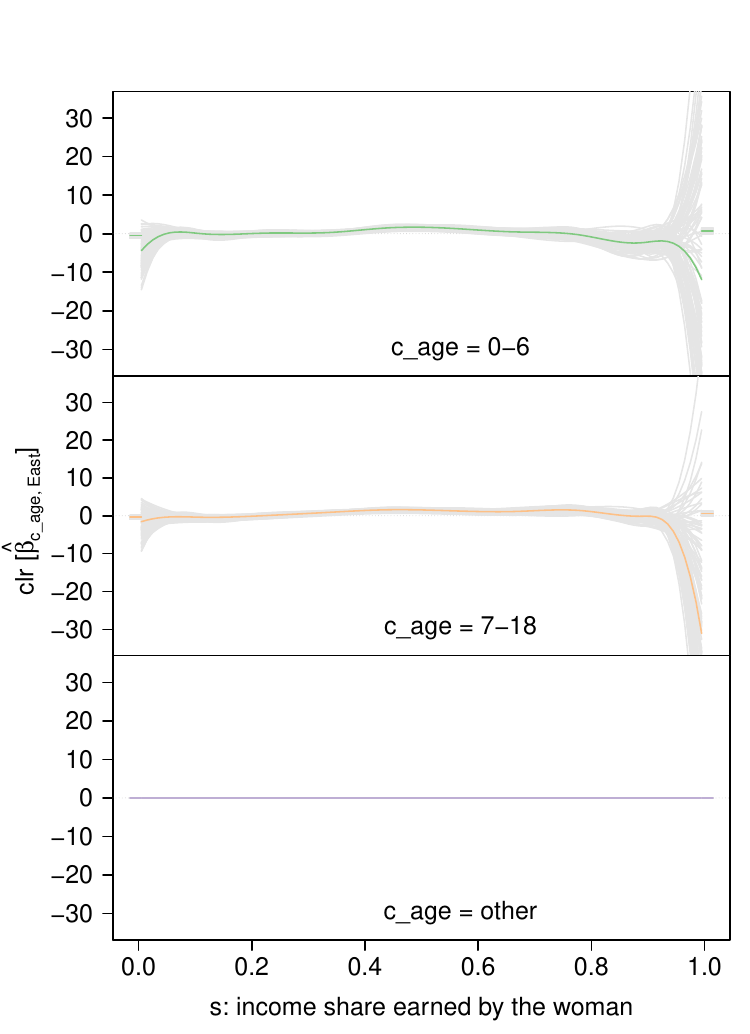}
\end{center}
\vspace{-0.5cm}
\caption{Estimated conditional densities for couples living in \emph{East} Germany in 1991 for all three values of \emph{c\_age} [left] and clr transformed estimated interaction effects of \emph{c\_age} and \emph{East} [right] with $100$ draws sampled uniformly from the respective $95\%$ simultaneous (over $[0, 1]$) confidence region. Note that since \emph{West} is reference category, all interaction effects are zero and thus not illustrated here. The respective estimated conditional densities are shown in Figure~\ref{P2a:fig_appendix_estimated_c_age}.\label{P2a:fig_appendix_estimated_West_East_c_age}}
\end{figure}

\begin{figure}[H]
\vspace{-2cm}
\begin{center}
\includegraphics{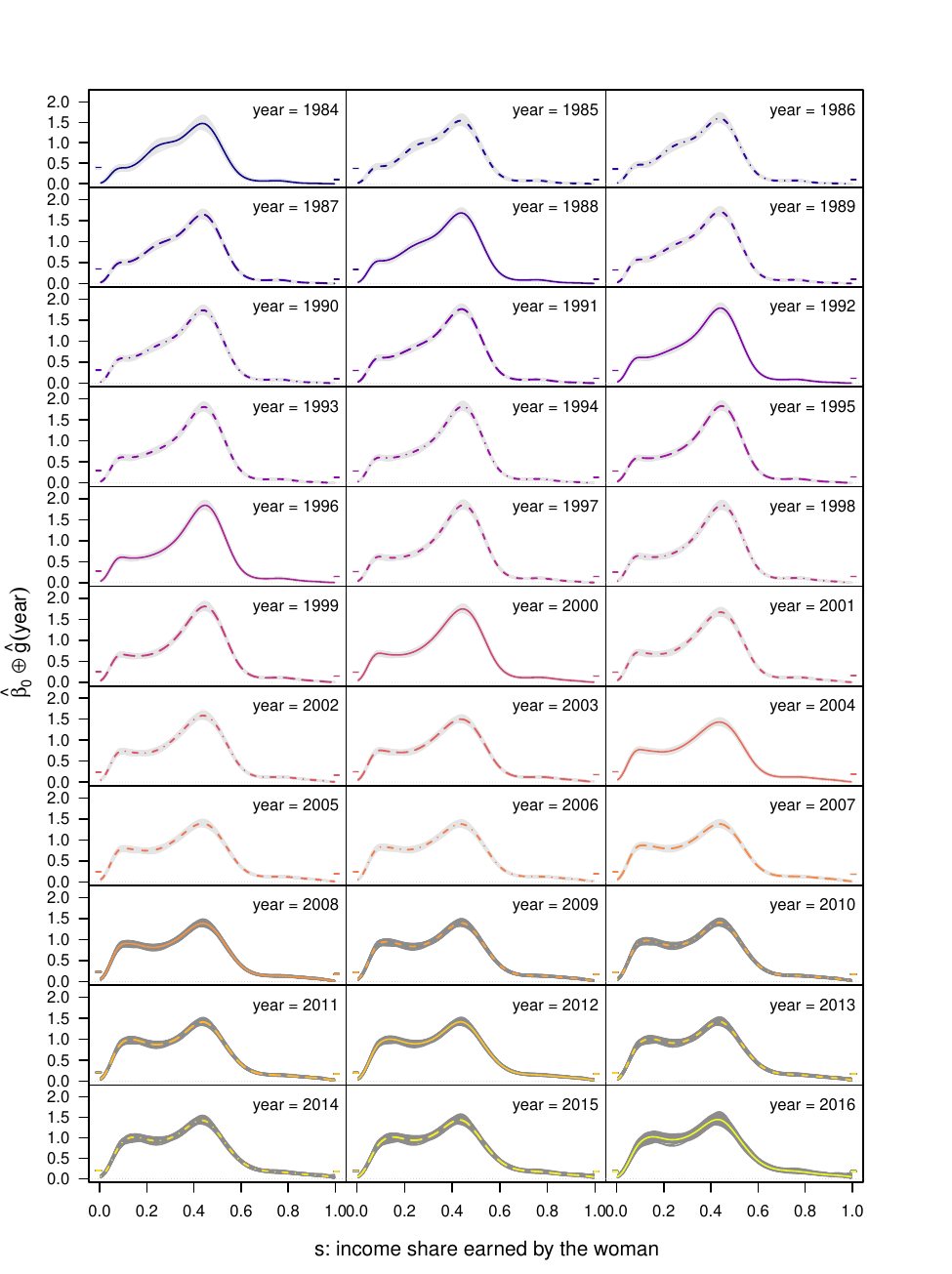}
\end{center}
\vspace{-0.5cm}
\caption{Estimated conditional densities for couples without minor children living in \emph{West} Germany in different \emph{years} with $100$ draws sampled uniformly from the respective $95\%$ simultaneous (over $[0, 1]$) confidence region.\label{P2a:fig_appendix_estimated_year}}
\end{figure}

\begin{figure}[H]
\vspace{-2cm}
\begin{center}
\includegraphics{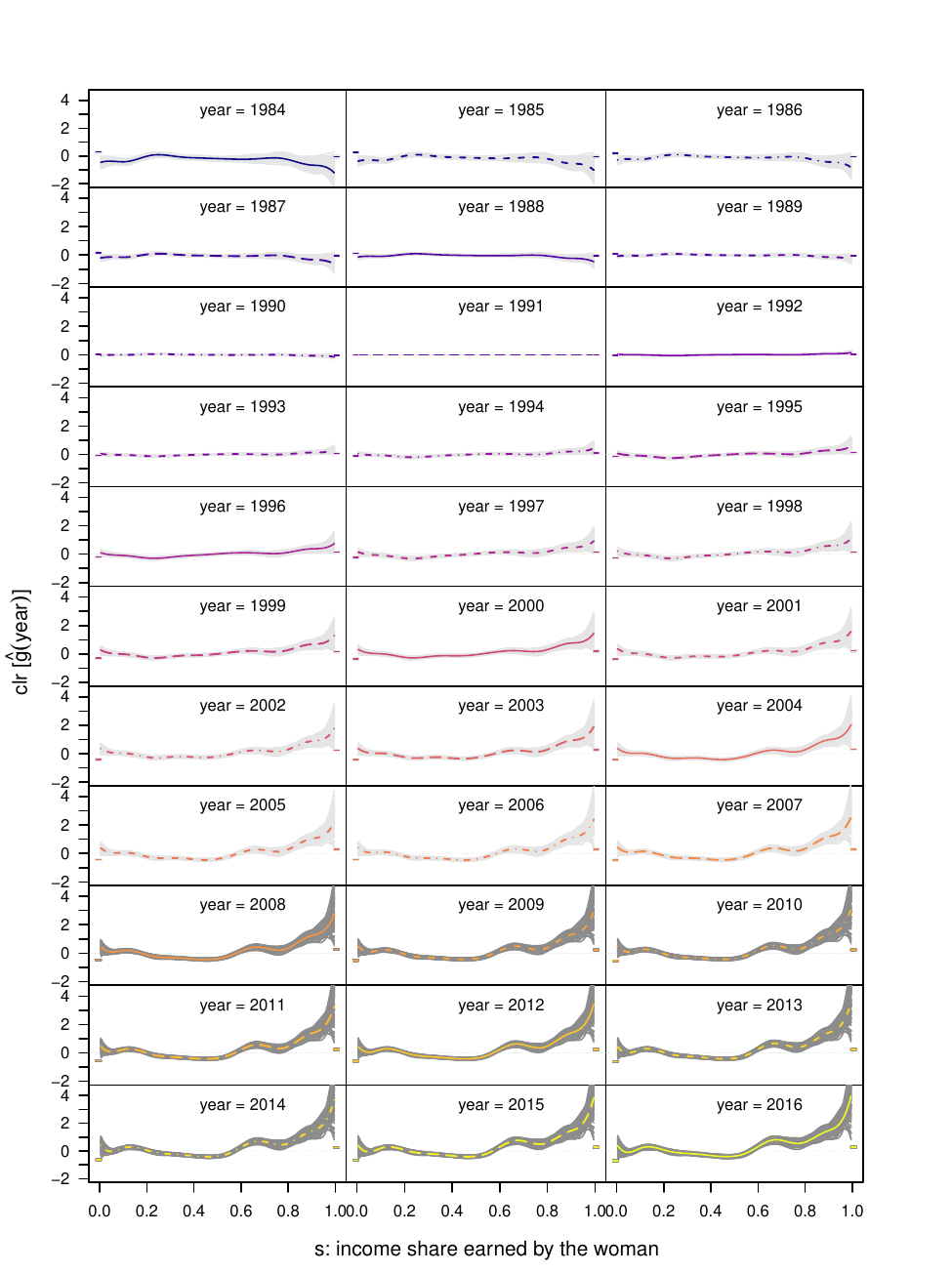}
\end{center}
\vspace{-0.5cm}
\caption{Estimated clr transformed effect of \emph{year} with $100$ draws sampled uniformly from the respective $95\%$ simultaneous (over $[0, 1]$) confidence region.
\label{P2a:fig_appendix_estimated_year_clr}}
\end{figure}

\begin{figure}[H]
\vspace{-2cm}
\begin{center}
\includegraphics{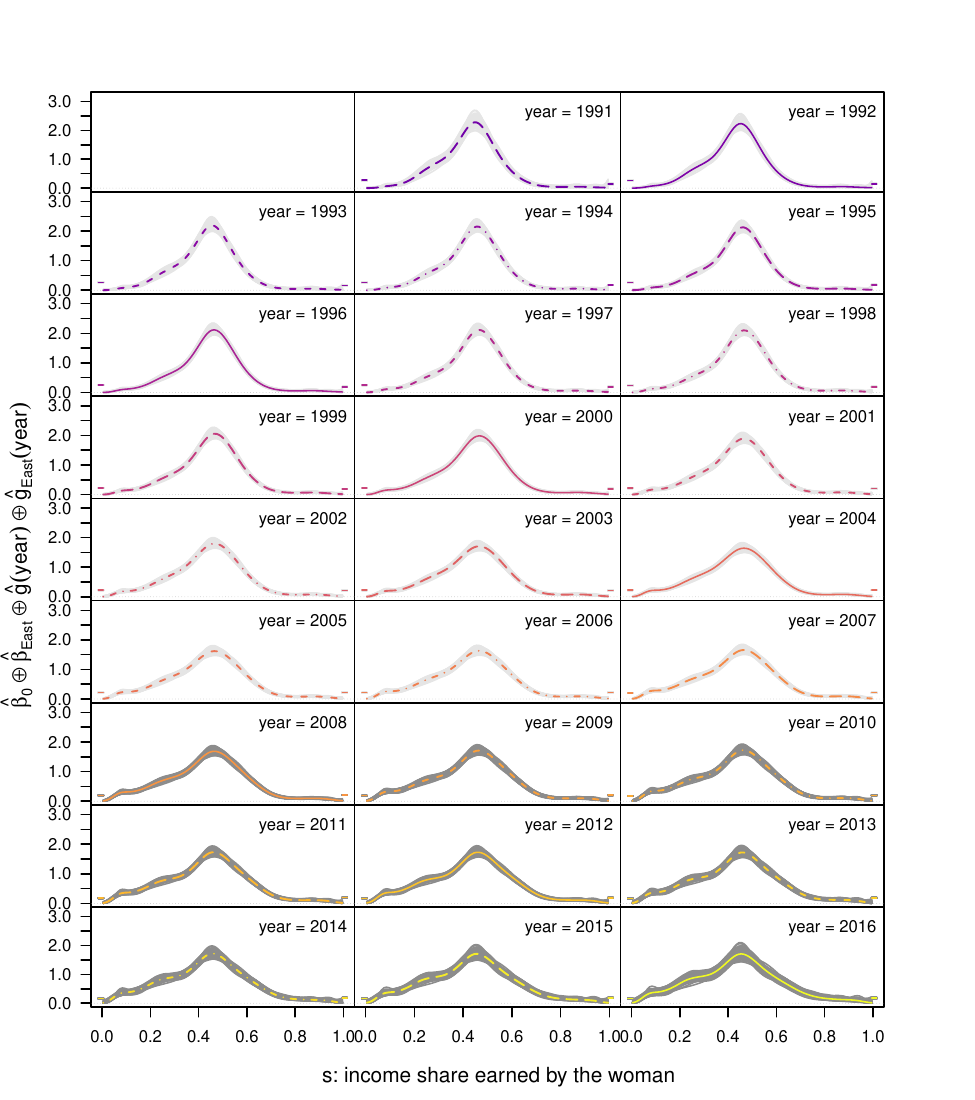}
\end{center}
\vspace{-0.5cm}
\caption{Estimated conditional densities for couples without minor children living in \emph{East} Germany in different \emph{years} with $100$ draws sampled uniformly from the respective $95\%$ simultaneous (over $[0, 1]$) confidence region. Note that the effect of $\gh_{\text{\emph{West\_East}}}$ is not significant.}
\end{figure}

\begin{figure}[H]
\vspace{-2cm}
\begin{center}
\includegraphics{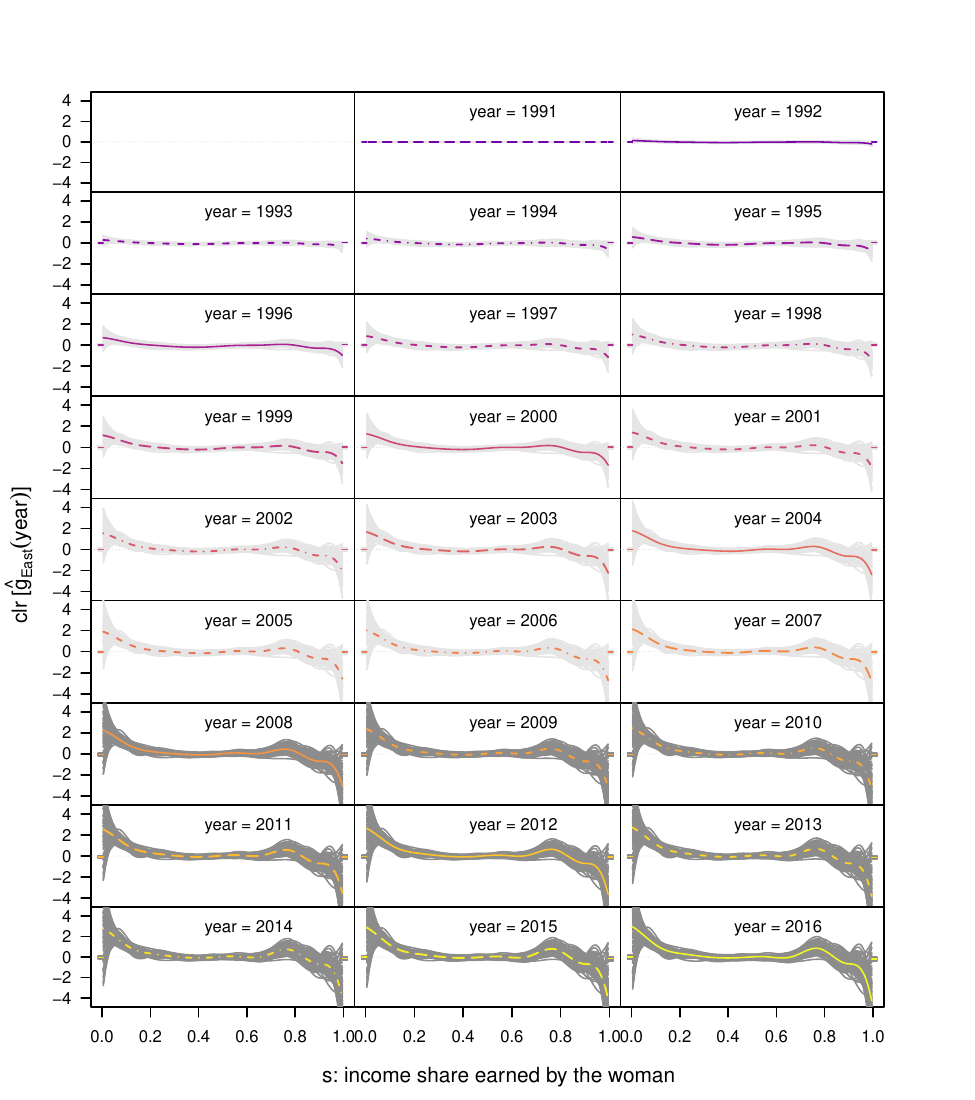}
\end{center}
\vspace{-0.5cm}
\caption{Estimated clr transformed group-specific smooth effect of \emph{year} for \emph{East} with $100$ draws sampled uniformly from the respective $95\%$ simultaneous (over $[0, 1]$) confidence region. 
Note 
that the effect of $\clr [\gh_{\text{\emph{West\_East}}}]$ is not significant.}
\end{figure}

\begin{figure}[H]
\vspace{-2cm}
\begin{center}
\includegraphics{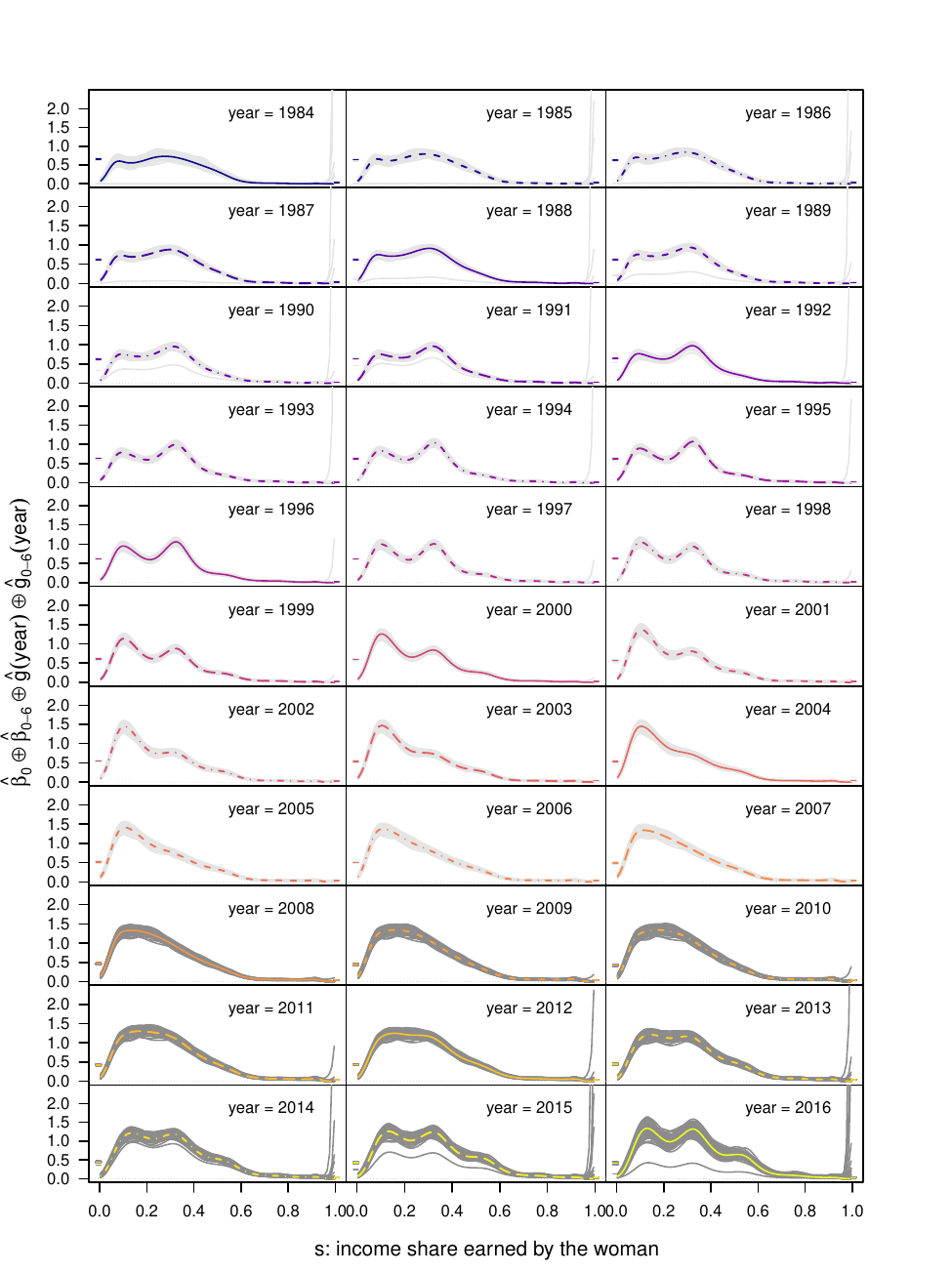}
\end{center}
\vspace{-0.5cm}
\caption{Estimated conditional densities for couples whose youngest child is aged zero to six living in \emph{West} Germany in different \emph{years} with $100$ draws sampled uniformly from the respective $95\%$ simultaneous (over $[0, 1]$) confidence region.}
\end{figure}

\begin{figure}[H]
\vspace{-2cm}
\begin{center}
\includegraphics{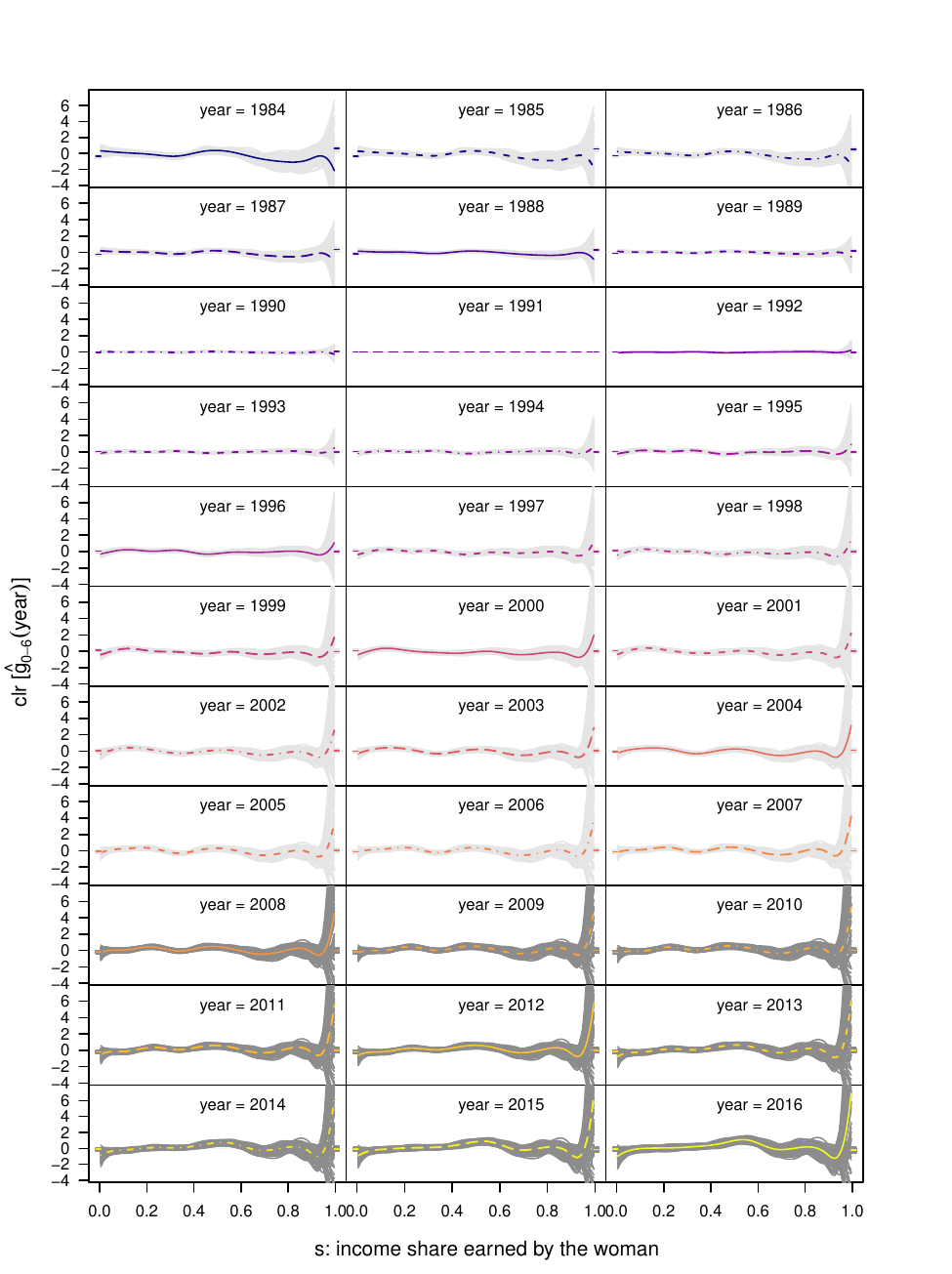}
\end{center}
\vspace{-0.5cm}
\caption{Estimated clr transformed group-specific smooth effect of \emph{year} for \emph{0-6} with $100$ draws sampled uniformly from the respective $95\%$ simultaneous (over $[0, 1]$) confidence region. 
}
\end{figure}

\begin{figure}[H]
\vspace{-2cm}
\begin{center}
\includegraphics{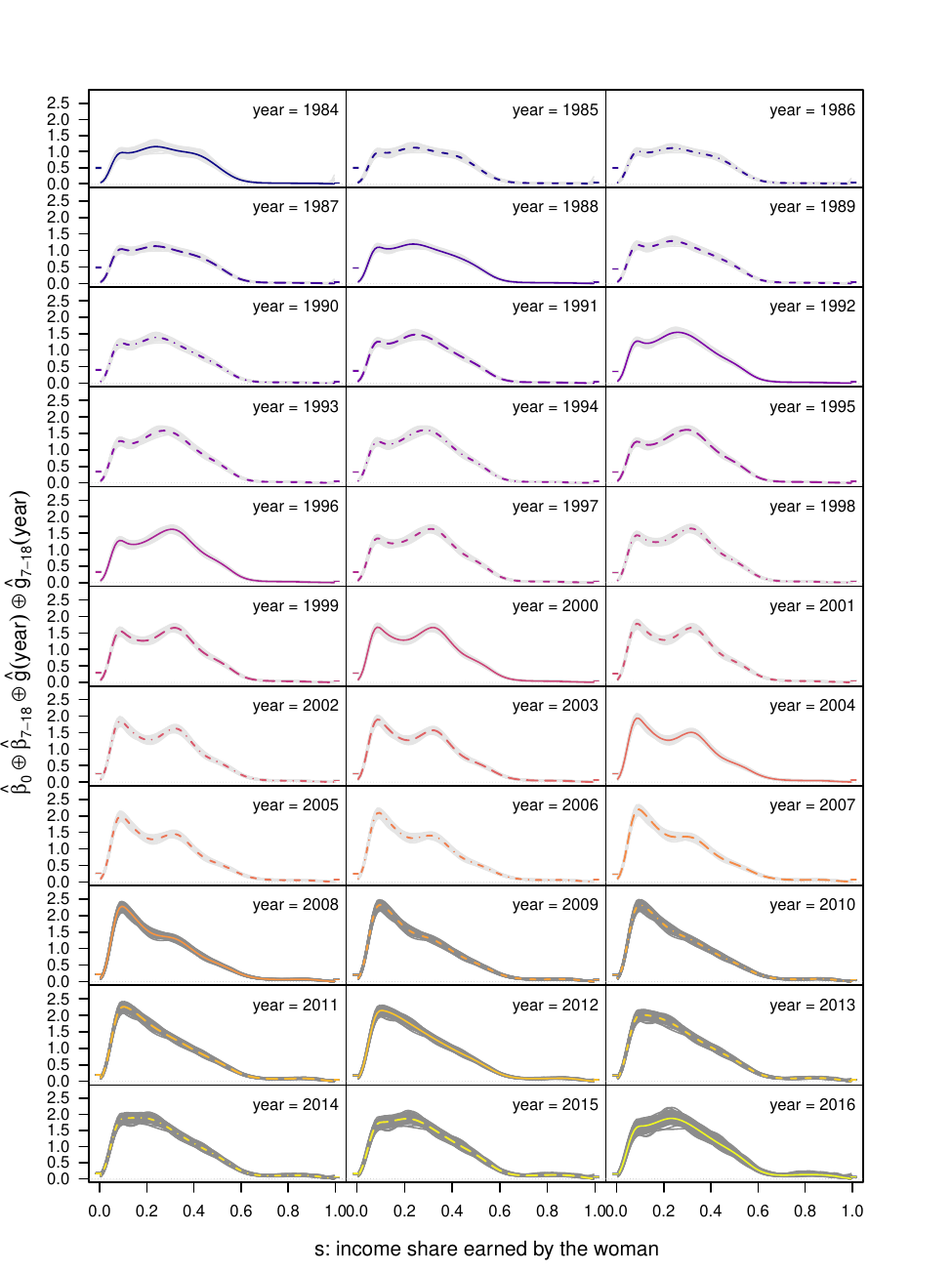}
\end{center}
\vspace{-0.5cm}
\caption{Estimated conditional densities for couples whose youngest child is aged seven to 18 living in \emph{West} Germany in different \emph{years} with $100$ draws sampled uniformly from the respective $95\%$ simultaneous (over $[0, 1]$) confidence region.}
\end{figure}

\begin{figure}[H]
\vspace{-2cm}
\begin{center}
\includegraphics{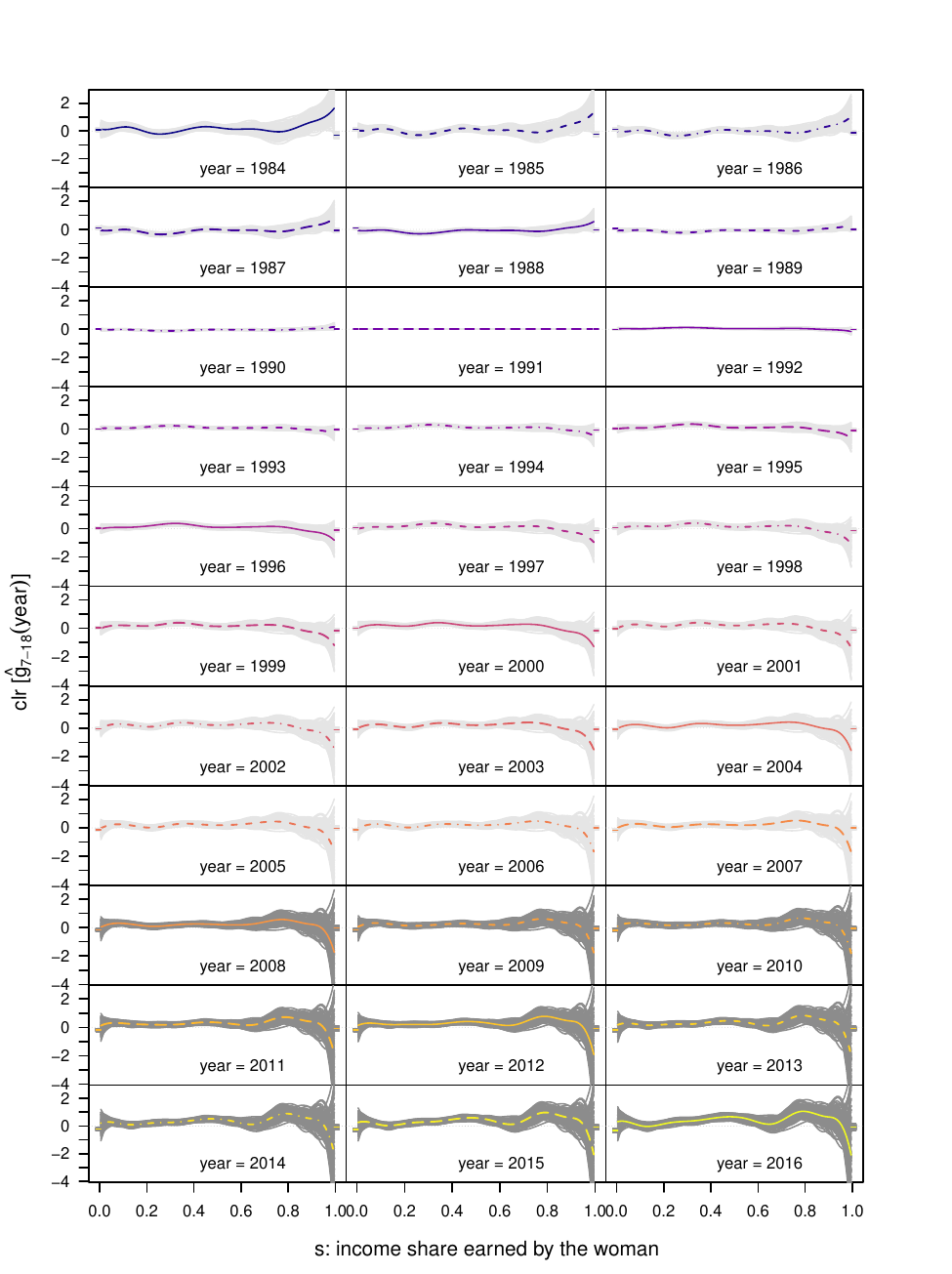}
\end{center}
\vspace{-0.5cm}
\caption{Estimated clr transformed group-specific smooth effect of \emph{year} for \emph{7-18} with $100$ draws sampled uniformly from the respective $95\%$ simultaneous (over $[0, 1]$) confidence region. 
}
\end{figure}

\begin{figure}[H]
\vspace{-2cm}
\begin{center}
\includegraphics{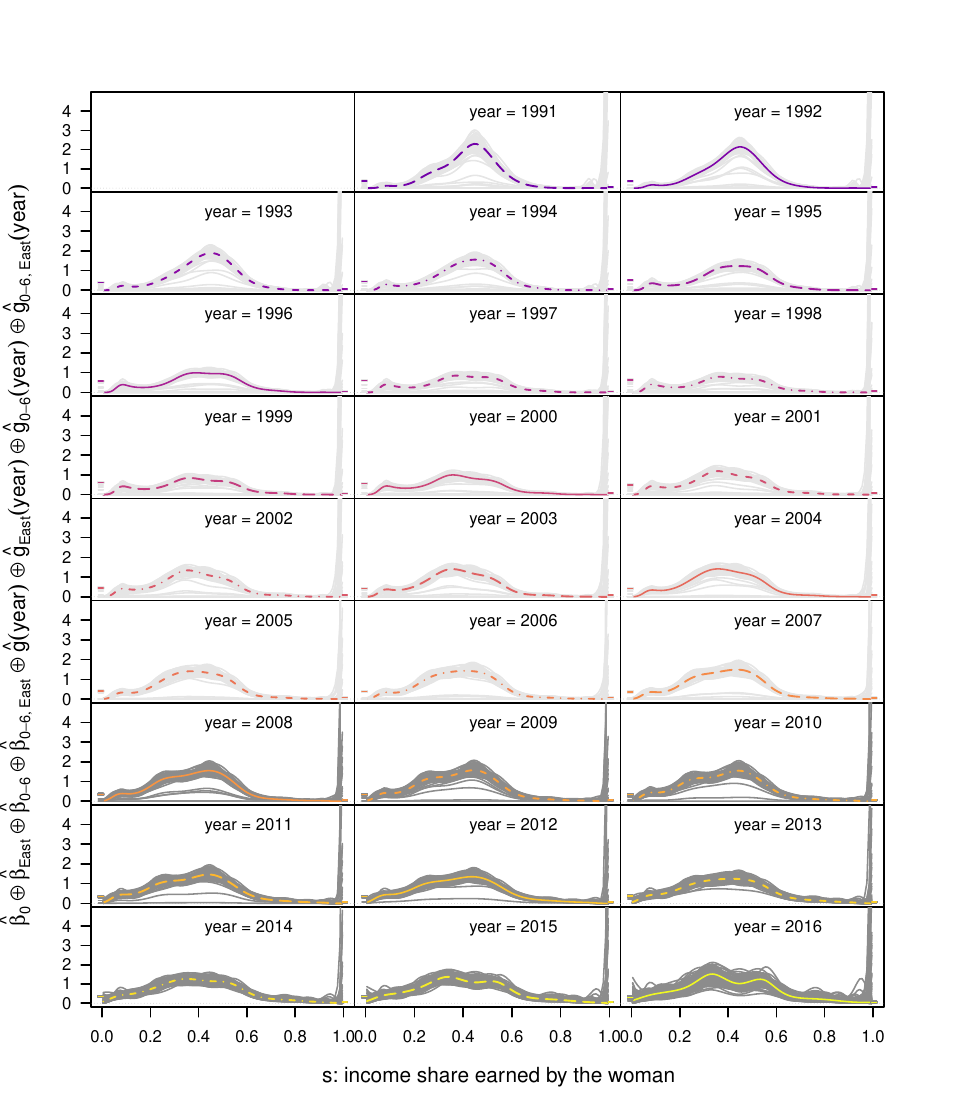}
\end{center}
\vspace{-0.5cm}
\caption{Estimated conditional densities for couples whose youngest child is aged zero to six living in \emph{East} Germany in different \emph{years} with $100$ draws sampled uniformly from the respective $95\%$ simultaneous (over $[0, 1]$) confidence region. Note that the effect of $\gh_{\text{\emph{c\_age, West\_East}}}$ is not significant.}
\end{figure}

\begin{figure}[H]
\vspace{-2cm}
\begin{center}
\includegraphics{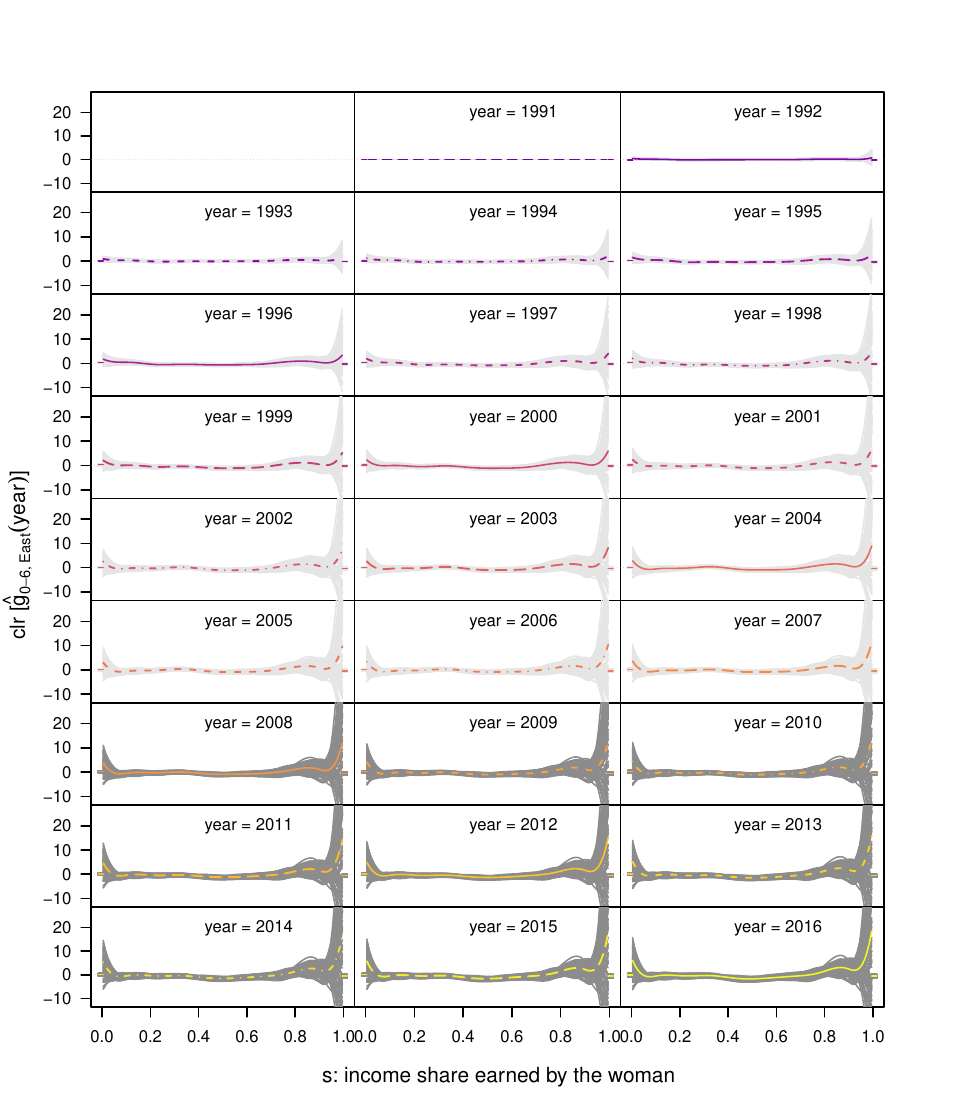}
\end{center}
\vspace{-0.5cm}
\caption{Estimated clr transformed group-specific smooth interaction effect of \emph{year} for \emph{0-6} and \emph{East} with $100$ draws sampled uniformly from the respective $95\%$ simultaneous (over $[0, 1]$) confidence region. 
Note 
that the effect of $\clr [\gh_{\text{\emph{c\_age, West\_East}}}]$ is not significant.}
\end{figure}

\begin{figure}[H]
\vspace{-2cm}
\begin{center}
\includegraphics{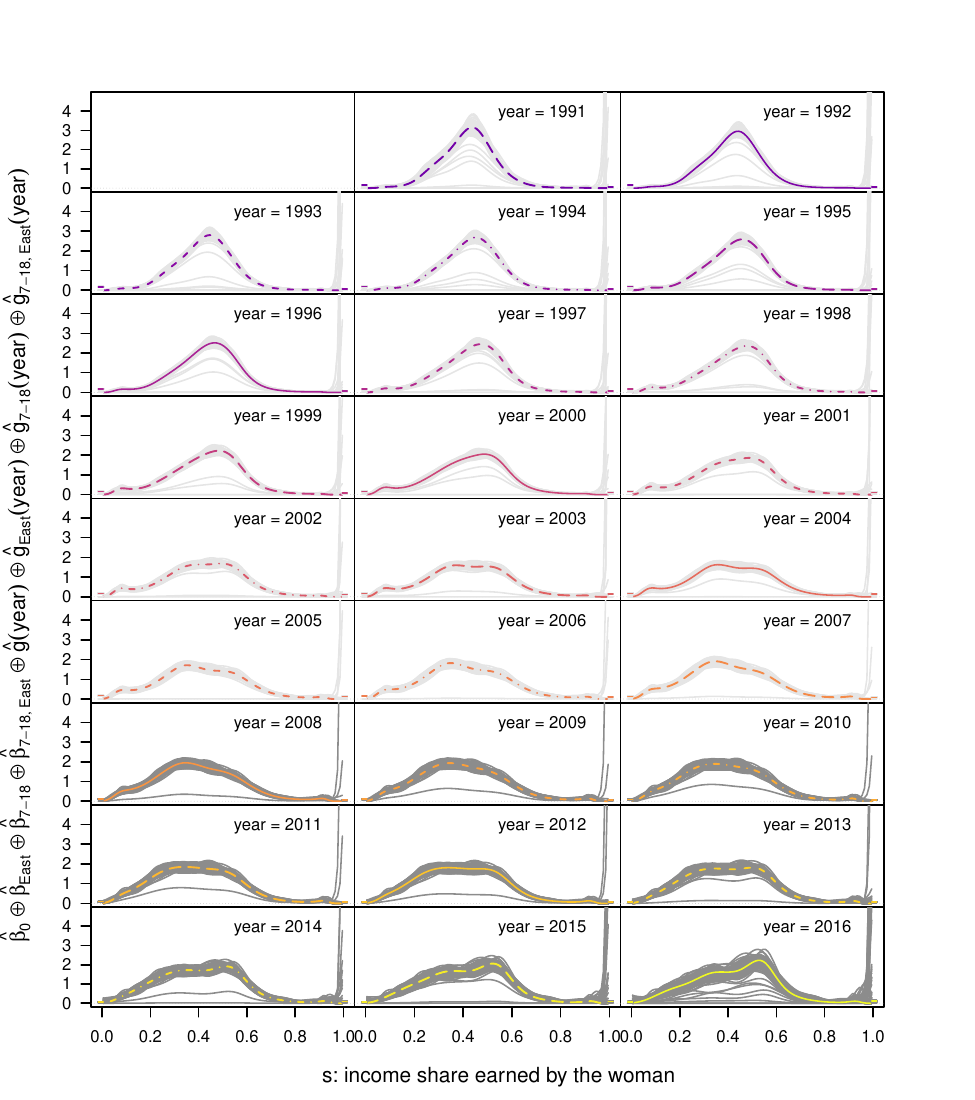}
\end{center}
\vspace{-0.5cm}
\caption{Estimated conditional densities for couples whose youngest child is aged seven to 18 living in \emph{East} Germany in different \emph{years} with $100$ draws sampled uniformly from the respective $95\%$ simultaneous (over $[0, 1]$) confidence region. Note that the effect of $\gh_{\text{\emph{c\_age, West\_East}}}$ is not significant.}
\end{figure}

\begin{figure}[H]
\vspace{-2cm}
\begin{center}
\includegraphics{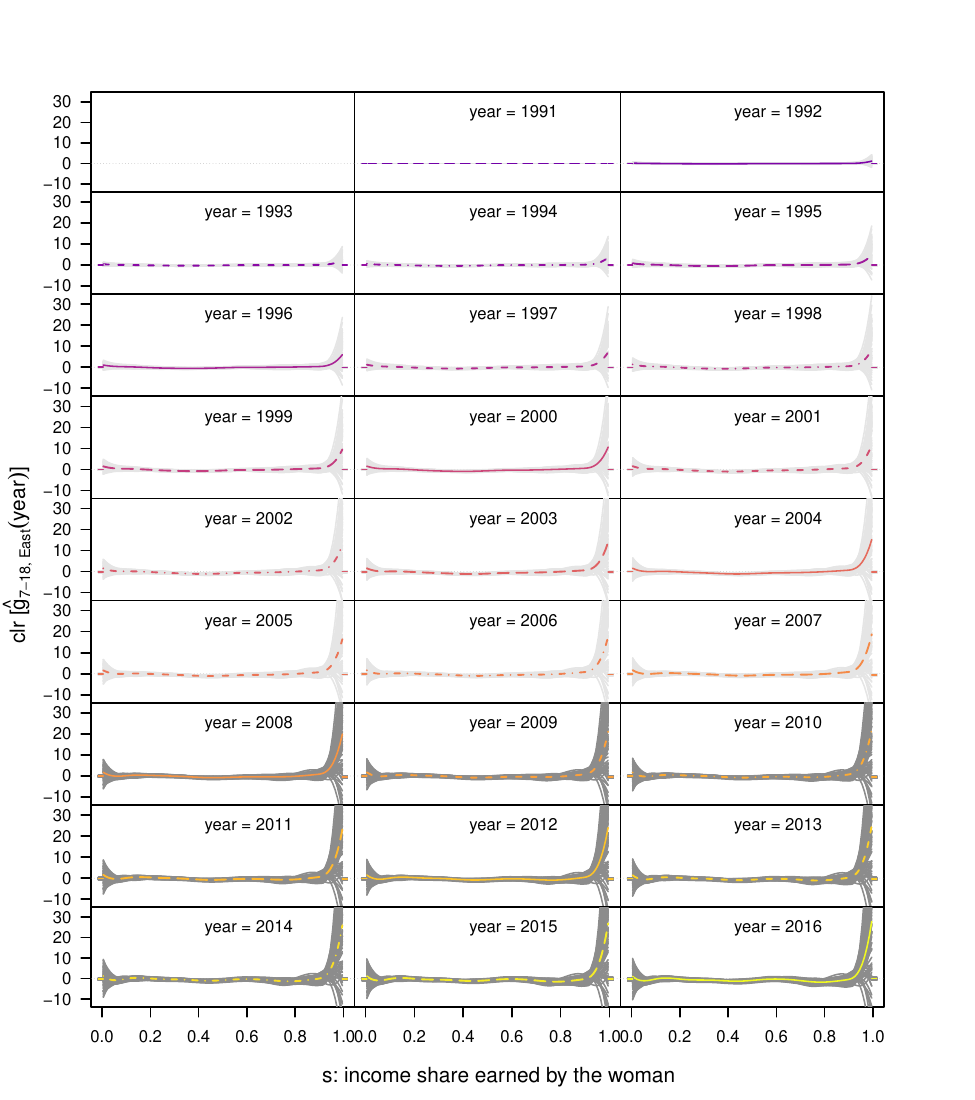}
\end{center}
\vspace{-0.5cm}
\caption{Estimated clr transformed group-specific smooth interaction effect of \emph{year} for \emph{7-18} and \emph{East} with $100$ draws sampled uniformly from the respective $95\%$ simultaneous (over $[0, 1]$) confidence region. 
Note 
that the effect of $\clr [\gh_{\text{\emph{c\_age, West\_East}}}]$ is not significant.}
\end{figure}

\subsection{Estimated conditional densities (predictions)}\label{P2a:chapter_application_predictions}

\vspace{-0.7cm}
\begin{figure}[H]
\begin{minipage}{0.89\textwidth}
\includegraphics[width=\textwidth]{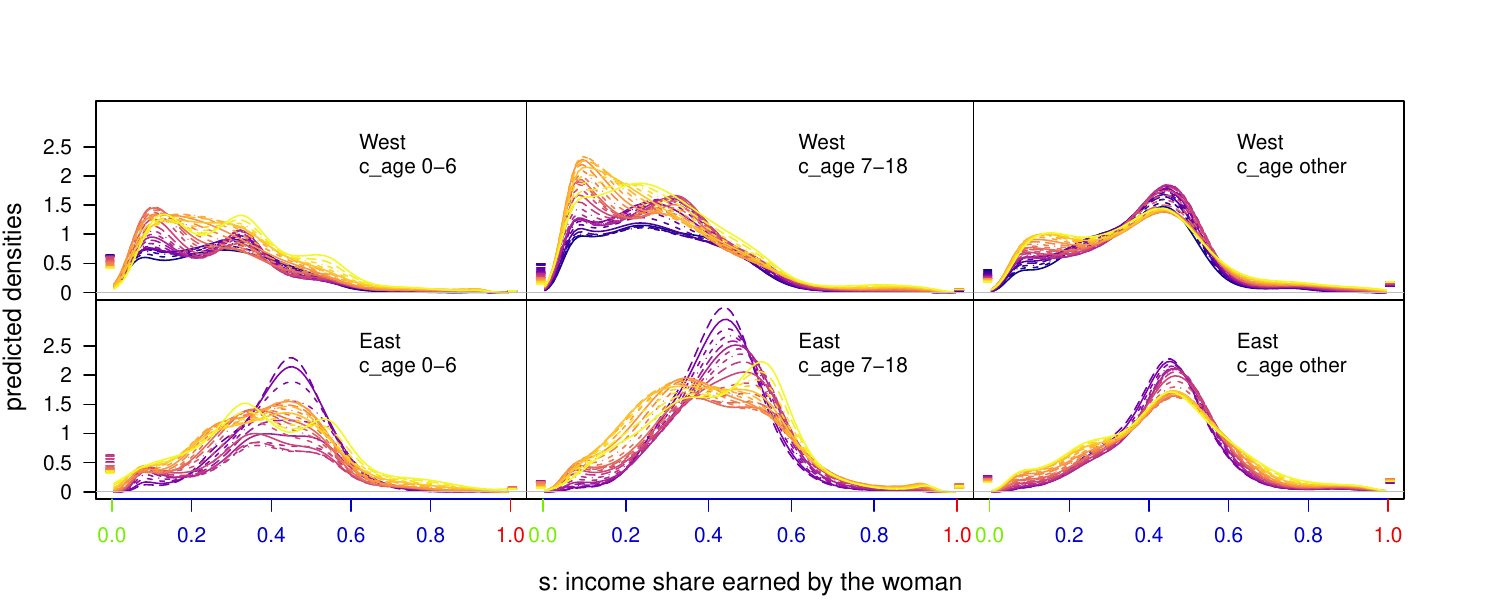} \\
\hspace*{0.75cm}
\includegraphics[width=0.87\textwidth]{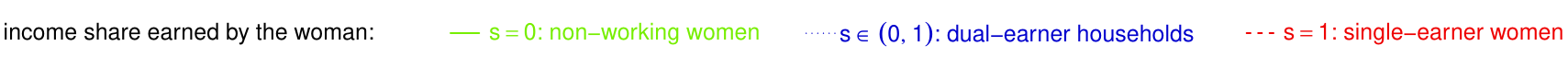} \\[-10mm]
\includegraphics[width=\textwidth]{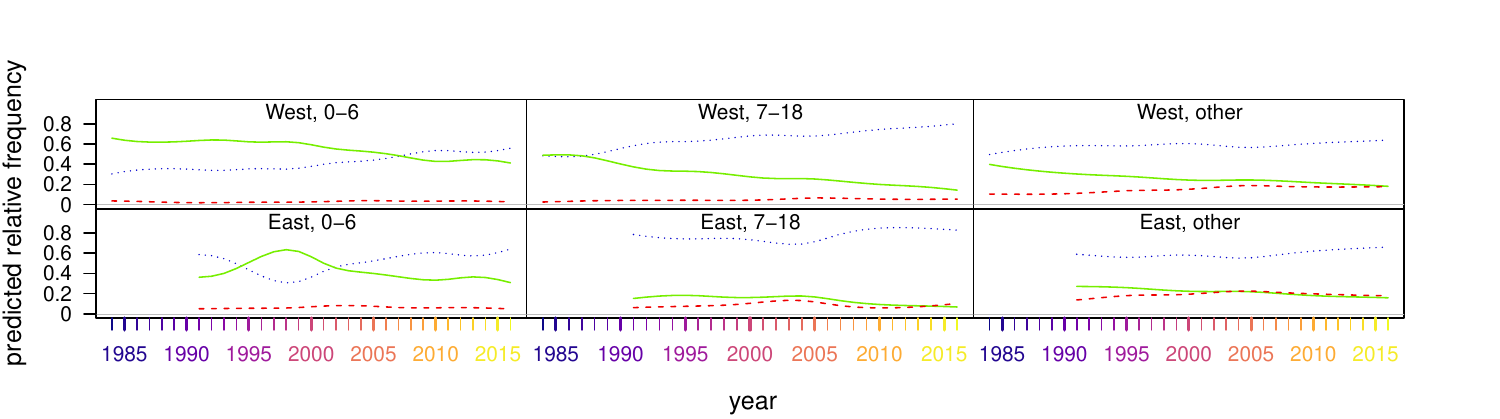}
\end{minipage}
\hspace{-0.3cm}
\begin{minipage}{0.059\textwidth}
\includegraphics[width=1.2\textwidth]{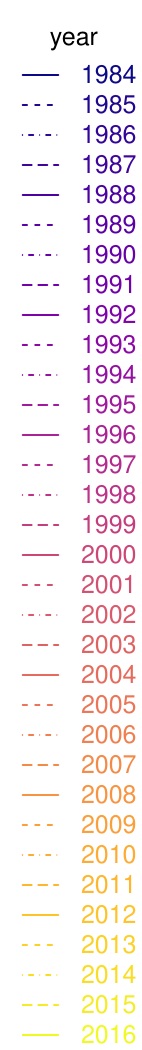}
\end{minipage}
\caption{Estimated conditional densities [upper $2 \times 3$ panels] and corresponding relative frequencies [lower $2 \times 3$ panels] for \emph{West} vs.\ \emph{East} Germany [rows] and all three values of \emph{c\_age} [columns].\label{P2a:figure_predictions}}
\end{figure}

Figure~\ref{P2a:figure_predictions} shows the estimated conditional densities obtained from model~\eqref{P2:soep_model} for the different combinations of covariate values in the upper part. 
The different \emph{years} are illustrated in one panel, respectively, distinguishable from each other via a color gradient and different line types.
For the discrete values with positive probability mass, $0$ and~$1$, the density values are visualized as dashes and shifted slightly outwards for better distinction.
Their smoothly estimated development over time, as well as the one of dual-earner households (corresponding to the Lebesgue integrals of the densities), is shown explicitly in the lower part of the figure.
%
We briefly point out some interesting results visible in this figure.
In \emph{West} Germany in all three categories of \emph{c\_age}, the share of nonworking women ($s = 0$) is decreasing over time while the share of dual-earner households ($s \in (0, 1)$) is increasing.
Simultaneously, the probability mass for lower positive income shares increases, indicating an increase of previously nonworking women in part-time employment.
Without considering mixed densities for single- and double-earner couples, but only continuous densities for double-earner couples on $(0, 1)$, this might be misinterpreted as a shift from larger to smaller income shares, nicely showing the necessity of considering mixed densities.
Furthermore, there is more probability mass at lower positive income shares for couples with minor children (\emph{0-6} and \emph{7-18}) compared to those without minor children (\emph{other}) in \emph{West} Germany, reflecting that women are rather working part-time in the presence of children. 
In \emph{East} Germany the shapes of the densities are less dependent on the age of the youngest child.
Especially for couples without minor children (\emph{other}), the densities are close to symmetric with a mode around $s = 0.45$.
For couples with minor children (\emph{0-6} and \emph{7-18}), this is only the case for the earlier \emph{years}, while over time, probability mass tends to shift towards smaller income shares.
Again, this might include previously nonworking women switching to part-time employment even though there is not such a clear decline of nonworking women as in \emph{West} Germany.
Actually, for couples with small children (\emph{0-6}), the frequency of nonworking women even increases from 1991 to 1998 in \emph{East} Germany, reaching a similar level as in \emph{West} Germany, before decreasing again.
All in all, accompanying the findings of \citet{maier2021}, our observations reflect the different social norms established in the two parts of Germany before reunification in 1990: In \emph{East} Germany, comprehensive child care was available and women were expected to work, while in \emph{West} Germany, it was more common for mothers to be unemployed and instead take care of the children themselves.

\section{Simulation study}\label{P2a:chapter_simulation}
This section contains further figures related to the simulation study presented in Section~\ref{P2:chapter_simulation}.
Figure~\ref{P2a:fig_relMSE_all} is the extended version of Figure~\ref{P2:fig_relMSE_main}, including interaction effects.
Figure~\ref{P2a:fig_empCR_smt_all} shows the empirical coverage rates for all partial effects based on Lemma~\ref{P2a:lemma_confidence_regions_simultaneos}, i.e., simultaneous in covariates.
Compared to Figure~\ref{P2:fig_empCR_smt_main}, interaction effects are added, while excluding prediction.
Instead, in this section, $\rMSE (\fh)$, which is computed based on Lemma~\ref{P2:lemma_confidence_regions}, is shown in Figure~\ref{P2a:fig_empCR_pw_all} together with the empirical coverage rates for all partial effects, also based on Lemma~\ref{P2:lemma_confidence_regions}, i.e., point-wise in the covariates, since conceptually they belong together. 

\begin{figure}[H]
\begin{center}
\includegraphics[width=0.7\textwidth]{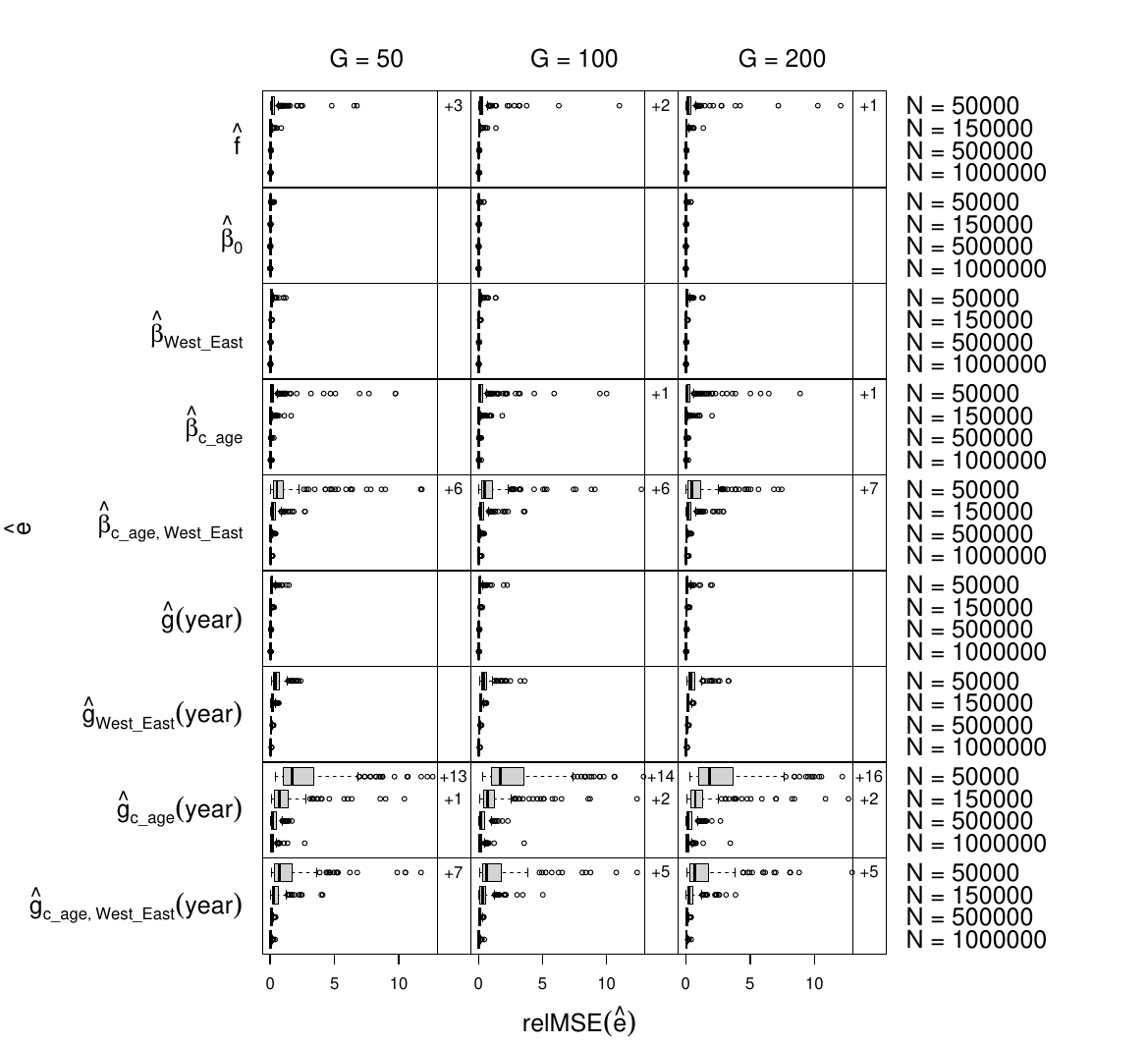}
\end{center}
\vspace{-0.5cm}
\caption{RelMSE for prediction 
and main effects for different values of $G$ [columns] and $N$ [rows]. Numbers following a plus-sign at the right of a panel give the number of values larger than $13$ (vertical line).
Empty space means all values are visible.\label{P2a:fig_relMSE_all}}
\end{figure}

\begin{figure}[H]
\begin{center}
\includegraphics[width=0.75\textwidth]{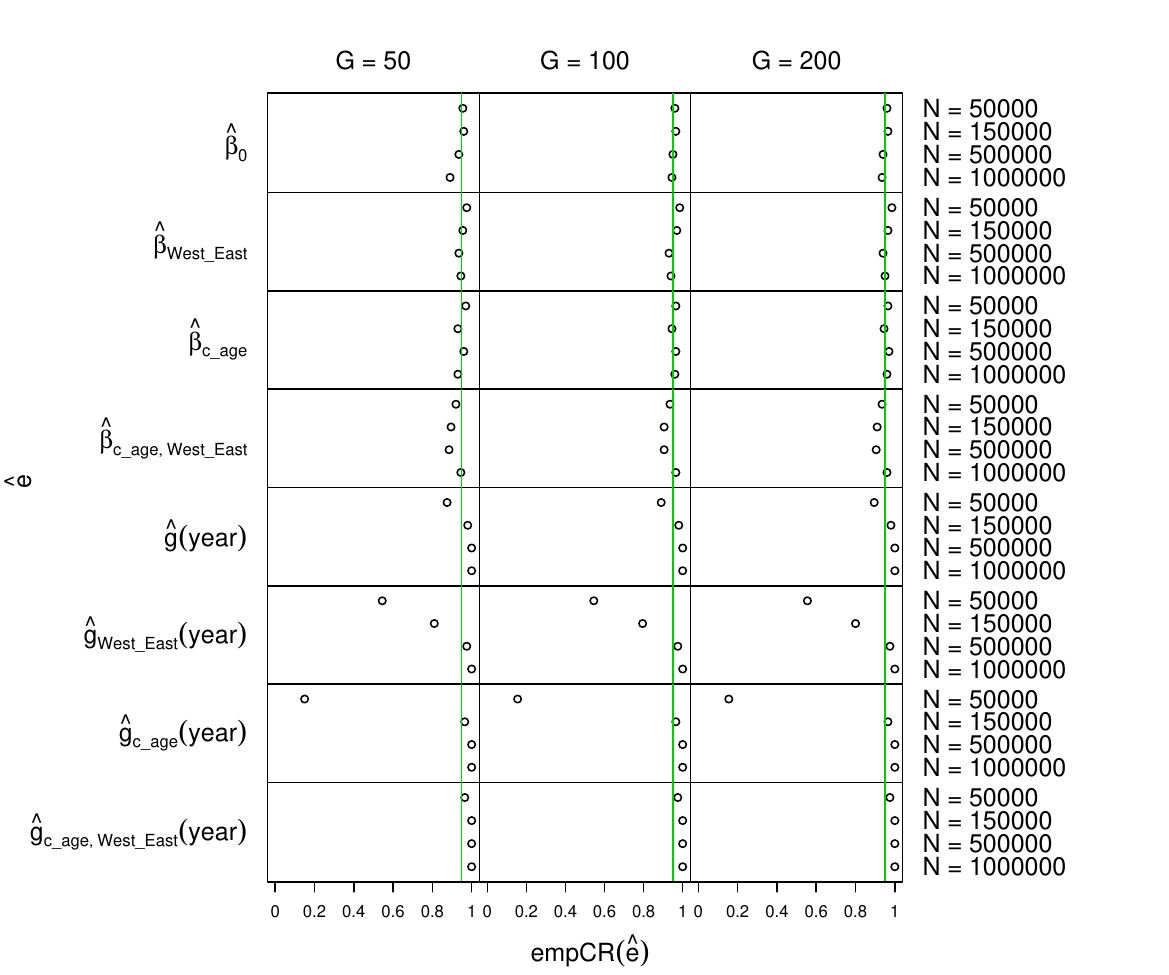} 
\end{center}
\vspace{-0.5cm}
\caption{Empirical coverage rates (empCR) for 
all partial effects (simultaneous in covariates) for different values of $G$ [columns] and $N$ [rows].
\label{P2a:fig_empCR_smt_all}}
\end{figure}

\begin{figure}[H]
\begin{center}
\includegraphics[width=0.75\textwidth]{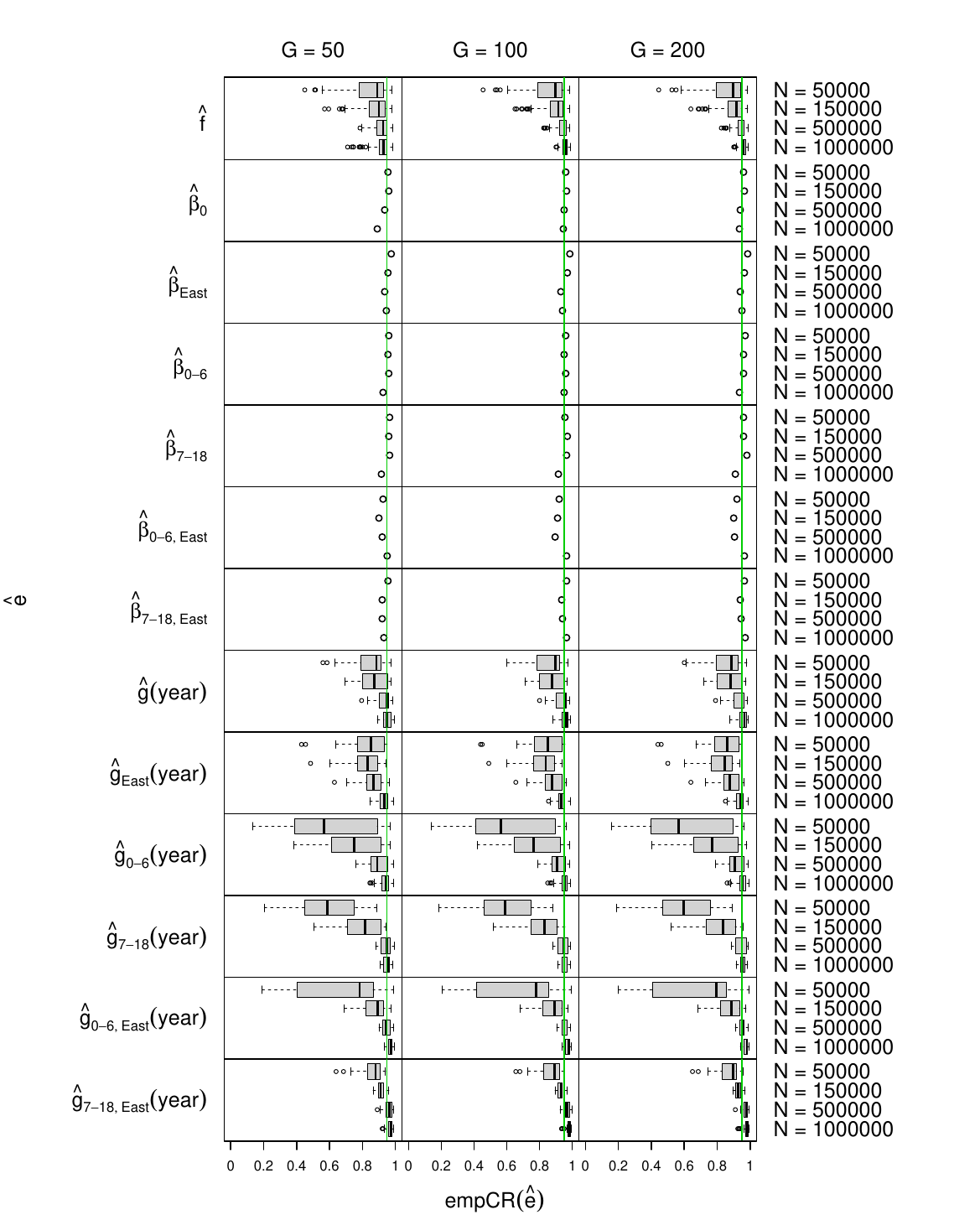}
\end{center}
\vspace{-0.5cm}
\caption{Empirical coverage rates (empCR) for prediction 
and all partial effects (point-wise regarding covariates) for different values of $G$ [columns] and $N$ [rows].
The boxplots for predictions and smooth effects summarize the coverage rates for the unique covariate values ($177$ for prediction, $33$ for effects involving \emph{year}, but not \emph{East}, and $26$ for interaction effects containing \emph{year} and \emph{East}).\label{P2a:fig_empCR_pw_all}}
\end{figure}

%

 \addcontentsline{toc}{section}{References}
 \printbibliography[heading=bibliography]
 \end{refsection}

\end{document}